\documentclass[10pt,journal,compsoc]{IEEEtran}
%
\ifCLASSOPTIONcompsoc
  \usepackage[nocompress]{cite}
\else
  \usepackage{cite}
\fi
%
\ifCLASSINFOpdf

\else

\fi

\usepackage{float}
\usepackage{subfig}
\usepackage{graphicx}
\usepackage{algorithm}
\usepackage[noend]{algpseudocode}
\usepackage{mathrsfs}
\usepackage{enumitem}
\usepackage{array}
\newcolumntype{C}[1]{>{\centering\arraybackslash}m{#1}}
\usepackage{mathtools}
\usepackage{amsmath, amsthm, amssymb}
\usepackage{cuted}
\usepackage{bm}
\usepackage{multirow}

\let\S\relax
\DeclareMathOperator{\S}{\mathcal{S}}

\let\E\relax
\let\define\relax

\DeclareMathOperator{\R}{\mathcal{R}}
\DeclareMathOperator{\E}{\mathbb{E}}
\DeclareMathOperator{\I}{\mathbb{I}}

\DeclareMathOperator{\U}{\mathcal{U}}

\DeclareMathOperator{\out}{out}
\DeclareMathOperator{\e}{end}
\DeclareMathOperator{\f}{final}

\DeclareMathOperator*{\argmax}{arg\,max}
\DeclareMathOperator{\define}{\coloneqq}
\theoremstyle{definition}
\newtheorem{problem}{\textbf{Problem}}
\newtheorem{remark}{\textbf{Remark}}

\newtheorem{lemma}{\textbf{Lemma}}
\newtheorem*{lemma*}{\textbf{Lemma}}

\newtheorem{theorem}{\textbf{Theorem}}
\newtheorem{definition}{Definition}

\newtheorem{property}{Property}
\newtheorem{example}{Example}

\newcommand{\RNum}[1]{\uppercase\expandafter{\romannumeral #1\relax}}

\hyphenation{op-tical net-works semi-conduc-tor}

\begin{document}
%
\title{On Adaptive Influence Maximization under General Feedback Models}
%
%
%
%

\author{Guangmo (Amo)~Tong,~\IEEEmembership{Member,~IEEE}
	and Ruiqi Wang
\IEEEcompsocitemizethanks{\IEEEcompsocthanksitem G. Tong and R. Wang are with the Department
of Computer and Information Sciences, University of Delaware, Newark,
DE, USA, 19715.\protect\\
E-mail: \{amotong, wangrq\}@udel.edu}
\thanks{Manuscript received xx, xx, 2019; revised xx, xx.}}

%
%

\markboth{IEEE TRANSACTIONS ON KNOWLEDGE AND DATA ENGINEERING,~Vol.~?, No.~?, ?~2019}%
{Shell \MakeLowercase{\textit{et al.}}: Bare Demo of IEEEtran.cls for Computer Society Journals}
%



\IEEEtitleabstractindextext{%
\begin{abstract}
Influence maximization is a prototypical problem enabling applications in various domains, and it has been extensively studied in the past decade. The classic influence maximization problem explores the strategies for deploying seed users before the start of the diffusion process such that the total influence can be maximized. In its adaptive version, seed nodes are allowed to be launched in an adaptive manner after observing certain diffusion results. In this paper, we provide a systematic study on the adaptive influence maximization problem, focusing on the algorithmic analysis of the scenarios when it is not adaptive submodular. We introduce the concept of regret ratio which characterizes the key trade-off in designing adaptive seeding strategies, based on which we present the approximation analysis for the well-known greedy policy. In addition, we provide analysis concerning improving the efficiencies and bounding the regret ratio. Finally, we propose several future research directions. 
\end{abstract}

\begin{IEEEkeywords}Adaptive Influence Maximization, General Feedback Model, Regret Ratio, Approximation Analysis
\end{IEEEkeywords}}

\maketitle

\IEEEdisplaynontitleabstractindextext

\IEEEpeerreviewmaketitle

\IEEEraisesectionheading{\section{Introduction}\label{sec: introduction}}

\IEEEPARstart{I}{nformation} diffusion is an essential function of today's online social network. An information cascade is typically initiated by a seed set, and it then spreads from users to users stochastically. Influence maximization (IM),  proposed by Kempe, Kleinberg and Tardos \cite{kempe2003maximizing}, investigates the approaches for selecting seed nodes such that the resulted influence can be maximized. This problem has become one of the core problems in social computing, and it has drawn tremendous attentions due to its potential for enabling applications in different domains \cite{li2018influence,guille2013information,aslay2018influence}. 

As a natural variant as well as an important generalization of the IM problem, adaptive influence maximization (AIM) problem allows the seed nodes to be selected after observing certain diffusion results, and correspondingly it investigates adaptive seeding strategies for maximizing the influence. Compared to the non-adaptive seeding strategy, an adaptive one can better utilize the budget because it makes decisions adapted to the observations. In addition, an adaptive seeding strategy can take account of the dynamic features of a social network, such as the change of the network topology due to the frequent user join and leave. For example, while limiting the spread of rumor by launching a positive cascade, selecting positive seed nodes in an adaptive manner can react against the new actions of the rumor cascade. In viral marketing, one would prefer to observe the feedback from the customer and launch the campaign step by step. In this paper, we study the AIM problem focusing on algorithm design and approximation analysis.

\textbf{Model and Problem Formulation.} This paper considers the prominent Independent Cascade (IC) model.\footnote{The results in this paper apply to other submodular diffusion models, namely the Linear Threshold model.} Under the IC model, the diffusion process is defined by the propagation probability between the users, and there is one chance of activation between each pair of users. The seed users are the first who are active, and in each round the users activated in the last round attempt to activate their inactive neighbors. The cascade spreads round by round and it terminates when no node can be further activated. In the AIM problem, we may first select some seed nodes, observe the diffusion for a certain number of rounds, and then deploy more seed nodes according to the diffusion results. Consequently, an adaptive seeding strategy consists of two parts: \textit{feedback model} and \textit{seeding policy}. A feedback model specifies how many diffusion rounds we would wait for before selecting the next seed node, and a seeding policy indicates which node should be selected.\footnote{For simplicity, we assume only one seed node is selected each time, and our analysis applies to the batch mode as discussed later in Sec. \ref{sec: general}.} The ultimate goal is to maximize the number of active nodes.

\textbf{General Feedback Model.} While designing feedback models, intuitively we wish for more observations and therefore should delay the seeding action as much as possible. Therefore, given a budget $k$, the optimal design is to consume one budget each time and always wait for the diffusion to terminate before selecting the next seed node, which is called Full Adoption feedback model \cite{golovin2011adaptive}. However, this optimal feedback model is not feasible for the applications where the seed nodes are required to be deployed after a fixed number of rounds. For example, a company would prefer to post online advertisements every Monday, or that one would propagate a certain information cascade in a timely manner and therefore would like to utilize all the budget after every one diffusion round. For such purpose, we propose a generalized feedback model where one seed node is selected after every $d$ diffusion rounds with $d \in \mathbb{Z}^+\cup \{\infty\}$, where $d=\infty$ denotes the case when we always wait for the diffusion to terminate before selecting the next seed node. When $d=1$ and $d=\infty$, it reduces to the Myopic feedback model \cite{golovin2011adaptive} and Full Adoption feedback model  \cite{golovin2011adaptive}, respectively. 

\textbf{The State-of-the-art.} A series of literature has considered the AIM problem under different settings (e.g., \cite{golovin2011adaptive,tong2017adaptive,vaswani2016adaptive,chen2013near}) in which the main technique is \textit{adaptive submodular maximization} invented by Golovin and Krause \cite{golovin2011adaptive}. This optimization technique shows that when the considered the problem is adaptive submodular, the greedy policy yields a $(1-1/e)$-approximation. The existing works have utilized this result to investigate the AIM problem for special feedback models under which the AIM problem is adaptive submodular. In particular, these feedback models satisfy the condition that \textit{the seeding decisions are always made after the current diffusion has terminated}. However, the AIM problem is not adaptive submodular under general feedback models, and it remains unknown that how to bound the performance of the greedy policy for the general cases (i.e., $d\neq \infty$), which is the primary motivation of our work. 

\textbf{This Paper.} The main results of this paper are briefly summarized as follows.
\begin{itemize}
\item We provide a systematic model for the AIM problem under general feedback settings and formulate the considered problem as an optimization problem. Our model naturally fits the AIM problem and generalizes the existing ones.
\item We introduce the concept of regret ratio which describes the trade-off between \textit{waiting} and \textit{seeding}. This ratio is not only intuitively meaningful but also measures the performance bound of the greedy policy.
\item We analyze the greedy policy from the view of \textit{decision tree}, and show that the greedy policy gives a $(1-e^{-\frac{1}{\alpha(T_g)}})$-approximation under the general feedback model, where $\alpha(T_g)$ is the regret ratio associated with the greedy policy. \item We show how to generalize the reverse sampling technique for the AIM problem to improve the efficiency of the greedy policy.
\item We propose several directions of future work taking account of more realistic applications.
\item We design simulations to experimentally evaluate the greedy policy under general feedback models and further examine the effect of the feedback models on the diffusion process. 
\end{itemize}

\textbf{Road Map.} We survey the related work in Sec. \ref{sec: relate}. The preliminaries are provided in Sec. \ref{sec: pre}. The main analysis is given in Sec. \ref{sec: aim}. In Sec. \ref{sec: exp}, we present the experiments. The future work is discussed in Sec. \ref{sec: general}. Sec. \ref{sec: con} concludes this paper. The missing proofs and additional experimental results are provided in the supplementary material.

\section{Related Work.} 
\label{sec: relate}
Due to space limitation, we focus on the AIM problem and will not survey the literature concerning the classic IM problem. The interested reader is referred to the recent surveys \cite{li2018influence,guille2013information,aslay2018influence}.

The AIM problem was first studied by Golovin and Krause \cite{golovin2011adaptive} in the investigation on adaptive submodular maximization. In \cite{golovin2011adaptive}, the authors proposed two special feedback models: Full Adoption feedback model and Myopic feedback model. Before selecting the next seed node, one always waits for the diffusion process to terminate under the Full Adoption feedback model while waits for one diffusion round under the Myopic feedback model. For the Full Adoption feedback model, it is shown in \cite{golovin2011adaptive} that greedy policy has an approximation ratio of $1-1/e$ by using the technique of adaptive submodularity. Combining the result in \cite{tong2017adaptive} that the Full Adoption feedback model is optimal, greedy policy plus the Full Adoption feedback model has the approximation ratio of $1-1/e$ to any adaptive seeding strategy for the AIM problem under a budget constraint. Following the optimization framework given in \cite{golovin2011adaptive}, Sun \textit{et al.} \cite{sun2018multi}, Chen \textit{et al. }\cite{chen2013near} and Vaswani \textit{et al. }\cite{vaswani2016adaptive} have studied the AIM problem under the feedback models which are variants of the Full Adoption feedback model. One common setting in these works is that we always wait until the diffusion terminates before making the next seeding decision, which is critical for the AIM problem to be adaptive submodular. As noted in \cite{golovin2011adaptive} and \cite{vaswani2016adaptive}, when we wait for a fixed number of rounds (e.g., Myopic feedback model) the AIM problem is unfortunately not adaptive submodular anymore, and to the best of our knowledge there is no analysis technique available for such general feedback models. We in this paper make an attempt to fill this gap by providing a metric for quantifying the approximation ratio when the AIM problem is not adaptive submodular. 

There exist several other research lines concerning the AIM problem from perspectives different from ours in this paper. One research branch focuses on online influence maximization where the seeding process also consists several stages, but the primary consideration therein is to overcome the incomplete knowledge of the social network (\cite{lei2015online, chen2016combinatorial,vaswani2017model, wen2017online}). In another issue, Seeman \textit{et al.} \cite{seeman2013adaptive} studied a two-stage adaptive seeding process where the neighbors of the first-stage nodes are candidates for the second seeding stage, which is essentially different from the AIM problem considered in our paper. Later in \cite{salha2018adaptive}, the authors considered the problem of modifying the IC model so that the AIM problem becomes adaptive submodular under the Myopic feedback model.  Recently, Han \textit{et al.} \cite{han2018efficient} investigated the issue of speeding up an adaptive seeding strategy. 

{\renewcommand{\arraystretch}{1.5}
	\begin{table*}[t]
		\centering
		{\begin{tabular}{ |p{7cm} ||p{6cm} || p{2cm} |}
				\hline 
				\textbf{Symbol}& \textbf{Keyword} &\textbf{Reference} \\
				\hline 
				$G=(V, E)$ and $p_e \in (0,1]$ & IC model&Def. \ref{def: ic}    \\ 
				\hline
				$\phi=\big(L(\phi),D(\phi)\big)\in 2^E \times 2^E=\Phi$ and $\Pr[\phi] \in [0,1]$ &realization &Def. \ref{def: realization}    \\ 
				\hline
				$\phi_1 \prec \phi_2, \Pr[\phi_2|\phi_1]$ &sub-realization, super-realization & Def. \ref{def: sub-realization}    \\ 
				\hline
				$\phi_{\emptyset}$ &empty realization& Def. \ref{def: partial-realization}    \\ 
				\hline
				$\Psi$ &the set of full realizations & Def. \ref{def: partial-realization} \\ 
				\hline
				$\phi_1 \sim \phi_2$ & realization compatibility & Def. \ref{def: compati}  \\ 
				\hline
				$\phi_1 \oplus \phi_2$& realization concatenation& Def. \ref{def: realization_con}  \\ 
				\hline
				$U=(\dot{S}(U), \dot{\phi}(U)) \in 2^V\times \Phi$ &status& Def. \ref{def: status} \\
				\hline
				$\dot{\U}_d(U) \in 2^{\Phi}$& d-round-status &Def. \ref{def: d-status}\\ 
				\hline 
				$U_1 \cup U_2$&  status union & Def. \ref{def: status-union}\\ 
				\hline 
				$\pi:  2^{V}\times \Phi \rightarrow 2^V$&policy &  Def. \ref{def: policy}.\\ 
				\hline
				$\pi_g$& greedy policy &  Def. \ref{def: greedy}.\\ 
				\hline
				$T$&decision tree &  Def. \ref{def: decision_tree}\\ 
				\hline
				$T|U$& decision tree conditioned on $U$ &  Def. \ref{def: condition-tree}\\ 
				\hline
				$T_1 \oplus T_2$ & concatenation of trees $T_1$ and $T_2$ &  Def. \ref{def: con_tree}\\ 
				\hline
				$S_i^T$ & the tree-nodes in $T$ in level $i$ &  Def. \ref{def: decision_tree_notation}\\ 
				\hline
				$U_i^T$ & the tree-edges in $T$ from level $i$ to $i+1$ &  Def. \ref{def: decision_tree_notation}\\ 
				\hline
				$\dot{S}_{\e}(U)$ & the endpoint of a tree-edge $U$  &  Def. \ref{def: decision_tree_notation}\\ 
				\hline
				$\dot{U}_{\out}(S)$ & out edges of a tree-node $S$ &  Def. \ref{def: decision_tree_notation}\\ 
				\hline
		\end{tabular}}
		\caption{Notations. }
		\label{table:symbol}
\end{table*}}

\section{Preliminaries}
\label{sec: pre}
In this section, we introduce a collection of definitions some of which are extended from that in the work \cite{golovin2011adaptive} of Golovin and Krause. We intend to spare more space in this section for explaining the definitions in order to make the analysis in Sec. \ref{sec: aim} smooth. The references of the notations are given in Table \ref{table:symbol}.

\begin{definition}[\textbf{Convenient Set Notations}]
	For an element $v$ and a set $S$, we use $v$ and $v+S$ in replace of $\{v\}$ and $\{v\}\cup S$, respectively.
\end{definition}

\subsection{IC Model and Adaptive Seeding Process} 

\begin{definition}[\textbf{IC Model}]
\label{def: ic}
A social network is represented by a directed graph $G=(V,E)$. For each edge $(u,v)$, we say $u$ is an in-neighbor of $v$, and $v$ is an out-neighbor of $u$. An instance of IC model is given by a directed graph $G=(V, E)$ and the probability $p_e \in (0,1]$ on each edge $e \in E$.\footnote{We assume the probability $p_e$ is strictly larger than 0 for otherwise we can remove it from the graph.}
\end{definition}

\begin{definition}[\textbf{Round}]
	In one round, each node $v$ activated in the last round attempts to activate each of $v$'s inactive neighbor $u$, with the success probability of $p_{(v,u)}$. We assume the observations are made round by round.
\end{definition}

\begin{definition}[\textbf{Adaptive Seeding Process}]
An adaptive seeding process alternates between the following two steps:
\begin{itemize}
	\item (\textbf{seeding-step}) Select and activate a certain set of seed nodes.
	\item (\textbf{observing-step}) Observe the diffusion for a certain number of rounds.
\end{itemize}
\end{definition}

\subsection{Realizations and Status}
The definitions in this section are used to describe an intermediate stage during a diffusion process. 

\begin{definition}[\textbf{States of Edges}]
	Following \cite{kempe2003maximizing}, we speak of each edge $(u, v)$ as being \textit{live} or \textit{dead} to indicate that if $u$ can activate $v$ once $u$ becomes active. 
\end{definition}

\begin{definition}[\textbf{Realization}]
	\label{def: realization}
	A \textit{realization} $\phi=(L(\phi),D(\phi)) \in 2^E \times 2^E$ is an ordered two-tuple where $L(\phi) \subseteq E ,D(\phi) \subseteq E$, and $L(\phi)\cap D(\phi) =\emptyset$, specifying the states of the edges that have been observed. In particular, $e\in L(\phi)$ (resp., $e\in D(\phi)$) means $e$ is a live (resp., dead) edge, and, the state of an edge $e$ is unknown in $\phi$ if $e \notin L(\phi)\cup D(\phi)$. We use $\Phi$ to denote the set of all realizations. For each realization $\phi$, we define $\Pr[\phi]$ as $\Pr[\phi]\define \prod_{e\in L(\phi)}p_e\prod_{e\in D(\phi)}(1-p_e)$, which is the probability that $\phi$ can be realized (i.e., sampled). 
\end{definition}

\begin{definition}[\textbf{$t$-live-path}]
	\label{def: t-path}
	For two nodes $u, v \in V$ and a realization $\phi$, a \textit{$t$-live path} from $u$ to $v$ in $\phi$ is a path of at most $t$ edges which are all in $L(\phi)$. When there is no limit on the length of the path, we use the notation \textit{$\infty$-live-path}. 
\end{definition}

\begin{definition}[\textbf{Sub-realization}]
	\label{def: sub-realization}
	For two realizations $\phi_1$ and $\phi_2$, we say $\phi_1$ is a \textit{sub-realization} of $\phi_2$ if $L(\phi_1) \subseteq L(\phi_2)$ and $D(\phi_1) \subseteq D(\phi_2)$, and denote it as $\phi_1 \prec \phi_2$. If $\phi_1 \prec \phi_2$, we also say $\phi_2$ is a \textit{super-realization} of $\phi_1$. Intuitively, $\phi_2$ is one possible outcome if we continue to observe the states of the edges after observing $\phi_1$. We use $\Pr[\phi_2|\phi_1]$ to denote the probability that $\phi_2$ can be realized conditioned on $\phi_1$, and therefore we have 
	\[\Pr[\phi_2|\phi_1]= \prod_{e\in L(\phi_2)\setminus L(\phi_1)}p_e\prod_{e\in D(\phi_2)\setminus D(\phi_1)}(1-p_e).\] 
\end{definition}

\begin{definition}[\textbf{Full and Partial Realization}]
	\label{def: partial-realization}
	We say $\phi$ is a \textit{full realization} if $L(\phi)\cup D(\phi)=E$, which indicates that all the edges have been observed. We use $\Psi$ to denote the set of all full realizations. Note that $\Psi$ also represents the basic event space of the IC model. We use $\phi_{\emptyset}=(\emptyset,
	\emptyset)$ to denote the empty realization. A realization is a \textit{partial realization} if it is not a full realization.
\end{definition}

We use the following concepts to describe the observations during the seeding process.
\begin{definition}[\textbf{Status}]
\label{def: status}
A \textit{status} $U$ is a two-tuple $U=\big(\dot{S}(U), \dot{\phi}(U)\big) \in 2^V\times \Phi$, where $\dot{S}(U)\subseteq V$ is the set of the current active nodes and realization $\dot{\phi}(U) \in \Phi$ shows the state of the edges that have been observed. We use $U_\emptyset \define (\emptyset,\phi_\emptyset)$ to denote the status where there is no active node and no edge has been observed. 
\end{definition}

\begin{remark}
We use $\phi$ to denote a realization while use $\dot{\phi}()$ with an over-dot to denote a realization associated with an object. For example, $\dot{\phi}(U)$ is a realization associated with a status $U$. The similar rule of the use of over-dot applies to other notations in this paper.
\end{remark}

\begin{definition}[\textbf{Final Status}]
	We say a status $U$ is final if there is no $\infty$-live-path from any node in $\dot{S}(U)$ to $V\setminus \dot{S}(U)$ in any full realization $\psi$ where $\dot{\phi}(U) \prec \psi$. 
\end{definition}

\begin{remark}
	\label{remark: final}
	A status is not final if and only if it is possible to have more active nodes in the future rounds even if we do not select any new seed node. When the diffusion process terminates, it reaches a final status. 
\end{remark}

\begin{definition}[\textbf{d-round Status}]
	\label{def: d-status}
	For a status $U$ and $d\in \mathbb{Z}^+$, we use $\dot{\U}_d(U)$ to denote the set of the possible statuses after $d$ diffusion rounds following $U$. In addition, we use $\dot{\U}_\infty(U)$ to denote the outcomes after the diffusion terminates following $U$.
\end{definition}

\begin{figure}[t]
	\begin{center}
		\includegraphics[width=0.45\textwidth]{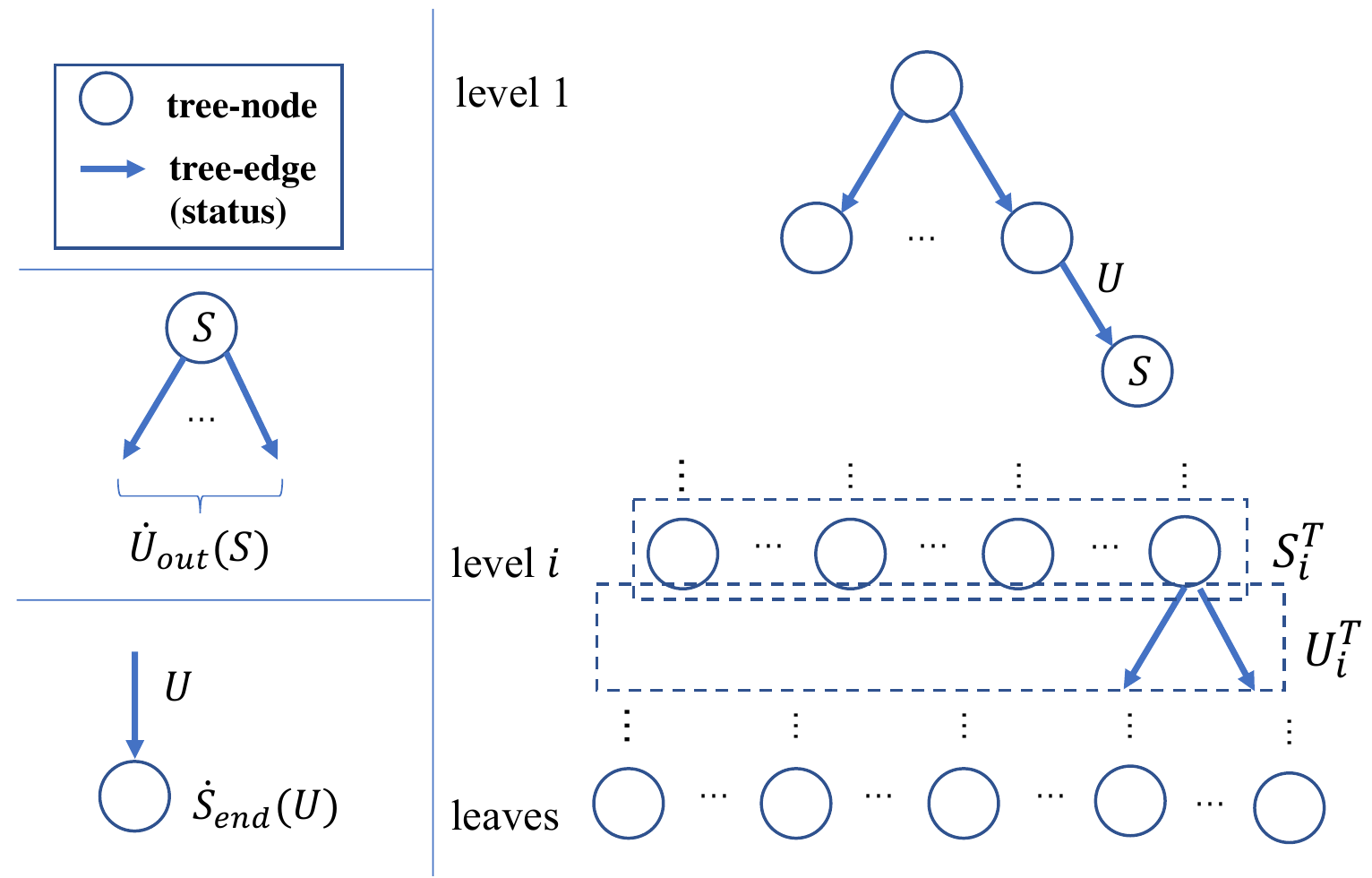} 
	\end{center} 
	\caption{\textbf{Decision tree.}}
	\label{fig: tree}
\end{figure}

\subsection{Seeding Process}
The definitions in this section are used to describe an adaptive seeding process.

\begin{definition}[\textbf{Policy}]
\label{def: policy}
A \textit{policy} $\pi:  2^{V}\times \Phi \rightarrow 2^V$ maps a status $U$ to a node-set $S \subseteq V$, which indicates that $\pi$ will select $S$ as the seed set if the observed status is $U$.
\end{definition}

\begin{remark}
Note that a realization together with the current active nodes determines the rest of the diffusion process. Thus, a policy makes decisions according to the current status rather than the current realization. 
\end{remark}

An adaptive seeding process can be viewed as a decision tree. An illustration of the following definitions is shown in Fig. \ref{fig: tree}. 

\begin{definition}[\textbf{Decision tree}]
\label{def: decision_tree}
A \textit{decision tree} $T$ of an adaptive seeding process is an arborescence, where each tree-edge is associated with a status $U$ which corresponds to an observing-step showing what have been observed, and each tree-node is a node-set $S \subseteq V$ which corresponds to a seeding-step showing the seed nodes that are selected and activated.\footnote{We reserve the term \textit{edge} for the social network graph and the term \textit{tree-edge} for the decision tree.}  
\end{definition}

\begin{definition}[\textbf{Decision Tree Notations}]
\label{def: decision_tree_notation}
For each tree-node $S$, let $\dot{U}_{\out}(S)$ be the set of the tree-edges out of $S$, showing all possible different observations after selecting $S$.\footnote{Two observations are different if and only if there is at least one edge with different observed states.} Since two edges in a tree cannot have the same realization, by abusing the notation, we also use $U$ to denote the tree-edge of which the status is $U$. For a tree-edge $U$, we use $\dot{S}_{\e}(U) \subseteq V$ to denote the end-node of $U$. The tree-nodes can be grouped by levels. For a decision tree $T$ and $i \in \{1, 2,3, ...\}$, we use $S_i^T \in 2^{2^V}$ to denote the set of the tree-nodes in the $i$-th level. For $i \in \{2,3,...\}$, we use $U_i^T \subseteq \Phi$ to denote the set of the statuses of the tree-edges from the tree-nodes in $S_{i-1}^T$ to those in $S_i^T$, and define $U_{1}^T$ as $\{U_{\emptyset}\}$. In addition, we use $S_{\infty}^T$ to denote the set of the nodes in the lowest level (i.e., the leaves) and use $U_{\infty}^T$ to denote the set of the tree-edges connecting to the leaves. 
\end{definition}

\begin{remark}
\label{remark: decision-tree}
When a decision tree represents an adaptive seeding process, level $i$ shows all possible scenarios in the $i$-th seeding step. For each pair $U$ and $\dot{S}_{\e}(U)$, it means that $\dot{S}_{\e}(U)$ is selected by $\pi$ as a seed set when $U$ is observed. The edges out of a tree-node $S$ indicate the possible observations after selecting $S$. For each sequence of statuses ($U_1, U_2,...$) from the root to a leaf, we have $\dot{\phi}(U_i)\prec \dot{\phi}(U_{i+1})$ and $\dot{S}_{\e}(U_i)\subseteq \dot{S}_{\e}(U_{i+1})$.
\end{remark}

\subsection{Process Concatenation}
The analysis of the policy requires to measure the effect of the \textit{union} of two seeding processes. In the work \cite{golovin2011adaptive} of Golovin and Krause, it was stated as: \textit{running one policy to completion and then running another policy as if from a fresh start, ignoring the information gathered during the running of the first policy}. To make this concept mathematically tractable, we adopt the decision tree perspective and employ the following definitions.

\begin{definition}[\textbf{Realization Compatibility}]
	\label{def: compati}
	For two realizations $\phi_1, \phi_2 \in \Phi$, we say they are \textit{compatible} if $L(\phi_1) \cap D(\phi_2)=\emptyset$ and $D(\phi_1) \cap L(\phi_2)=\emptyset$. That is, there is no conflict observations. We denote this relationship by $\phi_1 \sim \phi_2$. 
\end{definition}

\begin{definition}[\textbf{Realization Concatenation}]
	\label{def: realization_con}
	For a set of finite realizations $\{\phi_1, \phi_2,..., \phi_m\}$ where \[\big(L(\phi_1)\cup ...\cup L(\phi_m)\big) \cap \big(D(\phi_1)\cup ...\cup D(\phi_m)\big)=\emptyset, \]we define $\phi_1 \oplus... \oplus \phi_m$ as a new realization with \[L(\phi_1 \oplus... \oplus \phi_m)=L(\phi_{1})\cup...\cup L(\phi_{m})\] and \[D(\phi_1 \oplus ... \oplus \phi_m)=D(\phi_{1})\cup...\cup D(\phi_{m}).\]
\end{definition}

\begin{definition}[\textbf{Status Union}]
	\label{def: status-union}
	For two status $U_1$ and $U_2$ where $\dot{\phi}(U_1)$ and $\dot{\phi}(U_2)$ are compatible, we define $U_1 \cup U_2$ as a new status with $\dot{S}(U_1 \cup U_2)=\dot{S}(U_1)\cup \dot{S}(U_2)$ and $\dot{\phi}(U_1 \cup U_2)=\dot{\phi}(U_1) \oplus \dot{\phi}(U_2)$.
\end{definition}

\begin{figure}[tp]
	\begin{center}
		\includegraphics[width=0.4\textwidth]{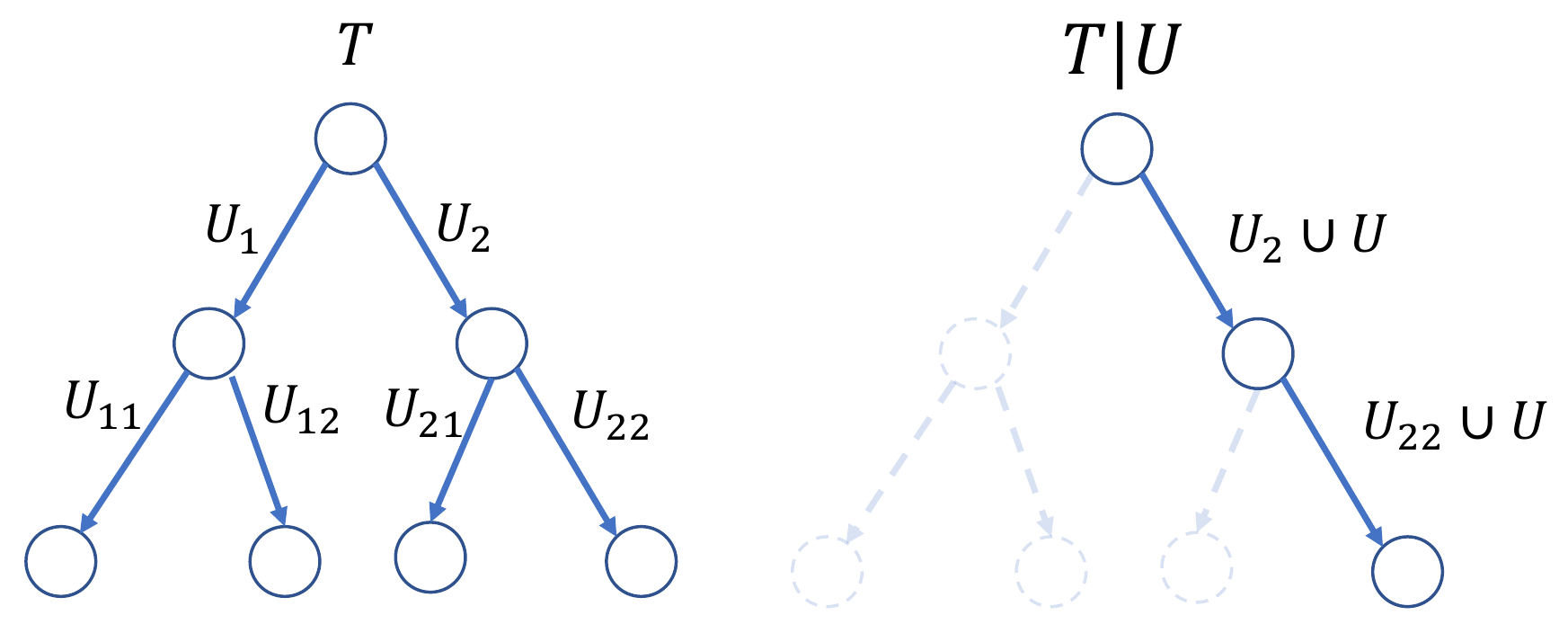} 
	\end{center} 
	\vspace{-4mm}
	\caption{\textbf{Decision tree conditioned on a status.} {\small Suppose $U_1$ and $U_{21}$ are not compatible with $U$.}}
	\vspace{-2mm}
	\label{fig: condition-tree}
\end{figure}

\begin{definition}[\textbf{Decision Tree Conditioned on a Status}]
	\label{def: condition-tree}
	Given a decision tree $T$ and a status $U$, we construct another decision tree by modifying $T$ as follows. For each tree-edge $U_*$ in $T$ such that $\dot{\phi}(U_*)$ is not compatible with $\dot{\phi}(U)$, we remove the tree-edge $U_*$ as well as the tree-node $\dot{S}_{\e}(U_*)$. For each tree-edge $U_*$ in $T$ such that $\dot{\phi}(U_*)$ is compatible with $\dot{\phi}(U)$, we replace the status $U_*$ by $U_*\cup U$. We denote the resulted decision tree as $T|U$ named as \textit{the decision tree of $T$ conditioned on status $U$}. An illustrative example is given in Fig. \ref{fig: condition-tree}.
\end{definition}

\begin{remark}
One can see that if we remove one tree-edge $U$, we must also remove all of its following tree-edges due to Remark \ref{remark: decision-tree}.  In the tree $T|U$, each realization is a super-realization of $\dot{\phi}(U)$. Note that we only modify the status but do not change the tree-node. When $T$ is an adaptive seeding process of a certain policy, the tree $T|U$ shows the adaptive seeding process when the states of the edges in $L(\dot{\phi}(U))\cup D(\dot{\phi}(U))$ have been fixed and the nodes in $\dot{S}(U)$ are activated, but the policy does not have such information. 
\end{remark}

\begin{definition}[\textbf{Decision Tree Concatenation}]
	\label{def: con_tree}
	Given two decision trees $T_1$ and $T_2$, we construct another decision tree by modifying $T_1$, as follows. For each tree-edge $U$ in $T_1$ where $\dot{S}_{\e}(U)$ is a leaf, we replace $\dot{S}_{\e}(U)$ by the tree $T_2|U$. We denote the new tree as $T_1 \oplus T_2$. An example is shown in Fig. \ref{fig: tree-con}. 
\end{definition}

\begin{remark}
In our context, a decision tree can be given by an adaptive seeding process, or its modification according to Defs. \ref{def: condition-tree} and \ref{def: con_tree}. The notations in Def. \ref{def: decision_tree_notation} also apply to a concatenation of two decision trees. For two seeding process, the concatenation of their decision trees provides a tractable representation of the process that first running the first process and then running the second one without considering the obtained observations. 
\end{remark}

\begin{figure}[t]
	\begin{center}
		\includegraphics[width=0.45\textwidth]{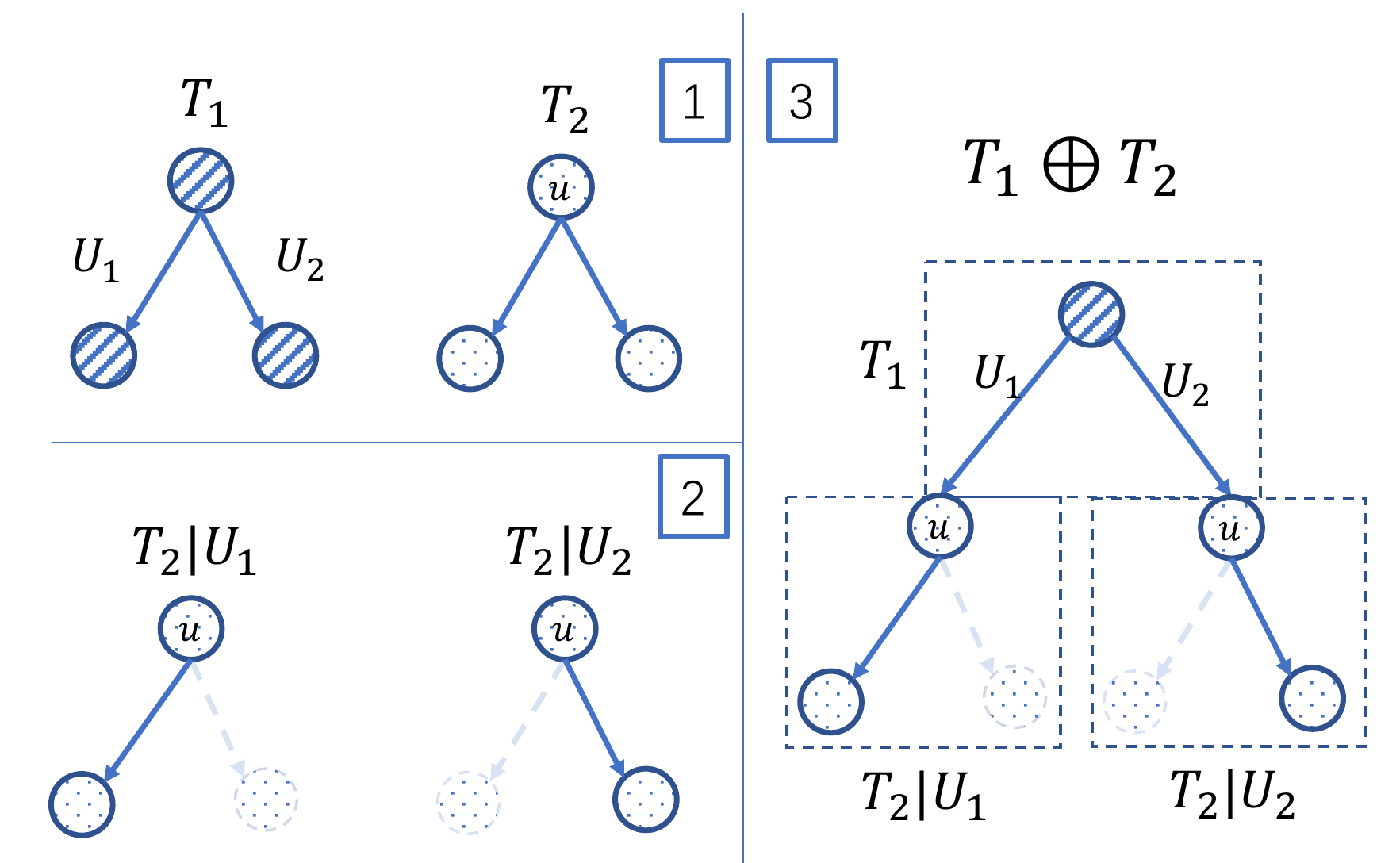} 
	\end{center} 
	\vspace{-2mm}
	\caption{\textbf{Concatenation of two decision trees.} {\small Suppose we have two trees $T_1$ and $T_2$ where $T_1$ has two tree-edges $U_1$ and $U_2$ connecting to the leaves. Supposing the trees $T_2|U_1$ and $T_2|U_2$ are as those given in the figure, we have the tree $T_1 \oplus T_2$ as shown therein.}}
	\label{fig: tree-con}
	\vspace{-2mm}
\end{figure}

A decision tree naturally has the following property.

\begin{property}
	For each tree-edge $U_1$ where $\dot{S}_{\e}(U_1)$ is not a leaf, we have 
	\begin{equation}
	\sum_{U_2 \in \dot{U}_{\out}(\dot{S}_{\e}(U_1))} \Pr[\dot{\phi}(U_2)|\dot{\phi}(U_1)]=1.   
	\end{equation}
	This is because (a) $\Pr[\dot{\phi}(U_2)|\dot{\phi}(U_1)]$ is the probability that $\dot{\phi}(U_2)$ happens conditioned on $U_1$ and (b) $\dot{U}_{\out}(\dot{S}_{\e}(U_1))$ is the set of \textit{all} possible different observations after observing $U_1$ and selecting $\dot{S}_{\e}(U_1)$ as the seed set. Note that this is also true for a tree which a concatenation of other two trees.
\end{property}

\subsection{Counting Active Nodes}

\begin{definition}[$A_t(S,\psi)$]
\label{def: A_t}
For each $S\in 2^V$ and $\psi\in \Psi$, we use $A_t(S,\psi)$ to denote the number of the active nodes in $\psi$ after $t$ diffusion rounds when $S$ is set of the current active nodes. We use $t=\infty$ for the case when the diffusion terminates. For each $S \subseteq V$, $V^* \subseteq V$, $\phi \in \Phi$ and $\psi \in \Psi$, we define that 
\begin{equation}
\Delta_t(S,V^* ,\psi)\define |A_t(S \cup V^* ,\psi)|-|A_t(S,\psi)|,
\end{equation}
as the marginal increase resulted by $V^* $, and we use 
\begin{equation}
\label{eq: f_t}
\Delta f_t(S,V^*  ,\phi)=\sum_{ \phi \prec \psi, \psi \in \Psi} \Pr[\psi|\phi]\cdot \Delta _t(S,V^*  ,\psi)
\end{equation} 
to denote the expected marginal profit when the current status is $(S,\phi)$.
\end{definition}

The following lemma is a start point to calculating the influence.

\begin{lemma}[Kempe \textit{et al.} \cite{kempe2003maximizing}]
For a full-realization $\psi$, according to \cite{kempe2003maximizing}, $v \in A_t(S,\psi)$ iff there exists a $t$-live-path in $\psi$ from a node $u \in S$ to $v$ .  
\end{lemma}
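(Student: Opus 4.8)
The plan is to exploit the fact that, once the full realization $\psi$ is fixed, the IC diffusion from $S$ becomes a deterministic process: the label of every edge as live or dead is predetermined by $\psi$, so whether an activation attempt along $(w,v)$ succeeds depends only on whether $(w,v)\in L(\psi)$ and not on any fresh randomness. Under this coupling I would actually prove the sharper statement that, for every $v$, the first round in which $v$ becomes active equals the length of the shortest $\infty$-live-path from $S$ to $v$ in $\psi$ (with the convention that $v\in S$ has distance $0$, and $v$ never activates iff no live path from $S$ exists). The lemma then follows immediately, since $v\in A_t(S,\psi)$ exactly when this shortest live-path length is at most $t$. Both inclusions are handled by induction, one on the activation round and one on the path length.

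For the forward direction ($v\in A_t(S,\psi)\Rightarrow$ a $t$-live-path exists), I would induct on the round $j\le t$ at which $v$ first becomes active. The base case $j=0$ is immediate, because then $v\in S$ and the trivial path of length $0$ already qualifies as a $t$-live-path. For $j\ge 1$, the definition of a round gives $v$ an in-neighbor $w$ that was first activated at round $j-1$ together with a successful activation attempt along $(w,v)$; since the attempt succeeded under $\psi$, we have $(w,v)\in L(\psi)$. The induction hypothesis yields a live path of length at most $j-1$ from some $u\in S$ to $w$, and appending the live edge $(w,v)$ produces a live path of length at most $j\le t$ from $u$ to $v$.

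For the backward direction (a $t$-live-path $\Rightarrow v\in A_t(S,\psi)$), I would induct on the length $\ell\le t$ of a path from some $u\in S$ to $v$ whose edges all lie in $L(\psi)$. When $\ell=0$ the endpoint is a seed and is active from round $0$. For $\ell\ge 1$, write the path as a live path of length $\ell-1$ from $u$ to the penultimate node $w$ followed by the live edge $(w,v)$; the induction hypothesis makes $w$ active by round $\ell-1$. If $v$ is already active by round $\ell$ we are done; otherwise $w$'s single activation attempt along $(w,v)$ is exercised in the round immediately after $w$ activates, and because $(w,v)\in L(\psi)$ this attempt succeeds, so $v$ is active by round $\ell\le t$.

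The main obstacle is not the inductive arithmetic but justifying the coupling that turns the stochastic, one-chance-per-edge dynamics into the static live/dead labeling recorded by $\psi$: I must argue that the success of the unique activation attempt along each edge $(w,v)$ is faithfully encoded by membership in $L(\psi)$ regardless of \emph{when} $w$ becomes active, and that edges whose tails never activate may be labeled either way without affecting reachability from $S$. Care is also needed to keep the round index and the path length synchronized across the two inductions so that the ``at most $t$'' bounds line up; once this bookkeeping is pinned down, the two directions combine to give the stated equivalence.
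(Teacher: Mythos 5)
Your proof is correct, and in fact the paper does not prove this lemma at all---it is stated with a citation to Kempe \textit{et al.}---so your double induction (on the first activation round for one direction, on the live-path length for the other) is precisely the standard argument underlying that citation. One simplification: the coupling you flag as the ``main obstacle'' requires no separate justification within this paper's formalism, since a full realization $\psi$ fixes the live/dead state of \emph{every} edge by Definition \ref{def: partial-realization} and $A_t(S,\psi)$ is defined directly on the resulting deterministic round-by-round process, so membership in $L(\psi)$ encodes the outcome of each edge's unique activation attempt by construction (the probabilistic deferred-decision argument is only needed to relate this deterministic process to the original IC dynamics, which is part of what the cited work establishes, not what this lemma asserts).
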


\subsection{Discussions}
Before proceeding with the analysis, we briefly discuss the issues regarding formulating the AIM problem. The existing works model the AIM following the seminal work \cite{golovin2010adaptive} of Golovin and Krause where we select elements from a ground set and the elements may have random states. To utilize this framework, under the Full Adoption model, the random states of a node $u$ are defined as the states of the edges reachable from $u$, in which way the selections are made among the nodes and meanwhile the randomness is also forced to be associated with the nodes. For other feedback models, we need to modify the definitions of the random states in order to apply the framework in \cite{golovin2010adaptive}. However, such a method does not naturally fit the AIM problem because in the AIM problem the selections are made among nodes while the randomness comes from the edges which are \textit{independent} of our selection of the nodes. Therefore, in this paper the randomness is directly defined on the edges and we model the node selections and edge observations in a separate manner, which brings us higher flexibility in the analysis as shown later in Sec. \ref{sec: aim}, as well as in defining more general and natural AIM problems as later discussed in Sec. \ref{sec: general} . 

\section{AIM Problem and Greedy Strategy}
\label{sec: aim}
This section gives the main analysis of this paper. We first formally formulate the considered problem and introduce the concept of regret ratio,  and then show how to bound the performance of the greedy strategy. In addition, we discuss how to improve the efficiency with the reverse sampling technique. Finally, we provide an upper bound of the regret ratio.
\subsection{Problem Statement}
In this paper, we consider the AIM problem formulated as follows.
\begin{problem}[\textbf{$(k,d)$-AIM Problem}]
\label{problem: k-d}
Given a budget $k \in \mathbb{Z}^+$ and an integer $d \in \mathbb{Z}^+$, we consider the feedback model where one seed node is selected after every $d$ rounds of diffusion until $k$ seed nodes are selected. We aim to design a policy for selecting nodes under this pattern such that the influence can be maximized.
\end{problem}

\begin{remark}
When $d\geq n-1$, it is equivalent that we wait for the diffusion process to terminate before selecting the next seed node, and in this case, we denote it as the $(k,\infty)$-AIM Problem which reduces to the problem studied in \cite{golovin2011adaptive}. When $d=1$, it reduces to the Myopic feedback model studied in \cite{golovin2011adaptive} and \cite{salha2018adaptive}.
\end{remark}

\begin{remark}
For Problem \ref{problem: k-d}, as we always select one node in each seeding step, we restrict our attention to the policy $\pi$ where $|\pi(S,\phi)|=1$ for each $S \subseteq V$ and $\phi \in \Phi$.
\end{remark}
The adaptive seeding process in the $(k,d)$-AIM problem under a policy $\pi$ is described as follows:

\begin{definition}[\textbf{$(\pi, k, d)$-process}]~
\begin{itemize}
\item Set $(S,\phi)$ as $(\emptyset, \phi_\emptyset)$. Repeat the following process for $k$ times. 
\begin{itemize}
\item (\textbf{seeding-step}) Select and activate the node $\pi(S,\phi)$.
\item (\textbf{observing-step}) Observe the diffusion for $d$ rounds. Update $(S,\phi)$ by setting $S$ as the set of the current active nodes and $\phi$ as the current realization.
\end{itemize}
\item Wait for the diffusion to terminate and output the number of active nodes.
\end{itemize}
\end{definition}

We define that we can wait for one diffusion round even if there is no node can activate their neighbors, which conceptually allows us to wait for any number of rounds. The output of the above diffusion process is nondeterministic since the diffusion process is stochastic. We use $F(\pi, k, d)$ to denote the expected number of the active nodes produced by the $(\pi, k,d)$-process. Problem \ref{problem: k-d} can be restated as follows.

\begin{problem}
\label{problem: k-d-2}
Given $k \in \mathbb{Z}^+$ and $d \in \mathbb{Z}^+$, find a policy $\pi$ such that $F(\pi, k, d)$ is maximized.
\end{problem}

\begin{definition}[\textbf{Greedy Policy $\pi_g$}]
\label{def: greedy}
Given a status $(S,\phi)$, the greedy policy $\pi_g$ always select the node that can maximize the marginal gain conditioned on $(S,\phi)$, and therefore, 
\begin{equation}
\pi_g(S,\phi)= \argmax_v \Delta f_\infty(S,v ,\phi)
\end{equation} 
\end{definition}

The greedy strategy is the most popular strategy due to its excellent performance and simple implementation. It has been adopted by most of the research regarding the AIM problem and has been shown to have an approximation ratio of $1-1/e$ for several special cases where this ratio is tight. In this paper, we are particularly interested in the performance bound of the greedy policy for Problem \ref{problem: k-d}.

\subsection{Regret Ratio}
\label{sec: ratio}
As aforementioned, Problem \ref{problem: k-d} is not adaptive submodular in general, which is essentially caused by not waiting for the diffusion to terminate before making seeding decisions. In this section, we provide a metric dealing with such scenarios. 

Suppose that the current status is $U$, the budget is one, and we aim at maximizing the number of active nodes after $t$ rounds. Let us consider two ways to deploy this seed node.
\begin{itemize}
\item \textbf{Method 1}: We select the seed node immediately based on $U$. In this case, the best marginal profit we can achieve is $\max_v \Delta f_t\big(\dot{S}(U),v ,\dot{\phi}(U)\big)$, by selecting $\argmax_v \Delta f_t\big(\dot{S}(U),v ,\dot{\phi}(U)\big)$ as the seed node. 

\item \textbf{Method 2}: We wait for $d<t$ rounds of diffusion and then select the seed node. After $d$ rounds, for each possible status $U_* \in \dot{\U}_d(U)$, the best marginal profit would be $\max \Delta f_{t-d}\big(\dot{S}(U_*),v ,\dot{\phi}(U_*)\big)$. Thus, the total marginal profit would be $\sum_{U_* \in \dot{\U}_d(U)} \Pr[\dot{\phi}(U_*)|\dot{\phi}(U)] \cdot \max \Delta f_{t-d}\big(\dot{S}(U_*),v ,\dot{\phi}(U_*)\big)$.
\end{itemize}

\begin{definition}[\textbf{Regret Ratio}]
\label{def: ratio}
For a status $U$ and two integers $t, d \in \mathbb{Z}^+$ with $d \leq t$, we define the regret ratio $\alpha_{t,d}(U) $ as 
{\small
\begin{eqnarray*}
&\alpha_{t,d}(U)\define \\ & \dfrac{\sum_{U_* \in \dot{\U}_d(U)} \Pr[\dot{\phi}(U_*)|\dot{\phi}(U)] \cdot \max \Delta f_{t-d}\big(\dot{S}(U_*),v ,\dot{\phi}(U_*)\big)}{\max_v \Delta f_t\big(\dot{S}(U),v ,\dot{\phi}(U)\big)},
\end{eqnarray*}}
which measures the ratio of the marginal profits resulted by those two methods. When there is no time constraint (i.e., $t=\infty$), we denote it as 
{\small
\begin{eqnarray*}
&\alpha_{\infty,d}(U) \define\\
&\dfrac{\sum_{U_* \in \dot{\U}_d(U)} \Pr[\dot{\phi}(U_*)|\dot{\phi}(U)] \cdot \max \Delta f_{\infty}\big(\dot{S}(U_*),v ,\dot{\phi}(U_*)\big)}{\max_v \Delta f_\infty\big(\dot{S}(U),v ,\dot{\phi}(U)\big)}.
\end{eqnarray*}}
Furthermore, if we wait until the diffusion terminates before selecting the next seed node (i.e., $d=\infty$), we have 
{\small
\begin{eqnarray*}
&\alpha_{\infty,\infty}(U)\define \\
&\dfrac{\sum_{U_* \in \dot{\U}_\infty(U)} \Pr[\dot{\phi}(U_*)|\dot{\phi}(U)] \cdot \max \Delta f_{\infty}\big(\dot{S}(U_*),v ,\dot{\phi}(U_*)\big)}{\max_v \Delta f_\infty\big(\dot{S}(U),v ,\dot{\phi}(U)\big)}.
\end{eqnarray*}}
For a decision tree, we define that \[\alpha(T)\define \max_{U \in \bigcup {U}_i^T} \alpha_{\infty,\infty}(U)\] which is the largest $\alpha_{\infty,\infty}(U)$ among all the statuses in the tree.
\end{definition}

\begin{remark}
\label{remark: ratio_final}
We can see that if $U$ is a final status, then the ratio $\alpha_{t,d}(U)$ is always no larger than $1$ as no more observation can be made, indicating that the first method is a dominant strategy. Because the root status is always final in each decision tree $T$, $\alpha(T)$ is no less than 1. Furthermore, if $U$ is a final status and $t=\infty$, we have $\alpha_{\infty,d}(U)=1$ for each $d$.
\end{remark}

\begin{remark}
When $t=\infty$, the ratio $\alpha_{\infty,d}(U)$ is always no less than $1$, and therefore the second method is a dominant strategy. In this case, $\alpha_{\infty,d}(U)$ shows the penalty incurred by not waiting for the diffusion to terminate before determining the seed node. Furthermore, $\alpha_{\infty,d}(U)$ will not decrease with the increase of $d$. That is, the more rounds we wait for, the more profit we can gain. We will use $\alpha_{\infty,\infty}$ to bound the approximation ratio of the greedy policy. 
\end{remark}

\begin{remark}
\label{remark: trade-off}
When $t<n-1$ and $U$ is not final, there is a trade-off determined by the number of the rounds $d$ we would wait for. If we waited for more rounds, we would have more observations and had a better chance to explore high-quality seed nodes, but meanwhile we would lose more diffusion rounds. 
\end{remark}

\subsection{Main Result}
The main result of our paper is shown below.
\begin{theorem}
\label{theorem: main}
For each policy $\pi_*$, \[F(\pi_g, k, d) \geq (1-e^{-1/\alpha(T_g)})\cdot F(\pi_*, k, d),\] where $T_g$ is the decision tree of the $(\pi_g,k,d)$-process
\end{theorem}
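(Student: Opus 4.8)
The plan is to mirror the classical analysis of the adaptive greedy algorithm for adaptive submodular maximization, but to route the one crucial inequality through the regret ratio, which is precisely the quantity that repairs the failure of adaptive submodularity under $d$-round feedback. Throughout, write $m(U)\define \max_v \Delta f_\infty\big(\dot{S}(U),v,\dot{\phi}(U)\big)$ for greedy's immediate marginal gain at a status $U$, and observe that the numerator of $\alpha_{\infty,\infty}(U)$ is exactly $M(U)\define \sum_{U_* \in \dot{\U}_{\infty}(U)} \Pr[\dot{\phi}(U_*)|\dot{\phi}(U)]\, m(U_*)$, so that $M(U)=\alpha_{\infty,\infty}(U)\, m(U)$. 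Let $g_i$ denote the expected number of active nodes obtained by running $\pi_g$ for only its first $i$ seeding steps and then letting the diffusion terminate, so that $g_0=0$ and $g_k=F(\pi_g,k,d)$, and abbreviate $OPT\define F(\pi_*,k,d)$.

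First I would use the tree concatenation $T_g^{[i]}\oplus T_{\pi_*}$ (Def.~\ref{def: con_tree}) together with monotonicity of the influence to argue $OPT \leq F\big(T_g^{[i]}\oplus T_{\pi_*}\big)$: running $\pi_*$ from a fresh start after greedy's first $i$ steps can only add active nodes. Reading the concatenated tree level by level and invoking the stated tree Property to line up the probability weights, the gap decomposes as
\[
OPT - g_i \;\leq\; \sum_{U \in U_{i+1}^{T_g}} \Pr[\dot{\phi}(U)] \cdot \Delta(\pi_* \text{ from } U),
\]
where $\Delta(\pi_*\text{ from }U)$ is the expected marginal number of active nodes produced by $\pi_*$ acting on the conditioned tree $T_{\pi_*}|U$, i.e. started from active set $\dot{S}(U)$ and realization $\dot{\phi}(U)$.

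The heart of the argument, and the step I expect to be the main obstacle, is the per-status bound
\[
\Delta(\pi_* \text{ from } U) \;\leq\; k\,\alpha_{\infty,\infty}(U)\, m(U) \;\leq\; k\,\alpha(T_g)\, m(U).
\]
The subtlety is that the $(k,d)$ problem is not adaptive submodular at a non-final status $U$, so I cannot bound $\Delta(\pi_*\text{ from }U)$ directly by $k\,m(U)$. Instead I would relax $\pi_*$ to a policy permitted to wait for the diffusion to fully terminate before each of its $k$ selections. Since the final influence depends only on the chosen seed set and the full realization, and not on the timing of the selections, granting more observation before each choice can only increase the achievable marginal value; hence this relaxation upper-bounds $\Delta(\pi_*\text{ from }U)$. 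Re-bracketing the relaxed process as \emph{wait for $U$ to terminate, reaching $U_*\in \dot{\U}_{\infty}(U)$, then run $k$ rounds of select-then-terminate} places us in the genuinely adaptive submodular Full Adoption regime starting from each termination status $U_*$, where the adaptive monotonicity and submodularity of $\Delta f_\infty$ (Kempe \textit{et al.}; Golovin--Krause) give that any policy's total marginal value from $U_*$ is at most $k\,m(U_*)$. Taking the expectation over $U_*\in \dot{\U}_{\infty}(U)$ yields $k\,M(U)=k\,\alpha_{\infty,\infty}(U)\,m(U)$, and the definition of $\alpha(T_g)$ as the maximum of $\alpha_{\infty,\infty}$ over all tree statuses completes the bound.

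Finally I would assemble the recursion. Combining the decomposition, the per-status bound, and the identity $\sum_{U \in U_{i+1}^{T_g}} \Pr[\dot{\phi}(U)]\,m(U) = g_{i+1}-g_i$ (greedy selects $\argmax_v \Delta f_\infty$, so its expected gain at $U$ is exactly $m(U)$) gives $OPT - g_i \leq k\,\alpha(T_g)\,(g_{i+1}-g_i)$ for every $i$, equivalently $OPT - g_{i+1} \leq \big(1-\tfrac{1}{k\,\alpha(T_g)}\big)(OPT-g_i)$. Iterating from $g_0=0$ yields $OPT-g_k \leq \big(1-\tfrac{1}{k\,\alpha(T_g)}\big)^k OPT \leq e^{-1/\alpha(T_g)}\,OPT$, which rearranges to $F(\pi_g,k,d)=g_k \geq (1-e^{-1/\alpha(T_g)})\,OPT$. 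Besides the relaxation step, the points needing care are the level-by-level bookkeeping on the concatenated tree (ensuring the conditioning on $U$ and the probability weights align through the tree Property) and verifying that the ``more observation never hurts'' relaxation is valid for the marginal objective $\Delta$ rather than for the raw influence.
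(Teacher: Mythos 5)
Your proposal is correct, and its outer skeleton coincides with the paper's: both routes go through the concatenation of greedy's $i$-step tree with the adversary's tree, the monotonicity fact $F(T_1\oplus T_2)\geq F(T_2)$ (Lemma~\ref{lemma: monotone}), a per-level inequality of the form $F(\pi_*,k,d)-g_i\leq k\,\alpha(T_g)\,(g_{i+1}-g_i)$, and the standard $(1-\frac{1}{k\alpha(T_g)})^k\leq e^{-1/\alpha(T_g)}$ recursion. Where you genuinely diverge is in how the key inequality is established. The paper proves it one adversary step at a time (Lemma~\ref{lemma: key}): it first swaps in the lazy L-$(\pi_g,k,d)$-process (Lemma~\ref{lemma: lazy_same}) precisely so that a layer of \emph{final} statuses $\dot{\U}_\infty(U)$ is interposed below greedy's level-$i$ statuses, then double-groups the realizations of the concatenated tree and dominates the adversary's single-step gain at any super-status by the best single-node gain at the interposed final status (Lemma~\ref{lemma: general_submodular}); averaging over the final statuses manufactures exactly the numerator of $\alpha_{\infty,\infty}(U)$, and summing over $l=1,\dots,k$ yields the factor $k$. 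You instead bound all $k$ adversary steps at once per status $U$, by relaxing $\pi_*$ to full-adoption feedback and invoking the Golovin--Krause bound of $k\,m(U_*)$ from each $U_*\in\dot{\U}_\infty(U)$; the lazy process then disappears from your argument because waiting-to-termination is absorbed into the relaxation and into the $\alpha_{\infty,\infty}$ numerator. Your packaging is more modular and shorter, but two of its steps conceal exactly the work the paper does by hand. First, ``more observation never hurts'' is not automatic when the two policy classes have different information structures: you need a simulation argument showing that, given identical seed choices, the full-adoption observations determine the $d$-round observations (this, together with the timing-irrelevance of Lemma~\ref{lemma: lazy_same}, is in substance the optimality-of-full-adoption result of \cite{tong2017adaptive}); you flag this, correctly, as a point needing care. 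Second, the Golovin--Krause $k\cdot\max$ bound is stated in their item-based realization formalism and does not apply verbatim to a status $U_*$ carrying an exogenous active set $\dot{S}(U_*)$ and a conditioned realization, with marginals measured relative to the union semantics of the conditioned tree; re-deriving it in this setting is precisely the content of Lemma~\ref{lemma: general_submodular} in the supplement. So your black box and the paper's hand-proved lemma are the same mathematical engine: the paper's route is self-contained at the price of the L-process and realization-grouping bookkeeping, while yours reuses known results at the price of verifying these two facts in the paper's status formalism.
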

\begin{remark}
The $(k,d)$-AIM problem has been studied for the case $d=\infty$. An early discussion was given by Golovin and Krause in \cite{golovin2010adaptive} where it was confirmed that the greedy policy provides a $(1-1/e)$-approximation to the $(k,\infty)$-AIM problem based on the concept of adaptive submodularity. According to Remark \ref{remark: ratio_final}, $\alpha(T_g)=1$ when $d=\infty$, because all the status in  $T_g$ are now final status, and therefore we have $1-1/e$ again for this special case. \footnote{In \cite{tong2017adaptive}, we claimed that the greedy algorithm gives a $(1-1/e)$-approximation for Problem \ref{problem: k-d} when $d< n-1$. We now retract that claim.}
\end{remark}

\subsection{Proof of Theorem \ref{theorem: main}}  
This section gives the main line of the proof where the details are deferred to the supplementary material. The main idea is to introduce another seeding process which is (a) equivalent to the seeding process we consider and (b) comparable to the greedy policy via the regret ratio.

Before delving into the proof, we first introduce the necessary preliminaries. For the purpose of analysis, let us consider the following L-$(\pi, k, d)$-process.

\begin{definition}[\textbf{L-$(\pi, k, d)$-process}]
Set $(S,\phi)$ as $(\emptyset, \phi_\emptyset)$. 
\begin{itemize}
\item \textbf{Step 1}. Repeat the following process for \textit{$k-1$ times}. 
\begin{itemize}
\item (\textbf{seeding-step}) Select and activate the node $\pi(S,\phi)$.
\item (\textbf{observing-step}) Observe the diffusion for $d$ rounds. Update $(S,\phi)$ by setting $S$ as the set of the current active nodes and $\phi$ as the current realization.
\end{itemize}
\item \textbf{Step 2}. 
\begin{itemize}
\item (\textbf{seeding-step}) Decide the seed node $v^*=\pi(S,\phi)$, but do not activate $v^*$. This is the $k$-th seeding step.
\item (\textbf{observing-step}) Wait for the diffusion to terminate.
\end{itemize}
\item \textbf{Step 3}. 
\begin{itemize}
\item (\textbf{seeding-step}) Activate the node $v^*$. This is the $(k+1)$-th seeding step.
\end{itemize}
\item Wait for the diffusion to terminate and output the number of active nodes.
\end{itemize}
\end{definition}

The L-$(\pi, k, d)$-process is identical to the $(\pi, k, d)$-process except that the last seed node is activated with a delay. However, the total profit remains the same. Let $F_L(\pi,k,d)$ be the expected number of the active nodes output by the L-$(\pi, k, d)$-process. We have the following result.

\begin{lemma}
\label{lemma: lazy_same}
$F(\pi,k,d)=F_L(\pi,k,d)$.
\end{lemma}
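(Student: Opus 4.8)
The plan is to show that activating the final ($k$-th) seed node immediately versus after a delay produces the same expected number of active nodes, so that the two processes are distributionally equivalent in their output. The key observation is that the randomness in the IC model comes entirely from the edges, and these edge states are independent of when a node is activated: whether an edge is live or dead is a fixed (random) property of the realization that does not depend on the timing of seed deployment. The two processes differ only in that in the L-$(\pi,k,d)$-process the node $v^*$ is decided after $k-1$ adaptive steps, then the diffusion is run to termination \emph{without} $v^*$, and only afterward is $v^*$ activated and the diffusion continued to termination again.

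First I would argue that the decision $v^* = \pi(S,\phi)$ is identical in both processes: up through the end of the $(k-1)$-th observing-step, the two processes are literally the same, producing the same status $(S,\phi)$, and hence the same node $v^*$ chosen by $\pi$. Thus it suffices to compare, for a \emph{fixed} status $(S,\phi)$ reached after $k-1$ seeding steps and a fixed chosen node $v^*$, the final active set under the two activation orders. I would condition on a full realization $\psi$ with $\dot{\phi}(S,\phi) \prec \psi$ and compare the two outputs pointwise in $\psi$.

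The heart of the argument is that $A_\infty(S \cup v^*, \psi)$ depends only on the set $S \cup v^*$ and on $\psi$, not on the order or timing in which the elements of $S \cup v^*$ were activated. In the ordinary process, after the $(k-1)$-th step the active set is $\dot{S}(U)$ for the status $U$ reached, $v^*$ is added, and the diffusion runs to termination, giving $A_\infty(\dot{S}(U) \cup v^*, \psi)$. In the lazy process, we first run the diffusion from $\dot{S}(U)$ to termination, obtaining $A_\infty(\dot{S}(U), \psi)$, then activate $v^*$ and run again to termination, obtaining $A_\infty(A_\infty(\dot{S}(U),\psi) \cup v^*, \psi)$. I would invoke the characterization from Kempe \emph{et al.} (a node $w$ is active iff there is an $\infty$-live-path from some active seed to $w$ in $\psi$) to show both quantities equal the set of nodes reachable via $\infty$-live-paths from $\dot{S}(U) \cup v^*$ in $\psi$; running to termination and restarting from the reached set adds no new reachability, so $A_\infty(A_\infty(\dot{S}(U),\psi)\cup v^*,\psi) = A_\infty(\dot{S}(U)\cup v^*,\psi)$.

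The main obstacle, and the step requiring the most care, is making the equivalence rigorous at the level of the probability measure rather than just pointwise in $\psi$: I must confirm that the distribution over statuses $(S,\phi)$ reached after $k-1$ steps is identical in the two processes (which follows since the first $k-1$ steps are verbatim the same and $\pi$ is deterministic given a status), and that conditioning on the partial realization $\phi$ and then averaging over compatible full realizations $\psi$ with weights $\Pr[\psi|\phi]$ gives the same expectation. Concretely I would write $F_L(\pi,k,d) = \sum_{U} \Pr[U] \sum_{\phi(U) \prec \psi} \Pr[\psi|\phi(U)] \cdot |A_\infty(\dot{S}(U) \cup v^* , \psi)|$ summing over the leaf statuses $U$ of the common first $k-1$ steps, recognize the inner pointwise equality established above, and observe that the identical expression equals $F(\pi,k,d)$. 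Taking the expectation over all the adaptively reached statuses then yields $F(\pi,k,d) = F_L(\pi,k,d)$.
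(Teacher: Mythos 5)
Your proposal is correct and takes essentially the same approach as the paper's proof: fix a full realization $\psi$, note that both processes select identical seed nodes since the first $k-1$ steps coincide and $\pi$ is deterministic on statuses, and observe that the final count depends only on the seed set and $\psi$ (via $\infty$-live-path reachability) and not on activation timing. Your explicit verification of the idempotence $A_\infty\big(A_\infty(\dot{S}(U),\psi)\cup v^*,\psi\big)=A_\infty\big(\dot{S}(U)\cup v^*,\psi\big)$ and the lift to the probability measure merely fill in details the paper leaves implicit.
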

\begin{proof}
For a fixed full realization, the $(\pi,k,d)$-process and L-$(\pi,k,d)$-process always select the same seed nodes and the only difference is that the last seed in the L-$(\pi,k,d)$-process may be seeded with a delay. However, with respect to the number of active nodes, it does not matter that when we make the seed node activated, as long as we allow the diffusion process to finally terminate. The idea of lazy seeding was also seen early in \cite{mossel2007submodularity}. 
\end{proof}

\begin{figure}[t]
	\begin{center}
		\includegraphics[width=3in]{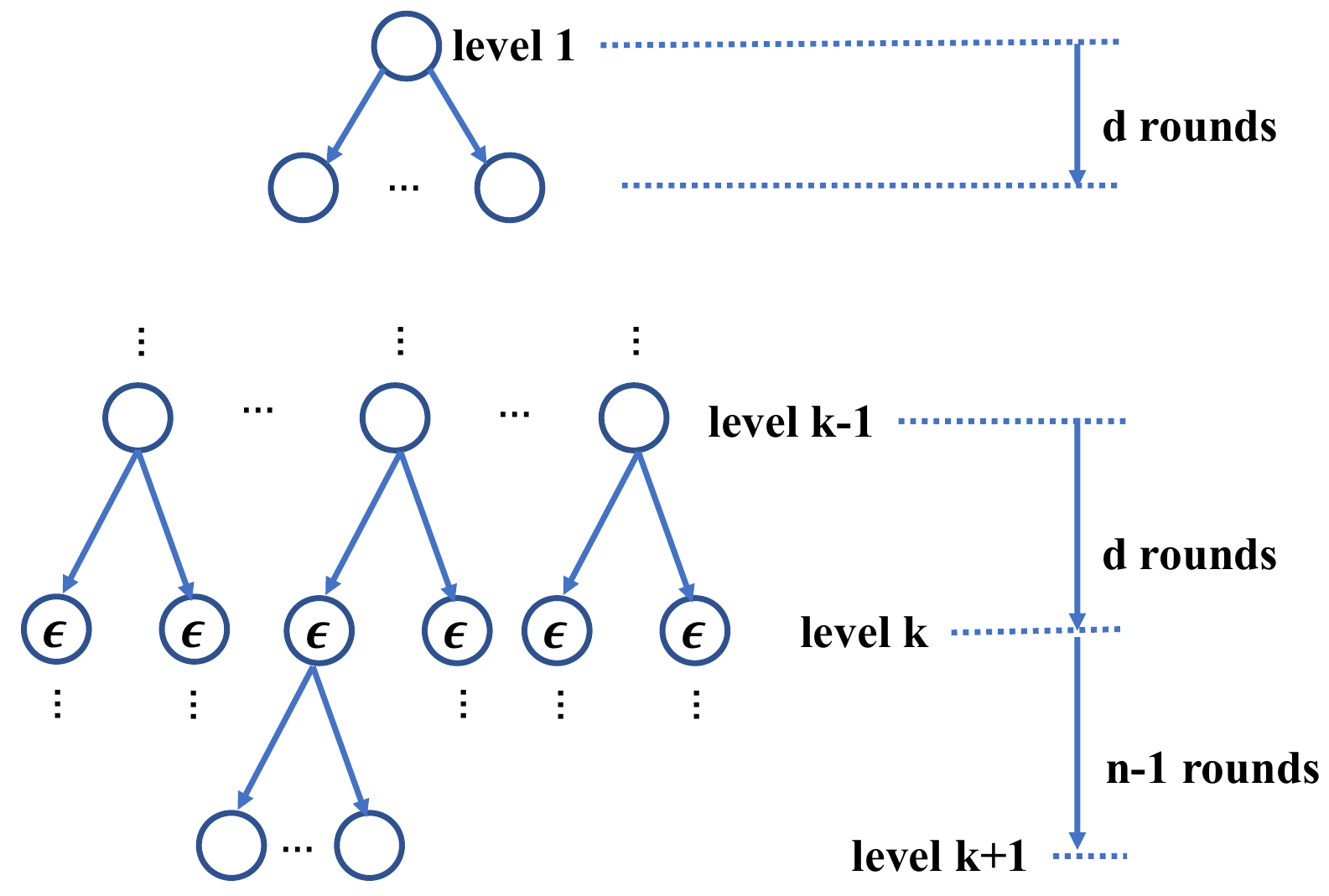} 
	\end{center} 
	\vspace{-2mm}
	\caption{\textbf{L-$(\pi, k, d)$-tree}}
	\vspace{-2mm}
	\label{fig: k_tree}
\end{figure}

\begin{definition}[\textbf{L-$(\pi, k, d)$-tree}]
\label{def: L-tree}
We denote the decision tree of the L-$(\pi, k, d)$-process as the L-$(\pi, k, d)$-tree. In an L-$(\pi, k, d)$-process, there are totally $k$ observing-steps and $k+1$ seeding-steps. Note that in the $k$-th seeding step no node is activated and therefore we label the tree-node by a special character $\epsilon$. An illustration is given in Fig. \ref{fig: k_tree}. 
\end{definition}

\begin{remark}
Different from the $(\pi, k, d)$-process, the leaves in a  L-$(\pi, k, d)$-tree are now final status, which is the key to establishing the performance bound based on the regret ratio.
\end{remark}
\begin{definition}[\textbf{Decision Tree Profit}]
\label{def: tree_profit}
For a decision tree $T$, we define that
{\small
\begin{align}
\label{eq: F(T)}
&F(T) \nonumber  \\
&\define\sum_{U \in U_{\infty}^T} \Pr[\dot{\phi}(U)]\sum_{\substack{\psi \in \Psi \\ \dot{\phi}(U) \prec \psi}} \Pr[\psi|\dot{\phi}(U)] \cdot |A_\infty(\dot{S}(U)+\dot{S}_{\e}(U),\psi)| \nonumber \\
&=\sum_{U \in U_{\infty}^T} \sum_{\substack{\psi \in \Psi \\ \dot{\phi}(U) \prec \psi}} \Pr[\psi]\cdot |A_\infty(\dot{S}(U)+\dot{S}_{\e}(U),\psi)|
\end{align}}
as the profit of the decision tree\footnote{We slightly abuse the notation $F$ by allowing it to have different definitions for different types of inputs.}, which is the immediate sum of the profit among all possible outcomes. 
\end{definition}

\begin{remark}
\label{remark: profit}
As one can see, if $T$ is the decision tree of a certain adaptive seeding process, then $F(T)$ is the expected number of active users resulted by the process. Thus, we have $F(\pi,k,d)= F_L(\pi,k,d)=F(T)$ where $T$ is the L-$(\pi, k, d)$-tree. 
\end{remark}

Note that $F(T_1 \oplus T_2)$ is also well-defined for a concatenation of two trees, and we have the following result showing that a concatenation will not decrease the total profit.
\begin{lemma}[\cite{golovin2011adaptive}]
	\label{lemma: monotone}
	$F(T_1 \oplus T_2)\geq F(T_2)$
\end{lemma}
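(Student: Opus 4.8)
The plan is to reduce the statement to the pointwise monotonicity of the influence function by re-indexing the tree profit over full realizations instead of over leaves.

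First I would recast $F(T)$ in a per-realization form. By the Property preceding the lemma (which, as noted there, also holds for a concatenation), the out-edges of every tree-node partition the conditional probability mass, so for each full realization $\psi\in\Psi$ there is a \emph{unique} leaf-edge $U_\psi\in U_\infty^T$ with $\dot{\phi}(U_\psi)\prec\psi$, namely the leaf reached when the sampled realization is $\psi$. Collapsing the double sum in Eq.~\eqref{eq: F(T)} accordingly gives
\[
F(T)=\sum_{\psi\in\Psi}\Pr[\psi]\cdot\big|A_\infty\big(\dot{S}(U_\psi)+\dot{S}_{\e}(U_\psi),\psi\big)\big|.
\]
Writing $\sigma(T,\psi)\define \dot{S}(U_\psi)+\dot{S}_{\e}(U_\psi)$ for the total seed set selected along the $\psi$-path of $T$, this reads $F(T)=\sum_{\psi}\Pr[\psi]\,|A_\infty(\sigma(T,\psi),\psi)|$.

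Next I would establish the inclusion $\sigma(T_1\oplus T_2,\psi)\supseteq\sigma(T_2,\psi)$ for every $\psi$. Fixing $\psi$, its induced path in $T_1\oplus T_2$ first descends through $T_1$ to a $T_1$-leaf-edge $U_1$ with $\dot{\phi}(U_1)\prec\psi$, and then continues inside the grafted copy $T_2|U_1$. Let $U_2$ be the $T_2$-leaf-edge that $\psi$ reaches in $T_2$, so $\dot{\phi}(U_2)\prec\psi$ as well. Since $\dot{\phi}(U_1)$ and $\dot{\phi}(U_2)$ are both sub-realizations of the same full realization $\psi$, they are compatible (a shared live/dead edge would force $L(\psi)\cap D(\psi)\neq\emptyset$); hence the branch leading to $U_2$ is \emph{not} pruned when forming $T_2|U_1$, and the conditioning only augments each status by $U_1$ while leaving the selected tree-nodes unchanged. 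Consequently the $\psi$-path of $T_2|U_1$ terminates at the status $U_2\cup U_1$ and still selects $\dot{S}_{\e}(U_2)$, so $\sigma(T_1\oplus T_2,\psi)\supseteq\dot{S}(U_2)+\dot{S}_{\e}(U_2)=\sigma(T_2,\psi)$.

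Finally I would invoke monotonicity. By the Kempe \textit{et al.} lemma, $v\in A_\infty(S,\psi)$ iff $\psi$ contains an $\infty$-live-path from $S$ to $v$, so enlarging the seed set can only enlarge $A_\infty(\cdot,\psi)$; thus $|A_\infty(\sigma(T_1\oplus T_2,\psi),\psi)|\geq|A_\infty(\sigma(T_2,\psi),\psi)|$ for every $\psi$, and summing against $\Pr[\psi]$ yields $F(T_1\oplus T_2)\geq F(T_2)$. I expect the main obstacle to lie in the middle step: verifying rigorously that conditioning $T_2$ on the $T_1$-leaf status $U_1$ preserves, rather than prunes, exactly the branch that $\psi$ would follow, together with the bookkeeping that $\dot{S}(U_2\cup U_1)\supseteq\dot{S}(U_2)$ — both of which rest on the elementary fact that any two realizations lying below a common full realization are automatically compatible.
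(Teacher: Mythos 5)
Your proof is correct and is essentially the paper's own argument re-indexed: your unique-leaf-per-$\psi$ collapse is exactly the partition fact the paper invokes in its third and fifth steps, your path-tracing through $T_2|U_1$ reproduces the paper's decomposition of $U^{T_1\oplus T_2}_{\infty}$ into compatible pairs $(U_1,U_2)$ carrying seed set $\dot{S}(U_1\cup U_2)+\dot{S}_{\e}(U_2)$, and your final pointwise step is the paper's appeal to monotonicity of $A_\infty(\cdot,\psi)$ in the seed set. The only difference is presentational --- you argue pointwise per full realization rather than by manipulating the nested sums --- and the step you flag as the main obstacle is indeed sound, since two sub-realizations of a common full realization cannot conflict and every ancestor edge of $U_2$ on the $\psi$-path is itself a sub-realization of $\psi$, so the relevant branch survives the conditioning in Def.~\ref{def: condition-tree}.
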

\begin{proof}
One can obtain a proof using the view of policy similar to that given in  \cite{golovin2011adaptive}. We provide a proof from the view of decision tree in the supplementary material. 
\end{proof}

\begin{remark}
According to Remark \ref{remark: profit}, Eq. (\ref{eq: F(T)}) gives an explicit formula of $F_L(\pi,k,d)$. However, this formula is only used for analysis and it is not feasible to compute Eq. (\ref{eq: F(T)}) directly as there are exponential number of terms to sum. If $d=\infty$, L-$(\pi, k, d)$-tree is the same as $(\pi, k, d)$-tree.
\end{remark}

Now we are ready to prove Theorem \ref{theorem: main}. By Lemma \ref{lemma: lazy_same}, it is sufficient to analyze the L-$(\pi_g, k, d)$-process. In the rest of this section, we assume $\pi_*$, $d$ and $k$ are fixed, where $\pi_*$ can be an arbitrary policy. For each $i \in \{1,..., k\}$, we use $T_g^i$ to denote the decision tree of the L-$(\pi_g, i, d)$-process, and use $T_*^i$ to denote the decision tree of the $(\pi_*, i, d)$-process.   By Remark \ref{remark: profit}, we have $F(T_g^k)=F(\pi_g, k, d)$ and $F(T_*^k)=F(\pi_*, k, d)$, and therefore to prove Theorem \ref{theorem: main}, it suffices to show that 
\begin{eqnarray}
\label{eq: tree_ratio}
F(T_g^k) \geq (1-e^{-1/\alpha(T_g)}) \cdot F(T_*^k),
\end{eqnarray}
where $\alpha(T_g)$ is defined in Def. \ref{def: ratio}. For two integers $i, j \in \{1,...,k\}$, let us consider the decision tree $T_g^i \oplus T_*^j$. We define that $T_g^0 \oplus T_*^j\define T_*^j$ and $T_g^i \oplus T_*^0\define T_g^i$. For conciseness, we denote $T_g^i \oplus T_*^j$ by $T_{i,j}$. Furthermore, we define that $F(T_{0, 0})\define 0$ but do not define the tree $T_{0, 0}$.\footnote{One can imagine $T_{0, 0}$ as an empty tree.} By Lemma \ref{lemma: monotone}, we have
\begin{equation}
F(T_{i,k})\geq F(T_{0,k}).
\end{equation}

The following is a key lemma for proving Theorem \ref{theorem: main}. 
\begin{lemma}
\label{lemma: key}
For each $i, l \in \{1,...,k\}$, we have
\[F(T_{i-1,l})-F(T_{i-1,l-1}) \leq \alpha(T_g)\cdot \big(F(T_{0,i})-F(T_{0,i-1})\big).\]
\end{lemma}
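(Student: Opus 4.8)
The plan is to rewrite both sides as expected single-seed marginals measured at full diffusion and then to connect them through the regret ratio $\alpha_{\infty,\infty}$. The enabling fact is that, once a full realization $\psi$ is fixed, the active set $A_\infty(\cdot,\psi)$ depends only on which seeds are placed and not on the order or the observation schedule — the same principle behind Lemma~\ref{lemma: lazy_same}. Applying it to the right-hand quantity, which is greedy's own $i$-th increment $F(T_{i,0})-F(T_{i-1,0})=F(T_g^{i})-F(T_g^{i-1})$, I would show
\[
F(T_{i,0})-F(T_{i-1,0})=\sum_{U'}\Pr[\dot{\phi}(U')]\cdot\max_v\Delta f_\infty\big(\dot{S}(U'),v,\dot{\phi}(U')\big),
\]
where $U'$ ranges over the statuses at which $\pi_g$ makes its $i$-th decision (reached after $i-1$ seedings, each followed by its $d$-round observation). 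The identity holds because $\pi_g$ selects $\argmax_v\Delta f_\infty$ and, thanks to lazy seeding, this seed is scored at termination.

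The same fact linearizes the left-hand side into the expected marginal of $\pi_*$'s $l$-th seed. Because the greedy prefix of $i-1$ steps and the first $l-1$ steps of $\pi_*$ are shared by $T_{i-1,l}$ and $T_{i-1,l-1}$, the two profits differ only through that $l$-th seed, so
\[
F(T_{i-1,l})-F(T_{i-1,l-1})=\sum_{V}\Pr[\dot{\phi}(V)]\cdot\Delta f_\infty\big(\dot{S}(V),\pi_*(V),\dot{\phi}(V)\big)\le\sum_{V}\Pr[\dot{\phi}(V)]\cdot\max_v\Delta f_\infty\big(\dot{S}(V),v,\dot{\phi}(V)\big),
\]
where $V$ runs over the statuses obtained by executing the (lazy) greedy prefix and then $\pi_*$ for $l-1$ steps; the inequality merely replaces $\pi_*$'s particular choice by the best one.

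The heart of the argument is to bound $\sum_V \Pr[\dot{\phi}(V)]\max_v\Delta f_\infty(V)$ by $\alpha(T_g)$ times the greedy increment, in two moves. First, every such $V$ arises from a leaf status $U$ of the greedy prefix $T_{i-1,0}$ — a final status, thanks to the lazy last step — by adding $\pi_*$'s $l-1$ seeds together with their observations, so $\dot{S}(U)\subseteq\dot{S}(V)$ and $\dot{\phi}(U)\prec\dot{\phi}(V)$. I would invoke a diminishing-returns inequality for the full-adoption objective, $\Delta f_\infty(\dot{S}(V),v,\dot{\phi}(V))\le\Delta f_\infty(\dot{S}(U),v,\dot{\phi}(U))$ for each fixed $v$, whence $\max_v\Delta f_\infty(V)\le\max_v\Delta f_\infty(U)$; summing against the conditional weights $\Pr[\dot{\phi}(V)|\dot{\phi}(U)]$ (which sum to one by the decision-tree property) collapses the $\pi_*$-part to give $\sum_V\Pr[\dot{\phi}(V)]\max_v\Delta f_\infty(V)\le\sum_U\Pr[\dot{\phi}(U)]\max_v\Delta f_\infty(U)$. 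Second, the leaves $U$ of $T_{i-1,0}$ are exactly the full-diffusion descendants in $\dot{\U}_\infty(U')$ of greedy's $i$-th decision statuses $U'$; factoring $\Pr[\dot{\phi}(U)]=\Pr[\dot{\phi}(U')]\cdot\Pr[\dot{\phi}(U)|\dot{\phi}(U')]$ and reading off Definition~\ref{def: ratio},
\[
\sum_{U\in\dot{\U}_\infty(U')}\Pr[\dot{\phi}(U)|\dot{\phi}(U')]\cdot\max_v\Delta f_\infty(U)=\alpha_{\infty,\infty}(U')\cdot\max_v\Delta f_\infty(U')\le\alpha(T_g)\cdot\max_v\Delta f_\infty(U').
\]
Summing over $U'$ against $\Pr[\dot{\phi}(U')]$ reproduces $\alpha(T_g)$ times the greedy increment of the first display, closing the chain.

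The main obstacle is the per-$v$ diminishing-returns inequality of the first move. One cannot shortcut it by Jensen, since averaging a maximum goes the wrong way ($\sum_V\Pr[\cdot]\max_v\ge\max_v\sum_V\Pr[\cdot]$); the pointwise statement must be proved for each $v$ before passing to the maximum. I would isolate it as a separate lemma and derive it from two ingredients: submodularity of $S\mapsto|A_\infty(S,\psi)|$ in the seed set for each fixed $\psi$ \cite{kempe2003maximizing}, which handles $\dot{S}(U)\subseteq\dot{S}(V)$ with $\psi$ held fixed, and the monotonicity of the conditional expected marginal under the growing partial realization $\dot{\phi}(U)\prec\dot{\phi}(V)$, which is exactly the adaptive submodularity of the full-adoption ($d=\infty$) objective and is the one place where that property is used. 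A secondary, more clerical point is the bookkeeping that identifies the leaves of $T_{i-1,0}$ with $\bigcup_{U'}\dot{\U}_\infty(U')$ and matches all probability weights; this rests on the decision-tree identity that out-edge probabilities sum to one.
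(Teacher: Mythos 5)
Your skeleton coincides with the paper's own proof (supplementary Sec.~\ref{subsubsec: lemma_key}): the paper likewise expands the greedy increment over the statuses $U_1\in U_i^{T_g^i}$ using the argmax property of $\pi_g$ (its Eq.~(\ref{eq: lemma_key_right}), your first display), expands the left side over the last-level tree-edges of $T_{i-1,l}$ and replaces $\pi_*$'s choice by the maximum, groups these statuses by the \emph{final} statuses $U_2\in U_{i+1}^{T_g^i}=\dot{\U}_\infty(U_1)$ (your leaves-run-to-completion $U$; the lazy $\epsilon$-step is exactly what makes them final), applies a pointwise per-$v$ inequality (its Lemma~\ref{lemma: submodular}), and closes with the regret-ratio identity $\sum_{U_2\in\dot{\U}_\infty(U_1)}\Pr[\dot{\phi}(U_2)|\dot{\phi}(U_1)]\max_v\Delta f_\infty(U_2)=\alpha_{\infty,\infty}(U_1)\max_v\Delta f_\infty(U_1)\le\alpha(T_g)\max_v\Delta f_\infty(U_1)$. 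Your silent correction of the statement's index typo (reading the right side as the greedy increment $F(T_{i,0})-F(T_{i-1,0})$) also matches what the paper's proof actually manipulates. You even correctly flag that Jensen cannot shortcut the pointwise step.

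The genuine gap is in your proposed derivation of that pointwise lemma. You want to factor it as (i) submodularity of $S\mapsto|A_\infty(S,\psi)|$ per fixed $\psi$, plus (ii) ``monotonicity of the conditional expected marginal under the growing partial realization, which is exactly the adaptive submodularity of the full-adoption objective.'' Ingredient (ii) fails as stated: with the active set held fixed, enlarging the realization can \emph{increase} the expected marginal (revealing a live edge on a path out of $v$ raises $\Delta f_\infty$), so the realization part cannot be separated from the active-set part; and the Golovin--Krause adaptive-submodularity guarantee does not apply here, because $\dot{\phi}(V)$ contains the truncated $d$-round observations generated by $\pi_*$'s seeds, i.e.\ precisely the non-final realizations under which adaptive submodularity is known to fail --- the reason this paper exists. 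What must be proved is the \emph{coupled} statement, and the paper proves it directly as Lemma~\ref{lemma: general_submodular}: assuming only that the smaller status $U$ is final, $\dot{S}(U)\subseteq\dot{S}(V)$, and the observed edge set of $U$ is contained in that of $V$, it partitions $E$ into the three blocks observed-by-$U$, observed-by-$V$-only, and unobserved, and argues per completion: any $t$-live path witnessing $u$ in the $V$-marginal of $v$ cannot use an edge of $L(\dot{\phi}(V))$ (else $u$ would already be reachable from $\dot{S}(V)$), hence survives in the $U$-scenario, while finality of $U$ forces $A_\infty(\dot{S}(U),\cdot)=\dot{S}(U)$ so that $u$ is genuinely new there. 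So you isolated the right lemma and the right hypotheses (finality via the lazy step), but the two-ingredient justification you sketch would not go through; the coupled live-path argument is the missing content.
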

\begin{proof}
The proof is based on subtle arrangements of the equations given by the decision tree so that we can analyze the marginal profit in a fine-grained manner. This is the main theoretical result of this paper, but for the sake of continuity we put this part in the supplementary material.
\end{proof}
With Lemma \ref{lemma: key}, the rest of the proof follows the standard analysis of submodular maximization. Summing the inequality in Lemma \ref{lemma: key} over $l \in \{1,...,k\}$, we have \[F(T_{i-1,k})-F(T_{i-1,0}) \leq k\cdot \alpha(T_g)\cdot \big(F(T_{0,i})-F(T_{0,i-1})\big),\] 
and, due to Lemma \ref{lemma: monotone} we have \[F(T_{0,k})-F(T_{i-1,0}) \leq k\cdot \alpha(T_g)\cdot  \big(F(T_{0,i})-F(T_{0,i-1})\big).\] Define that $ \bigtriangleup_i \define F(T_{0,k})-F(T_{i-1,0})$, and we therefore have $\bigtriangleup_i \leq k\cdot \alpha(T_g)\cdot (\bigtriangleup_i-\bigtriangleup_{i+1})$, implying $\bigtriangleup_{i+1} \leq (1-\frac{1}{\alpha(T_g)\cdot k})\cdot \bigtriangleup_{i}$ and consequently 
\[\bigtriangleup_{k+1} \leq (1-\frac{1}{\alpha(T_g)\cdot k})^k \cdot \bigtriangleup_{1} \leq \exp{(-\frac{1}{\alpha(T_g)})}\cdot \bigtriangleup_{1}.\] 
As a result, we have $F(T_{0,k})-F(T_{k,0}) \leq \exp{(-\frac{1}{\alpha(T_g)\cdot})} \cdot \big(F(T_{0,k})-F(T_{0,0})\big)$, and therefore, $F(T_g^k)=F(T_{k,0})\geq (1-e^{-1/\alpha(T_g)}) \cdot F(T_{0,k})=(1-e^{-1/\alpha(T_g)}) \cdot F(T_*^k)$, which completes the proof.

\subsection{Generalized RR-set}
\label{subsec: rrset}
In this section, we discuss one technique that can be used to improve the efficiency of the algorithms concerning the AIM problem. 

In each seeding step, the greedy policy selects the node with the highest marginal profit conditioned on the current status, and this process demands to calculate $ \Delta f_\infty(S,v, \phi)$ which is a \#P-hard problem \cite{chen2013near}. One straightforward method is to utilize Monte Carlo simulation which is unfortunately not efficient as widely discussed. Alternatively, with the idea of reverse sampling \cite{borgs2014maximizing}, an efficient estimating approach is obtainable, and this technique has been extensively studied for the IM problem or its variants (e.g., \cite{tang2015influence,sun2018multi,han2018efficient, tong2017efficient, tong2019beyond, wang2017activity}). In particular, The previous work has shown that how to use this technique for the case when $S=\emptyset$ and $\phi=\phi_\emptyset$, by utilizing the so-called RR-set. In what follows, we will show that an analogous approach can be obtained to estimate $\Delta f_t(S, V^*, \phi)$ given in Eq. (\ref{eq: f_t}) for the general case. With the concept inherited from \cite{tang2015influence}, we redefine the RR-set as follows.

\begin{definition}[\textbf{RR-set}]
\label{def: rrset}
Given $S \subseteq V$, $\phi \in \Phi$ and $t \in \mathbb{Z}^+$, an RR-set $\R \subseteq V$ is a set of nodes generated randomly as follows.
\begin{itemize}
\item \textbf{Step 1.} For an edge $e$ not in $L(\phi) \cup D(\phi)$, we sample its state according to the probability $p_e$. For an edge $e$ in $L(\phi) \cup D(\phi)$ we keep its state as given by $\phi$. After this step, we obtain a full realization $\psi^*$.
\item \textbf{Step 2.} Select a node $v$ which is not in $S$ uniformly at random, and let $R_v \subseteq V$ be the set of the nodes from which $v$ is reachable via a $t$-live-path in $\psi^*$. If $R_v \cap S= \emptyset$, we return  $\R=R_v$ as the RR-set. Otherwise, we return $\R=\emptyset$.
\end{itemize}
\end{definition}
Following the standard analysis of reverse sampling \cite{tang2015influence}, the following result can be readily derived.
\begin{lemma}
\label{lemma: rrset}
For each $V^* \subseteq V$, we have 
\begin{equation}
\label{eq: lemma_rreset}
(|V|-|S|) \cdot \E [\I(V^*\cap \R)]=\Delta f_t(S, V^* ,\phi),
\end{equation}
where 
\[
\I(V^* \cap \R)\define
\begin{cases}
1 &  \hspace{0mm} \hspace{-0.5mm} \text{if $S \cap V^* \neq \emptyset$} \\
0 & \hspace{0mm} \hspace{-0.5mm} \text{else } 
\end{cases}
\]
and $\R$ is a RR-set generated with the input $S, \phi$ and $t$.
\end{lemma}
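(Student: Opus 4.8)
The plan is to exploit the reverse-reachability correspondence underlying the standard RR-set argument, adapting it to account for the conditioning on the partial realization $\phi$ and the already-active set $S$. The whole identity will follow by matching, term by term, the quantity counted by the random RR-set against the definition of $\Delta f_t$ in Eq. (\ref{eq: f_t}).

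First I would fix a full realization $\psi$ with $\phi \prec \psi$ and, for each node $v$, write $R_v(\psi)$ for the set of nodes from which $v$ is reachable by a $t$-live-path in $\psi$; this is exactly the set computed in Step 2 of Def. \ref{def: rrset}. By the lemma of Kempe \textit{et al.} \cite{kempe2003maximizing}, $v \in A_t(S,\psi)$ iff $R_v(\psi) \cap S \neq \emptyset$, and likewise $v \in A_t(S \cup V^*, \psi)$ iff $R_v(\psi) \cap (S \cup V^*) \neq \emptyset$. Hence a node $v$ is counted by $\Delta_t(S, V^*, \psi) = |A_t(S \cup V^*, \psi)| - |A_t(S, \psi)|$ precisely when
\begin{equation*}
R_v(\psi) \cap S = \emptyset \quad \text{and} \quad R_v(\psi) \cap V^* \neq \emptyset .
\end{equation*}
In particular no $v \in S$ is ever counted, so $\Delta_t(S, V^*, \psi)$ equals the number of $v \in V \setminus S$ satisfying the two displayed conditions.

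Next I would unfold the RR-set definition to identify when $\I(V^* \cap \R) = 1$, i.e. when $V^* \cap \R \neq \emptyset$. Step 2 returns $\R = R_v(\psi^*)$ when $R_v(\psi^*) \cap S = \emptyset$ and returns $\R = \emptyset$ otherwise, so $\I(V^* \cap \R) = 1$ holds iff the sampled $v$ satisfies both $R_v(\psi^*) \cap S = \emptyset$ and $V^* \cap R_v(\psi^*) \neq \emptyset$ — the same pair of conditions characterizing the contributors to $\Delta_t$. Taking expectation over the two independent sources of randomness (the full realization $\psi^*$, distributed as $\Pr[\psi \mid \phi]$ by Step 1, and the node $v$ drawn uniformly from the $|V|-|S|$ elements of $V \setminus S$ in Step 2) gives
\begin{equation*}
\E[\I(V^* \cap \R)] = \sum_{\substack{\psi \in \Psi \\ \phi \prec \psi}} \Pr[\psi \mid \phi] \cdot \frac{1}{|V|-|S|} \sum_{v \in V \setminus S} \I\big[R_v(\psi) \cap S = \emptyset,\ V^* \cap R_v(\psi) \neq \emptyset\big] .
\end{equation*}
By the counting established above the inner sum is exactly $\Delta_t(S, V^*, \psi)$, so the right-hand side collapses to $\frac{1}{|V|-|S|}\,\Delta f_t(S, V^*, \phi)$ via Eq. (\ref{eq: f_t}); multiplying through by $|V|-|S|$ yields Eq. (\ref{eq: lemma_rreset}).

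The one genuinely delicate point — and the step I would treat most carefully — is the bookkeeping of the $\R = \emptyset$ branch: I must verify that returning the empty set (rather than counting $v$) is precisely the device that subtracts off $|A_t(S,\psi)|$, so that the RR-set estimates the \emph{marginal} gain rather than the raw coverage of $S \cup V^*$. Everything else is a faithful transcription of the reverse-sampling identity of \cite{tang2015influence} into the conditional setting $(S, \phi, t)$, and I expect no further obstacle.
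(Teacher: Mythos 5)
Your proof is correct and follows essentially the same route as the paper's: it is the standard reverse-sampling double count in which the Kempe \textit{et al.} reachability lemma identifies the contributors to $\Delta_t(S,V^*,\psi)$ as exactly those $v\in V\setminus S$ with $R_v(\psi)\cap S=\emptyset$ and $R_v(\psi)\cap V^*\neq\emptyset$, and the $\R=\emptyset$ branch implements the subtraction of $|A_t(S,\psi)|$ — the paper merely takes the two expectations in the opposite order (per-node marginal activation probability first, rather than per-realization counting), which is the same identity by Fubini. You also correctly read the indicator as testing $V^*\cap\R\neq\emptyset$; the condition ``$S\cap V^*\neq\emptyset$'' displayed in the lemma statement is a typo.
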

\begin{proof}
When a node $v$ is selected in Step 2, $\E [\I(V^* \cap R_v)]$ is the marginal increase of the probability that $v$ can be activated. There are two cases depending on whether or not $R_v \cap S$ is empty. If $R_v \cap S \neq \emptyset$, it indicates that $v$ can be activated by the current active nodes and therefore it cannot contribute to $\Delta f_t(S, V^* ,\phi)$ for any $V^*$. Accordingly, $R_v$ is set as $\emptyset$ so that $\I(V^* \cap R_v)=0$ for any $V^*$. If $R_v \cap S = \emptyset$, $R_v$ consists of the nodes that activate $v$ within $t$ rounds, and thus, $\E [\I(V^* \cap R_v)]$  shows the marginal contribution of $V^*$ on node $v$. Since $\Delta f_t(S, V^* ,\phi)$ is sum of the marginal increase of the activation probability over all the nodes not in $S$, calculating the mean of $\I(V^*\cap \R)$ immediately yields Eq. (\ref{eq: lemma_rreset}).
\end{proof}
\begin{remark}
\label{remark: rrset}
Note that there is no need to determine the state of each edge in advance, and instead one can collect the reachable nodes $R_v$ from $v$ in step 2 in a reverse direction.  Such an idea was invented by Borgs \textit{et al.} \cite{borgs2014maximizing}.
\end{remark}
According to Lemma \ref{lemma: rrset}, one can estimate $\Delta f_t(S, V^* ,\phi)$ by utilizing the sample mean of $\I(V^* \cap \R)$, and the estimation can be arbitrarily accurate provided sufficient samples are used. With the generalized RR-set, the greedy policy can be implemented in the similar manner as that in \cite{tang2015influence}. Precisely, the policy with such implementation is in fact a $(1-e^{-1/\alpha(T_g)}-\epsilon)$-approximation where $\epsilon$ is the error incurred by sampling and it can be arbitrarily small. We do not provide detailed analysis for the relationship between $\epsilon$ and the number of used RR-sets, as it is out of the scope of this paper, and the interested reader is referred to \cite{sun2018multi} and \cite{han2018efficient}. 

\subsection{An Upper Bound of $\alpha_{t,d}(U)$} 
\label{subsec: upper_bound}
Calculating the regret ratio $\alpha_{t,d}(U)$ is computationally hard due to the \#P-hardness in computing the function value of  $\Delta f_t(S, V^* ,\phi)$ and furthermore to the fact that it requires to consider an exponential number of statuses after $d$ diffusion rounds.  In this section, we provide an upper bound of $\alpha_{t,d}(U)$. Throughout this part, we assume the function value can be obtained with an arbitrary high accuracy, and therefore the denominator $\max_v \Delta f_t(\dot{S}(U),v ,\dot{\phi}(U))$ of $\alpha_{t,d}(U)$  is obtainable. Let us denote the numerator as 
\begin{eqnarray*}
&N(U)\define \\
&\sum_{U_* \in \dot{\U}_d(U)} \Pr[\dot{\phi}(U_*)|\dot{\phi}(U)] \cdot \max \Delta f_{t-d}(\dot{S}(U_*),v ,\dot{\phi}(U_*)).
\end{eqnarray*}
In addition, let us define a new status $U_{\f}$ where $\dot{S}(U_{\f})=\dot{S}(U)$, $L\big(\dot{\phi}(U_{\f}) \big) \define L(\dot{\phi}(U))$ and 
{\small
\begin{eqnarray*}
&D(\dot{\phi}(U_{\f})) \define \\
&D(\dot{\phi}(U)\cup \{(u,v): (u,v)\notin L(\dot{\phi}(U))\cup D(\dot{\phi}(U)), u\in \dot{\S}(U)\}.
\end{eqnarray*}}
Note that $U_{\f}$ is a super-realization of $U$, and $U_{\f}$ sets each undetermined edge out of an active node in $\dot{S}(U)$ as a dead edge. Informally, it is the worst outcome in $\dot{\U}_d(U)$ with respect to the number of total active nodes. The following lemma gives an upper bound of the numerator.
\begin{lemma}
\label{lemma: upper_bound}
For each status $U$, we have $N(U_{\f}) \geq N(U)$.
\end{lemma}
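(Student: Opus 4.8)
The plan is to prove the stronger \emph{branchwise} statement that for every $U_* \in \dot{\U}_d(U)$,
\[\max_v \Delta f_{t-d}\big(\dot{S}(U_*),v,\dot{\phi}(U_*)\big) \leq \max_v \Delta f_{t-d}\big(\dot{S}(U),v,\dot{\phi}(U_{\f})\big),\]
from which the lemma follows immediately. Indeed, I would first observe that from $U_{\f}$ the diffusion is \emph{frozen}: every undetermined out-edge of an active node has been set dead, and any remaining live out-edge of an active node leads only to an already-active node, so no new node can be activated. Hence $\dot{\U}_d(U_{\f})=\{U_{\f}\}$ and $N(U_{\f})=\max_v \Delta f_{t-d}(\dot{S}(U),v,\dot{\phi}(U_{\f}))$. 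Because the right-hand side above does not depend on $U_*$ and the weights $\Pr[\dot{\phi}(U_*)|\dot{\phi}(U)]$ sum to $1$ over $\dot{\U}_d(U)$, bounding each term of $N(U)$ by this common quantity gives $N(U)\leq N(U_{\f})$.

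To establish the branchwise inequality I would fix $U_*$, let $v^*$ be its maximizer, and prove the term comparison $\Delta f_{t-d}(\dot{S}(U_*),v^*,\dot{\phi}(U_*)) \leq \Delta f_{t-d}(\dot{S}(U),v^*,\dot{\phi}(U_{\f}))$. Writing both expected marginals as averages over full realizations and using the characterization that $z$ contributes to $\Delta_{t-d}(S,v^*,\psi)$ exactly when $z$ has a $(t-d)$-live-path from $v^*$ but none from $S$, the task reduces to a pointwise set inclusion: under a coupling that shares the states of all edges other than the out-edges of $\dot{S}(U)$ (killed in $U_{\f}$) and the cascade edges recorded during the $d$ rounds producing $U_*$, every node contributing on the $U_*$ side should also contribute on the $U_{\f}$ side. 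The underlying reason this should succeed is that $S\mapsto |A_{t-d}(S,\psi)|$ is a monotone submodular coverage function for fixed $\psi$, so enlarging the seeded set from $\dot{S}(U)$ to $\dot{S}(U_*)$ and revealing its out-edges can only shrink a fresh seed's marginal.

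The hard part will be that $\dot{\phi}(U_{\f})$ leaves the cascade edges of the branch \emph{unobserved}, so they are random on the $U_{\f}$ side yet fixed on the $U_*$ side, which a priori defeats a term-by-term comparison. I expect to resolve this with a path-avoidance argument: if $z$ contributes to $v^*$'s marginal over $\dot{S}(U_*)$, then no $(t-d)$-live-path from $v^*$ to $z$ can meet $\dot{S}(U_*)$, since any such meeting would place $z$ within $t-d$ live steps of $\dot{S}(U_*)$; hence such a path uses only edges emanating from nodes \emph{outside} $\dot{S}(U_*)$. As every out-edge of $\dot{S}(U)$ and every cascade edge recorded in $\dot{\phi}(U_*)$ emanates from a node of $\dot{S}(U_*)$, the path avoids precisely the edges on which the two realizations may disagree; it therefore stays live in \emph{every} coupled $U_{\f}$-realization no matter how those edges are set, and $z\notin \dot{S}(U_*)\supseteq \dot{S}(U)$ ensures $z$ is not already covered on the $U_{\f}$ side. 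This yields the set inclusion for every coupled realization, hence the pointwise inequality, which integrates to the branchwise bound and completes the proof.
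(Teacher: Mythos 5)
Your proposal is correct and takes essentially the same route as the paper: the paper likewise uses $\sum_{U_* \in \dot{\U}_d(U)} \Pr[\dot{\phi}(U_*)|\dot{\phi}(U)]=1$ to reduce the lemma to the branchwise inequality $\Delta f_{t-d}\big(\dot{S}(U_{\f}),v,\dot{\phi}(U_{\f})\big)\geq \Delta f_{t-d}\big(\dot{S}(U_*),v,\dot{\phi}(U_*)\big)$ for each $U_*\in \dot{\U}_d(U)$, which it obtains as a special case of Lemma~\ref{lemma: general_submodular} (with $U_1=U_{\f}$ final and $U_2=U_*$). The proof of that lemma is precisely your coupling-plus-path-avoidance argument: it decomposes full realizations over the commonly unobserved edges, observes that a contributing node's live path from $v$ cannot use any edge emanating from $\dot{S}(U_*)$ (else the node would already be covered), so the path survives on the $U_{\f}$ side, where finality of $U_{\f}$ rules out prior coverage.
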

\begin{proof}
See supplementary material.
\end{proof}
\begin{remark}
One can see that $N(U_{\f})$ can be easily computed as it is a final status. We are particularly interested in the upper bound of $\alpha_{t,d}(U)$ due to the interests in the lower bound of the approximation ratio in Theorem \ref{theorem: main}.
\end{remark}

\section{Experiments}
\label{sec: exp}
The primary goal of our experiments is to evaluate the performance of the greedy algorithm with respect to Problem \ref{problem: k-d} by (a) comparing it with other baseline algorithms and (b) examining the effect of the feedback model on the influence pattern.


\begin{figure*}[!pt]
	\centering
	\subfloat[{[$k=5$, non-daptive]}]{\label{fig: higgs5_3_0}\includegraphics[trim = 0.5in 0in 0.5in 0in, clip, width=0.24\textwidth]{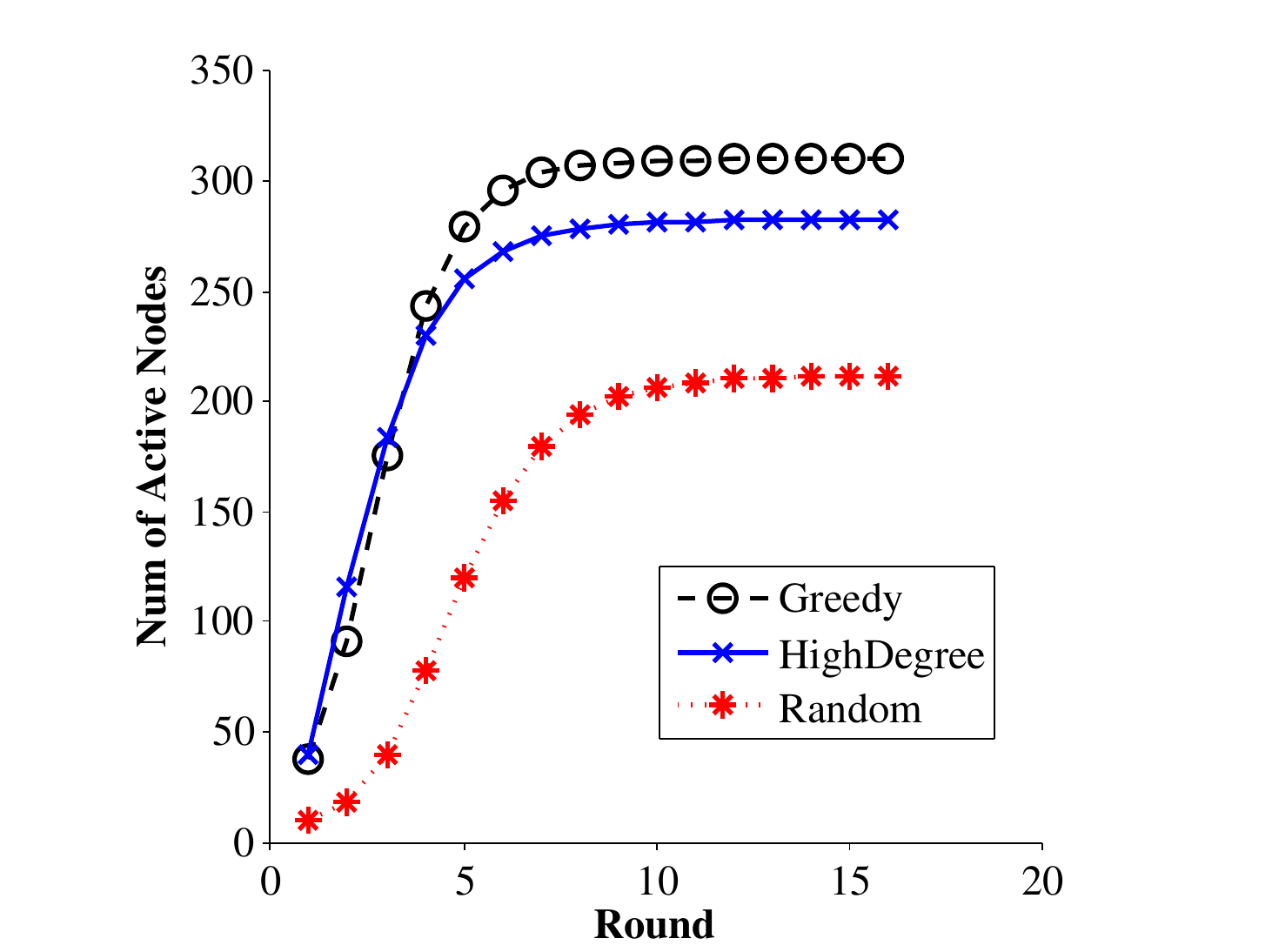}}
	\subfloat[{[$k=5, d=1$]}]{\label{fig: higgs5_3_3}\includegraphics[trim = 0.5in 0in 0.5in 0in, clip,width=0.24\textwidth]{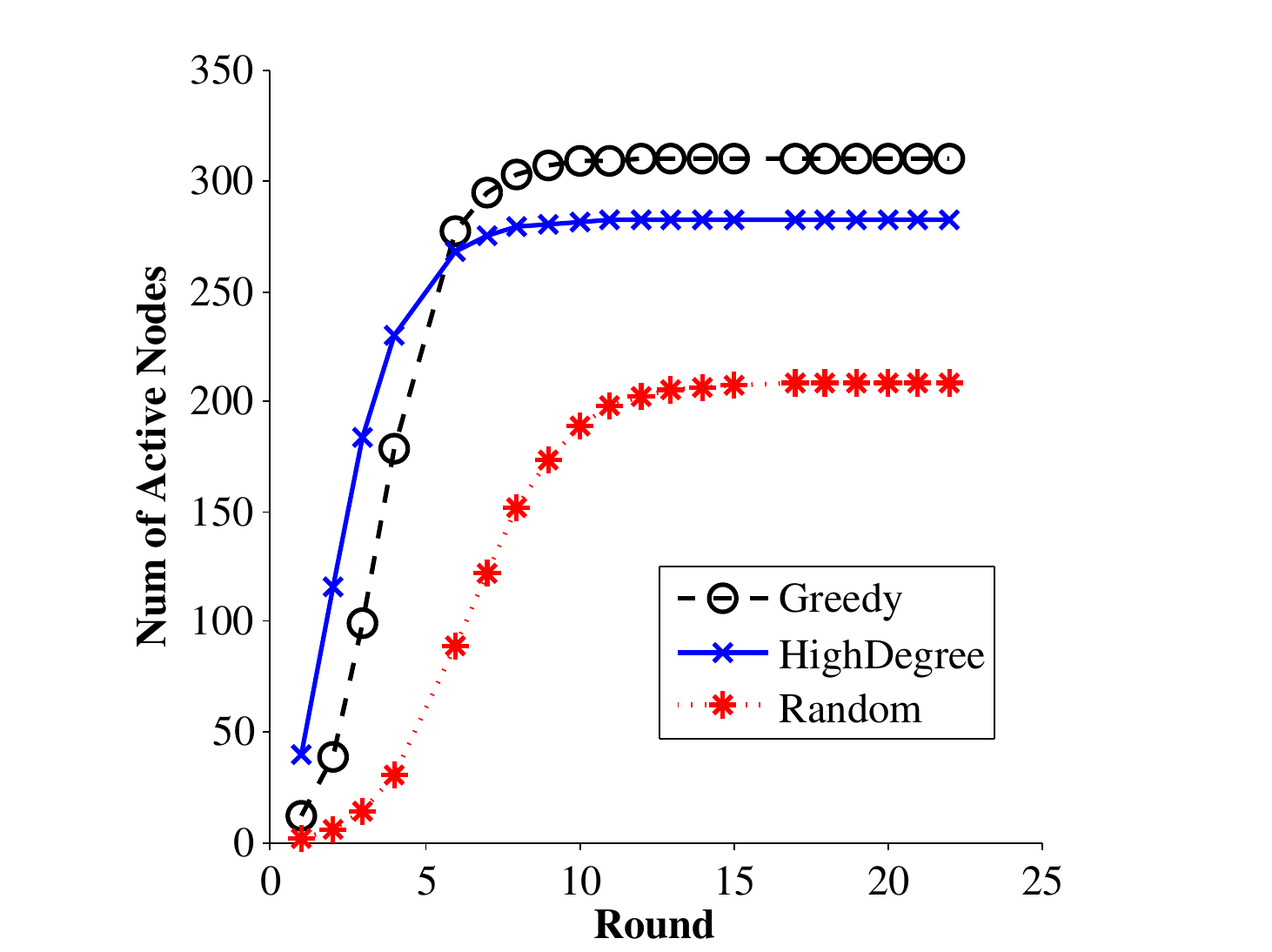}}
	\subfloat[{[$k=5, d=8$]}]{\label{fig: higgs5_3_9}\includegraphics[trim = 0.5in 0in 0.5in 0in, clip,width=0.24\textwidth]{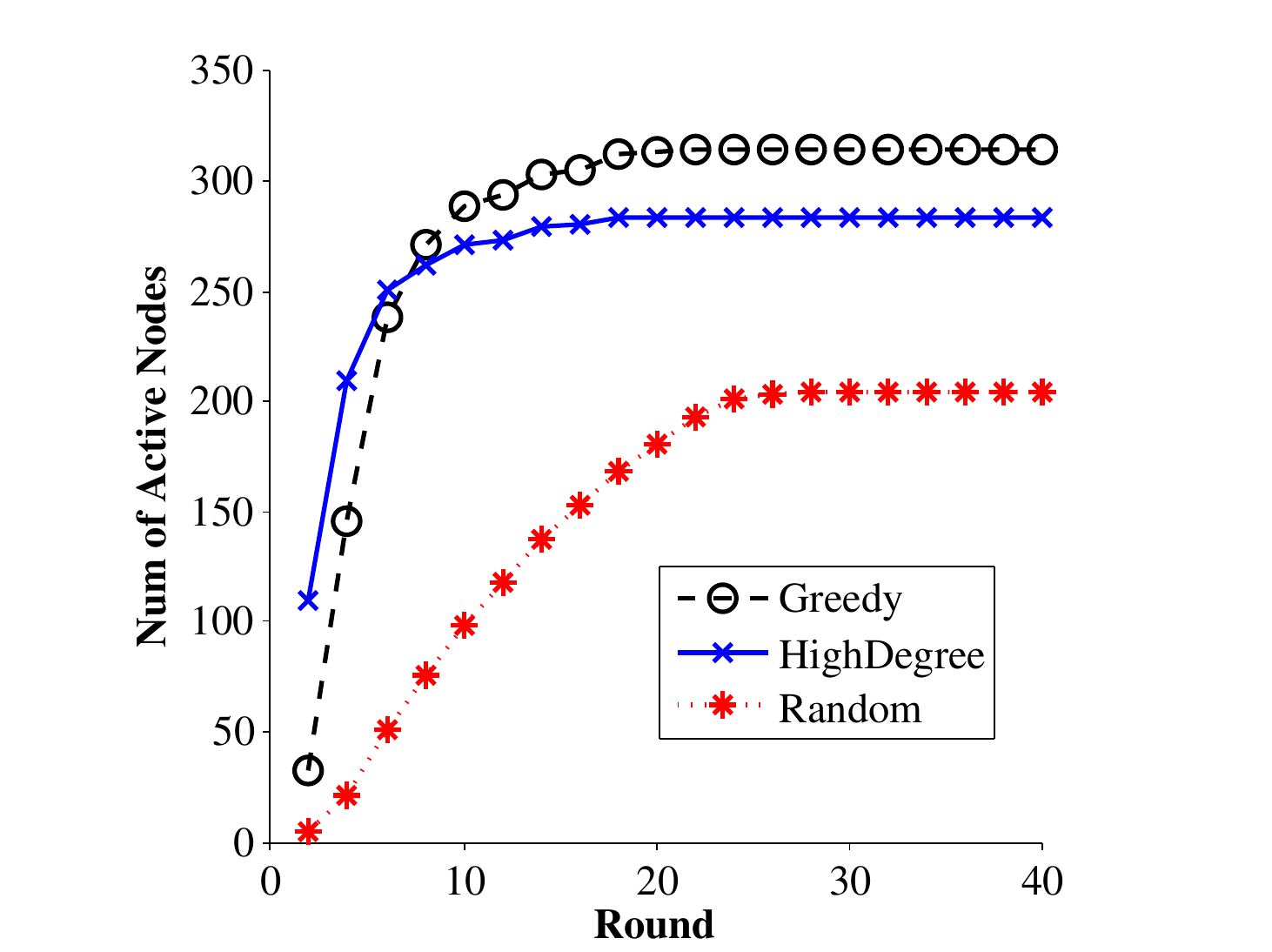}}
	\subfloat[{[$k=5, d=\infty$]}]{\label{fig: higgs5_3_15}\includegraphics[trim = 0.5in 0in 0.5in 0in, clip,width=0.24\textwidth]{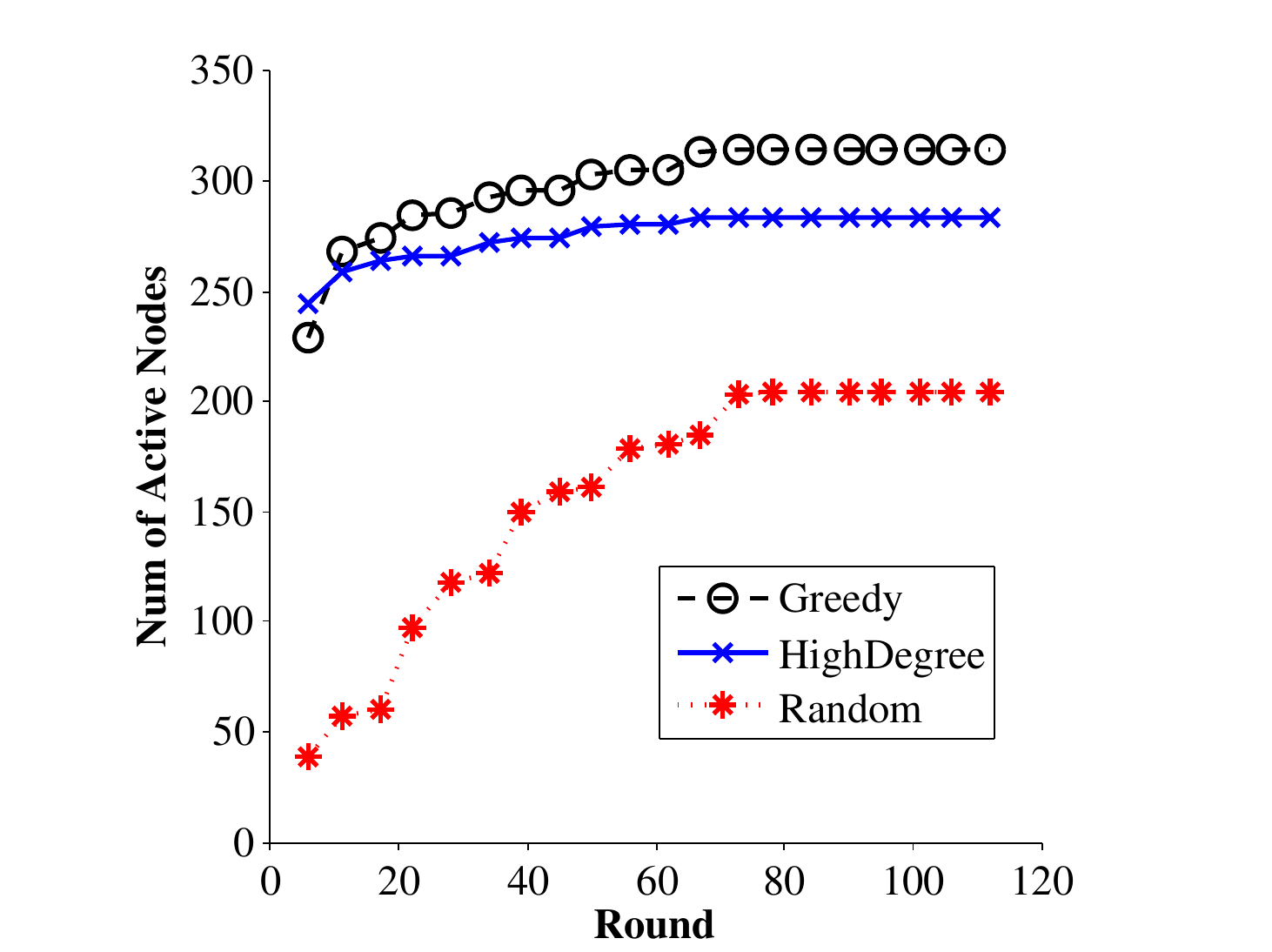}}
	
	\subfloat[{[$k=50$, non-daptive]}]{\label{fig: higgs50_3_0}\includegraphics[trim = 0.5in 0in 0.5in 0in, clip,width=0.24\textwidth]{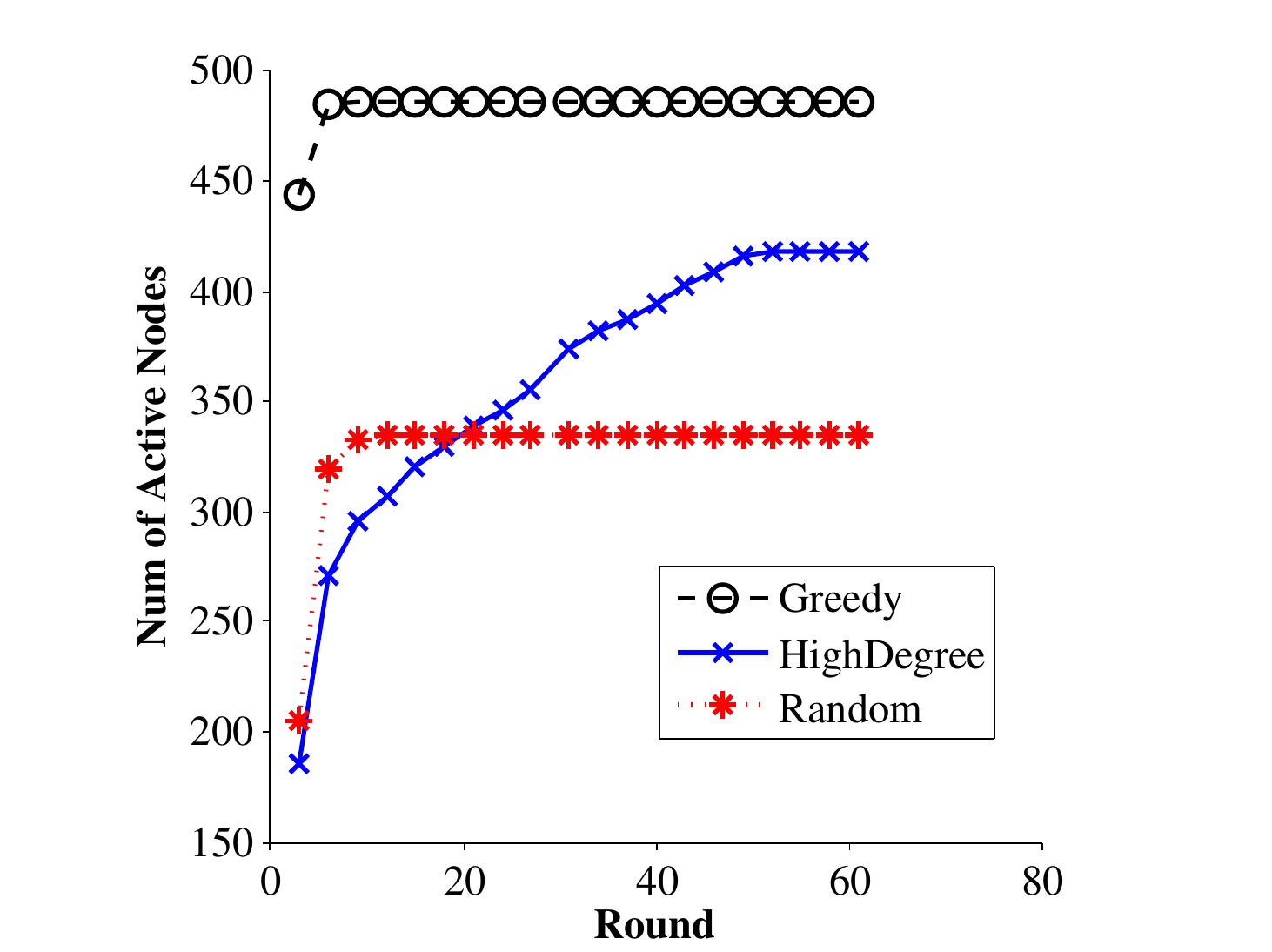}}
	\subfloat[{[$k=50, d=1$]}]{\label{fig: higgs50_3_3}\includegraphics[trim = 0.5in 0in 0.5in 0in, clip,width=0.24\textwidth]{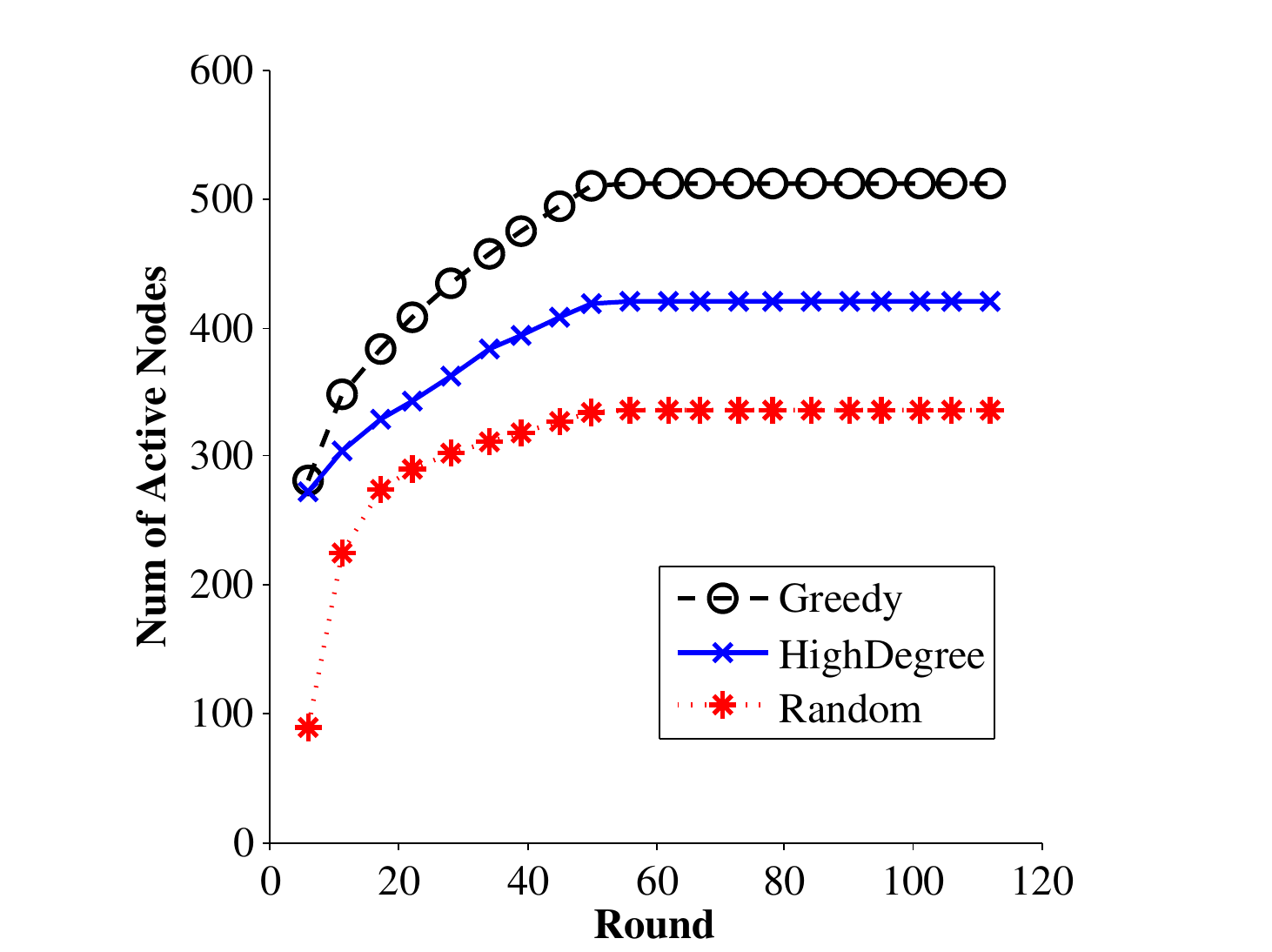}}
	\subfloat[{[$k=50, d=8$]}]{\label{fig: higgs50_3_9}\includegraphics[trim = 0.5in 0in 0.5in 0in, clip,width=0.24\textwidth]{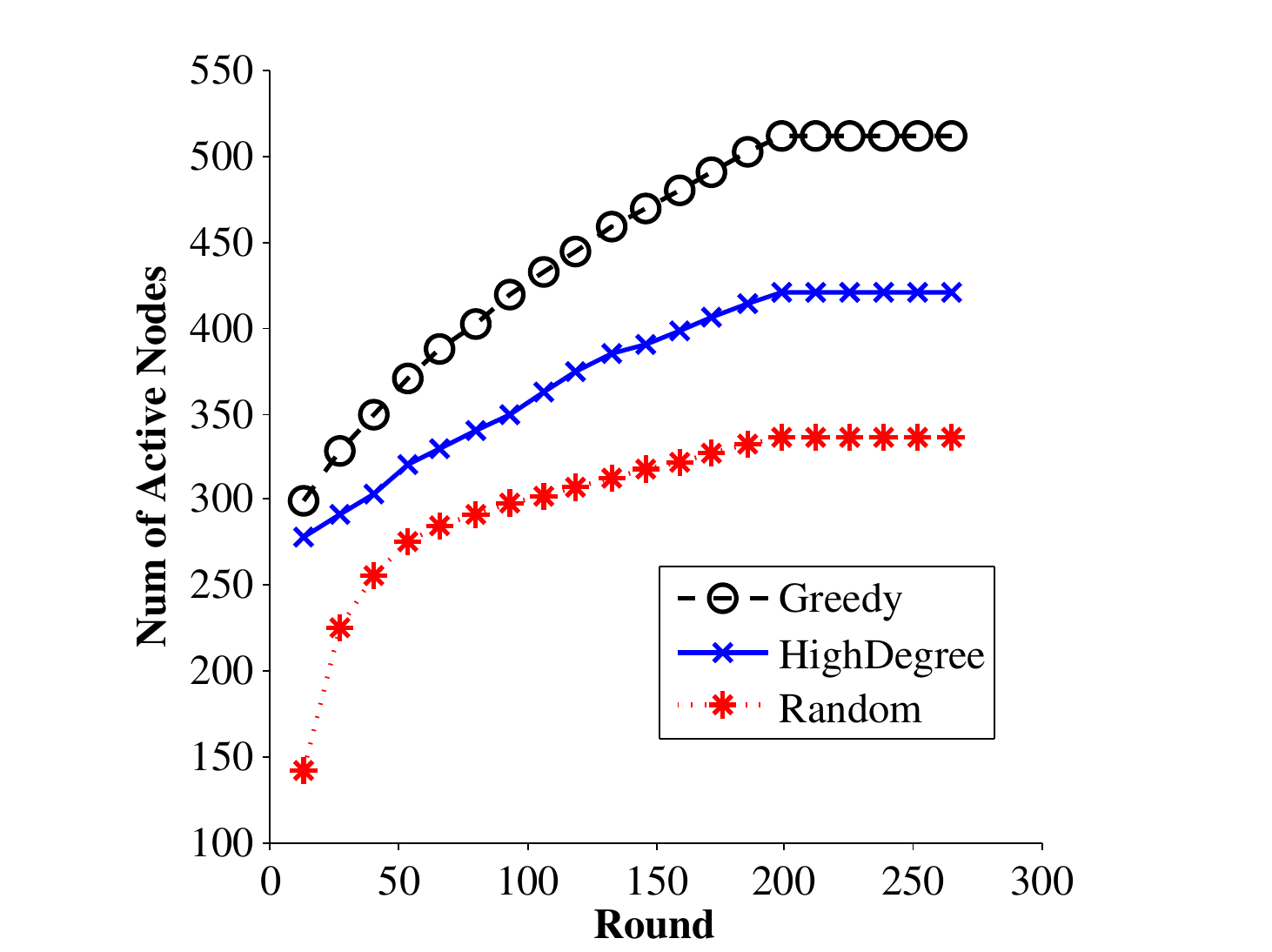}}
	\subfloat[{[$k=50, d=\infty$]}]{\label{fig: higgs50_3_15}\includegraphics[trim = 0.5in 0in 0.5in 0in, clip,width=0.24\textwidth]{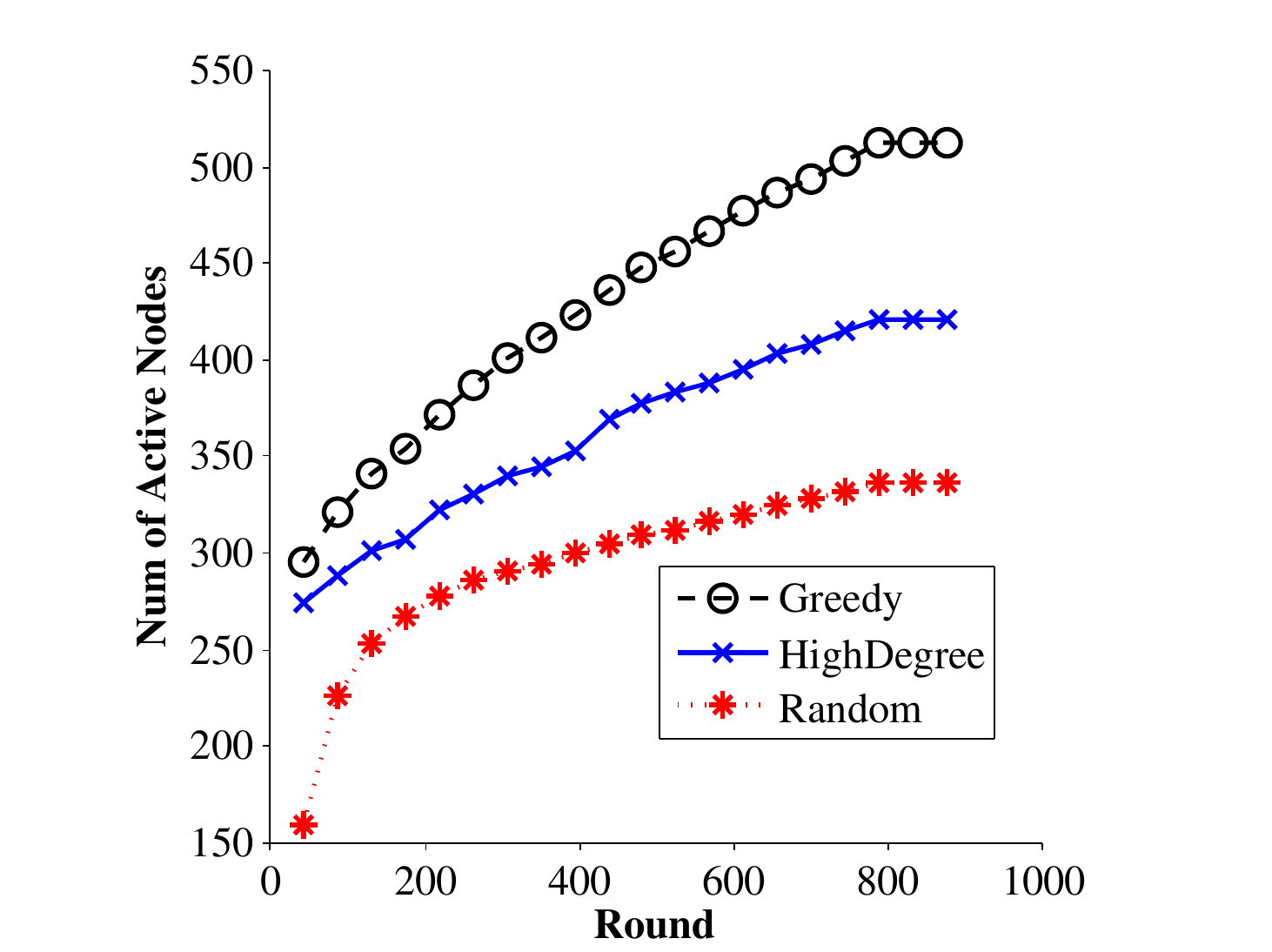}}
	\vspace{-1mm}
	\caption{Results of Experiment \RNum{1} on Higgs}
	\vspace{-5mm}
	\label{fig: exp1_higgs}
\end{figure*}

\begin{figure*}[!pt]
	\centering
	\subfloat[{[$k=5$, non-daptive]}]{\label{fig: hepph5_3_0}\includegraphics[trim = 0.5in 0in 0.5in 0in, clip, width=0.24\textwidth]{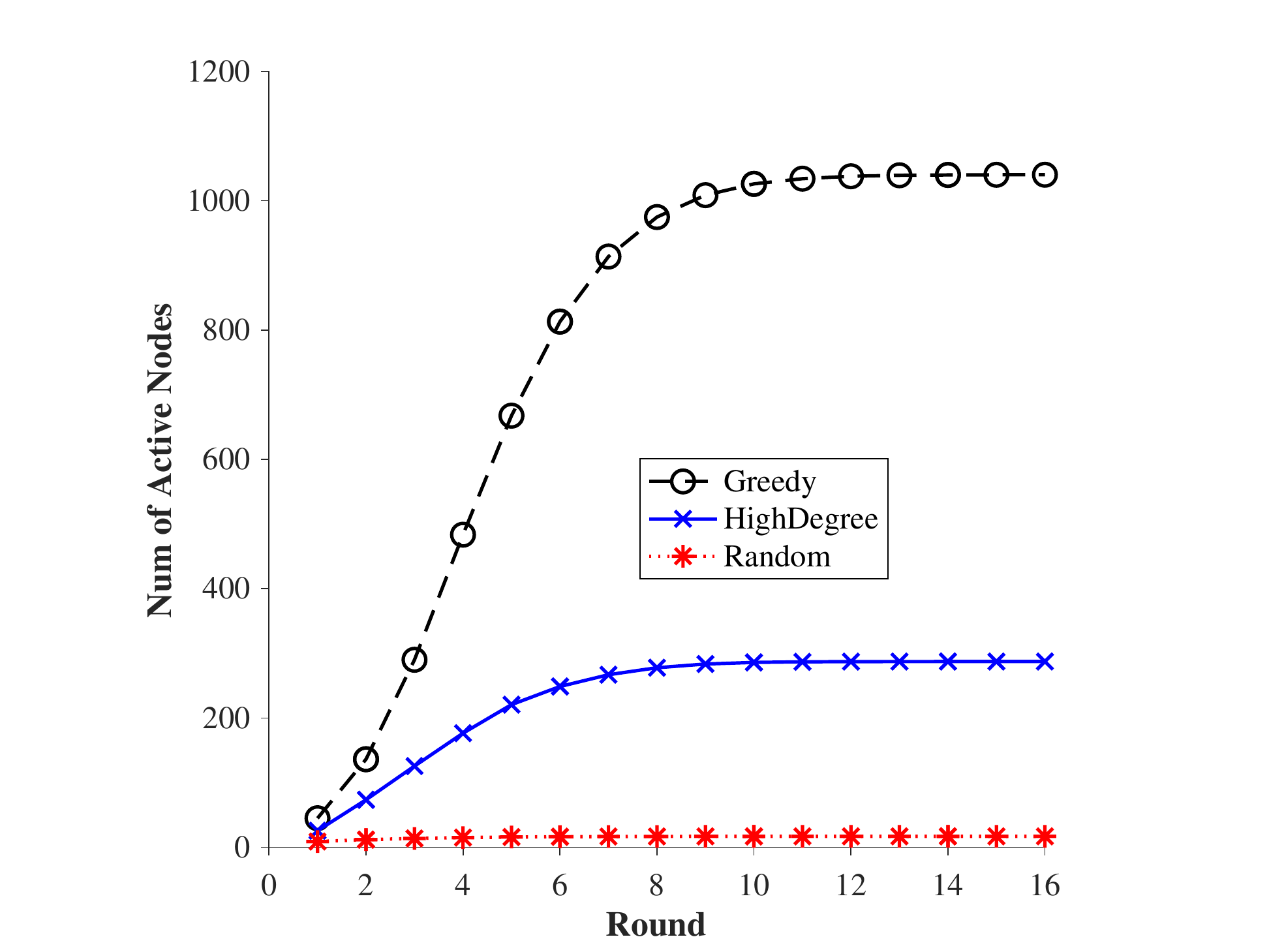}}
	\subfloat[{[$k=5, d=1$]}]{\label{fig: hepph5_3_3}\includegraphics[trim = 0.5in 0in 0.5in 0in, clip,width=0.24\textwidth]{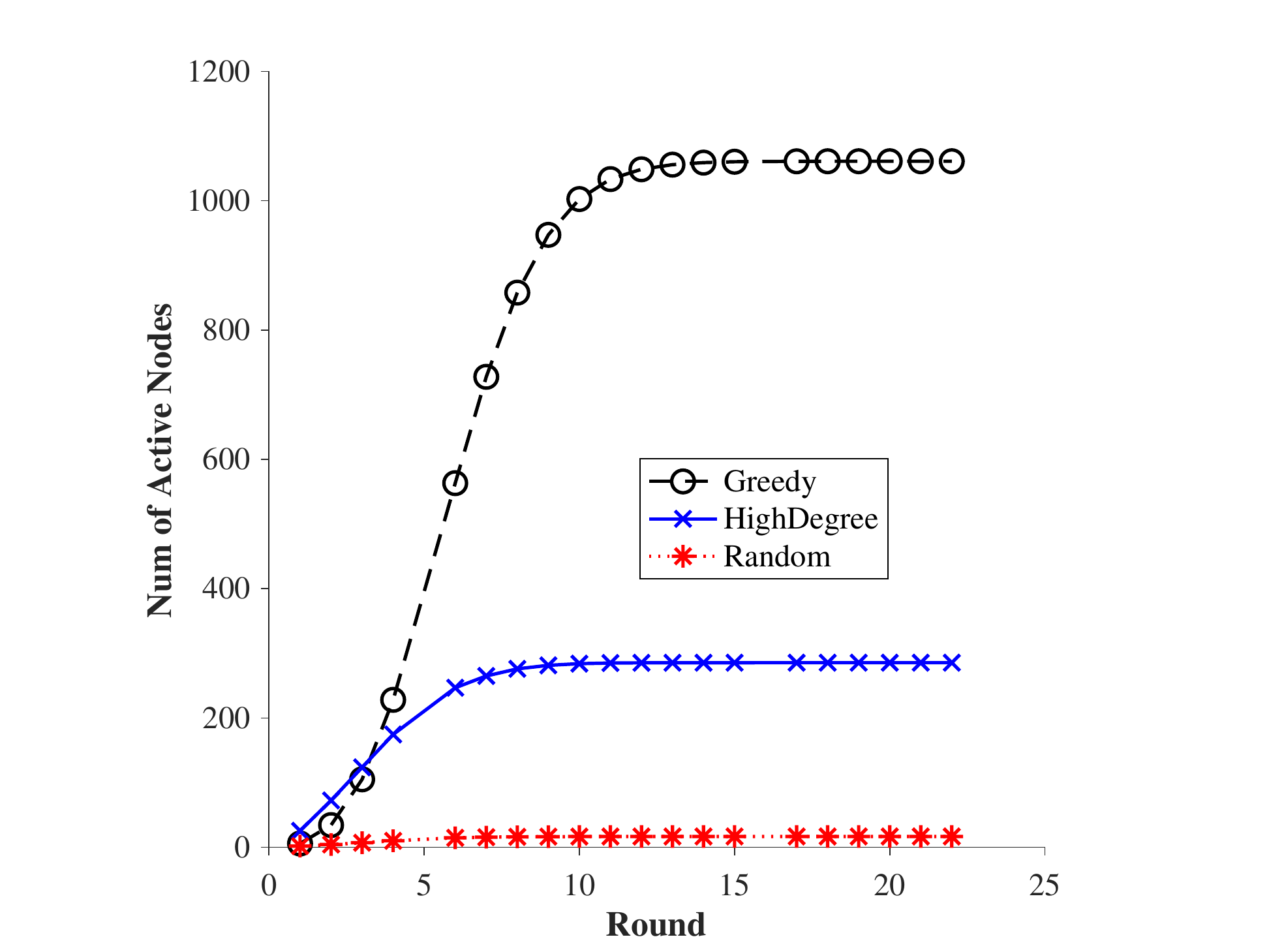}}
	\subfloat[{[$k=5, d=8$]}]{\label{fig: hepph5_3_9}\includegraphics[trim = 0.5in 0in 0.5in 0in, clip,width=0.24\textwidth]{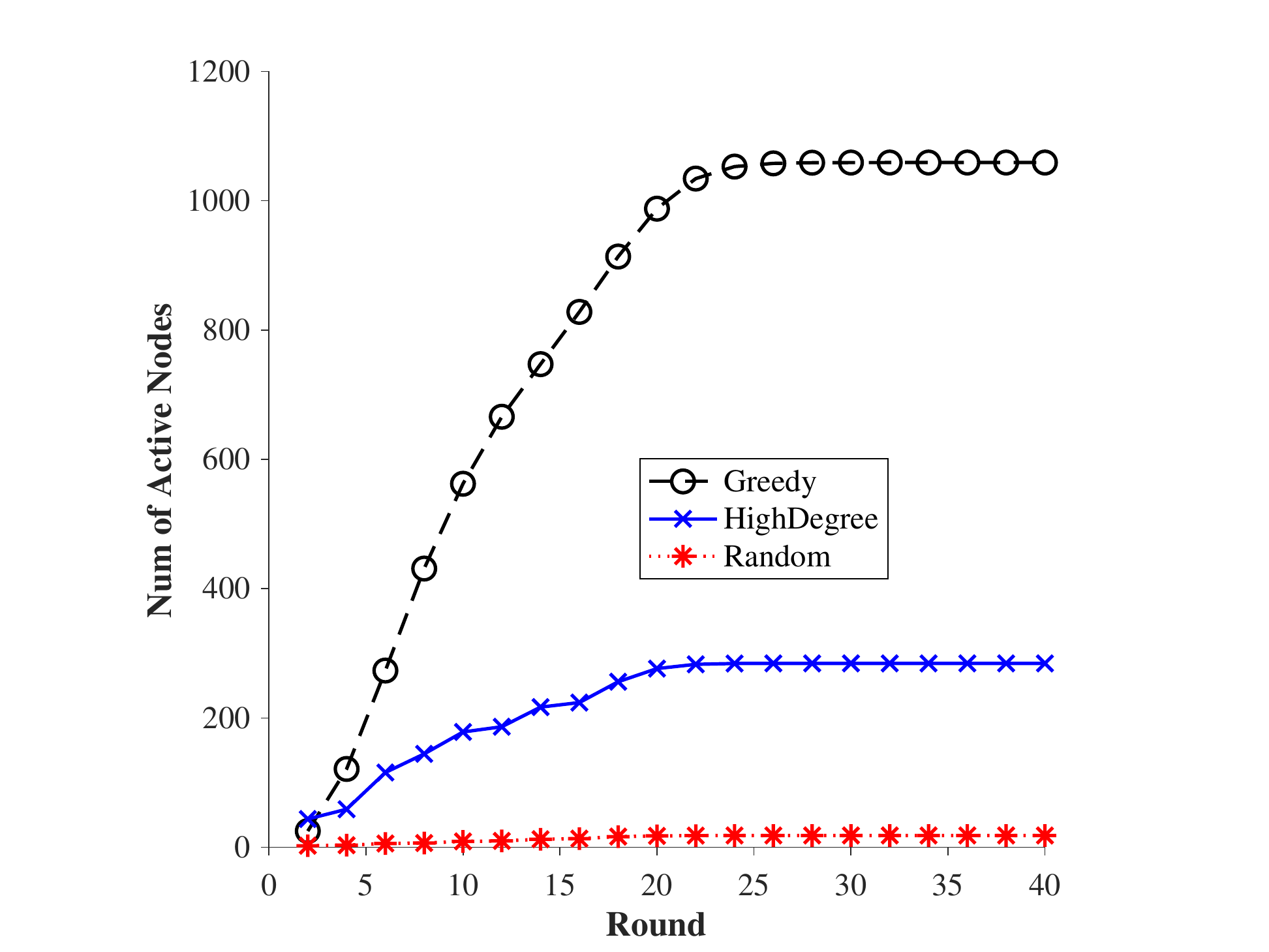}}
	\subfloat[{[$k=5, d=\infty$]}]{\label{fig: hepph5_3_15}\includegraphics[trim = 0.5in 0in 0.5in 0in, clip,width=0.24\textwidth]{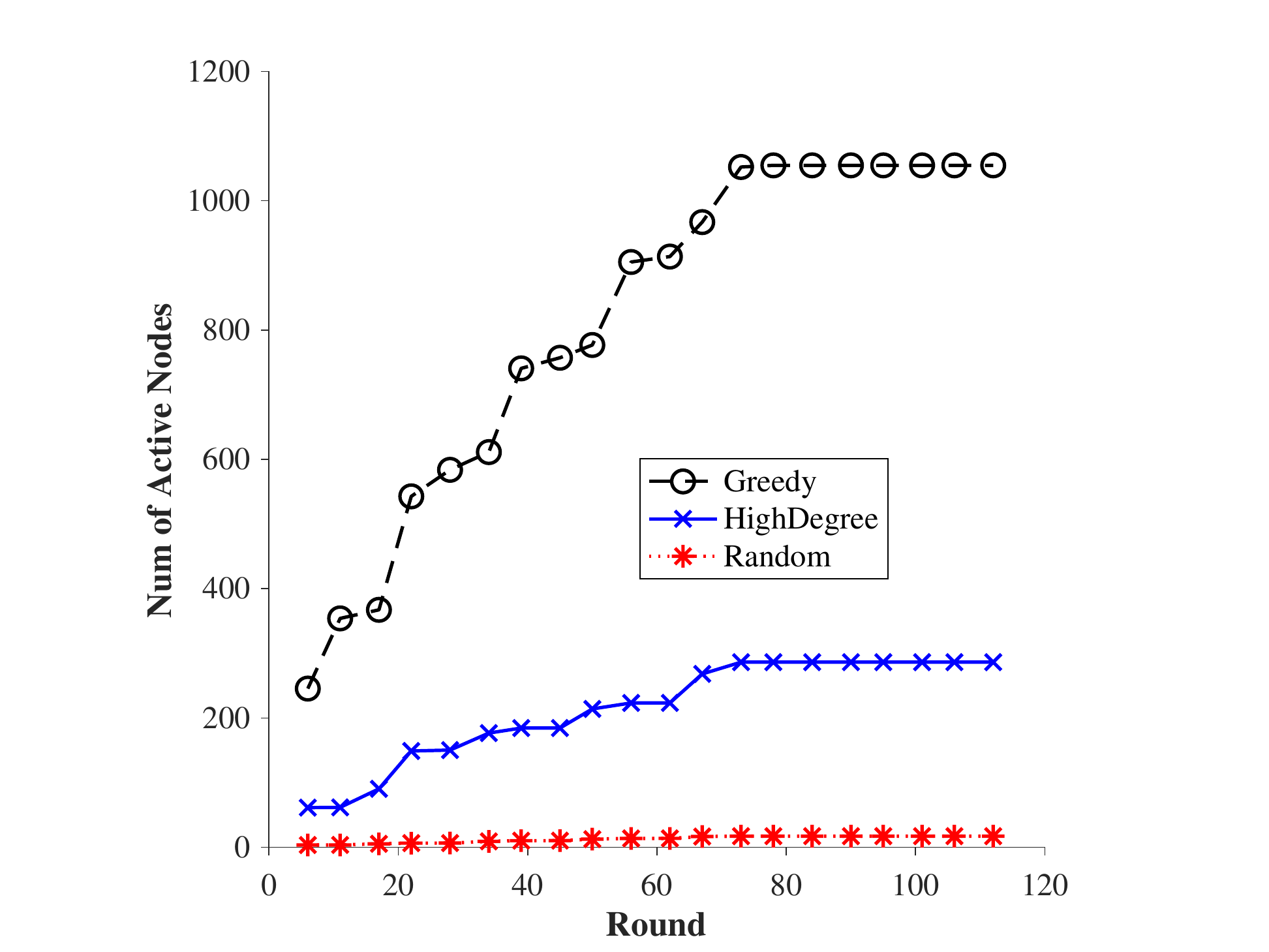}}
	
	\subfloat[{[$k=50$, non-daptive]}]{\label{fig: hepph50_3_0}\includegraphics[trim = 0.5in 0in 0.5in 0in, clip,width=0.24\textwidth]{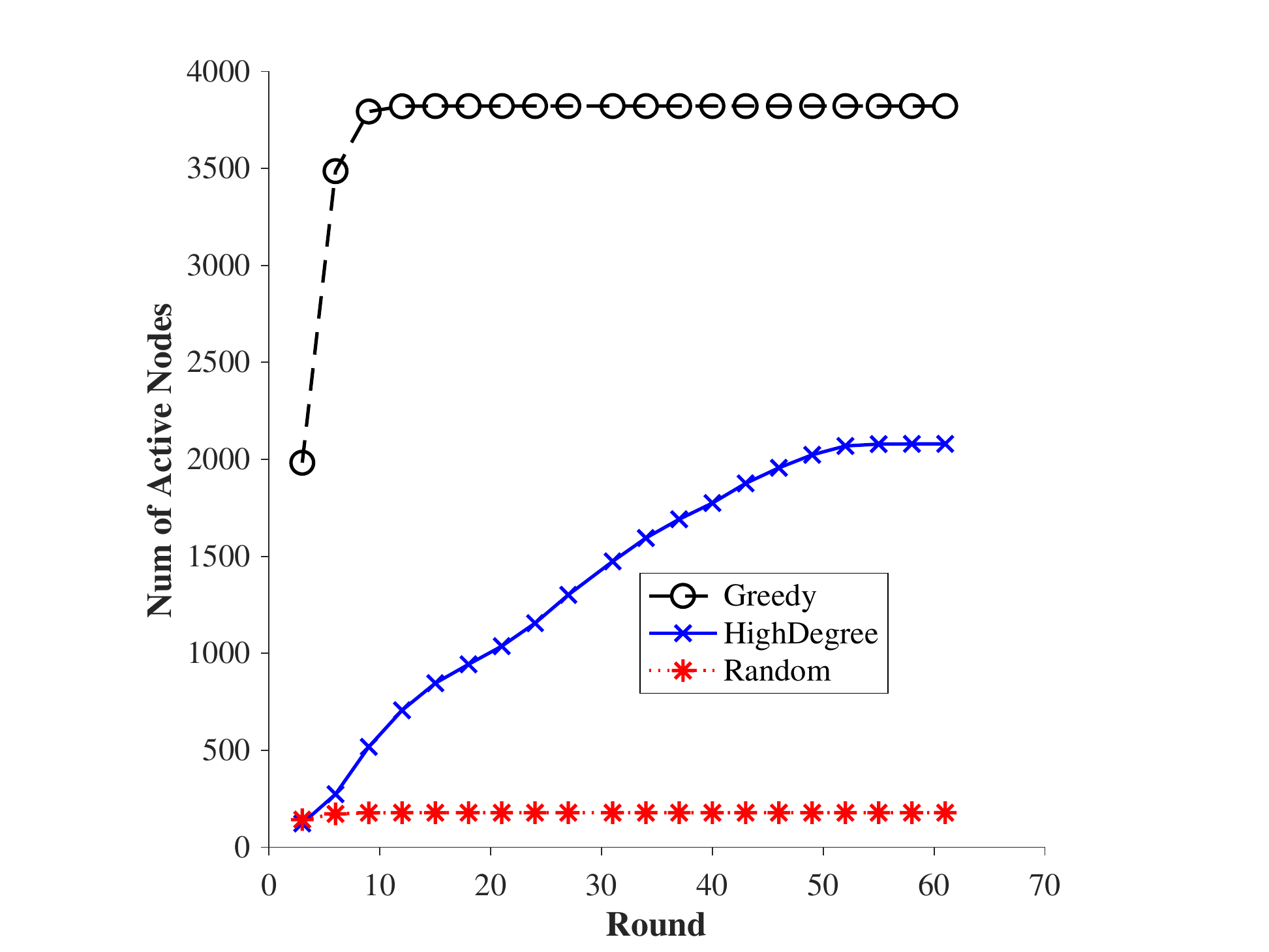}}
	\subfloat[{[$k=50, d=1$]}]{\label{fig: hepph50_3_3}\includegraphics[trim = 0.5in 0in 0.5in 0in, clip,width=0.24\textwidth]{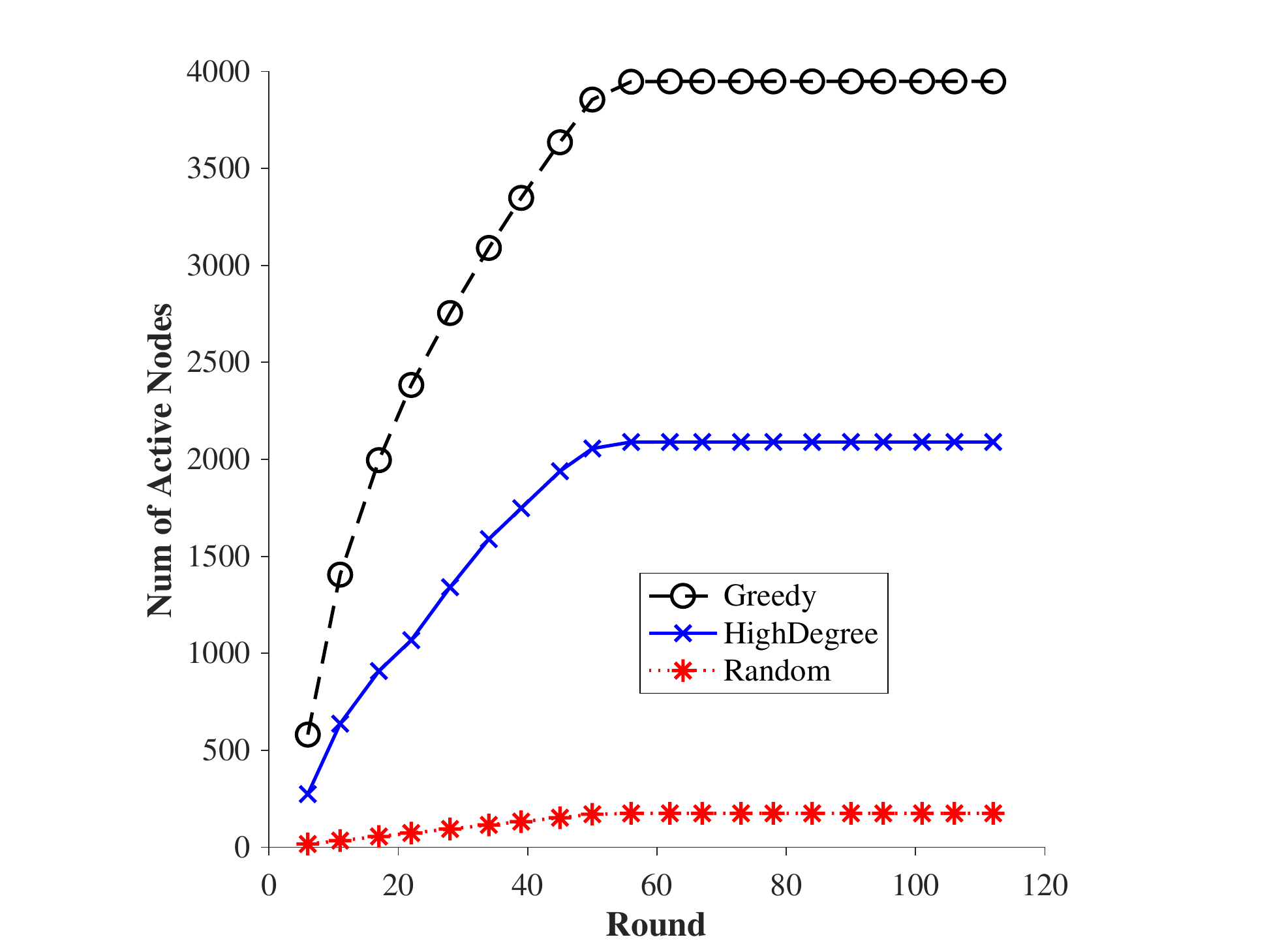}}
	\subfloat[{[$k=50, d=8$]}]{\label{fig: hepph50_3_9}\includegraphics[trim = 0.5in 0in 0.5in 0in, clip,width=0.24\textwidth]{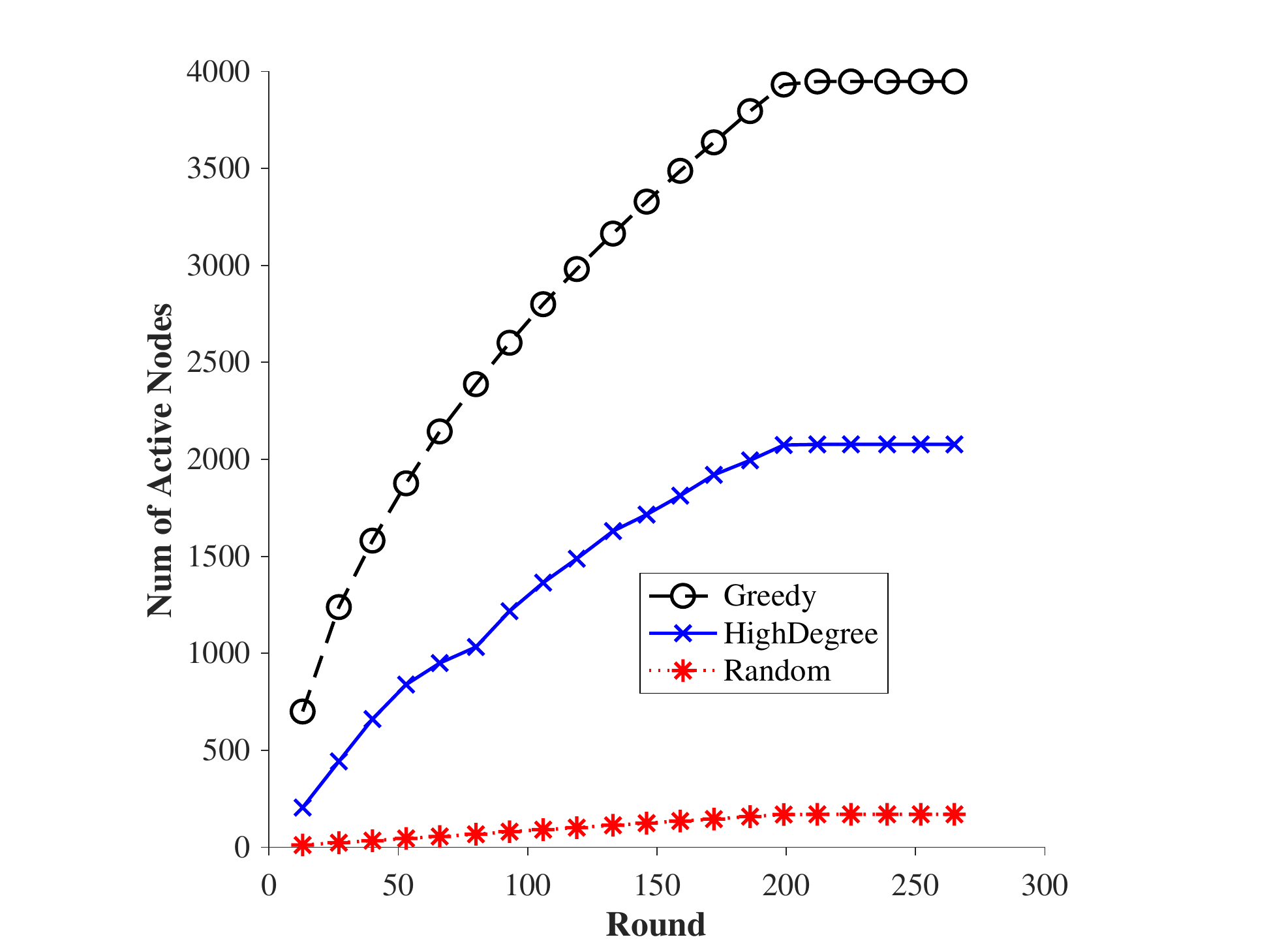}}
	\subfloat[{[$k=50, d=\infty$]}]{\label{fig: hepph50_3_15}\includegraphics[trim = 0.5in 0in 0.5in 0in, clip,width=0.24\textwidth]{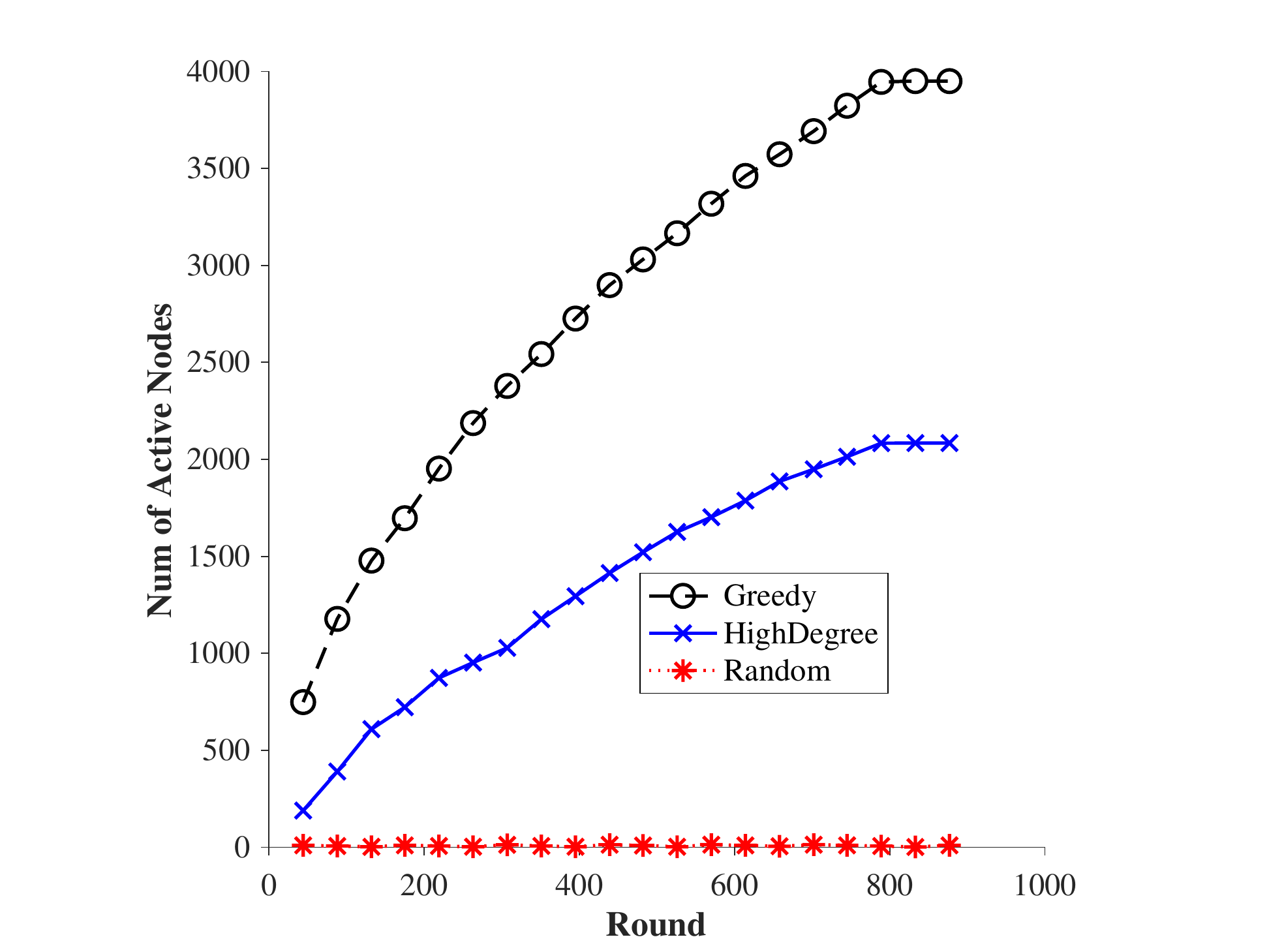}}
	\vspace{-1mm}
	\caption{Results of Experiment \RNum{1} on Hepph}
	\vspace{-1mm}
	\label{fig: exp1_hepph}
\end{figure*}

\subsection{Experimental Settings.} 

\textbf{Dataset.} We employ seven datasets, from small to large, denoted as Power, Wiki, Higgs, Hepth, Hepph, DBLP and Youtube, where Power is a power-law graph generated by the LFT benchmark \cite{lancichinetti2008benchmark}, Higgs is a Twitter dataset, and the rest of the datasets are borrowed from SNAP \cite{leskovec2015snap}. The details of the datasets are provided in the supplementary material. Due to space limitation, in the main paper we only show and discuss the results on Higgs, Hepph and DBLP, and the complete experimental results can be found in the supplementary material. The source code is maintained online \cite{code}.

\textbf{Propagation Probability.} Higgs consists of a collection of activities between users, including re-tweeting action, replying action, and mentioning action. We follow the setting in \cite{tong2018misinformation} so that the propagation probability between users is proportional to the frequency of the actions between them. For the other datasets, we adopt either the uniform setting with $p_e=0.1$  or the well-known weighted cascade setting where $p_{(u,v)}=1/deg(v)$ where $deg(v)$ is the in-degree of $v$. The setting of the propagation probability primarily affects the scale of the influence, and we have similar observations under different settings. 

\textbf{Problem Setting.} We set the budget $k$ as either $5$ or $50$, and select the parameter $d$ from $\{0, 1, 2, 4 ,8, \infty\}$ where $d=0$ and $d=\infty$ denote the non-adaptive case and the Full Adoption feedback model, respectively. Because the diffusion process typically terminates within $8$ rounds without new seed nodes, it is identical to the Full Adoption feedback model when $d>8$. Thus, we do not test the case for $d > 8$. 

\textbf{Policies.} Besides the greedy policy, we implemented two baseline algorithms, HighDegree and Random. HighDegree adaptively selects the node with the highest degree as the seed node, and Random selects the seed nodes randomly. 

\textbf{Simulation Setting.} Whenever the reverse sampling method in Sec. \ref{subsec: rrset} is called, the number of used RR-sets is at least 100,000 which is sufficient for an accurate single node selection, as shown in \cite{nguyen2016stop, huang2017revisiting, nguyen2018revisiting}. For each set of dataset and policy, 500 simulations were performed and we report the average result.

\begin{figure*}[!pt]
	\centering
	\subfloat[{[$k=5$, non-daptive]}]{\label{fig: dblp5_3_0}\includegraphics[trim = 0.5in 0in 0.5in 0in, clip, width=0.24\textwidth]{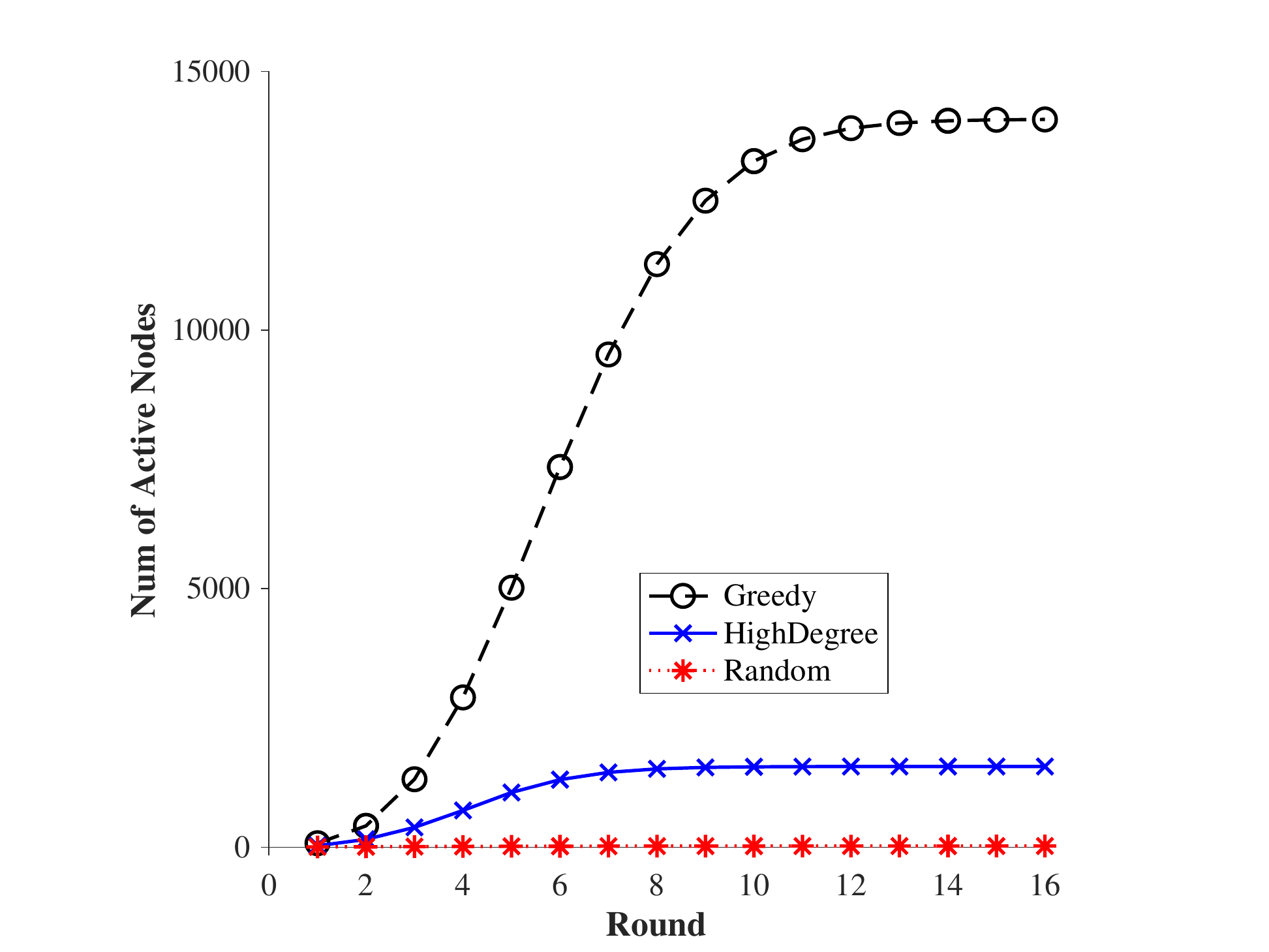}}
	\subfloat[{[$k=5, d=1$]}]{\label{fig: dblp5_3_3}\includegraphics[trim = 0.5in 0in 0.5in 0in, clip,width=0.24\textwidth]{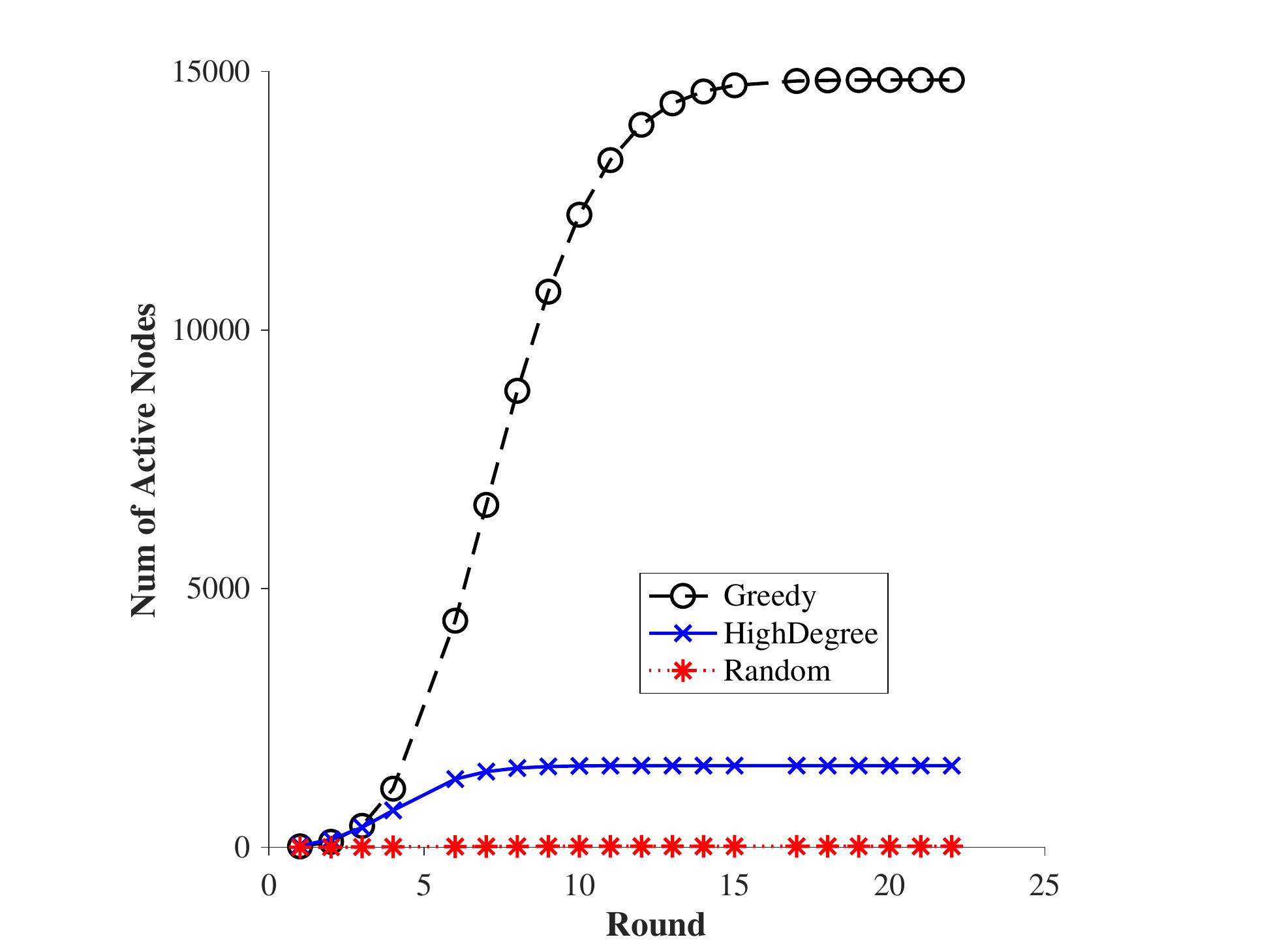}}
	\subfloat[{[$k=5, d=8$]}]{\label{fig: dblp5_3_9}\includegraphics[trim = 0.5in 0in 0.5in 0in, clip,width=0.24\textwidth]{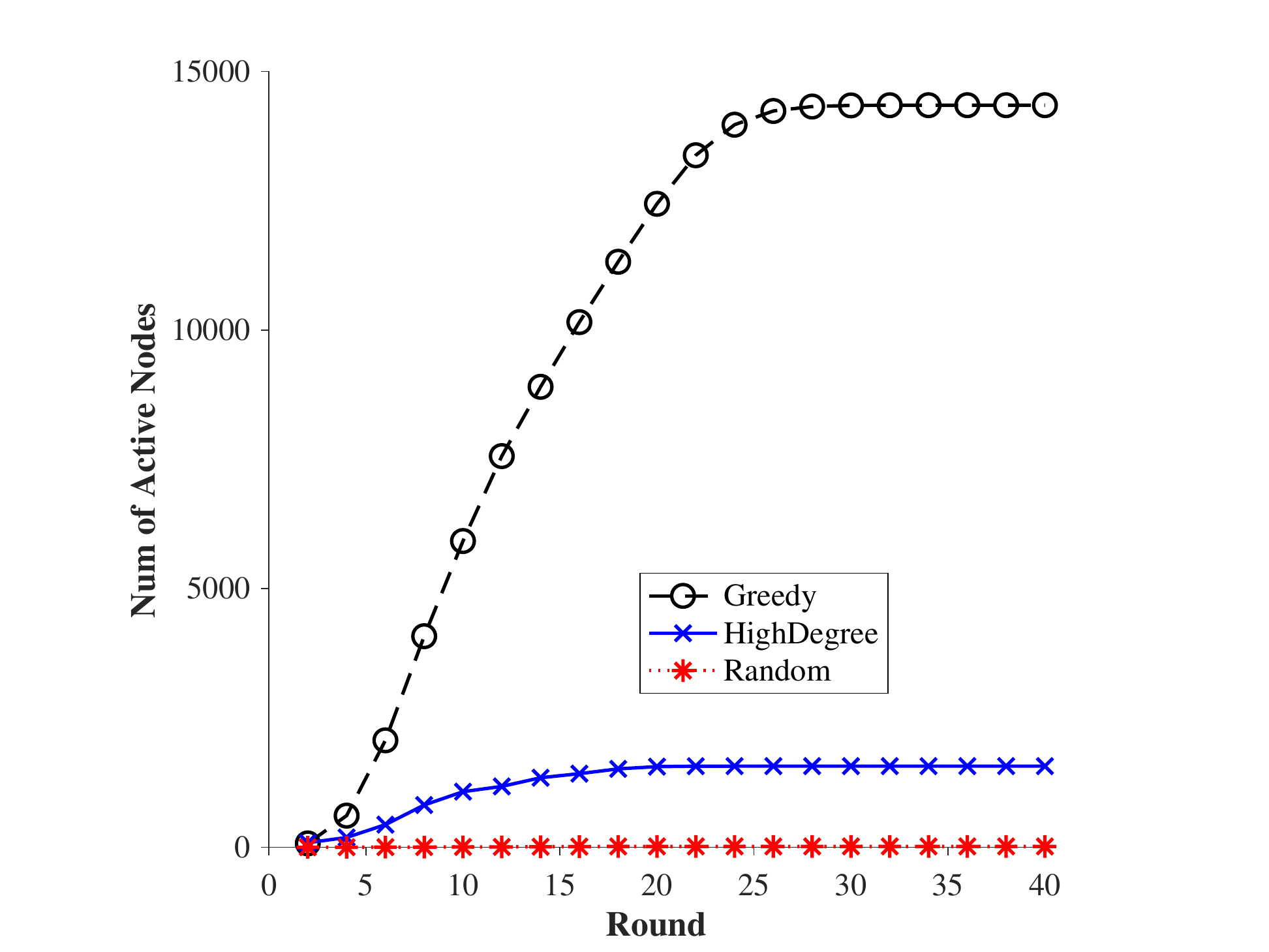}}
	\subfloat[{[$k=5, d=\infty$]}]{\label{fig: dblp5_3_15}\includegraphics[trim = 0.5in 0in 0.5in 0in, clip,width=0.24\textwidth]{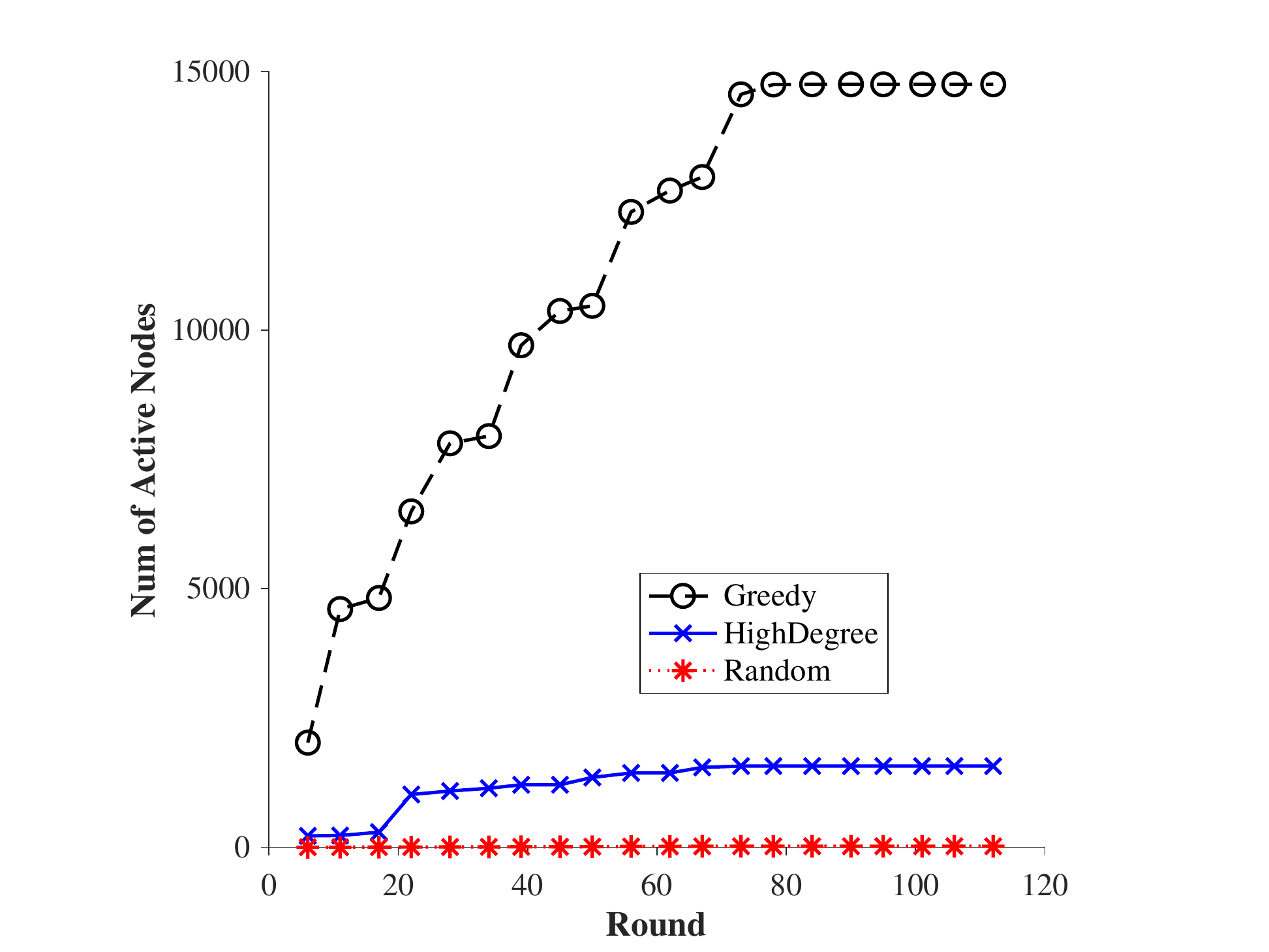}}
	
	\subfloat[{[$k=50$, non-daptive]}]{\label{fig: dblp50_3_0}\includegraphics[trim = 0.5in 0in 0.5in 0in, clip,width=0.24\textwidth]{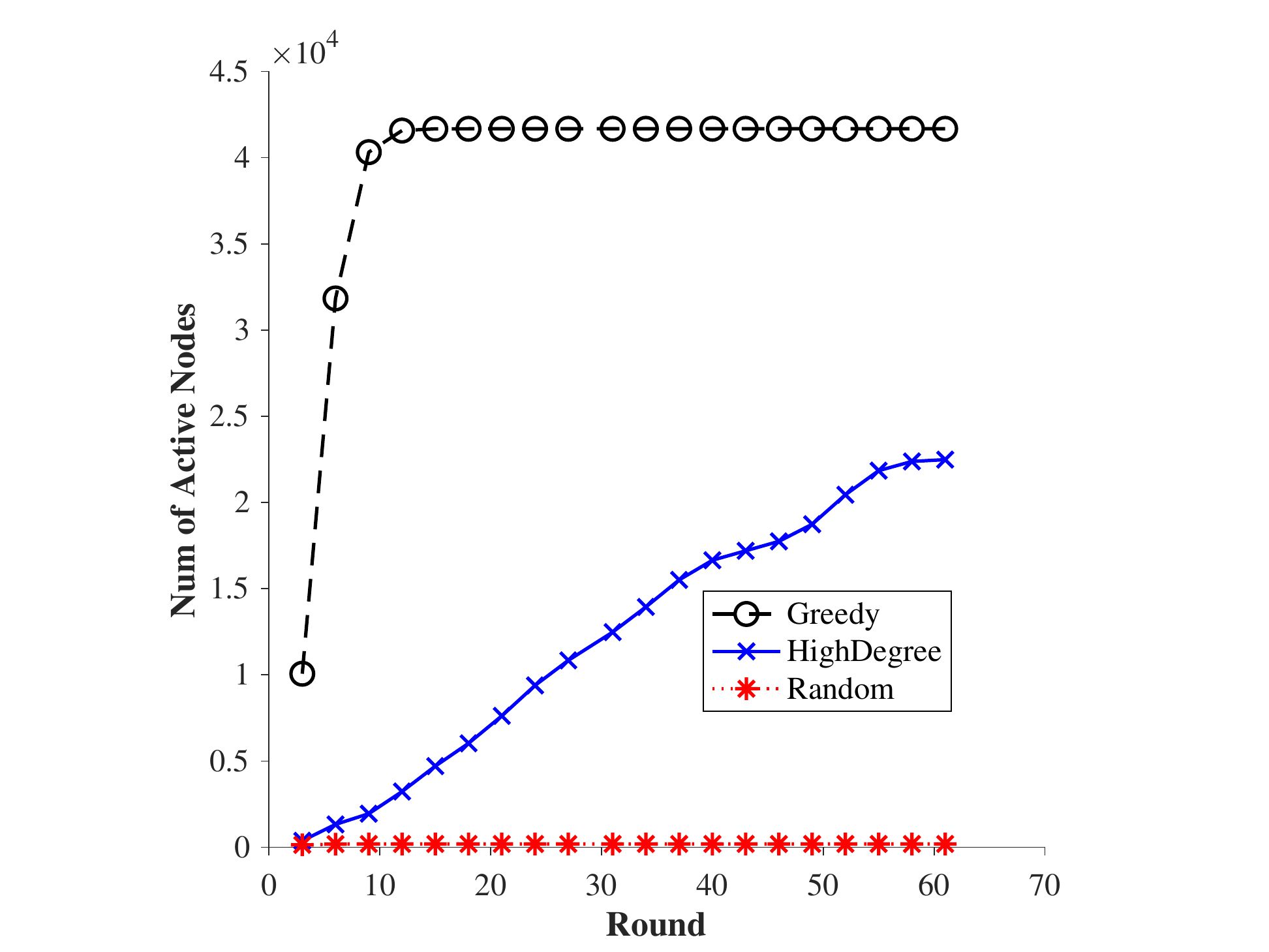}}
	\subfloat[{[$k=50, d=1$]}]{\label{fig: dblp50_3_3}\includegraphics[trim = 0.5in 0in 0.5in 0in, clip,width=0.24\textwidth]{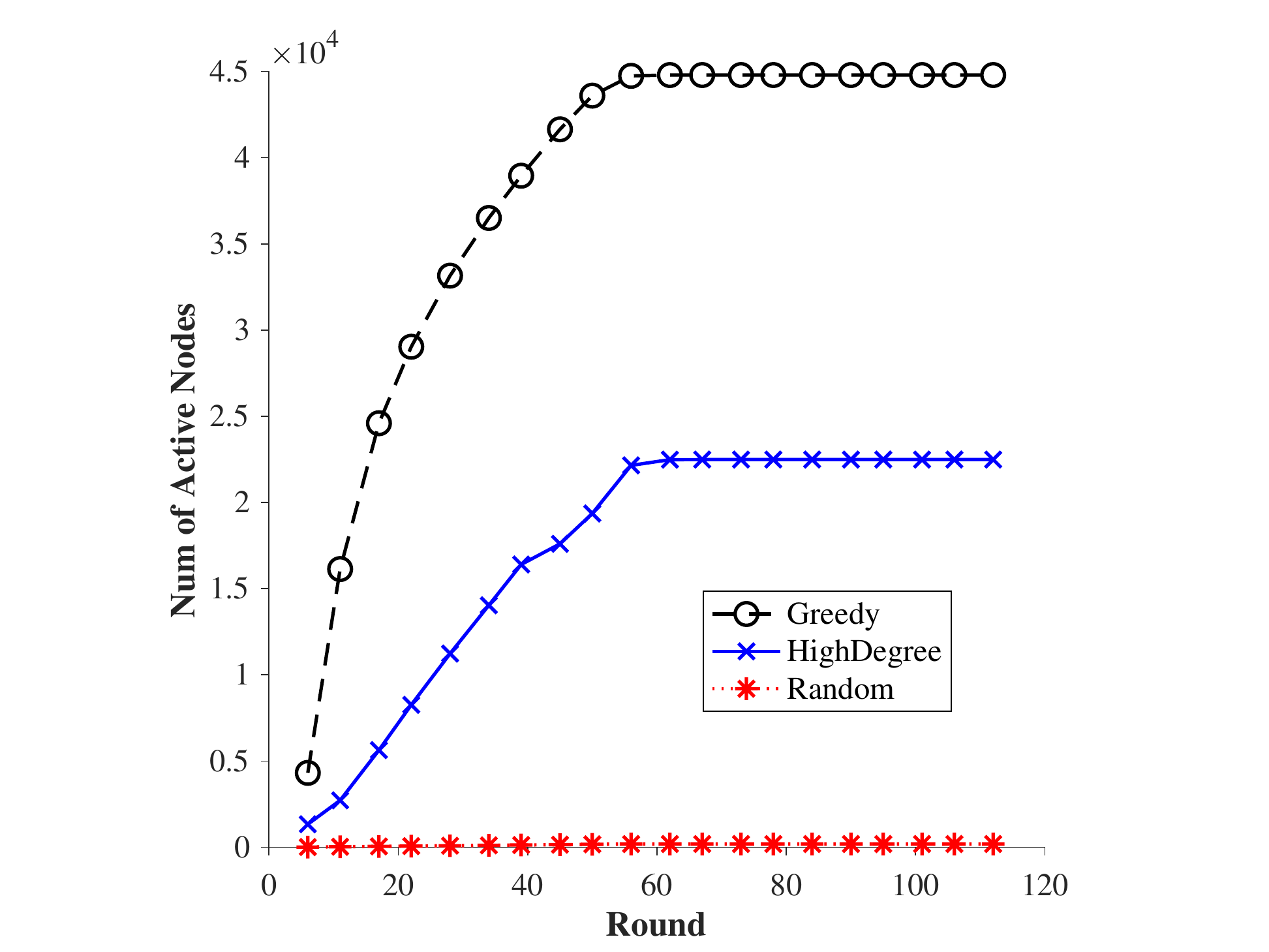}}
	\subfloat[{[$k=50, d=8$]}]{\label{fig: dblp50_3_9}\includegraphics[trim = 0.5in 0in 0.5in 0in, clip,width=0.24\textwidth]{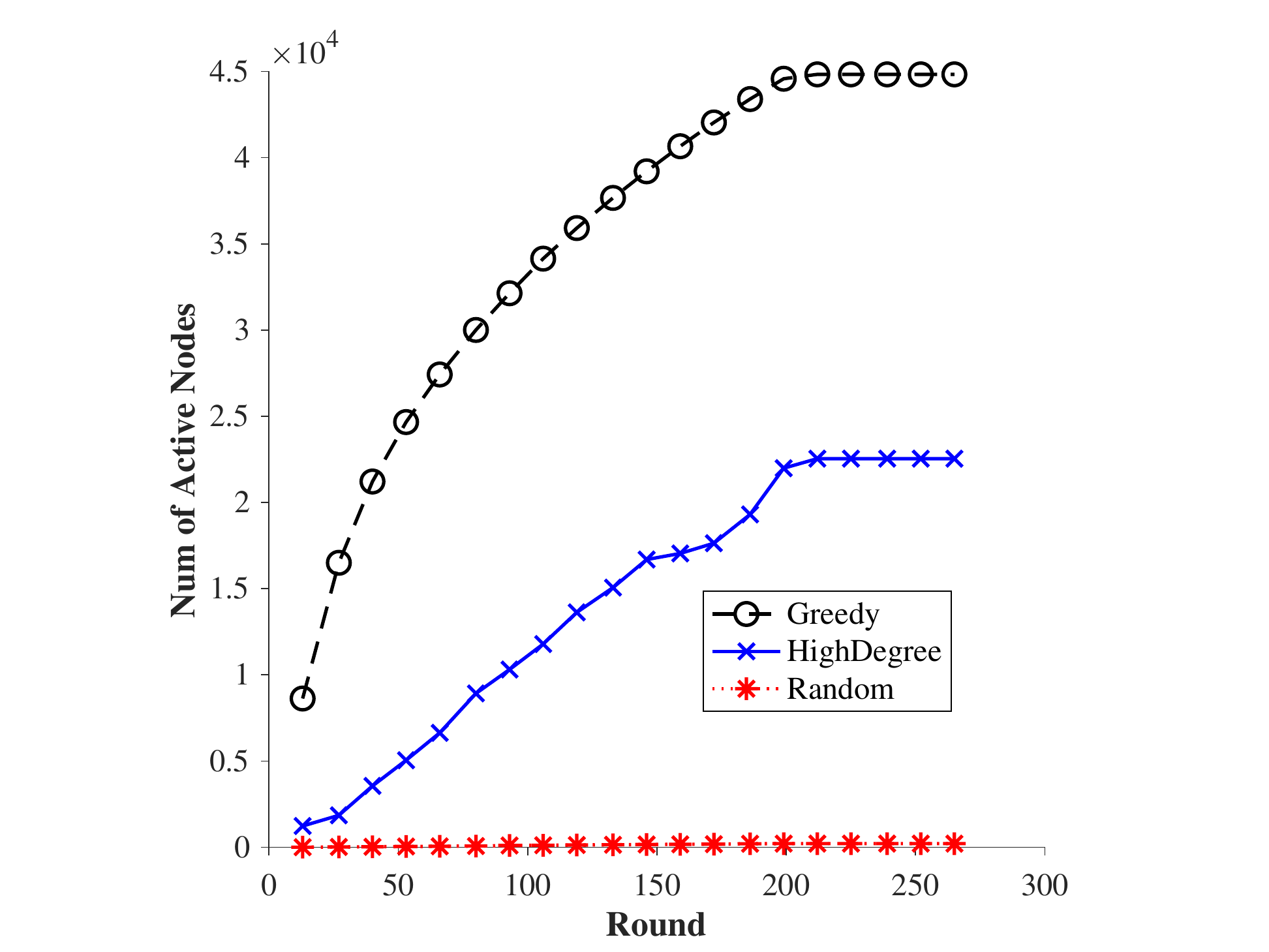}}
	\subfloat[{[$k=50, d=\infty$]}]{\label{fig: dblp50_3_15}\includegraphics[trim = 0.5in 0in 0.5in 0in, clip,width=0.24\textwidth]{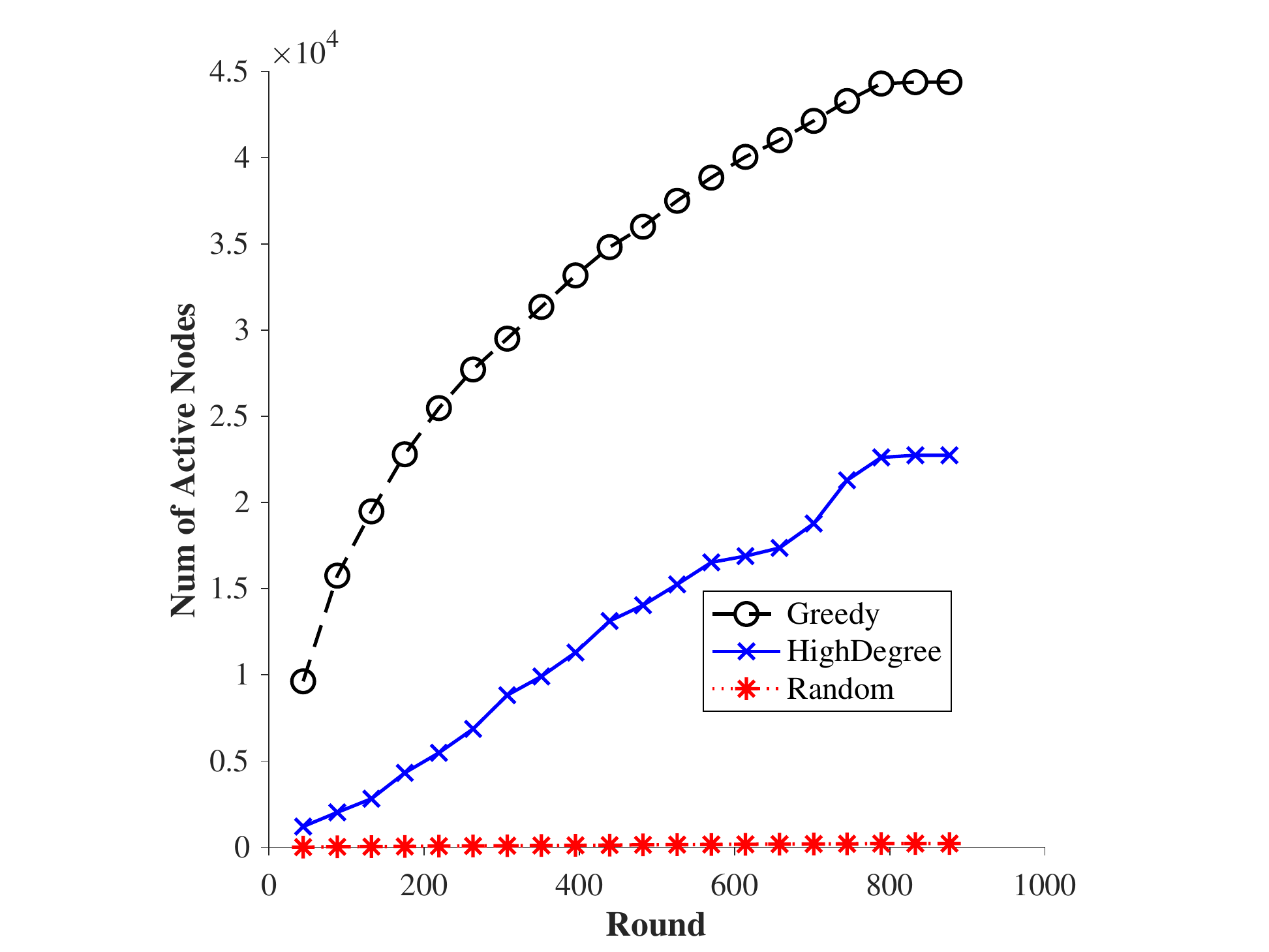}}
	\vspace{0mm}
	\caption{Results of Experiment \RNum{1} on DBLP}
	\vspace{-3mm}
	\label{fig: exp1_dblp}
\end{figure*}

\begin{figure*}[!t]
	\centering
	\subfloat[Higgs with $k=5$]{\label{fig: higgs5}\includegraphics[width=0.33\textwidth]{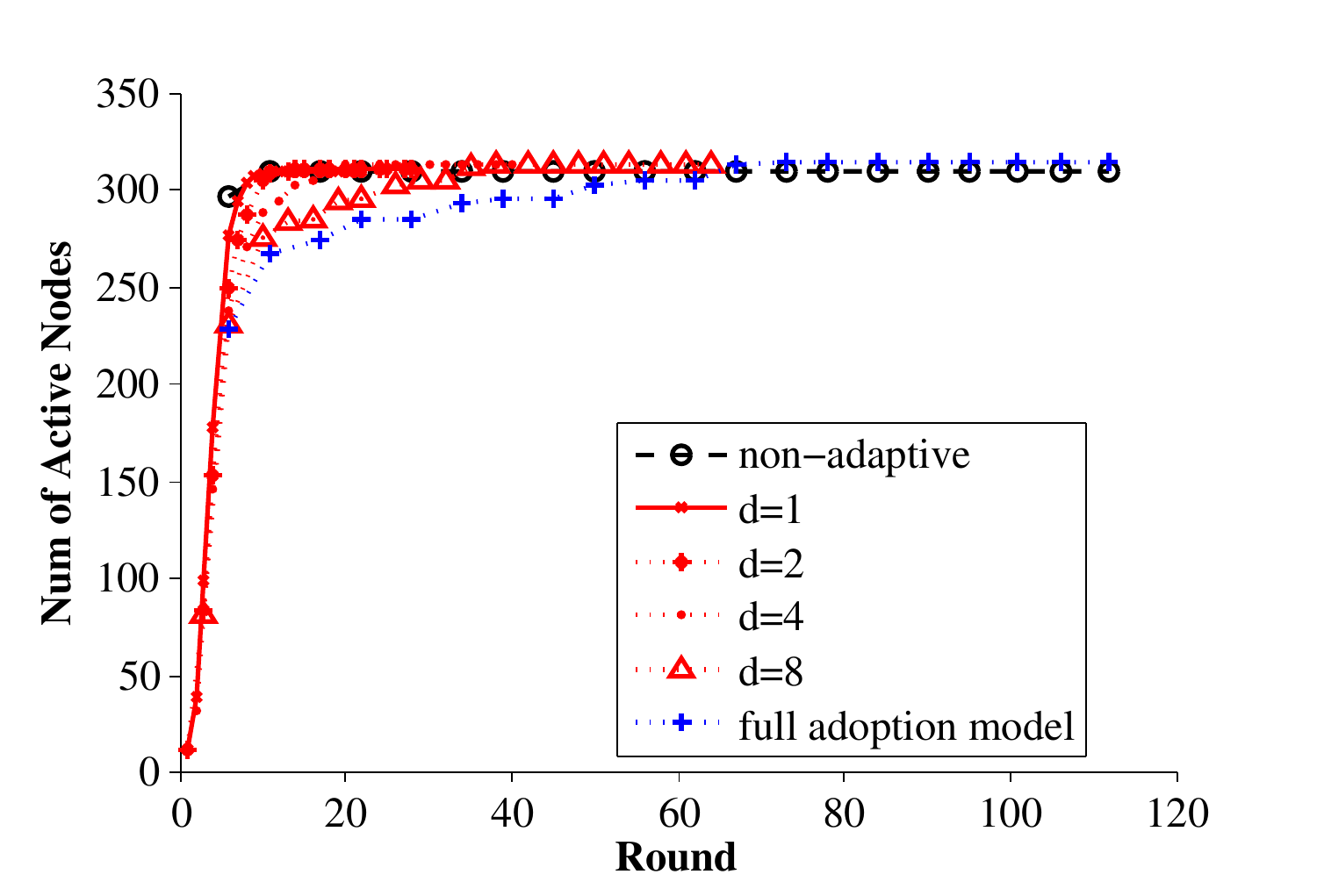}} 
	\subfloat[Hepph with $k=5$]{\label{fig: wiki}\includegraphics[width=0.33\textwidth]{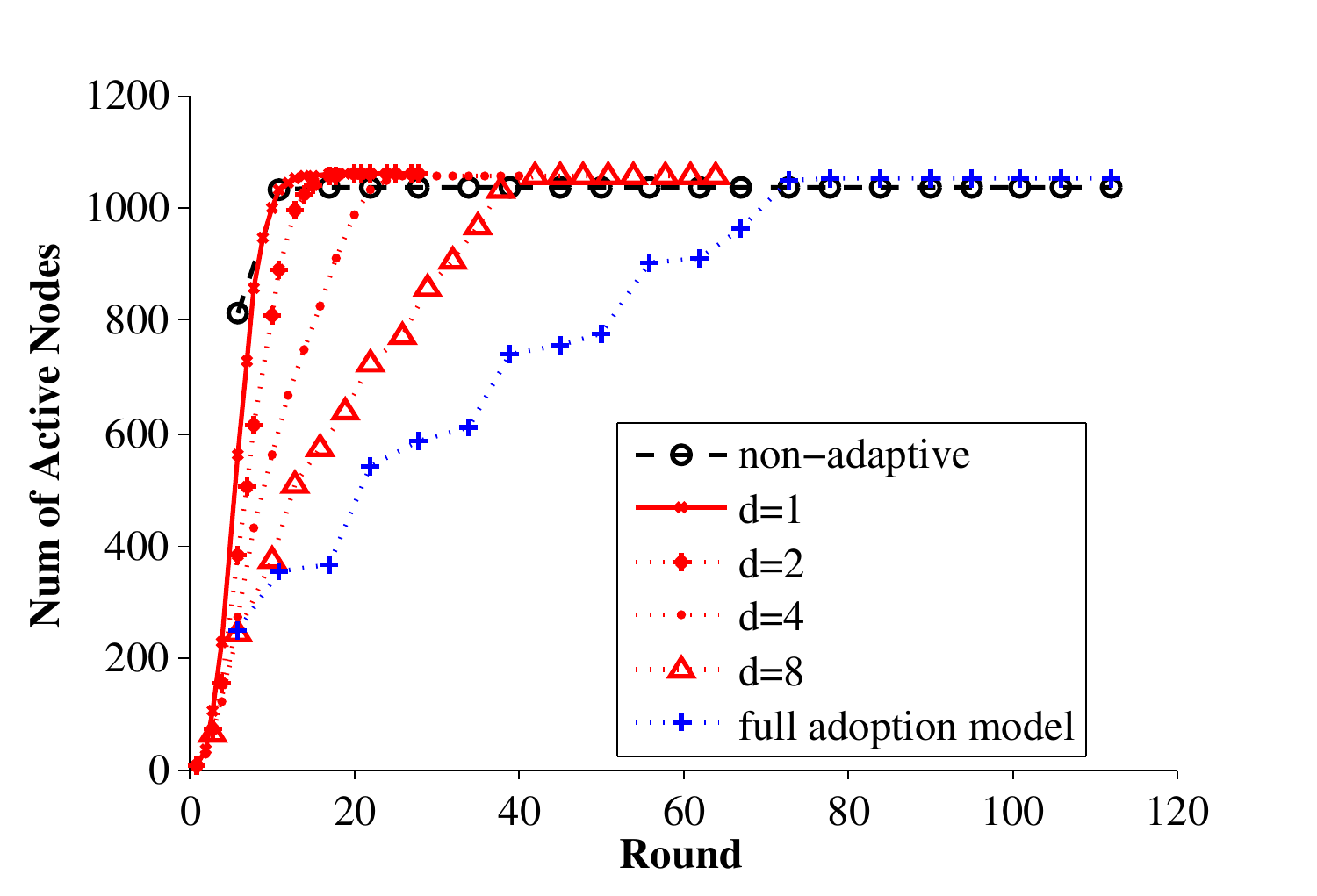}} 
	\subfloat[DBLP with $k=5$]{\label{fig: dblp5}\includegraphics[width=0.33\textwidth]{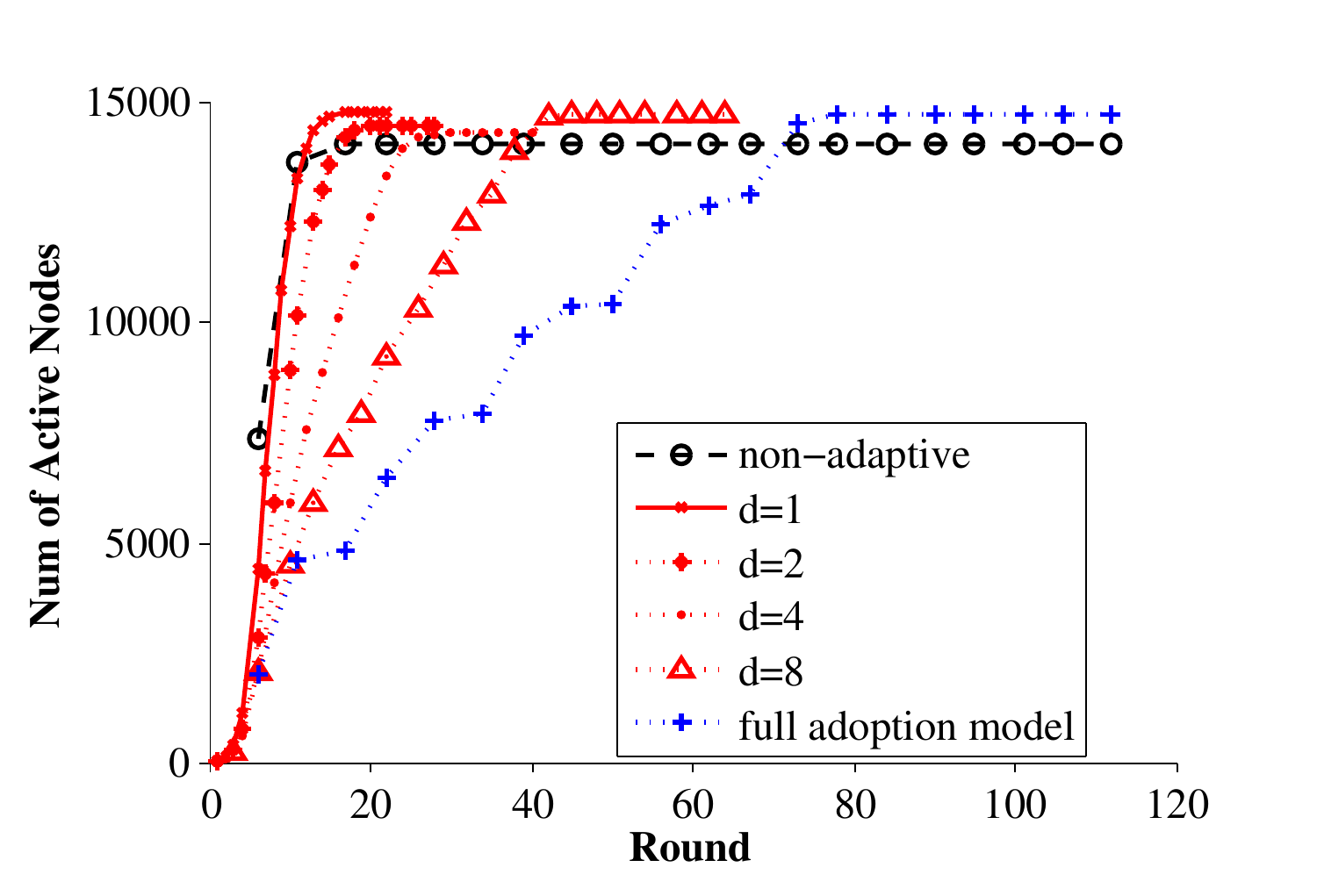}} 
	\vspace{-0mm}
	\subfloat[Higgs with $k=50$]{\label{fig: higgs50}\includegraphics[width=0.33\textwidth]{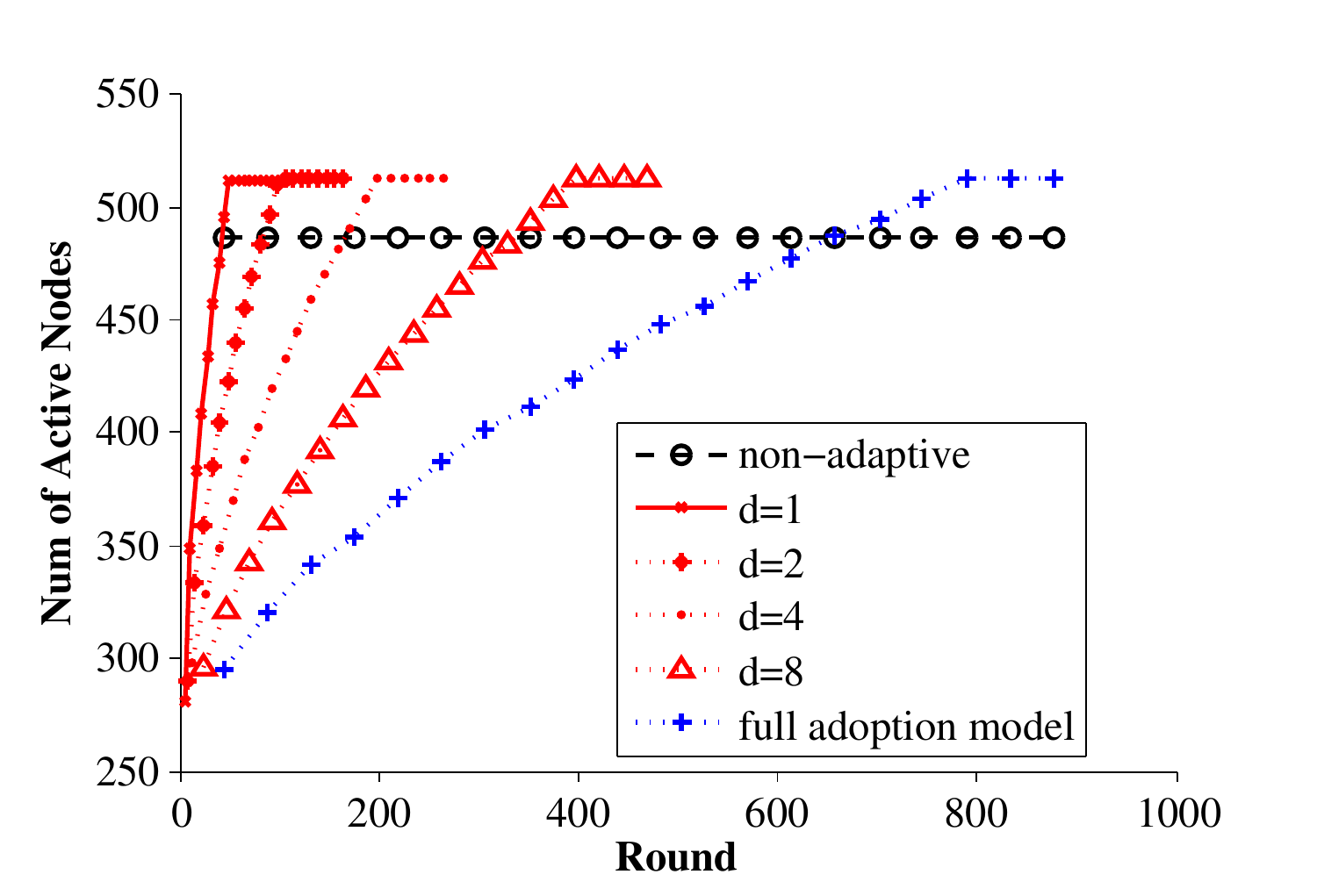}}
	\subfloat[Hepph with $k=50$]{\label{fig: hepph50}\includegraphics[width=0.33\textwidth]{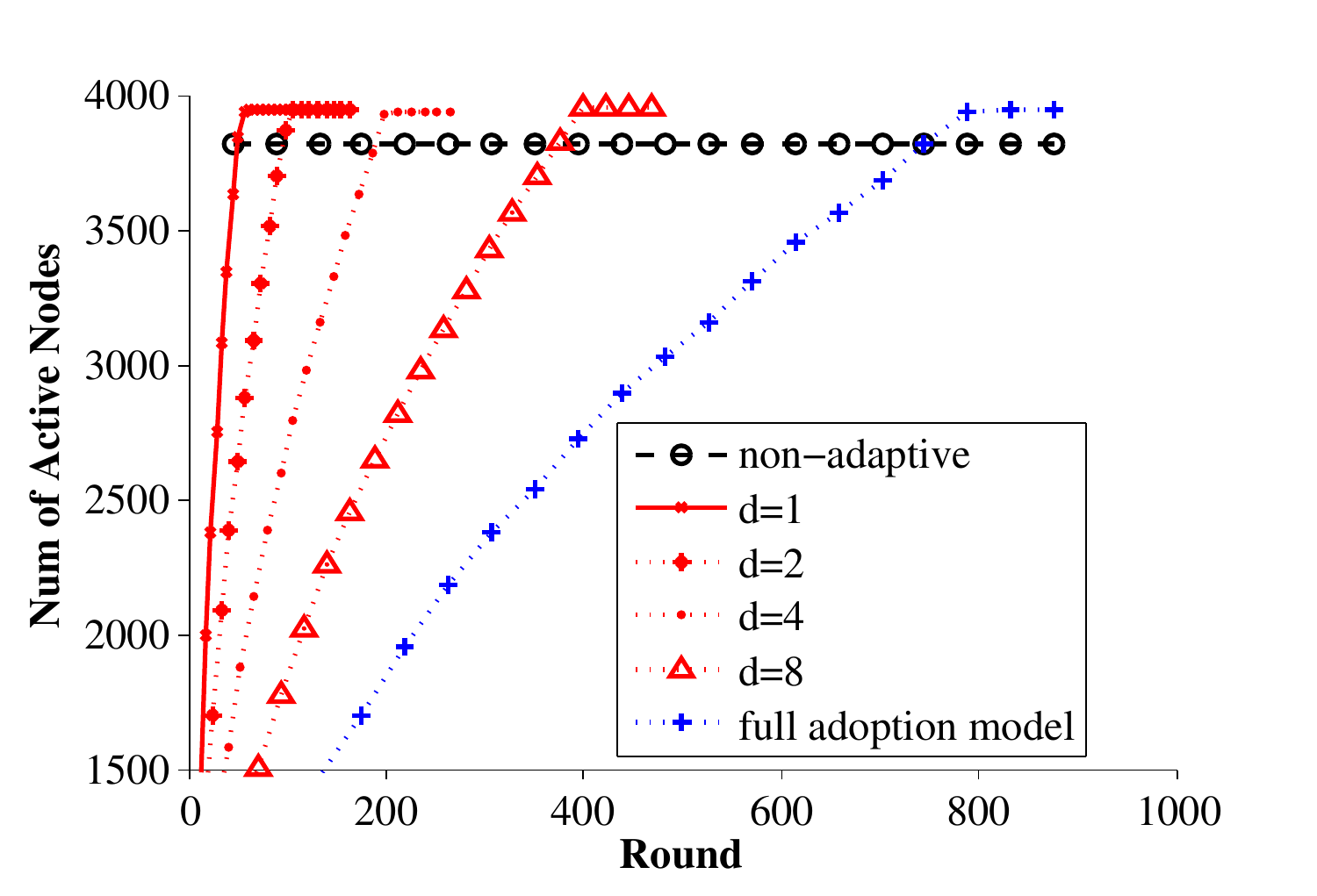}}
	\subfloat[DBLP with $k=50$]{\label{fig: dblp50}\includegraphics[width=0.33\textwidth]{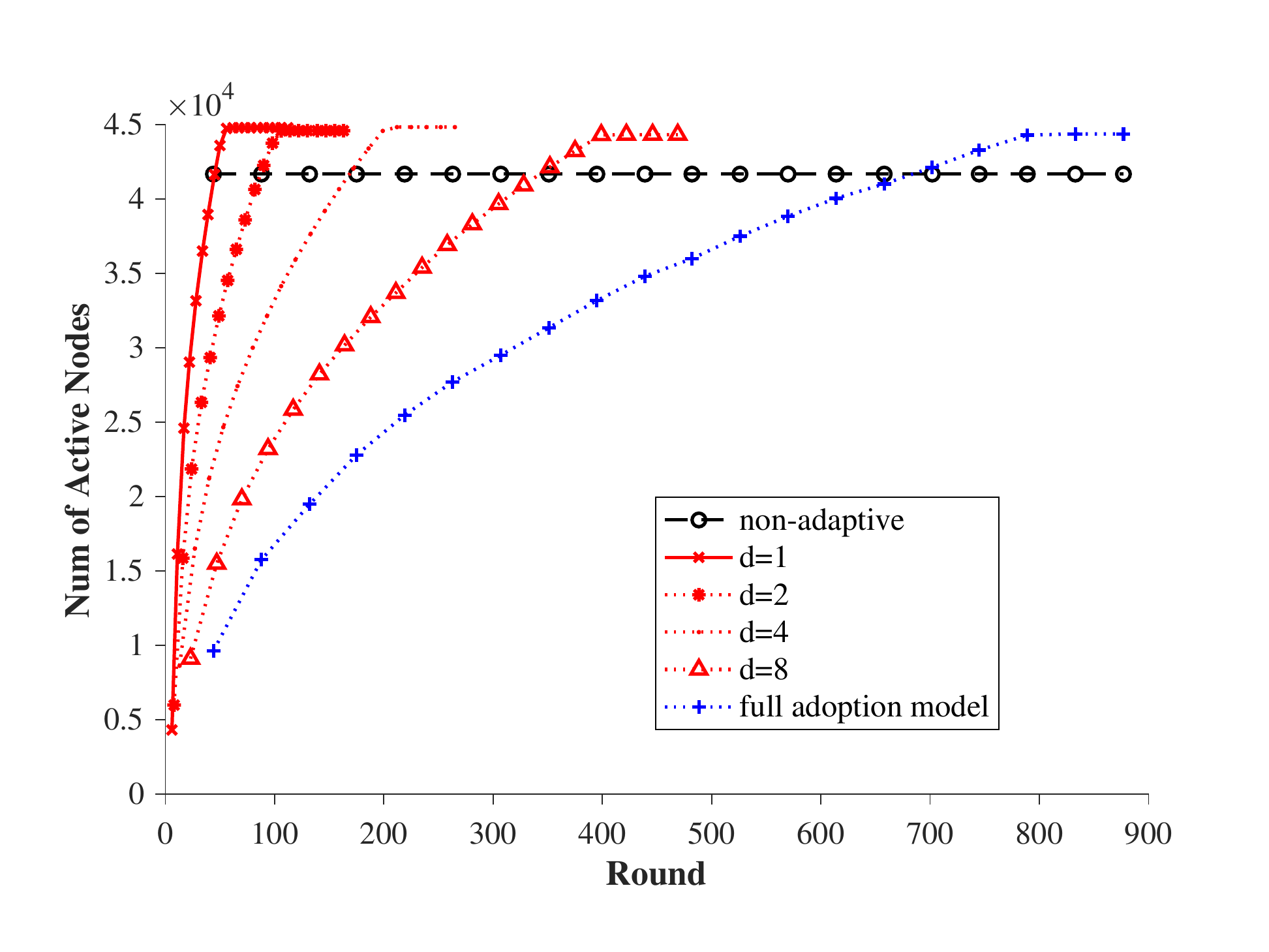}}
	\caption{Results of Experiment \RNum{2}}
	\vspace{0mm}
	\label{fig: exp2}
\end{figure*}
\subsection{Experiment \RNum{1}}
In the first experiment we compare the seeding processes under the different policies. The result of this part on Higgs, Hepph and DBLP are shown in Figs. \ref{fig: exp1_higgs}, \ref{fig: exp1_hepph} and \ref{fig: exp1_dblp}, respectively. We have the following observations.

Even though the greedy policy does not have a constant approximation ratio under general feedback models, it still dominates HighDegree which is an effective heuristic method. The superiority of Greedy becomes more significant with the increase in the network size. As we can see, on small graphs such as Higgs, HighDegree performs slightly worse than Greedy does, and Random even produces comparable results. However, on large graphs (e.g., Hepph and DBLP) Greedy outperforms the baseline methods by a significant gap. For example, as shown in Fig. \ref{fig: exp1_hepph}, Random and HighDegree can hardly bring 2,000 active nodes on Hepph with $k=50$, whereas Greedy achieves at least 4,000 active nodes under all cases. 

Another observation is that HighDegree is occasionally better than Greedy if we restrict our attention to the first several rounds. For example on Higgs with $k=5$, Figs. \ref{fig: higgs5_3_0}, \ref{fig: higgs5_3_3} and \ref{fig: higgs5_3_9}, the influence resulted by HighDegree is higher than that of Greedy in the first two or three diffusion rounds. This is intuitive as HighDegree is a heuristic targeted on the influence right after the next round, and it also suggests that HighDegree is effective for maximizing the influence within few rounds.

\subsection{Experiment \RNum{2}}
In the second experiment, we compare the diffusion pattern under different feedback models. The diffusion pattern herein is characterized by the increase in the  number of active nodes after each diffusion round. The result of this part is shown in Fig. \ref{fig: exp2}.

\textbf{Non-adaptive vs Adaptive.} The main question we are interested in is how much the adaptivity can help in resulting in a higher influence. The first we can observe is that the adaptive setting it is not always significantly better the non-adaptive one, especially when the budget is small, which can be seen by comparing the cases with $k=5$ and $k=50$ on the same graph. For example, as shown in Fig. \ref{fig: higgs5}, on Higgs with $k=5$ the final influence is the same regardless of the feedback models. Recall that one of the main advantages of an adaptive policy is that better selections can be made if the nodes that were optimal have been activated. Because the optimal seed nodes are relatively sparse when $k$ is small, they are likely to remain inactive after other seed nodes are selected, resulting in that the adaptivity cannot provide better options for node selection and thus the final influence remains the same. 

\textbf{Comparing Adaptive Patterns.} According to Fig. \ref{fig: exp2}, the Full Adoption feedback model always produces the best result, coinciding with that it is theoretically optimal, but the difference between the patterns with different $d$ is not significant. One plausible reason is that the diffusion process terminates very fast due to the setting of the propagation probability in our experiments. In such cases, the seeding decisions are always made when the diffusion process terminates, and thus a fixed $d$ is equivalent to the Full Adoption feedback model. Such an observation suggests that $d=1$ can be the best choice if one also considers a time constraint. As we can see, namely in Figs. \ref{fig: higgs50}, \ref{fig: hepph50} and \ref{fig: dblp50}, different adaptive patterns have almost the same final influence but it reaches its maximum much faster when $d=1$. Therefore, even though the Full Adoption feedback model is optimal, adopting a fixed small $d$ is practically sufficient for certain datasets.

\section{Future Work}
\label{sec: general}
In this section, we present the future work. 

\textbf{Feedback Model with Further Generalizations.} From the Full Adoption feedback model in \cite{guille2013information} to the $(k,d)$-feedback model proposed in this paper, we see that the manner in which we observe the diffusion results is generalized further and further. In a more general case, we can consider any feedback function which maps a status $U$ to a subset $E^* \subseteq E$ of edge, indicating that the state of the edges in $E^*$ will be observed from status $U$. For example, under the Myopic feedback model, $E^*$ is the set of out-edges of the newly activated nodes. Furthermore, the observable edges can be nondeterministic. For example, under the Full Adoption feedback model, the edge we can observe in each step depends on the diffusion process which is stochastic. Therefore, the most general feedback model maps a status $U$ to a distribution over the super-realizations of $\dot{\phi}(U)$. This general setting enables us to make observations independent of the diffusion process and it admits extra flexibility for modeling complex real applications. Several examples are shown below.

\begin{example}[\textbf{Limited Observation}]
Considering the large scale of the social network, one often has a limit in observing the diffusion results. For example, we are given a subset $E^*$ of edges which are the only the edges we can observe. Combining the Full Adoption feedback model, in each observing step only the edges in $E^*$ reachable from the active nodes can be observed. Recall that the key issue in the AIM problem is to utilize the observations to make the next seeding decision. Due to the limited observation, there is an interesting trade-off between the quality of the observations and the quality of the seed nodes. On the one hand, we prefer the seed node which can result in more observations in $E^*$ but such a node may not necessarily be the influential node. Conversely, the node with a high influence may be distant from $E^*$ and therefore brings no observation. 
\end{example}

\begin{example}[\textbf{Flexible Observation}]
Suppose that we aim at select $k \in \mathbb{Z}^+$ seed nodes for a certain cascade in an IC social network. Due to privacy issues, the states of the edges can only be observed by probing, and we can probe at most $p \in \mathbb{Z}^+$ edges. Under such a setting, we have a chance to decide the edges to observe, and an adaptive seeding process consists of probing-step and seeding-step. The problem asks for the co-design of the seeding strategy and the probing strategy such that the total influence can be maximized, which is another interesting future work. 
\end{example}

\begin{example}[\textbf{Observations Beyond Round-by-round}]
One typical setting in the AIM problem is that we always make observations round by round. However, this is not realistic in practice because one can hardly synchronize the diffusion process by round in real social networks namely Facebook and Twitter because the activations may take different periods. Instead, the seeding action can be made immediately once an important event has been observed. For example, in online advertising with several target users, a company would prefer to start advertising another one if the current target has been influenced. In another issue, for time-sensitive tasks, we can immediately deploy another seed node once we have a sufficient number of new active users. In such cases, the observations are not necessarily or even not allowed to be made round by round. It is promising to investigate how to model the AIM problem in such cases and design seeding policies accordingly.
\end{example}

\textbf{Batch Mode.} The policy considered in this paper selects one node each time while it is possible to generalize it to a batch mode where more than one nodes are selected. Under such setting, the node selection in each seeding-step becomes an NP-hard problem in general but one can obtain a $(1-1/e)$-approximation. For such cases, one can define the regret ratio for the batch model and the techniques in \cite{chen2013near} and \cite{sun2018multi} are potentially applicable.

\section{Conclusion}
\label{sec: con}
In this paper, we study the AIM problem under the general feedback models applying to many real-world applications. We show that the performance of the greedy policy under the considered feedback models can be bounded by the regret ratio which quantifying the trade-off between waiting and seeding for the general case. The proposed analysis is the first applies to the AIM when it is not adaptive submodular. We design experiments to examine the performance of the greedy policy under general feedback models, as well as the effect of the feedback model on the final influence. In particular, the conducted experiments show that the adaptive settings are supreme in most cases. Finally, we discuss the future work.

\ifCLASSOPTIONcompsoc
\section*{Acknowledgments}
\else
\section*{Acknowledgment}
\fi

The authors would like to thank Wei Chen for his comments on the AIM problem.

\ifCLASSOPTIONcaptionsoff
\newpage
\fi

\bibliographystyle{IEEEtran}
\bibliography{sample-bibliography}

\begin{thebibliography}{10}
\providecommand{\url}[1]{#1}
\csname url@samestyle\endcsname
\providecommand{\newblock}{\relax}
\providecommand{\bibinfo}[2]{#2}
\providecommand{\BIBentrySTDinterwordspacing}{\spaceskip=0pt\relax}
\providecommand{\BIBentryALTinterwordstretchfactor}{4}
\providecommand{\BIBentryALTinterwordspacing}{\spaceskip=\fontdimen2\font plus
\BIBentryALTinterwordstretchfactor\fontdimen3\font minus
  \fontdimen4\font\relax}
\providecommand{\BIBforeignlanguage}[2]{{%
\expandafter\ifx\csname l@#1\endcsname\relax
\typeout{** WARNING: IEEEtran.bst: No hyphenation pattern has been}%
\typeout{** loaded for the language `#1'. Using the pattern for}%
\typeout{** the default language instead.}%
\else
\language=\csname l@#1\endcsname
\fi
#2}}
\providecommand{\BIBdecl}{\relax}
\BIBdecl

\bibitem{kempe2003maximizing}
D.~Kempe, J.~Kleinberg, and {\'E}.~Tardos, ``Maximizing the spread of influence
  through a social network,'' in \emph{Proceedings of the ninth ACM SIGKDD
  international conference on Knowledge discovery and data mining}.\hskip 1em
  plus 0.5em minus 0.4em\relax ACM, 2003, pp. 137--146.

\bibitem{li2018influence}
Y.~Li, J.~Fan, Y.~Wang, and K.-L. Tan, ``Influence maximization on social
  graphs: A survey,'' \emph{IEEE Transactions on Knowledge and Data
  Engineering}, vol.~30, no.~10, pp. 1852--1872, 2018.

\bibitem{guille2013information}
A.~Guille, H.~Hacid, C.~Favre, and D.~A. Zighed, ``Information diffusion in
  online social networks: A survey,'' \emph{ACM Sigmod Record}, vol.~42, no.~2,
  pp. 17--28, 2013.

\bibitem{aslay2018influence}
C.~Aslay, L.~V. Lakshmanan, W.~Lu, and X.~Xiao, ``Influence maximization in
  online social networks,'' in \emph{Proceedings of the Eleventh ACM
  International Conference on Web Search and Data Mining}.\hskip 1em plus 0.5em
  minus 0.4em\relax ACM, 2018, pp. 775--776.

\bibitem{golovin2011adaptive}
D.~Golovin and A.~Krause, ``Adaptive submodularity: Theory and applications in
  active learning and stochastic optimization,'' \emph{Journal of Artificial
  Intelligence Research}, vol.~42, pp. 427--486, 2011.

\bibitem{tong2017adaptive}
G.~Tong, W.~Wu, S.~Tang, and D.-Z. Du, ``Adaptive influence maximization in
  dynamic social networks,'' \emph{IEEE/ACM Transactions on Networking (TON)},
  vol.~25, no.~1, pp. 112--125, 2017.

\bibitem{vaswani2016adaptive}
S.~Vaswani and L.~V. Lakshmanan, ``Adaptive influence maximization in social
  networks: Why commit when you can adapt?'' \emph{arXiv preprint
  arXiv:1604.08171}, 2016.

\bibitem{chen2013near}
Y.~Chen and A.~Krause, ``Near-optimal batch mode active learning and adaptive
  submodular optimization.'' \emph{ICML (1)}, vol.~28, pp. 160--168, 2013.

\bibitem{sun2018multi}
L.~Sun, W.~Huang, P.~S. Yu, and W.~Chen, ``Multi-round influence
  maximization,'' in \emph{Proceedings of the 24th ACM SIGKDD International
  Conference on Knowledge Discovery \& Data Mining}.\hskip 1em plus 0.5em minus
  0.4em\relax ACM, 2018, pp. 2249--2258.

\bibitem{lei2015online}
S.~Lei, S.~Maniu, L.~Mo, R.~Cheng, and P.~Senellart, ``Online influence
  maximization,'' in \emph{Proceedings of the 21th ACM SIGKDD International
  Conference on Knowledge Discovery and Data Mining}.\hskip 1em plus 0.5em
  minus 0.4em\relax ACM, 2015, pp. 645--654.

\bibitem{chen2016combinatorial}
W.~Chen, Y.~Wang, Y.~Yuan, and Q.~Wang, ``Combinatorial multi-armed bandit and
  its extension to probabilistically triggered arms,'' \emph{The Journal of
  Machine Learning Research}, vol.~17, no.~1, pp. 1746--1778, 2016.

\bibitem{vaswani2017model}
S.~Vaswani, B.~Kveton, Z.~Wen, M.~Ghavamzadeh, L.~V. Lakshmanan, and
  M.~Schmidt, ``Model-independent online learning for influence maximization,''
  in \emph{International Conference on Machine Learning}, 2017, pp. 3530--3539.

\bibitem{wen2017online}
Z.~Wen, B.~Kveton, M.~Valko, and S.~Vaswani, ``Online influence maximization
  under independent cascade model with semi-bandit feedback,'' in
  \emph{Advances in Neural Information Processing Systems}, 2017, pp.
  3022--3032.

\bibitem{seeman2013adaptive}
L.~Seeman and Y.~Singer, ``Adaptive seeding in social networks,'' in
  \emph{Foundations of Computer Science (FOCS), 2013 IEEE 54th Annual Symposium
  on}.\hskip 1em plus 0.5em minus 0.4em\relax IEEE, 2013, pp. 459--468.

\bibitem{salha2018adaptive}
G.~Salha, N.~Tziortziotis, and M.~Vazirgiannis, ``Adaptive submodular influence
  maximization with myopic feedback,'' in \emph{2018 IEEE/ACM International
  Conference on Advances in Social Networks Analysis and Mining
  (ASONAM)}.\hskip 1em plus 0.5em minus 0.4em\relax IEEE, 2018, pp. 455--462.

\bibitem{han2018efficient}
K.~Han, K.~Huang, X.~Xiao, J.~Tang, A.~Sun, and X.~Tang, ``Efficient algorithms
  for adaptive influence maximization,'' \emph{Proceedings of the VLDB
  Endowment}, vol.~11, no.~9, pp. 1029--1040, 2018.

\bibitem{golovin2010adaptive}
D.~Golovin, A.~Krause, and E.~CH, ``Adaptive submodularity: Theory and
  applications in active learning and stochastic optimization,'' \emph{arXiv
  preprint arXiv:1003.3967}, 2010.

\bibitem{mossel2007submodularity}
E.~Mossel and S.~Roch, ``On the submodularity of influence in social
  networks,'' in \emph{Proceedings of the thirty-ninth annual ACM symposium on
  Theory of computing}.\hskip 1em plus 0.5em minus 0.4em\relax ACM, 2007, pp.
  128--134.

\bibitem{borgs2014maximizing}
C.~Borgs, M.~Brautbar, J.~Chayes, and B.~Lucier, ``Maximizing social influence
  in nearly optimal time,'' in \emph{Proceedings of the twenty-fifth annual
  ACM-SIAM symposium on Discrete algorithms}.\hskip 1em plus 0.5em minus
  0.4em\relax SIAM, 2014, pp. 946--957.

\bibitem{tang2015influence}
Y.~Tang, Y.~Shi, and X.~Xiao, ``Influence maximization in near-linear time: A
  martingale approach,'' in \emph{Proceedings of the 2015 ACM SIGMOD
  International Conference on Management of Data}.\hskip 1em plus 0.5em minus
  0.4em\relax ACM, 2015, pp. 1539--1554.

\bibitem{tong2017efficient}
G.~Tong, W.~Wu, L.~Guo, D.~Li, C.~Liu, B.~Liu, and D.-Z. Du, ``An efficient
  randomized algorithm for rumor blocking in online social networks,''
  \emph{IEEE Transactions on Network Science and Engineering}, 2017.

\bibitem{tong2019beyond}
G.~Tong and D.-Z. Du, ``Beyond uniform reverse sampling: A hybrid sampling
  technique for misinformation prevention,'' \emph{arXiv preprint
  arXiv:1901.05149}, 2019.

\bibitem{wang2017activity}
Z.~Wang, Y.~Yang, J.~Pei, L.~Chu, and E.~Chen, ``Activity maximization by
  effective information diffusion in social networks,'' \emph{IEEE Transactions
  on Knowledge and Data Engineering}, vol.~29, no.~11, pp. 2374--2387, 2017.

\bibitem{lancichinetti2008benchmark}
A.~Lancichinetti, S.~Fortunato, and F.~Radicchi, ``Benchmark graphs for testing
  community detection algorithms,'' \emph{Physical review E}, vol.~78, no.~4,
  p. 046110, 2008.

\bibitem{leskovec2015snap}
J.~Leskovec and A.~Krevl, ``$\{$SNAP Datasets$\}$:$\{$Stanford$\}$ large
  network dataset collection,'' 2015.

\bibitem{code}
G.~Tong and R.~Wang,
  https://github.com/New2World/Computational-Data-Science-Lab/tree/multithread/AIM-TKDE/.

\bibitem{tong2018misinformation}
G.~Tong, D.-Z. Du, and W.~Wu, ``On misinformation containment in online social
  networks,'' in \emph{Advances in Neural Information Processing Systems},
  2018, pp. 339--349.

\bibitem{nguyen2016stop}
H.~T. Nguyen, M.~T. Thai, and T.~N. Dinh, ``Stop-and-stare: Optimal sampling
  algorithms for viral marketing in billion-scale networks,'' in
  \emph{Proceedings of the 2016 International Conference on Management of
  Data}.\hskip 1em plus 0.5em minus 0.4em\relax ACM, 2016, pp. 695--710.

\bibitem{huang2017revisiting}
K.~Huang, S.~Wang, G.~Bevilacqua, X.~Xiao, and L.~V. Lakshmanan, ``Revisiting
  the stop-and-stare algorithms for influence maximization,'' \emph{Proceedings
  of the VLDB Endowment}, vol.~10, no.~9, pp. 913--924, 2017.

\bibitem{nguyen2018revisiting}
H.~T. Nguyen, T.~N. Dinh, and M.~T. Thai, ``Revisiting of ‘revisiting the
  stop-and-stare algorithms for influence maximization’,'' in
  \emph{International Conference on Computational Social Networks}.\hskip 1em
  plus 0.5em minus 0.4em\relax Springer, 2018, pp. 273--285.

\end{thebibliography}

\begin{IEEEbiography}[{\includegraphics[width=0.9in,clip,keepaspectratio]{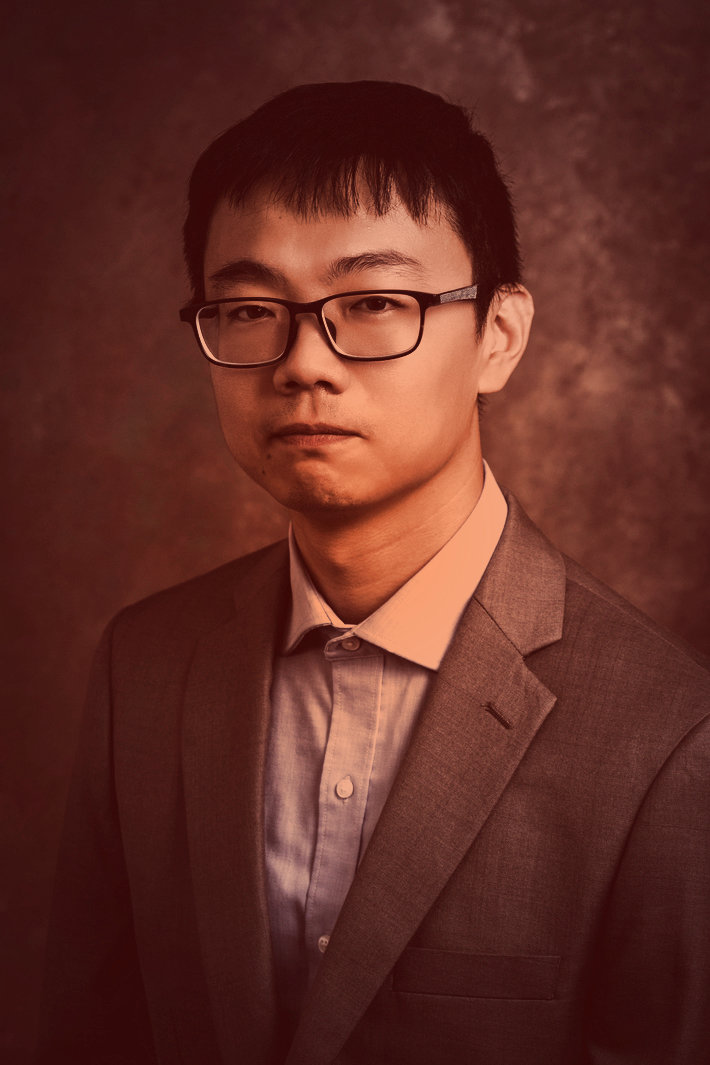}}]{Guangmo (Amo)Tong} is an Assistant Professor in the Department of Computer and Information Sciences at the University of Delaware. He received a Ph.D. in the Department of Computer Science at the University of Texas at Dallas in 2018. He received his BS degree in Mathematics and Applied Mathematics from Beijing Institute of Technology in July 2013. His research interests include computational social systems, data science and theoretical computer science. 
\end{IEEEbiography}

\begin{IEEEbiography}[{\includegraphics[width=0.9in,clip,keepaspectratio]{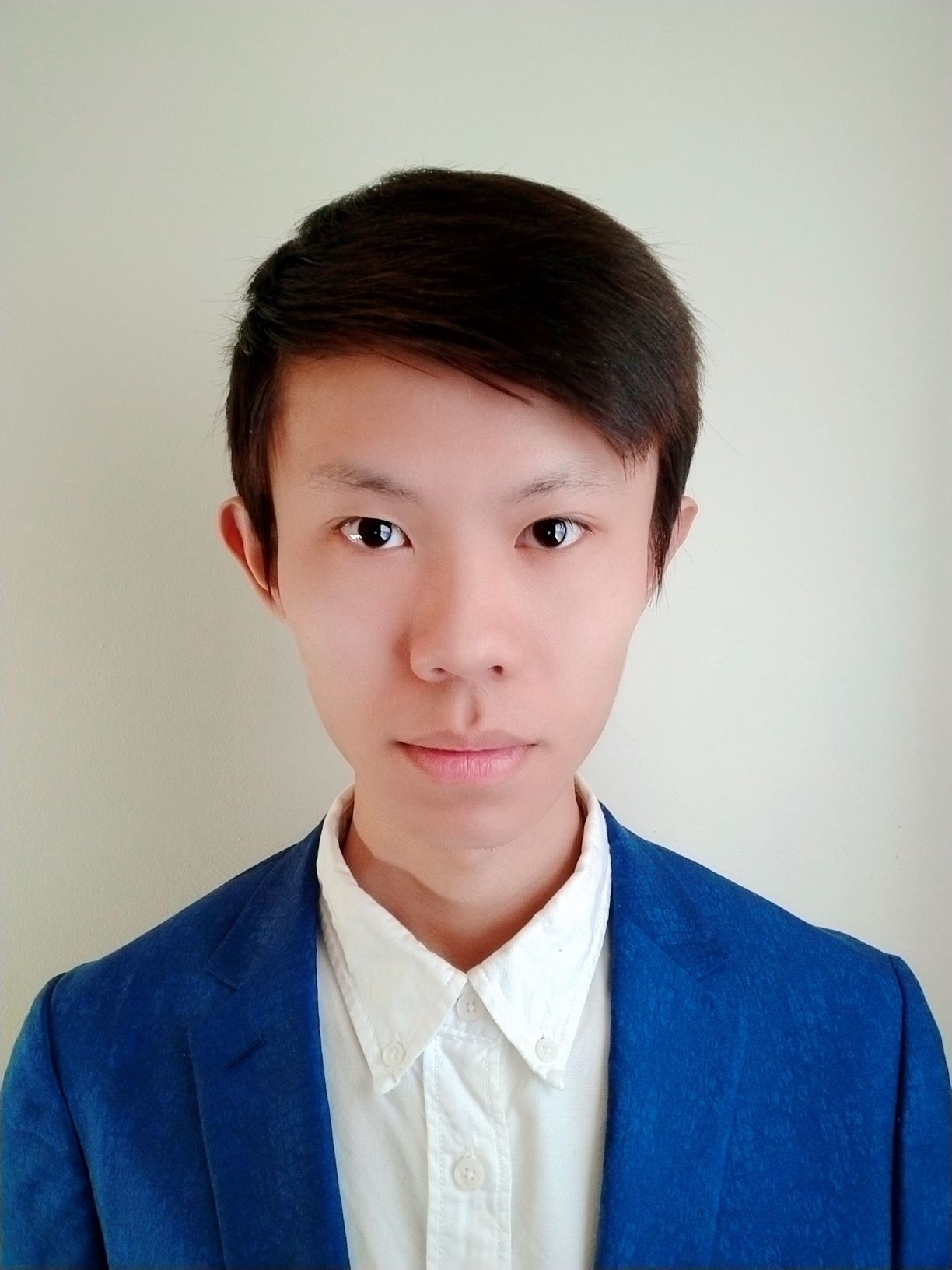}}]{Ruiqi Wang} received his B.E. degree in Information and Software Engineering from University of Electronic Science and Technology of China, in 2018. He is currently pursuing a Master degree in Computer Science at University of Delaware. His current research interests are in the area of social networks and information diffusion.

\end{IEEEbiography}

\vfill

\newpage



\section*{On Adaptive Influence Maximization under General Feedback Models (Supplementary Material)}

\section{Missing Proofs}
\begin{figure*}[!t]
	\begin{center}
		\frame{	\includegraphics[width=0.99\textwidth]{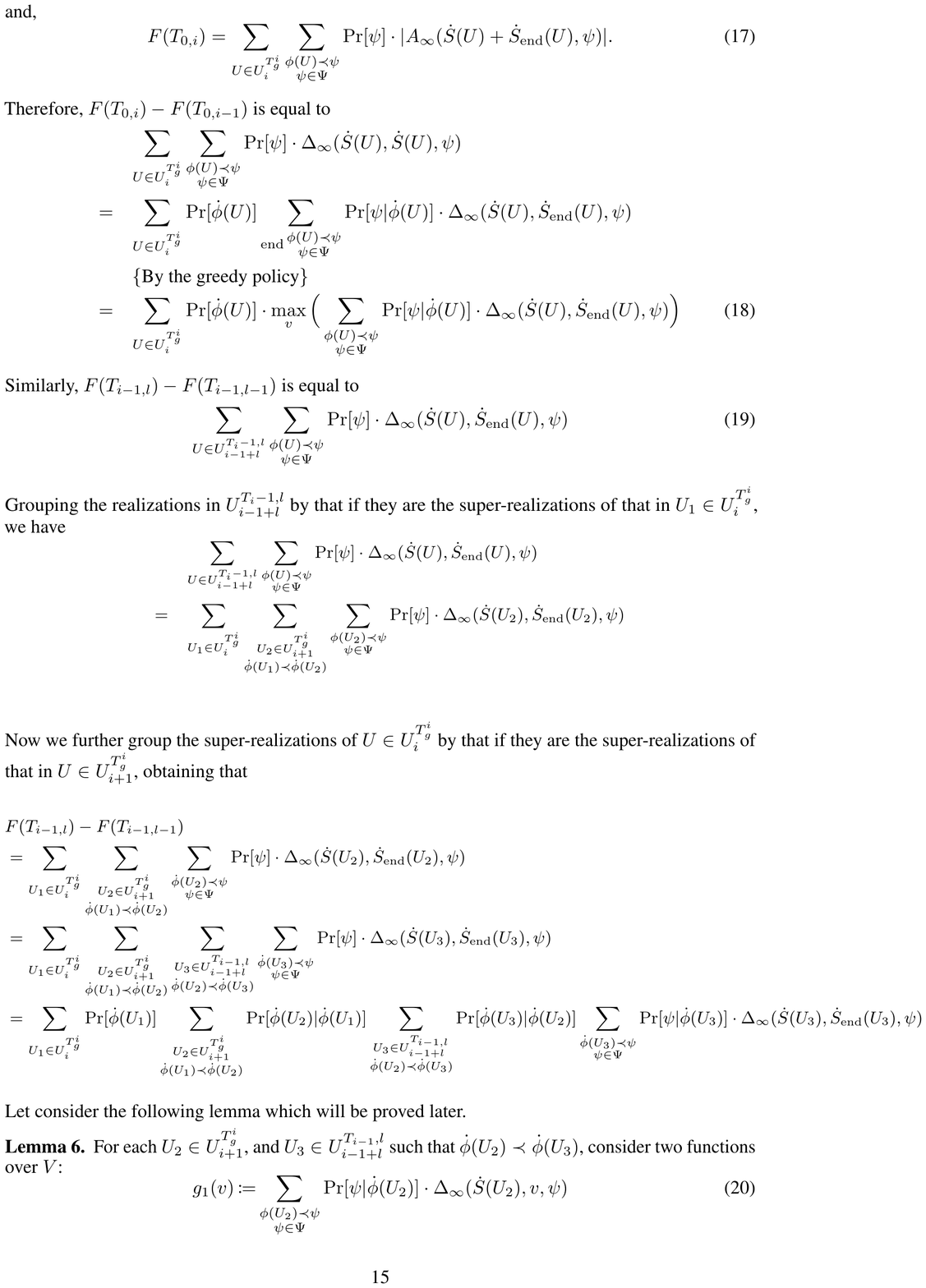} }
	\end{center} 
	\vspace{-3mm}
	\caption{Analysis with Long Equations \RNum{1}.}
	\label{fig: longeq1}
	\vspace{-0mm}
\end{figure*}

\subsection{Proof of  Lemma \ref{lemma: monotone}}
\label{appendix: proof_lemma: monotone}
{\small
	\begin{align*}
	&F(T_1 \oplus T_2) \label{eq: xxx}\\
	&\{\small \text{By Eq. (\ref{eq: F(T)})}\}  \\
	&=\sum_{U \in U^{T_1 \oplus T_2}_{\infty}}\sum_{\substack{\psi \in \Psi \\ \dot{\phi}(U) \prec \psi}} \Pr[\psi]\cdot |A_\infty(\dot{S}(U)+\dot{S}_{\e}(U),\psi)|  \\
	&\{\small \text{By Def. \ref{def: con_tree}}\}  \\
	&=\sum_{U_1 \in U^{T_1}_{\infty}}\sum_{\substack{U_2 \in U^{T_2}_{\infty} \\ \dot{\phi}(U_1) \sim \dot{\phi}(U_2)}} \sum_{\substack{\psi \in \Psi \\ \dot{\phi}(U_1 \cup U_2) \prec \psi}} \\
	&\hspace{+2cm} \Pr[\psi]\cdot |A_\infty(\dot{S}(U_1 \cup U_2)+\dot{S}_{\e}(U_2),\psi)| \\
	&\{\small \text{Step 3}\}  \\
	&=\sum_{\substack{U_2 \in U^{T_2}_{\infty} }} \sum_{\substack{U_1 \in U^{T_1}_{\infty} \\ \dot{\phi}(U_1) \sim \dot{\phi}(U_2) }} \sum_{\substack{\psi \in \Psi \\ \dot{\phi}(U_1 \cup U_2) \prec \psi}}\\
	&\hspace{+2cm} \Pr[\psi]\cdot |A_\infty(\dot{S}(U_1 \cup U_2)+\dot{S}_{\e}(U_2),\psi)| \\
	&\{\small \text{Step 4}\}  \\
	&\geq\sum_{\substack{U_2 \in U^{T_2}_{\infty} }} \sum_{\substack{U_1 \in U^{T_1}_{\infty} \\ \dot{\phi}(U_1) \sim \dot{\phi}(U_2) }} \sum_{\substack{\psi \in \Psi \\ \dot{\phi}(U_1 \cup U_2) \prec \psi}}\\ 
	&\hspace{+2cm} \Pr[\psi]\cdot |A_\infty(\dot{S}(U_2)+\dot{S}_{\e}(U_2),\psi)| \\
	&\{\small \text{Step 5}\}  \\
	&=\sum_{\substack{U_2 \in U^{T_2}_{\infty} }}  \sum_{\substack{\psi \in \Psi \\ \dot{\phi}(U_2) \prec \psi}} \\
	&\hspace{2cm}\Pr[\psi]\cdot |A_\infty(\dot{S}(U_2)+\dot{S}_{\e}(U_2),\psi)| \nonumber\\
	&\{\small \text{By Eq. (\ref{eq: F(T)})}\} \nonumber \\
	&=F(T_2)\nonumber
	\end{align*}}
The third step follows from the fact that the collection of $\{\psi \in \Psi: \dot{\phi}(U_1 \cup U_2) \prec \psi\}$ among all possible pairs $U_1 \in U^{T_1}_{\infty}$ and $U_2 \in U^{T_2}_{\infty}$ such that $\dot{\phi}(U_1) \sim \dot{\phi}(U_2)$ forms a partition of $\Psi$. The fourth step follows from the fact that $A_\infty(S,\psi)$ is monotone with respect to $S$. The fifth step follows from the fact that for each $U_2 \in U^{T_2}_{\infty}$ we have
{\small
 \[\bigcup_{\substack{U_1 \in U^{T_1}_{\infty} \\ \dot{\phi}(U_1) \sim \dot{\phi}(U_2) }}\{\psi \in \Psi: \dot{\phi}(U_1 \cup U_2) \prec \psi\}=\{\psi \in \Psi: \dot{\phi}(U_2) \prec \psi\}.\]}

\subsection{Proof of Lemma \ref{lemma: key}}
\label{subsubsec: lemma_key}
There will be two parts of analysis with long equations which are shown in Figs. \ref{fig: longeq1} and \ref{fig: longeq2}. By Def. \ref{def: tree_profit}, we have
{\small
	\begin{align*}
	F(T_{0,i-1})=\sum_{U \in U_{i}^{T_g^{i}}}\sum_{\substack{\dot{\phi}(U) \prec \psi \\ \psi \in \Psi}} \Pr[\psi] \cdot  |A_{\infty}(\dot{S}(U),\psi)|
	\end{align*}}
and
{\small
	\begin{align*}
	F(T_{0,i})=\sum_{U \in U_{i}^{T_g^{i}}}\sum_{\substack{\dot{\phi}(U) \prec \psi \\ \psi \in \Psi}} \Pr[\psi] \cdot  |A_{\infty}(\dot{S}(U)+\dot{S}_{\e}(U),\psi)|.
	\end{align*}}
Therefore, $F(T_{0,i})-F(T_{0,i-1})$ is equal to 
{\small
	\begin{align}
	\label{eq: lemma_key_right}
	&\sum_{U \in U_{i}^{T_g^{i}}}\sum_{\substack{\dot{\phi}(U) \prec \psi \\ \psi \in \Psi}} \Pr[\psi] \cdot  \Delta_{\infty}(\dot{S}(U), \dot{S}_{}(U), \psi)\nonumber \\
	&=\sum_{U \in U_i^{T_g^i}}\Pr[\dot{\phi}(U)] \sum_{ \substack{\dot{\phi}(U) \prec \psi \\ \psi \in \Psi}} \Pr[\psi|\dot{\phi}(U)]\cdot \Delta_{\infty}(\dot{S}(U), \dot{S}_{\e}(U), \psi)\nonumber \\
	&\{\text{By the greedy policy}\}\nonumber \\
	&=\sum_{U \in U_i^{T_g^i}}\Pr[\dot{\phi}(U)] \cdot \max_v  \Big(\sum_{ \substack{\dot{\phi}(U) \prec \psi \\ \psi \in \Psi}} \Pr[\psi|\dot{\phi}(U)]\cdot \Delta_{\infty}(\dot{S}(U), v, \psi)\Big)
	\end{align}}
Similarly, $F(T_{i-1,l})-F(T_{i-1,l-1})$ is equal to 
{
	\begin{equation*}
	\sum_{U \in U_{i-1+l}^{T_{i-1,l}}}\sum_{ \substack{\dot{\phi}(U) \prec \psi \\ \psi \in \Psi}} \Pr[\psi]\cdot \Delta_{\infty}(\dot{S}(U), \dot{S}_{\e}(U), \psi)   
	\end{equation*}}
Grouping the realizations in $U_{i-1+l}^{T_{i-1,l}}$ according to that if they are the super-realizations of that in $U_{i}^{T_g^{i}}$, we have
{\small
	\begin{align}
	\label{eq: lemma_key_left}
	&\sum_{U \in U_{i-1+l}^{T_{i-1,l}}}\sum_{ \substack{\dot{\phi}(U) \prec \psi \\ \psi \in \Psi}} \Pr[\psi]\cdot \Delta_{\infty}(\dot{S}(U), \dot{S}_{\e}(U), \psi) \\
	&=\sum_{U_1 \in U_{i}^{T_g^{i}}}\sum_{U_2 \in U_{i-1+l}^{T_{i-1,l}}}\sum_{ \substack{\dot{\phi}(U_2) \prec \psi \\ \psi \in \Psi}}\Pr[\psi]\cdot \Delta_{\infty}(\dot{S}(U_2), \dot{S}_{\e}(U_2), \psi) \nonumber
	\end{align}}
Now to prove Lemma \ref{lemma: key}, it suffices to show that \[\text{Eq}. (\ref{eq: lemma_key_left}) \leq \alpha(T_g) \cdot \text{Eq}. (\ref{eq: lemma_key_right}).\] To this end, let us further group the super-realizations of $U \in U_{i}^{T_g^{i}}$ by that if they are the super-realizations of that in $U \in U_{i+1}^{T_g^{i}}$, and Eq. (\ref{eq: lemma_key_left}) can be further represented as shown in Fig. \ref{fig: longeq1}. Finally, we have the following lemma as the last ingredient.

\begin{lemma}
	\label{lemma: submodular}
	For each $U_2 \in U_{i+1}^{T_g^{i}}$,  and $U_3 \in U_{i-1+l}^{T_{i-1,l}}$ such that $\dot{\phi}(U_2)\prec \dot{\phi}(U_3)$, consider two functions over $V$,
	\begin{equation*}
	g_1(v)\define \sum_{\substack{\dot{\phi}(U_2) \prec \psi \\ \psi \in \Psi}} \Pr[\psi|\dot{\phi}(U_2)] \cdot  \Delta_{\infty}(\dot{S}(U_2), v, \psi)
	\end{equation*}
	and
	\begin{equation*}
	g_2(v)\define \sum_{ \substack{\dot{\phi}(U_3) \prec \psi \\ \psi \in \Psi}}\Pr[\psi|\dot{\phi}(U_3)]\cdot \Delta_{\infty}(\dot{S}(U_3), v, \psi).
	\end{equation*}
	We have $g_1(v) \geq g_2(v)$.
\end{lemma}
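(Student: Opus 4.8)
The plan is to recognize that $g_1(v)=\Delta f_\infty(\dot{S}(U_2),v,\dot{\phi}(U_2))$ and $g_2(v)=\Delta f_\infty(\dot{S}(U_3),v,\dot{\phi}(U_3))$ in the sense of Eq.~(\ref{eq: f_t}), so the claim is a diminishing-returns statement: passing from $U_2$ to the super-status $U_3$ cannot increase the expected $\infty$-round marginal profit of a single node $v$. I would lean on three structural facts. First, $\dot{\phi}(U_2)\prec\dot{\phi}(U_3)$ is given. Second, $\dot{S}(U_2)\subseteq\dot{S}(U_3)$: both $T_g^i$ and $T_{i-1,l}=T_g^{i-1}\oplus T_*^l$ share the common greedy prefix that activates the first $i-1$ seeds, whose terminal status is exactly $U_2$, after which $T_{i-1,l}$ activates $l$ further seeds (cf. Remark~\ref{remark: decision-tree}). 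Third, and crucially, $U_2$ is a \emph{final} status, since it lies in $U_{i+1}^{T_g^i}$ whose incoming tree-edge is the ``wait for termination'' observing step of the L-process. My first step is to decompose each function over target nodes via the Kempe \emph{et al.} lemma: $g_j(v)=\sum_{u\in V}q_j(u)$, where $q_j(u)$ is the probability, under the relevant conditioning, that $u$ is reachable from $v$ through an $\infty$-live-path but not reachable from the current active set. It then suffices to prove $q_1(u)\ge q_2(u)$ for every $u$.

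For the per-node comparison I would pass to residual graphs. The key bookkeeping point is that every observed edge, in $\dot{\phi}(U_2)$ or $\dot{\phi}(U_3)$, is an out-edge of the corresponding active set, so every edge interior to $V\setminus\dot{S}(U_2)$ (resp. $V\setminus\dot{S}(U_3)$) stays \emph{unobserved} and keeps its fresh probability. Using finality of $U_2$: for $u\notin\dot{S}(U_2)$, no $\infty$-live-path leaves $\dot{S}(U_2)$ in any super-realization, so ``$u$ reachable from $\dot{S}(U_2)$'' is impossible and, moreover, any live $v$--$u$ path must remain inside $V\setminus\dot{S}(U_2)$ (a path entering $\dot{S}(U_2)$ could never leave, since the observed live out-edges of $\dot{S}(U_2)$ return to $\dot{S}(U_2)$). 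Hence $q_1(u)$ equals \emph{exactly} the unconditioned reachability probability of $u$ from $v$ in the residual graph $G[V\setminus\dot{S}(U_2)]$. For $q_2(u)$ I would instead argue an upper bound: whenever $u\notin\dot{S}(U_3)$ is reachable from $v$ but not from $\dot{S}(U_3)$, the witnessing path must avoid $\dot{S}(U_3)$ entirely, so $q_2(u)$ is at most the unconditioned reachability probability of $u$ from $v$ in $G[V\setminus\dot{S}(U_3)]$. Since $\dot{S}(U_2)\subseteq\dot{S}(U_3)$ yields $G[V\setminus\dot{S}(U_3)]\subseteq G[V\setminus\dot{S}(U_2)]$, and reachability is monotone under adding nodes and edges, I conclude $q_2(u)\le q_1(u)$; the boundary cases $u\in\dot{S}(U_3)$ and $v\in\dot{S}(U_3)$ are trivial because the relevant marginal vanishes.

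The step I expect to be the main obstacle, and the reason the lemma holds here even though the problem is not adaptive submodular in general, is securing a matching \emph{lower} bound for $q_1(u)$. Conditioning on the extra observations recorded in $\dot{\phi}(U_3)$ is by itself not monotone, as revealing an edge to be live could raise the marginal value of $v$; so the inequality cannot be extracted from observation growth in isolation. It is precisely the finality of $U_2$ that removes the ``not reachable from $\dot{S}(U_2)$'' constraint and turns $q_1(u)$ into the \emph{full} residual reachability probability rather than a mere upper bound, which is what lets the active-set growth $\dot{S}(U_2)\subseteq\dot{S}(U_3)$ dominate. I would therefore isolate the finality claim (no live path escapes $\dot{S}(U_2)$ in any super-realization, and consequently relevant paths stay in the residual graph) as a standalone sublemma; once that is in hand, the residual-graph subgraph monotonicity makes the remainder routine.
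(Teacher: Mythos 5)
Your proof is correct, and it reaches the inequality by a genuinely different route than the paper's. The paper proves Lemma \ref{lemma: submodular} as a special case of a more general statement (Lemma \ref{lemma: general_submodular} in the supplementary material, stated for every horizon $t$ and any pair of statuses with $U_1$ final, $\dot{S}(U_1)\subseteq\dot{S}(U_2)$, and nested observed-edge sets): it factorizes both expectations over the shared unobserved edges $E_2$, averages out the extra randomness on $E_1\setminus E_2$ using $\sum_{\phi_1\in\Phi_{1|2}}\Pr[\phi_1]=1$, and thereby reduces the claim to a deterministic, per-realization containment $V_2\subseteq V_1$ of the marginal-gain sets, proved by a live-path argument. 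You instead compare the two sides node by node: decomposing $g_j(v)=\sum_u q_j(u)$, identifying $q_1(u)$ \emph{exactly} with fresh reachability from $v$ in the residual graph induced on $V\setminus\dot{S}(U_2)$, upper-bounding $q_2(u)$ by fresh reachability in the smaller residual graph induced on $V\setminus\dot{S}(U_3)$, and finishing by subgraph monotonicity of reachability (a standard coupling). The two arguments use finality at the same pressure point --- in the paper it forces $u\notin A_t(\dot{S}(U_1),\cdot)$ in every super-realization and confines the witness path to edges of $L(\phi_2)$; in yours it turns $q_1(u)$ from a mere upper bound into an exact residual probability, which, as you rightly stress, is what defeats the non-monotonicity of conditioning on extra observations --- and both silently rely on the invariant that every observed edge has an active tail (the paper needs it to conclude that the witness path avoids $L(\dot{\phi}(U_2))$, you need it to conclude that all residual edges are unobserved and so keep their original probabilities); since Lemma \ref{lemma: general_submodular} as stated does not impose this invariant, it merits explicit statement in either write-up, though it does hold for every status generated by the decision trees at hand. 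As for what each approach buys: the paper's realization coupling applies verbatim to every finite horizon $t$, which is reused to prove Lemma \ref{lemma: upper_bound}, whereas your residual-graph identification is tied to $t=\infty$ as written (it adapts to finite $t$ via bounded-length paths, but less immediately); conversely, your route makes the probabilistic mechanism more transparent, and your explicit derivation of the structural hypotheses ($U_2$ final because its incoming tree-edge is the wait-for-termination observing step of the L-process, and $\dot{S}(U_2)\subseteq\dot{S}(U_3)$ via the shared greedy prefix) is a genuine improvement over the paper, which invokes these facts tacitly when appealing to the general lemma.
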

\begin{proof}
	See Sec. \ref{appendix: proof_lemma: submodular}.
\end{proof}

Supposing Lemma \ref{lemma: submodular} is true, we have the result in Fig .\ref{fig: longeq2}, which completes the proof of Lemma \ref{lemma: key}.

\begin{figure*}[t]
	\begin{center}
		\frame{\includegraphics[width=0.99\textwidth]{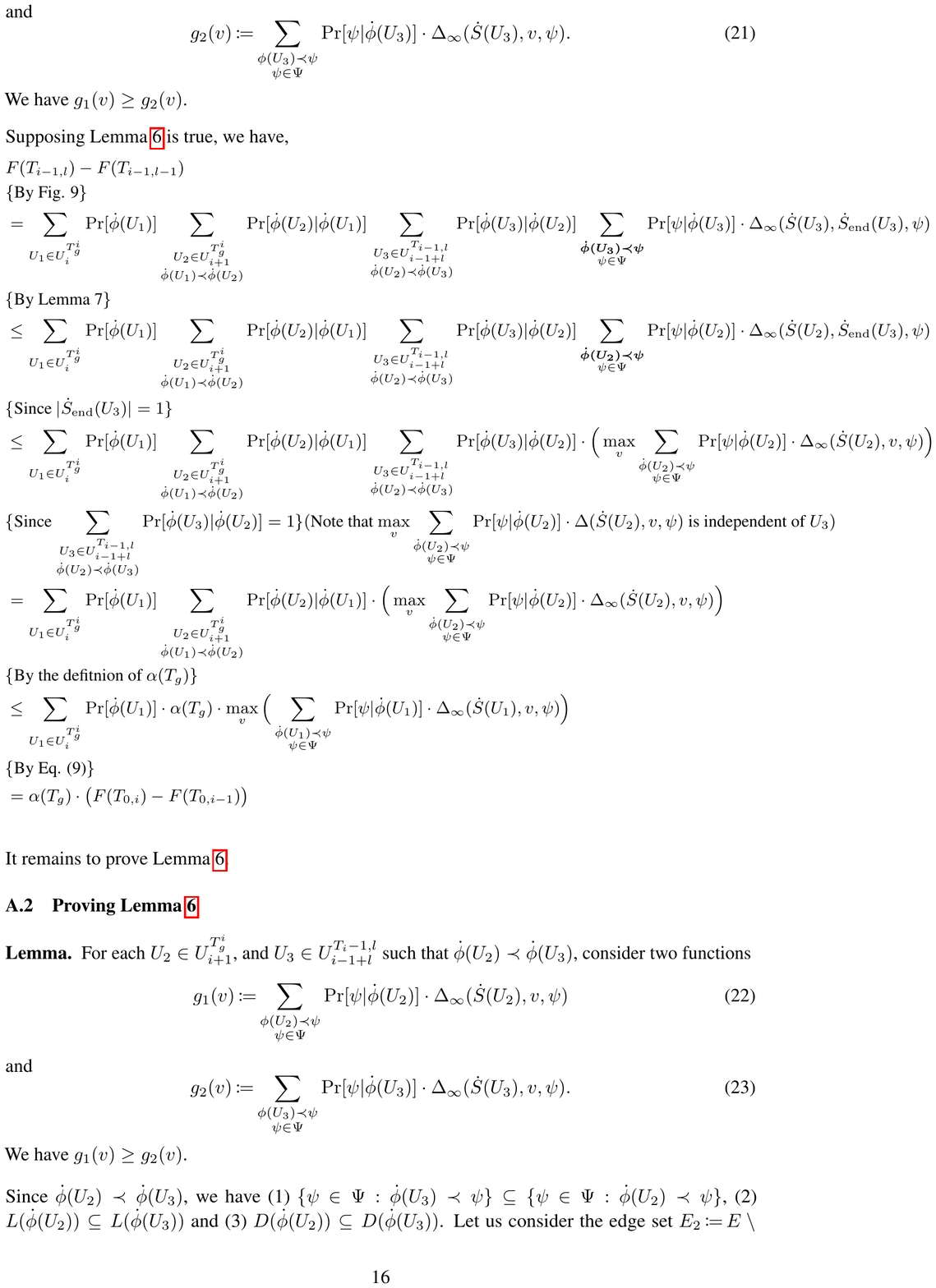}}
	\end{center} 
	\vspace{-3mm}
	\caption{Analysis with Long Equations \RNum{2}.}
	\vspace{-3mm}
	\label{fig: longeq2}
\end{figure*}

\subsection{Proof of  Lemma \ref{lemma: submodular}}
\label{appendix: proof_lemma: submodular}
We prove the following general lemma of which Lemma \ref{lemma: submodular} is a special case.
\begin{lemma}
	\label{lemma: general_submodular}
	For each pair of status $U_1$ and $U_2$, such that $U_1$ is final, $\dot{S}(U_1) \subseteq \dot{S}(U_2)$, and $L\big(\dot{\phi}(U_1)\big) \cup D\big(\dot{\phi}(U_1)\big) \subseteq L\big(\dot{\phi}(U_2)\big) \cup D\big(\dot{\phi}(U_2)\big)$. Let us consider two functions: 
	\begin{equation}
	g_1(v)\define \sum_{\substack{\dot{\phi}(U_1) \prec \psi \\ \psi \in \Psi}} \Pr[\psi|\dot{\phi}(U_1)] \cdot  \Delta_{t}(\dot{S}(U_1), v, \psi)
	\end{equation}
	and
	\begin{equation}
	g_2(v)\define \sum_{ \substack{\dot{\phi}(U_2) \prec \psi \\ \psi \in \Psi}}\Pr[\psi|\dot{\phi}(U_2)]\cdot \Delta_{t}(\dot{S}(U_2), v, \psi).
	\end{equation}
	We have $g_1(v) \geq g_2(v)$ for each $v \in V$ and $t\in \mathbb{Z}^+$.
\end{lemma}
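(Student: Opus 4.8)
The plan is to avoid manipulating the two expectations directly and instead prove a pointwise containment of the node sets that witness each marginal, then integrate. First I would translate both quantities into reachability language via the Kempe \textit{et al.} lemma: for a fixed completion $\psi$, $\Delta_t(\dot S(U_i),v,\psi)$ counts the nodes that have a $t$-live-path from $v$ but none from $\dot S(U_i)$ in $\psi$, and $g_i(v)=\Delta f_t(\dot S(U_i),v,\dot\phi(U_i))$ is the average of this count over completions. Because $U_1$ is final, $A_t(\dot S(U_1),\psi)=\dot S(U_1)$ for every $\psi\succ\dot\phi(U_1)$, so on each completion the witness set of $g_1$ is exactly the set $R_1(\psi)$ of nodes reachable from $v$ within $t$ live steps that lie outside $\dot S(U_1)$.

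Next I would set up a coupling on the edges. Let $E_+\define\big(L(\dot\phi(U_2))\cup D(\dot\phi(U_2))\big)\setminus\big(L(\dot\phi(U_1))\cup D(\dot\phi(U_1))\big)$ be the newly observed edges, and split every completion into its states $\tau$ on $E_+$ and its states $\omega$ on the remaining free edges. A completion $\psi\succ\dot\phi(U_2)$ fixes $E_+$ to its $U_2$-states $\tau_2$ and samples $\omega$, whereas a completion $\psi\succ\dot\phi(U_1)$ additionally randomizes $\tau$; write $\psi(\tau,\omega)$ for the resulting full realization. Since the edges are independent, it suffices to prove the pointwise inclusion $R_2(\omega)\subseteq R_1(\tau,\omega)$ for every $\omega$ and every $\tau$, where $R_2(\omega)$ is the witness set of $g_2$ at $\psi(\tau_2,\omega)$ and $R_1(\tau,\omega)=R_1(\psi(\tau,\omega))$; averaging this inclusion over $\tau$ and $\omega$ then yields $g_1(v)\ge g_2(v)$.

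The heart of the argument is this inclusion, and it is where the structure of a realizable status must be used. Take $z\in R_2(\omega)$, so $z$ is reachable from $v$ within $t$ live steps in $\psi(\tau_2,\omega)$ but $z\notin A_t(\dot S(U_2),\psi(\tau_2,\omega))$. I claim no witnessing path for $z$ can use an edge of $E_+$. Dead edges of $E_+$ are absent; for live edges, recall that an observed live edge activates its head, so its head lies in $\dot S(U_2)$. If a witnessing path used edges of $E_+$, let $(u,w)$ be the last such edge on it: then $w\in\dot S(U_2)$ and the remaining sub-path from $w$ to $z$ uses only non-$E_+$ live edges and has length at most $t$, so $z$ would be reachable from $w\in\dot S(U_2)$ within $t$ steps, contradicting $z\notin A_t(\dot S(U_2),\cdot)$. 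Hence the path uses only edges outside $E_+$, whose states are fixed by $\dot\phi(U_1)$ and $\omega$ and are therefore independent of $\tau$; the very same path is live in $\psi(\tau,\omega)$ for every $\tau$. As $z\notin\dot S(U_2)\supseteq\dot S(U_1)$, finality gives $z\notin A_t(\dot S(U_1),\psi(\tau,\omega))=\dot S(U_1)$, so $z\in R_1(\tau,\omega)$, completing the inclusion.

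I expect the main obstacle to be exactly this decoupling problem: the enlargement of the base set from $\dot S(U_1)$ to $\dot S(U_2)$ and the extra conditioning carried by $\dot\phi(U_2)$ cannot be handled in isolation, since on its own the extra conditioning on live edges would appear to \emph{increase} the later marginal. The reconciling ingredient is the consistency of realizable statuses --- every observed live edge is internal to the active set --- combined with the $t$-step budget bookkeeping that guarantees the suffix passing through $\dot S(U_2)$ still fits within $t$ rounds. Once the pointwise inclusion is secured, integrating over $\tau$ and $\omega$ and specializing back to Lemma \ref{lemma: submodular} (where $U_1\in U_{i+1}^{T_g^i}$ is final by construction of the L-process and $\dot\phi(U_1)\prec\dot\phi(U_2)$) is routine.
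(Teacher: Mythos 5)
Your proof is correct and takes essentially the same route as the paper's: your split of each completion into the states $\tau$ on the newly observed edges $E_+$ (the paper's $E_1\setminus E_2$, sampled in the $U_1$-world, fixed in the $U_2$-world) and the states $\omega$ on the free edges (the paper's $\phi_2\in\Phi_2$), the reduction via $\sum_\tau \Pr[\tau]=1$ to a per-realization containment of witness sets (the paper's $V_2\subseteq V_1$), the last-live-edge/suffix argument with the $t$-budget showing the witnessing path avoids observed edges (which usefully makes explicit the paper's terse ``by (b)'' step, resting on the same consistency property that observed live edges have active heads), and the use of finality of $U_1$ for the non-reachability side all mirror the supplementary proof. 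The one hair-line difference: since the lemma assumes only domain containment $L\big(\dot{\phi}(U_1)\big)\cup D\big(\dot{\phi}(U_1)\big)\subseteq L\big(\dot{\phi}(U_2)\big)\cup D\big(\dot{\phi}(U_2)\big)$ and not $\dot{\phi}(U_1)\prec\dot{\phi}(U_2)$, you should run your exclusion argument on \emph{every} edge of $L\big(\dot{\phi}(U_2)\big)$ on the path (your active-head argument applies verbatim), not only the $E_+$ edges, so that the path lies entirely in the shared free edges as in the paper --- your phrase ``states fixed by $\dot{\phi}(U_1)$ and $\omega$'' silently assumes the two realizations agree on the common observed domain, which holds in the intended application (where $\dot{\phi}(U_2)\prec\dot{\phi}(U_3)$) but not under the literal hypothesis.
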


\begin{table*}[tp!]
	\renewcommand{\arraystretch}{1.5}
	\centering
	{\begin{tabular}{| C{1.7cm} || C{1.7cm}| C{1.7cm}| C{9.7cm}| }
			\hline
			\textbf{} 	& \textbf{Node \#} 		& \textbf{Edge \# }& \textbf{Description}  \\   
			\hline
			\hline
			\textbf{Power} 	& 2,500	& 26,000	& A synthetic lower-law network  \\
			\hline
			\textbf{Wiki} 	& 8,300	& 103,000	& A who-votes-on-whom network from Wikipedia\\
			\hline 
			\textbf{Higgs} 	& 10,000	& 22,482	& A Twitter cascade regarding regarding the discovery of a new particle \\
			\hline 
			\textbf{Hepph} 	& 35,000	& 421,482	& Arxiv High Energy Physics paper
			citation network  \\
			\hline 
			\textbf{Hepth} 	& 28,000	& 353,482	& Arxiv High Energy Physics Theory paper citation network \\
			\hline 
			\textbf{DBLP} 	& 317,000	& 1,040,482	& DBLP collaboration network \\
			\hline 
			\textbf{Youtube} 	& 1,100,000	& 6,000,000	& Youtube online social network\\
			\hline
	\end{tabular}}
	\caption{Datasets}
	\label{table: dataset}
	\vspace{-3mm}
\end{table*}

To prove Lemma \ref{lemma: general_submodular}, let us consider the edge set \[E_1 \define E\setminus \Big(L\big(\dot{\phi}(U_1)\big)\cup D\big(\dot{\phi}(U_1)\big) \Big)\] which consists of the edges that are to be determined in each full-realization $\dot{\phi}(U_1) \prec \psi$. Similarly, let us define 
\[E_2\define E\setminus \Big(L\big(\dot{\phi}(U_2)\big)\cup D\big(\dot{\phi}(U_2)\big) \Big)\] 
with respect to $U_2$. Since $E_2 \subseteq E_1$, the edge set $E$ can be partitioned into three parts, $\{L(\dot{\phi}(U_1))\cup D(\dot{\phi}(U_1))\}$, $E_1 \setminus E_2$ and $E_2$. Consider two sets of realizations: $\Phi_{2} =\{\phi\in \Phi: L(\phi)\cup D(\phi)=E_2\}$ and $\Phi_{1|2} =\{\phi\in \Phi: L(\phi)\cup D(\phi)=E_1\setminus E_2\}$. With Def. \ref{def: realization_con}, we have
\begin{equation*}
\{\psi \in \Psi: \dot{\phi}(U_2) \prec \psi \}=\{\dot{\phi}(U_2) \oplus \phi_2: \phi_2 \in \Phi_{2}\},
\end{equation*} 
and
{\small
	\begin{eqnarray*}
		\{\psi \in \Psi: \dot{\phi}(U_1) \prec \psi \}=\{\dot{\phi}(U_1)\oplus \phi_1 \oplus \phi_2: \phi_1 \in \Phi_{1|2}, \phi_2 \in \Phi_{2}\}.
\end{eqnarray*}}
With these notations, we have
{\small
	\begin{align*}
	&g_1(v)=\sum_{ \substack{\dot{\phi}(U_1) \prec \psi \\ \psi \in \Psi}} \Pr[\psi|\dot{\phi}(U_1)]\cdot \Delta_{t}(\dot{S}(U_1), v, \psi)\\
	&= \sum_{ \phi_2 \in \Psi_2} \sum_{ \phi_1 \in \Phi_{1|2}}\Pr[\dot{\phi}(U_1)\oplus \phi_1 \oplus \phi_2|\dot{\phi}(U_1)] \\
	&\hspace{3.5cm}\cdot \Delta_{t}(\dot{S}(U_1), v, \dot{\phi}(U_1)\oplus \phi_1 \oplus \phi_2)\\
	&= \sum_{ \phi_2 \in \Psi_2} \Pr[\phi_2] \sum_{ \phi_1 \in \Phi_{1|2}}\Pr[\phi_1] \\
	&\hspace{3.5cm}\cdot \Delta_{t}(\dot{S}(U_1), v, \dot{\phi}(U_1)\oplus \phi_1 \oplus \phi_2)
	\end{align*}}
and
{\small
	\begin{align*}
	&g_2(v)=\sum_{ \substack{\dot{\phi}(U_2) \prec \psi \\ \psi \in \Psi}} \Pr[\psi|\dot{\phi}(U_2)]\cdot \Delta_{t}(\dot{S}(U_2), v, \psi)\\
	&= \sum_{ \phi_2 \in \Psi_2} \Pr[\dot{\phi}(U_2)\oplus \phi_2|\dot{\phi}(U_2)] \cdot \Delta_{t}(\dot{S}(U_3), v, \dot{\phi}(U_2)\oplus \phi_2) \\
	&= \sum_{ \phi_2 \in \Psi_2} \Pr[\phi_2] \cdot \Delta_{t}(\dot{S}(U_2), v, \dot{\phi}(U_2)\oplus \phi_2)
	\end{align*}}
Since $\sum_{ \phi_1 \in \Phi_{1|2}}\Pr[\phi_1]=1$, to prove $g_1(v) \geq g_2(2)$, it suffices to prove the following lemma.
\begin{lemma}
	For each $\phi_1 \in \Phi_{1|2}$ and $\phi_2 \in \Phi_2$, we have $\Delta_{t}(\dot{S}(U_1), v, \dot{\phi}(U_1)\oplus \phi_1 \oplus \phi_2) \geq \Delta_{t}(\dot{S}(U_2), v,  \dot{\phi}(U_2)\oplus \phi_2)$ .
\end{lemma}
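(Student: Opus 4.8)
The plan is to prove this pointwise inequality by collapsing both marginal gains to the sizes of two explicit reachable‑node sets and then establishing a set containment between them. Write $\psi_1 \define \dot{\phi}(U_1)\oplus \phi_1 \oplus \phi_2$ and $\psi_2 \define \dot{\phi}(U_2)\oplus \phi_2$; both are full realizations, and by construction they agree on every edge of $E_2$, since there both are dictated by $\phi_2$. First I would exploit the finality of $U_1$: because $U_1$ is final and $\dot{\phi}(U_1)\prec \psi_1$, there is no $\infty$-live-path leaving $\dot{S}(U_1)$ in $\psi_1$, hence $A_t(\dot{S}(U_1),\psi_1)=\dot{S}(U_1)$ for every $t$. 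Using the elementary identity $A_t(S_1\cup S_2,\psi)=A_t(S_1,\psi)\cup A_t(S_2,\psi)$ (reachability within $t$ live edges from a union of seed sets is the union of the reachabilities), the left-hand side collapses to $\Delta_t(\dot{S}(U_1),v,\psi_1)=|A_t(v,\psi_1)\setminus \dot{S}(U_1)|$.

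Next I would apply the same union identity to the right-hand side, rewriting it as $\Delta_t(\dot{S}(U_2),v,\psi_2)=|A_t(v,\psi_2)\setminus A_t(\dot{S}(U_2),\psi_2)|$; call this set $Q$, and call the left-hand set $P\define A_t(v,\psi_1)\setminus \dot{S}(U_1)$. The goal then reduces to showing $Q\subseteq P$, which at once yields $|Q|\le |P|$ and hence the lemma. To prove the containment, fix $w\in Q$ and pick a $t$-live-path $v=x_0,x_1,\dots,x_m=w$ in $\psi_2$ with $m\le t$. This path cannot touch $\dot{S}(U_2)$: if some $x_j\in \dot{S}(U_2)$, the suffix $x_j,\dots,x_m$ would be a live path of length at most $t$ from $\dot{S}(U_2)$ to $w$, forcing $w\in A_t(\dot{S}(U_2),\psi_2)$ and contradicting $w\in Q$. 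Therefore every node $x_j$ on the path, and in particular the tail of every path edge, lies outside $\dot{S}(U_2)$.

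The heart of the argument, and the step I expect to be the main obstacle, is to certify that each such path edge is live in $\psi_1$ as well. This needs the structural fact that every observed edge emanates from an active node, i.e. every edge in $L(\dot{\phi}(U_2))\cup D(\dot{\phi}(U_2))$ has its tail in $\dot{S}(U_2)$; this holds for statuses generated by the round-by-round observation process, where one only ever observes the out-edges of already-active nodes. Granting it, a path edge $(x_j,x_{j+1})$ with $x_j\notin \dot{S}(U_2)$ is undetermined in $U_2$, hence lies in $E_2$, where $\psi_1$ and $\psi_2$ coincide; so the whole path is a $t$-live-path from $v$ to $w$ in $\psi_1$, giving $w\in A_t(v,\psi_1)$, and since $w\notin \dot{S}(U_2)\supseteq \dot{S}(U_1)$ we obtain $w\in P$. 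I would stress that this structural property is genuinely indispensable rather than cosmetic: if an observed live edge were allowed to have an inactive tail, a hostile choice of $\phi_1$ could kill that edge in $\psi_1$ and break the path, making the inequality fail (a three-node path $v\to a\to w$ already exhibits this). Thus the proof must invoke where the statuses arise, not merely the listed set-theoretic hypotheses, and the final bookkeeping $|Q|\le|P|$ is then immediate.
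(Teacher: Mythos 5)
Your proof is correct and follows essentially the same route as the paper's: both reduce the inequality to a containment between the two marginal sets, show that any $t$-live-path witnessing the right-hand marginal must avoid $\dot{S}(U_2)$ and therefore uses only edges on which $\dot{\phi}(U_1)\oplus\phi_1\oplus\phi_2$ and $\dot{\phi}(U_2)\oplus\phi_2$ agree (so the path survives in the former), and both use the finality of $U_1$ to dispose of the subtracted term. Your explicit invocation of the structural fact that observed edges emanate from active nodes is the very ingredient the paper uses implicitly when it asserts the path ``cannot use any edge in $L(\dot{\phi}(U_2))$,'' so flagging it (and noting that the lemma's bare set-theoretic hypotheses would not suffice without it) is a fair sharpening of the same argument rather than a different one.
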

\begin{proof}
	For conciseness, let us define that
	{\small
		\begin{align*}
		\begin{split}
		&\hspace{0cm}V_1 \define\\
		&|A_{t}(\dot{S}(U_1)+v,\dot{\phi}(U_1)\oplus \phi_1 \oplus \phi_2)\hspace{0cm}\setminus A_{t}(\dot{S}(U_1),\dot{\phi}(U_1)\oplus \phi_1 \oplus \phi_2)|
		\end{split}
		\end{align*}}
	and
	{\small
		\begin{align*}
		&\hspace{0cm}V_2 \define|A_t(\dot{S}(U_2)+v,\dot{\phi}(U_2)\oplus  \phi_2)\setminus A_{t}(\dot{S}(U_2), \dot{\phi}(U_2) \oplus \phi_2)|,
		\end{align*}}
	By Def. \ref{def: A_t}, we have 
	{
		\begin{align*}
		\hspace{0cm}\Delta_{t}(\dot{S}(U_1), v, \dot{\phi}(U_1)\oplus \phi_1 \oplus \phi_2)=|V_1|
		\end{align*}}
	and
	{
		\begin{align*}
		\hspace{0cm}\Delta_t(\dot{S}(U_2), v,  \dot{\phi}(U_2)\oplus \phi_2) =|V_2|.
		\end{align*}}
	Thus, it is sufficient to prove that $V_2$ is subset of $V_1$. Suppose that $u$ is in $V_2$. It implies that $u$ is in $A_t(\dot{S}(U_2)+v,\dot{\phi}(U_2)\oplus  \phi_2)$ but not in $ A_t(\dot{S}(U_2),\dot{\phi}(U_2) \oplus \phi_2)$. That is, in the full realization $\dot{\phi}(U_2)\oplus \phi_2$, we have
	\begin{itemize}
		\item (a) there exists a $t$-live-path from $S(U_2)+v$ to $u$, and
		\item (b) there is no $t$-live-path from $S(U_2)$ to $u$.
	\end{itemize}
	To prove $u$ is in $V_1$, we have to prove that (a) $u$ is in $A_t(\dot{S}(U_1)+v, \dot{\phi}(U_1)\oplus \phi_1 \oplus \phi_2)$ and (b) $u$ is not in $A_t(\dot{S}(U_1), \dot{\phi}(U_1)\oplus \phi_1 \oplus \phi_2)$.
	
	First, we prove that $u$ is in $A_t(\dot{S}(U_1)+v, \dot{\phi}(U_1)\oplus \phi_1 \oplus \phi_2)$. Since $u$ is not in $ A_t(\dot{S}(U_2), \dot{\phi}(U_2) \oplus \phi_2)$, we have $u \in V\setminus \dot{S}(U_2)$. By (a), there is a $t$-live-path from $v$ to $u$ in $\dot{\phi}(U_2)\oplus \phi_2$. Furthermore, by (b), this path cannot use any edge in $L(\dot{\phi}(U_2))$, and therefore this path only uses the edges in $L(\phi_2)$. Thus, we have this live path in $\dot{\phi}(U_1)\oplus \phi_1 \oplus \phi_2$ as well, and therefore, $u \in A_t(\dot{S}(U_1)+v, \dot{\phi}(U_1)\oplus \phi_1 \oplus \phi_2)$.
	
	Second, since $U_1$ is a final status and $u \notin \dot{S}(U_1)$, we have $u \notin A_t(\dot{S}(U_1), \dot{\phi}(U_1)\oplus \phi_1 \oplus \phi_2)$.
\end{proof}

\subsection{Proof of Lemma \ref{lemma: upper_bound}}
\label{appendix: proof_lemma: upper_bound}
Because $\sum_{U_* \in \dot{\U}_d(U)} \Pr[\dot{\phi}(U_*)|\dot{\phi}(U)]=1$, it suffices to prove that for each $U_* \in  \dot{\U}_d(U)$ we have \[\Delta f_{t-d}(\dot{S}(U_{\f}),v ,\dot{\phi}(U_{\f}))\geq \Delta f_{t-d}(\dot{S}(U_*),v ,\dot{\phi}(U_*)),\] which is in fact a special case of Lemma \ref{lemma: general_submodular} given in Sec. \ref{appendix: proof_lemma: submodular}.

\section{Additional Experimental Results}
The details of the adopted datasets are given in Table. \ref{table: dataset}. Additional results from Experiment \RNum{1} and \RNum{2} are provided in Figs. \ref{fig: exp1_more_1}, \ref{fig: exp1_more_2}, \ref{fig: exp1_more_3} and \ref{fig: exp2_more}.

The observations on other datasets are similar to those discussed in the main paper. One minor point is that when $k=5$ HighDegree occasionally has the same performance as that of Greedy. In addition, the superiority of the Full Adoption feedback model is significant on certain graphs, e.g., Fig. \ref{fig: poweric50}.

\begin{figure*}[!pt]
	\centering
	\subfloat[{[Power, IC, $k=5$, $d=0$]}]{\label{fig: poweric5_3_0}\includegraphics[trim = 0.5in 0in 0.5in 0in, clip, width=0.24\textwidth]{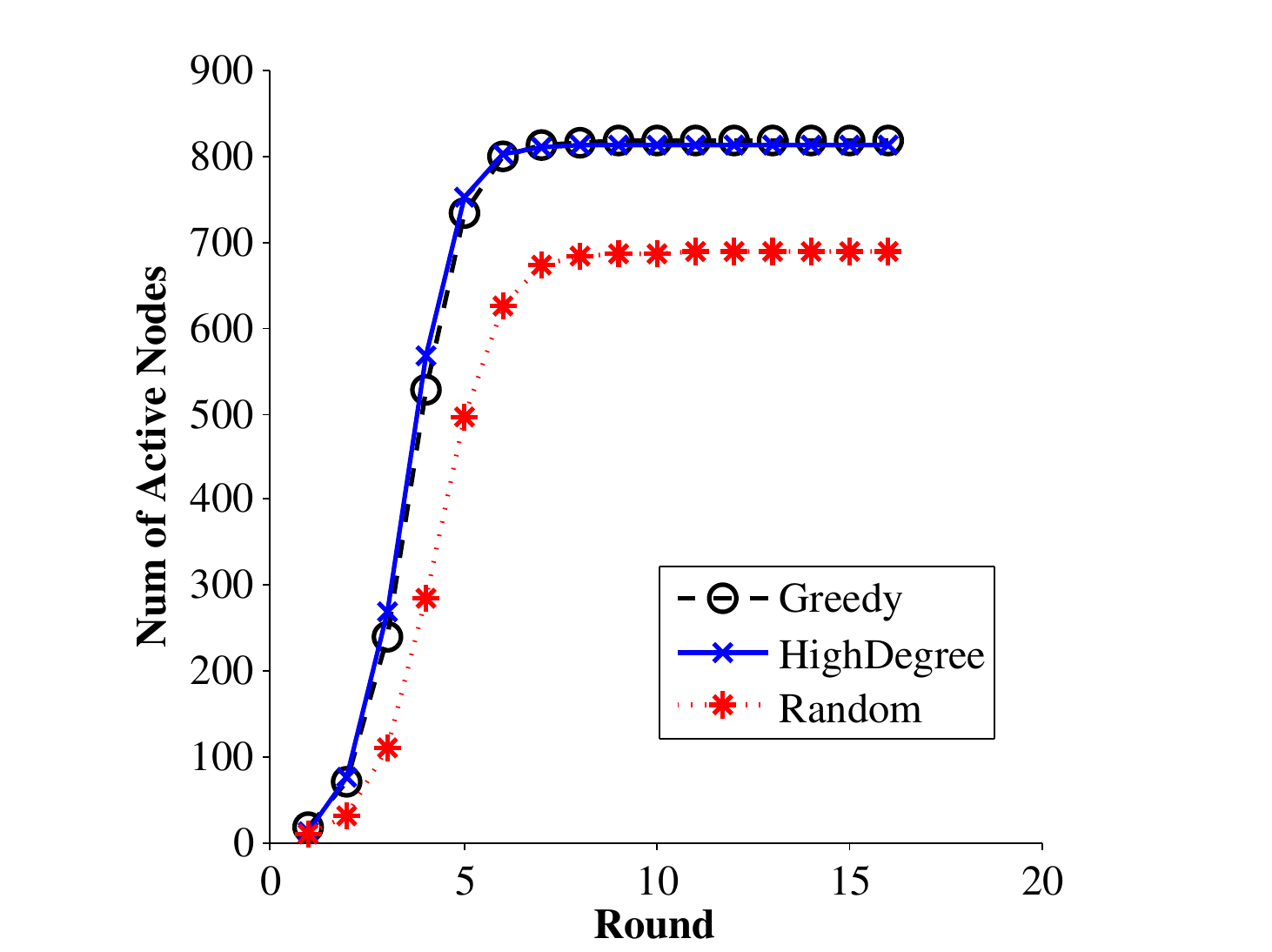}}
	\subfloat[{[Power, IC, $k=5$, $d=1$]}]{\label{fig:poweric5_3_3}\includegraphics[trim = 0.5in 0in 0.5in 0in, clip, width=0.24\textwidth]{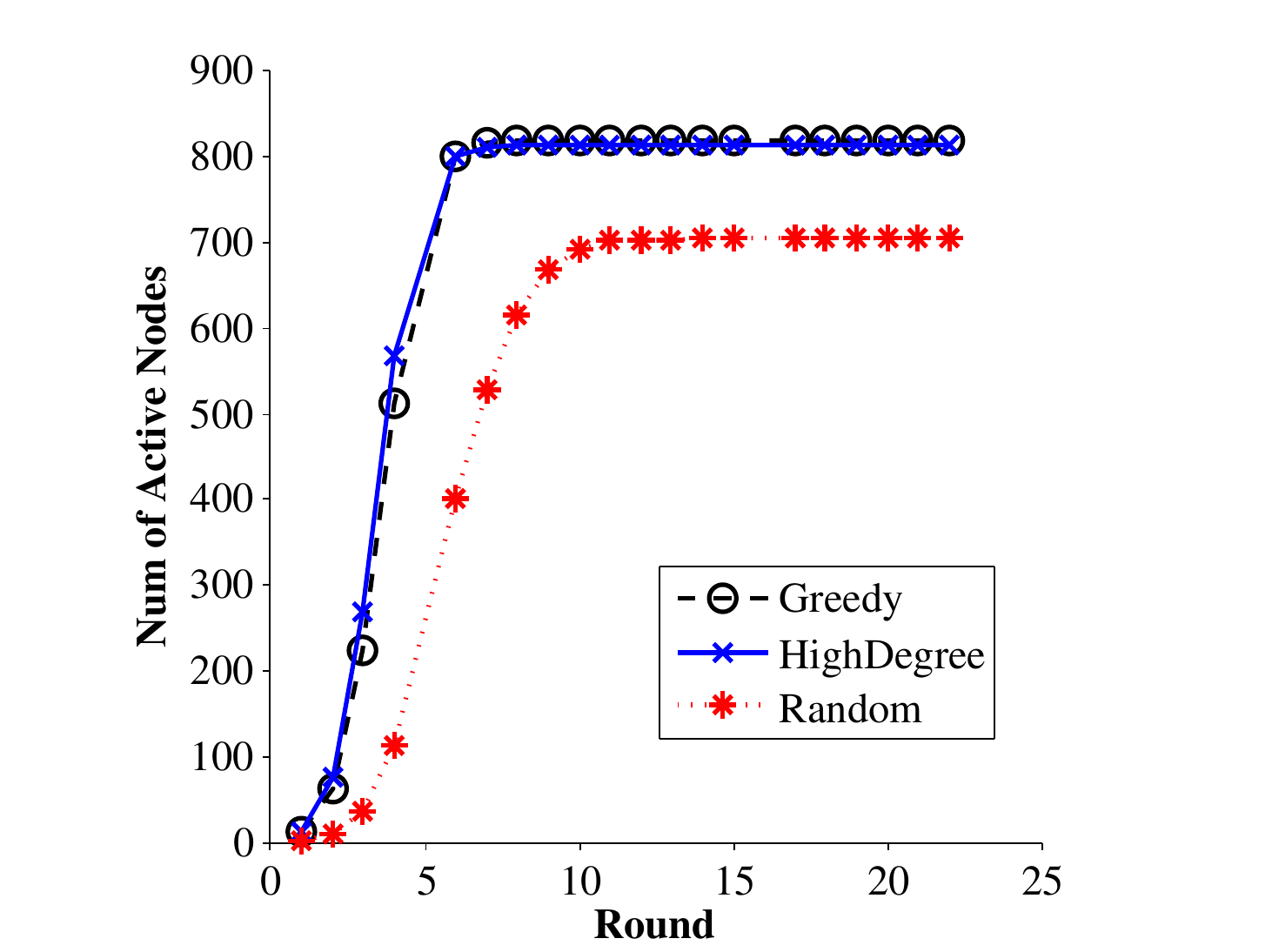}}
	\subfloat[{[Power, IC, $k=5$, $d=8$]}]{\label{fig: poweric5_3_9}\includegraphics[trim = 0.5in 0in 0.5in 0in, clip, width=0.24\textwidth]{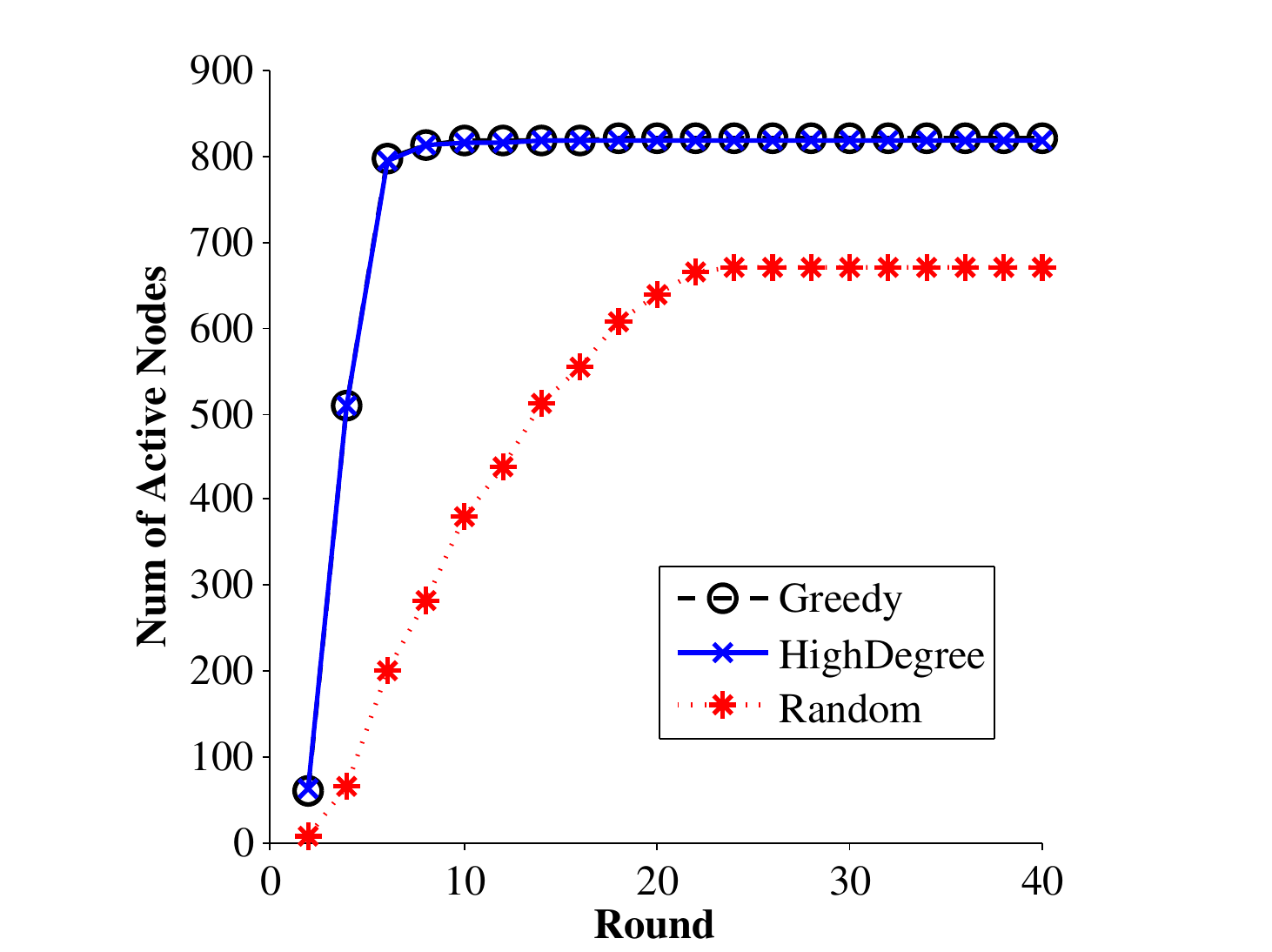}}
	\subfloat[{[Power, IC, $k=5$, $d=\infty$]}]{\label{fig: poweric5_3_15}\includegraphics[trim = 0.5in 0in 0.5in 0in, clip, width=0.24\textwidth]{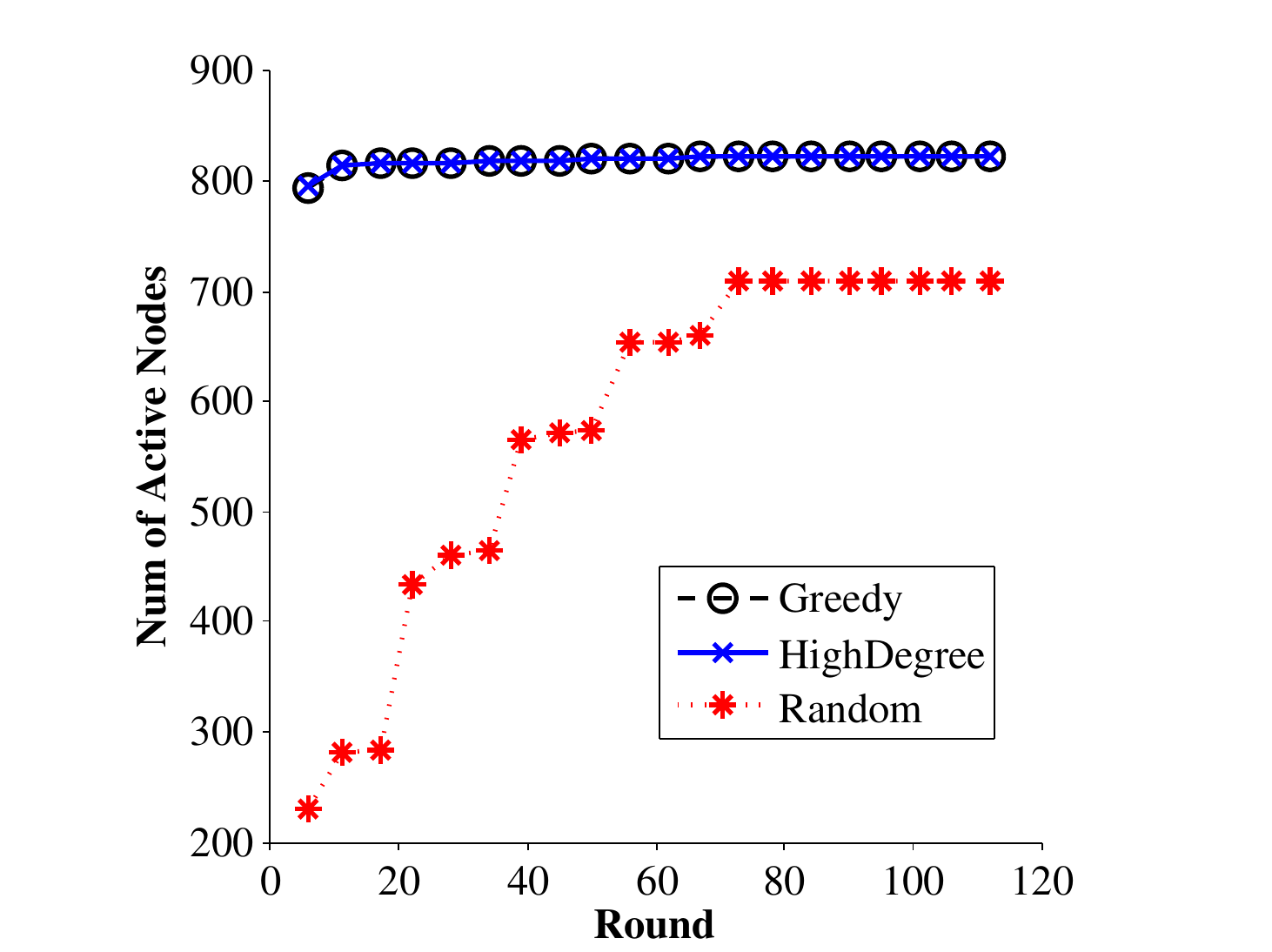}}
	
	\subfloat[{[Power, IC, $k=10$, $d=0$]}]{\label{fig: poweric10_3_0}\includegraphics[trim = 0.5in 0in 0.5in 0in, clip, width=0.24\textwidth]{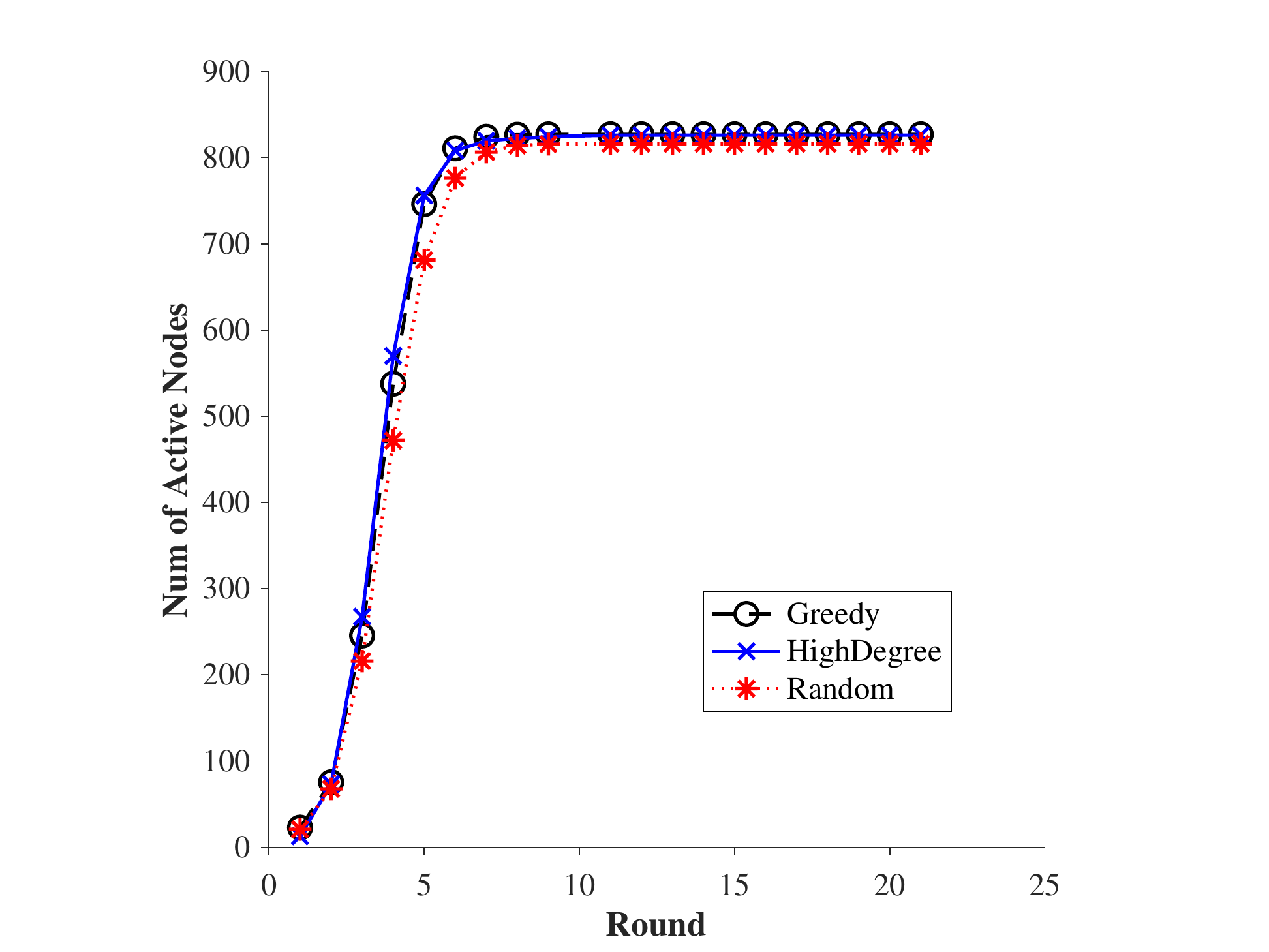}}
	\subfloat[{[Power, IC, $k=10$, $d=1$]}]{\label{fig:poweric10_3_3}\includegraphics[trim = 0.5in 0in 0.5in 0in, clip, width=0.24\textwidth]{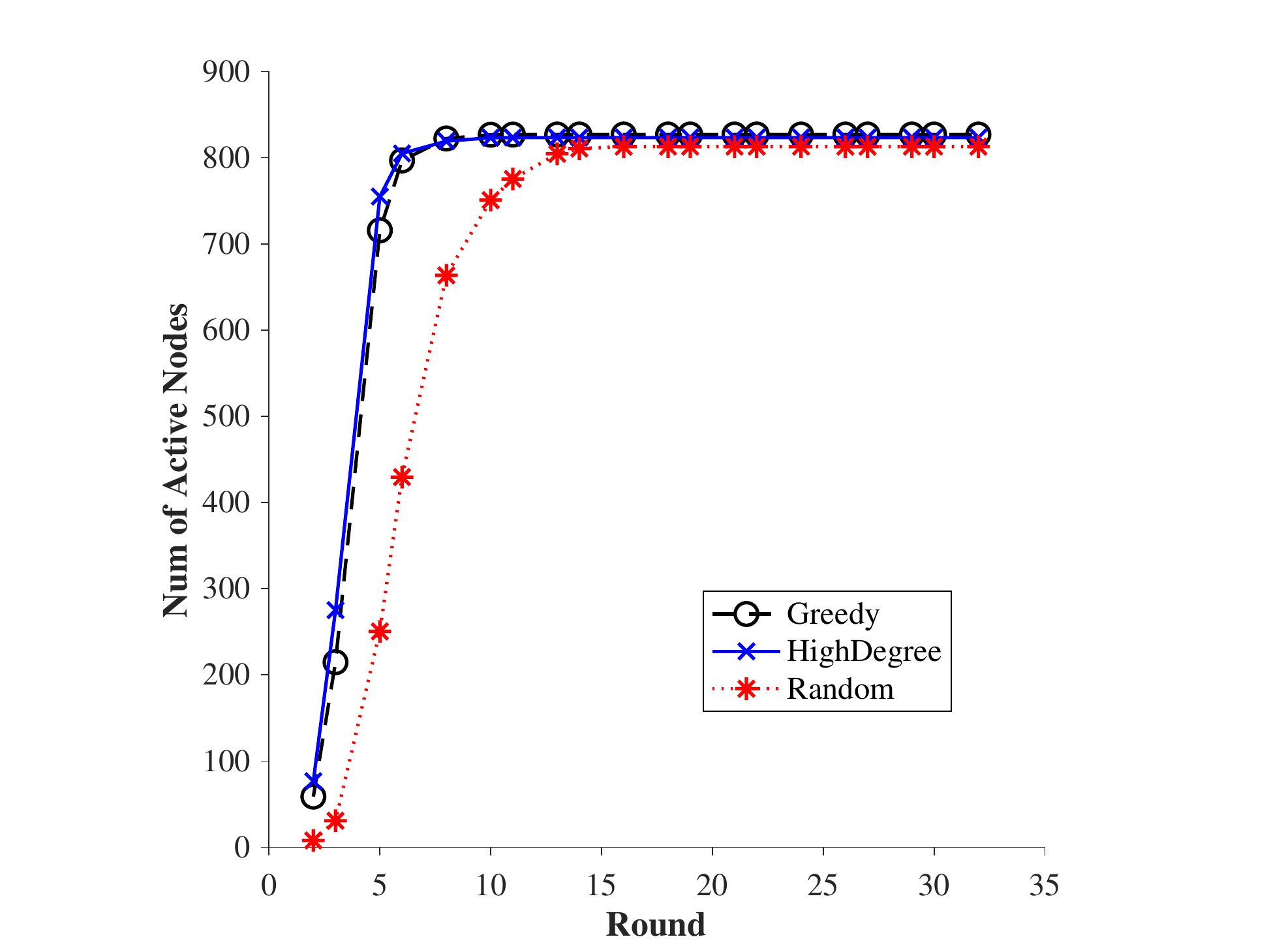}}
	\subfloat[{[Power, IC, $k=10$, $d=8$]}]{\label{fig: poweric10_3_9}\includegraphics[trim = 0.5in 0in 0.5in 0in, clip, width=0.24\textwidth]{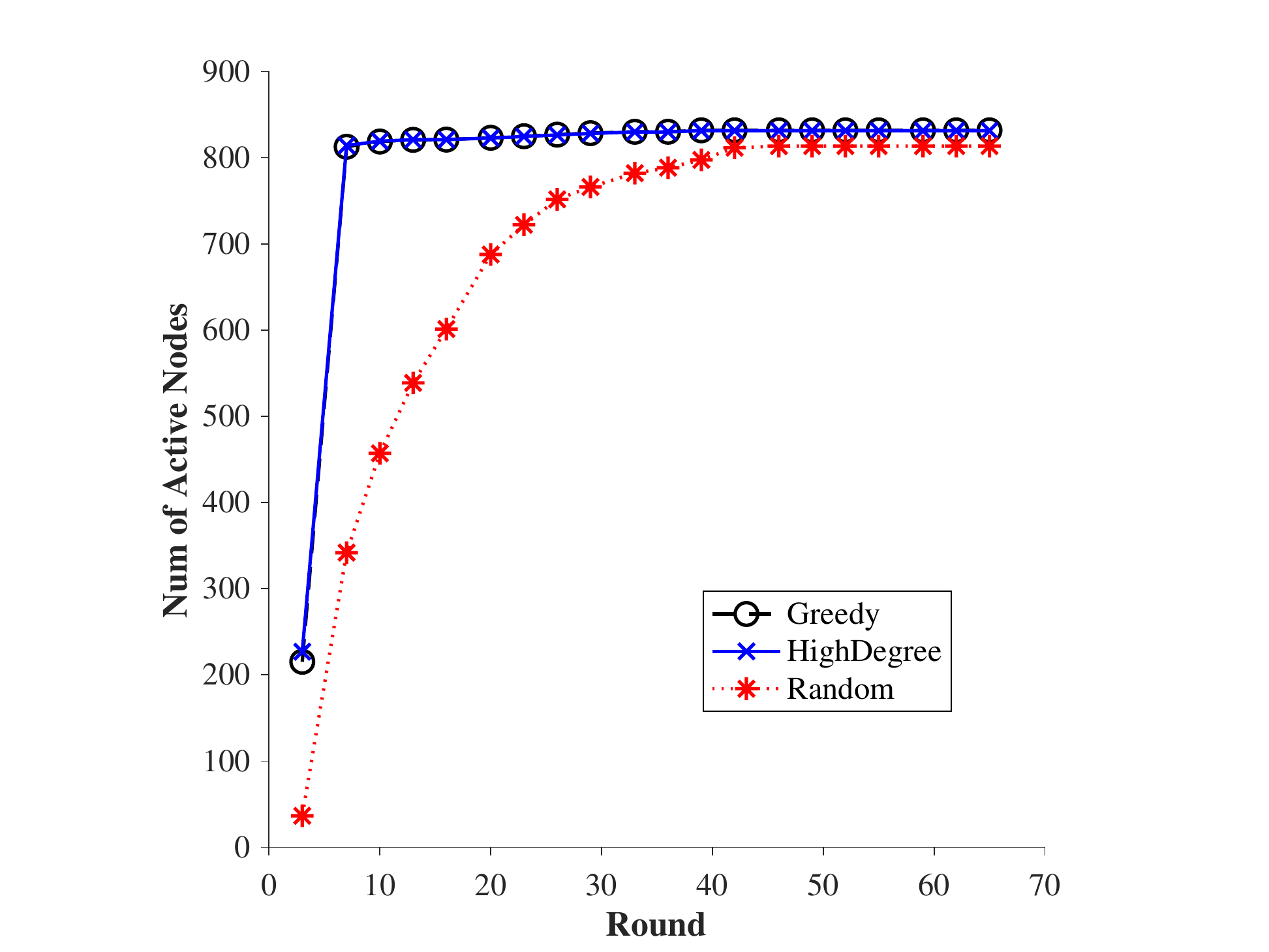}}
	\subfloat[{[Power, IC, $k=10$, $d=\infty$]}]{\label{fig: poweric10_3_15}\includegraphics[trim = 0.5in 0in 0.5in 0in, clip, width=0.24\textwidth]{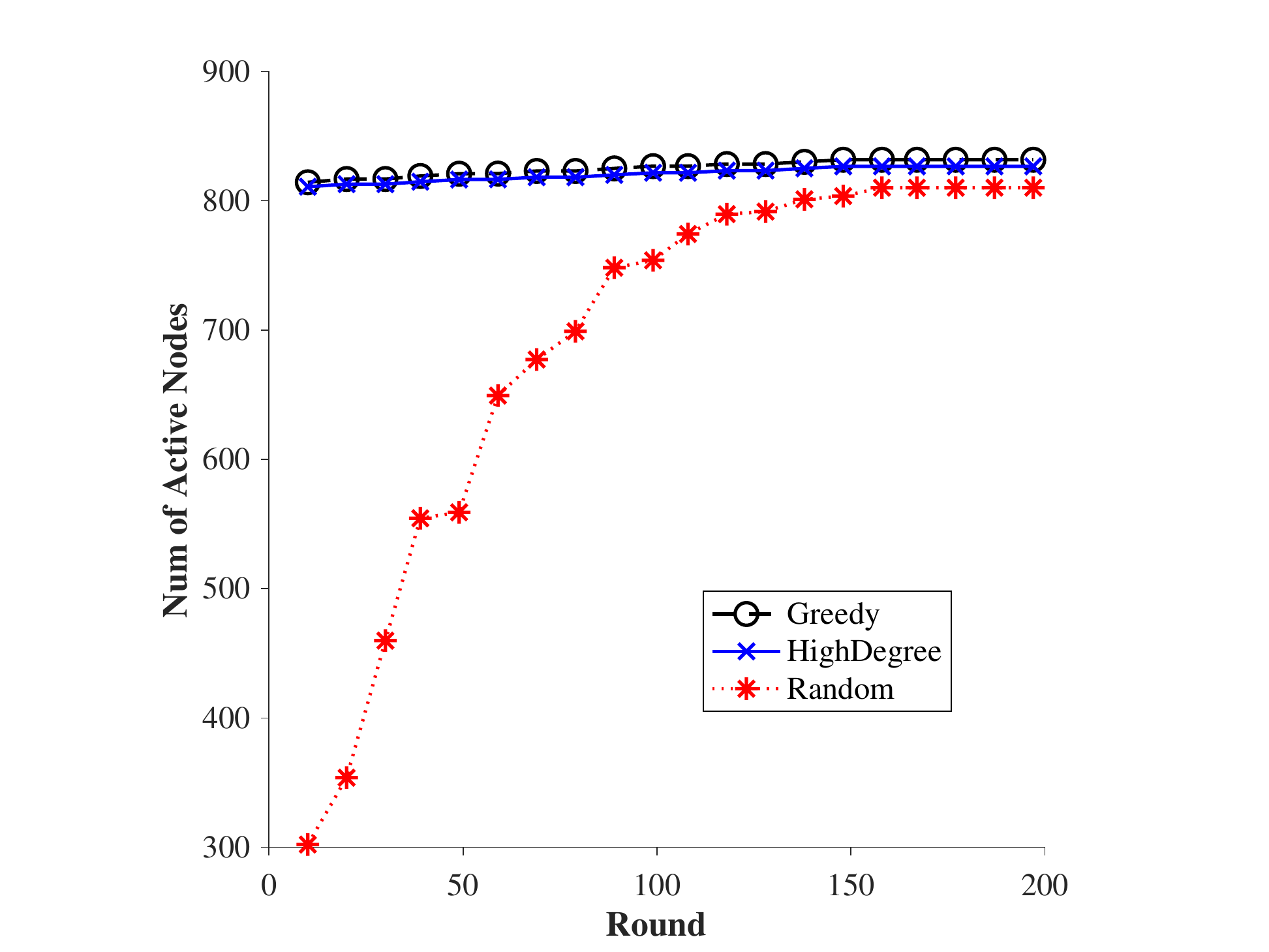}}

	\subfloat[{[Power, IC, $k=20$, $d=0$]}]{\label{fig: poweric20_3_0}\includegraphics[trim = 0.5in 0in 0.5in 0in, clip, width=0.24\textwidth]{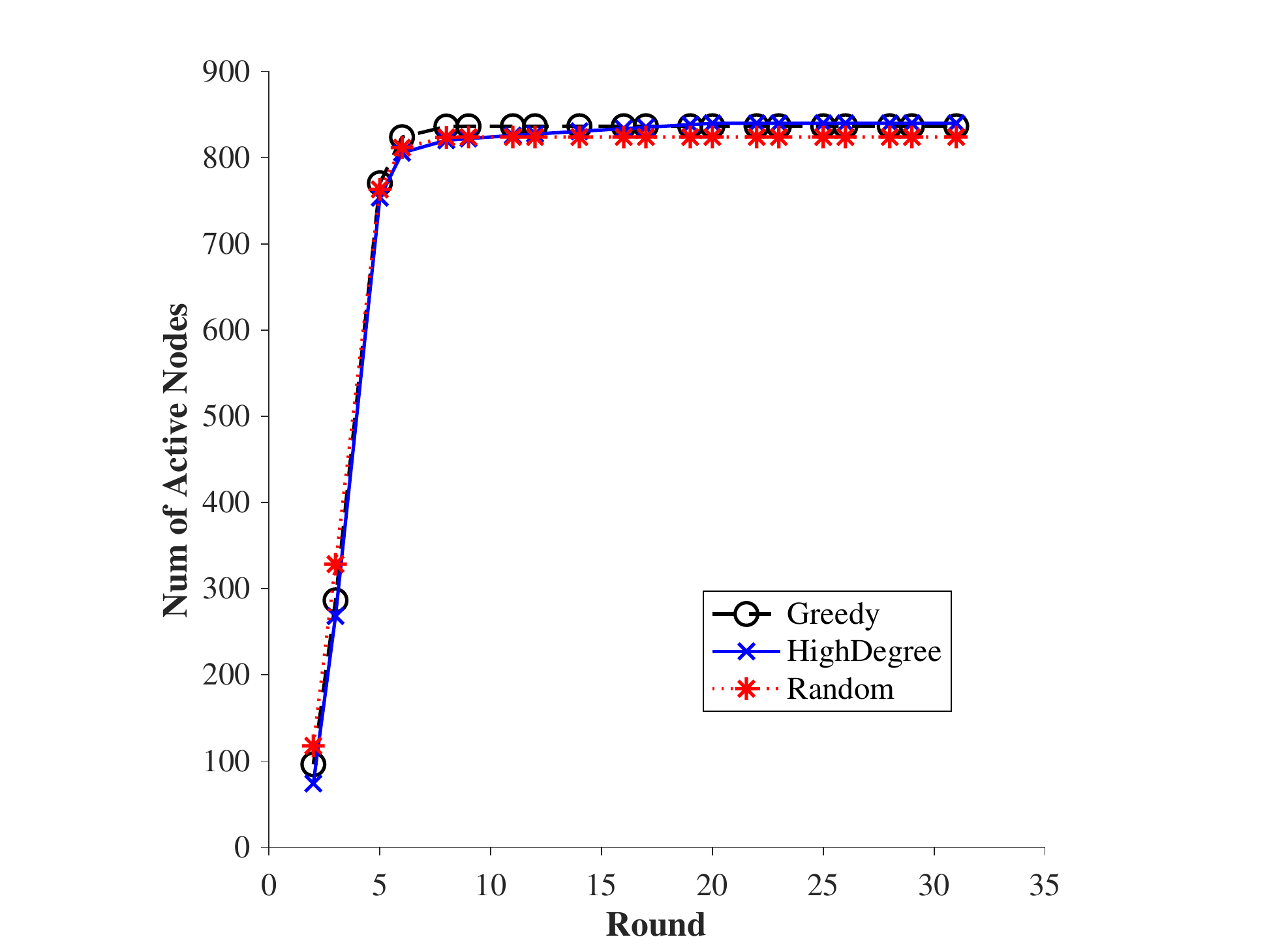}}
	\subfloat[{[Power, IC, $k=20$, $d=1$]}]{\label{fig:poweric20_3_3}\includegraphics[trim = 0.5in 0in 0.5in 0in, clip, width=0.24\textwidth]{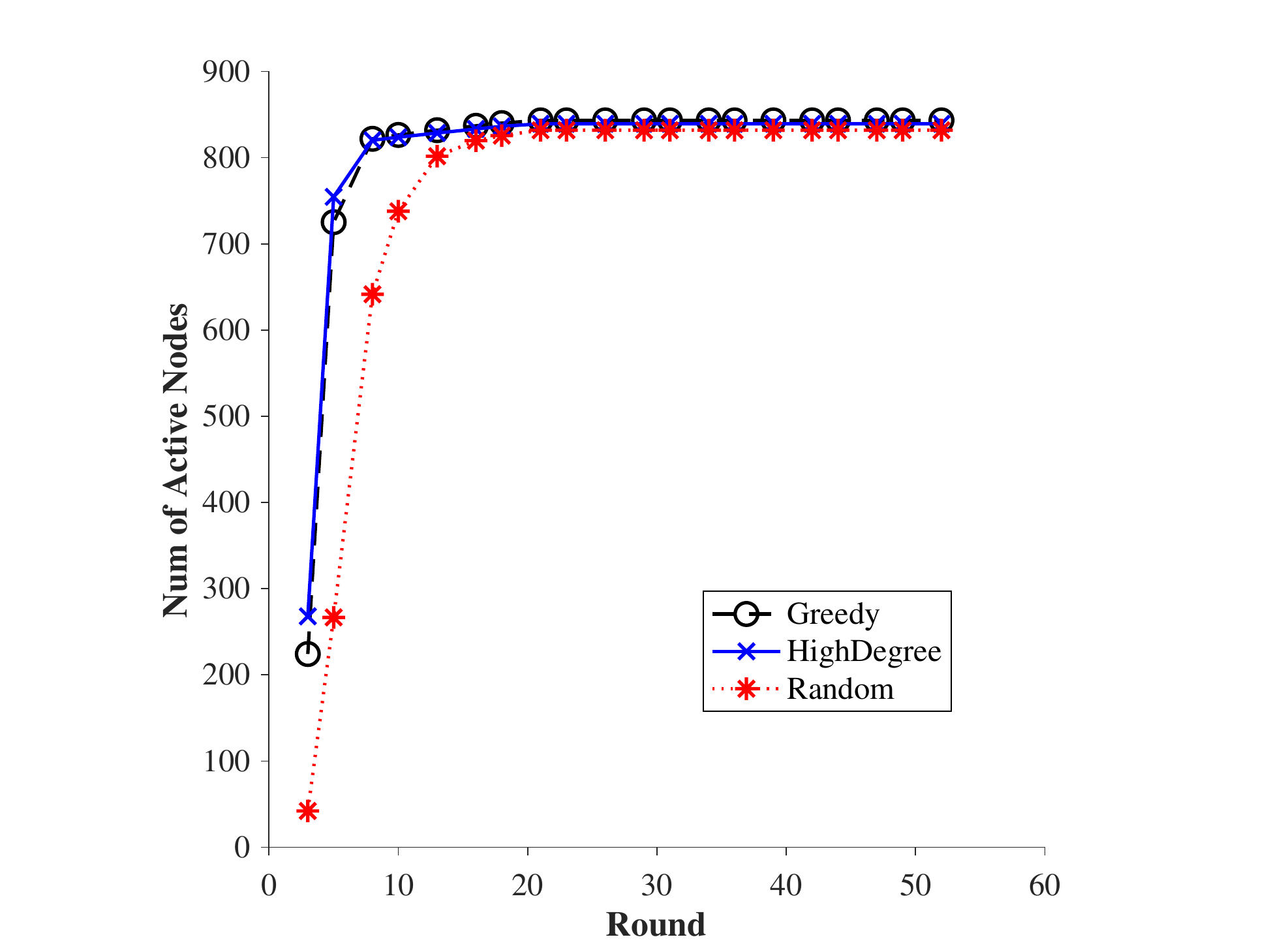}}
	\subfloat[{[Power, IC, $k=20$, $d=8$]}]{\label{fig: poweric20_3_9}\includegraphics[trim = 0.5in 0in 0.5in 0in, clip, width=0.24\textwidth]{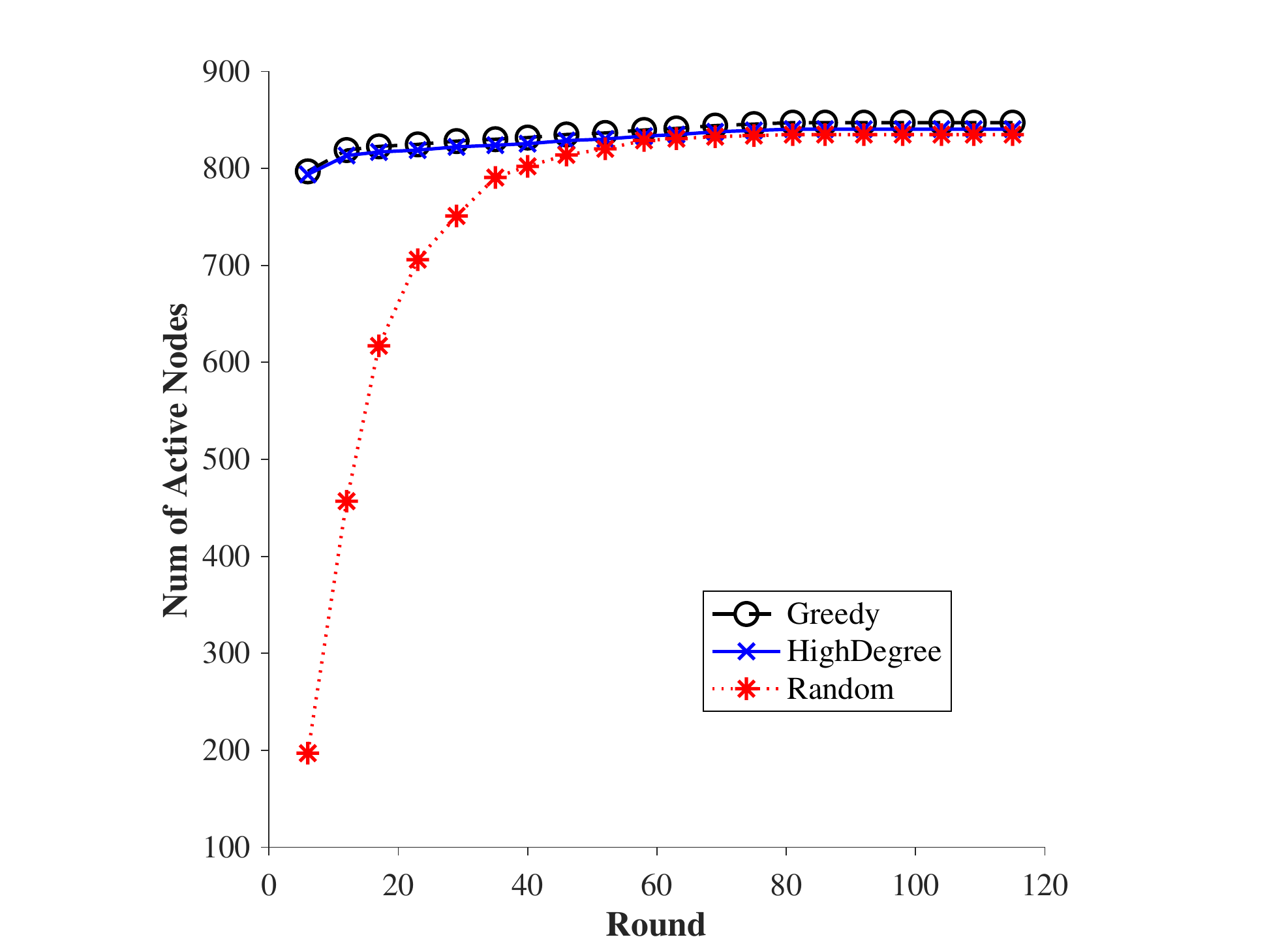}}
	\subfloat[{[Power, IC, $k=20$, $d=\infty$]}]{\label{fig: poweric20_3_15}\includegraphics[trim = 0.5in 0in 0.5in 0in, clip, width=0.24\textwidth]{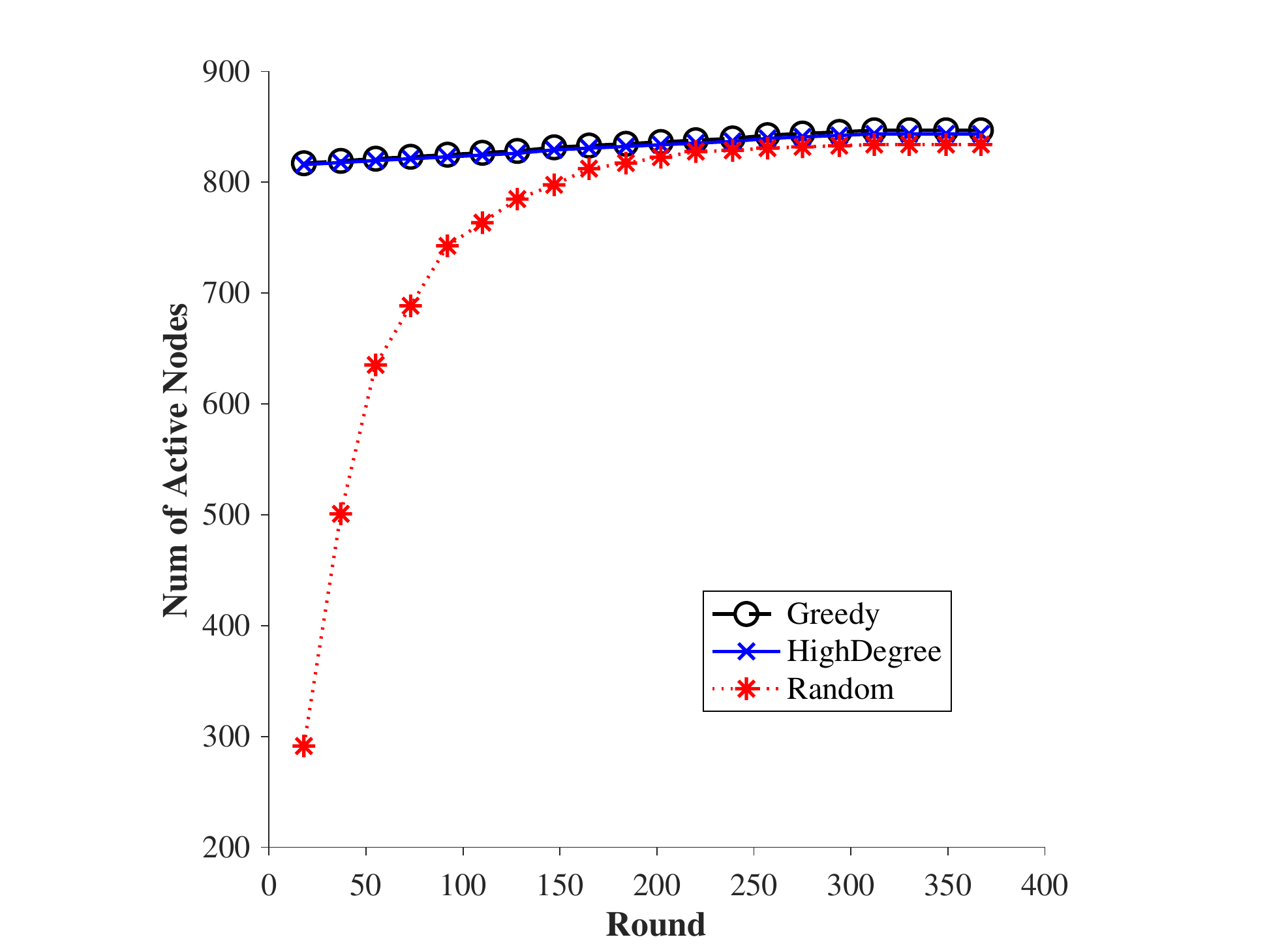}}
	
	\subfloat[{[Power, IC, $k=50$, $d=0$]}]{\label{fig: poweric50_3_0}\includegraphics[trim = 0.5in 0in 0.5in 0in, clip, width=0.24\textwidth]{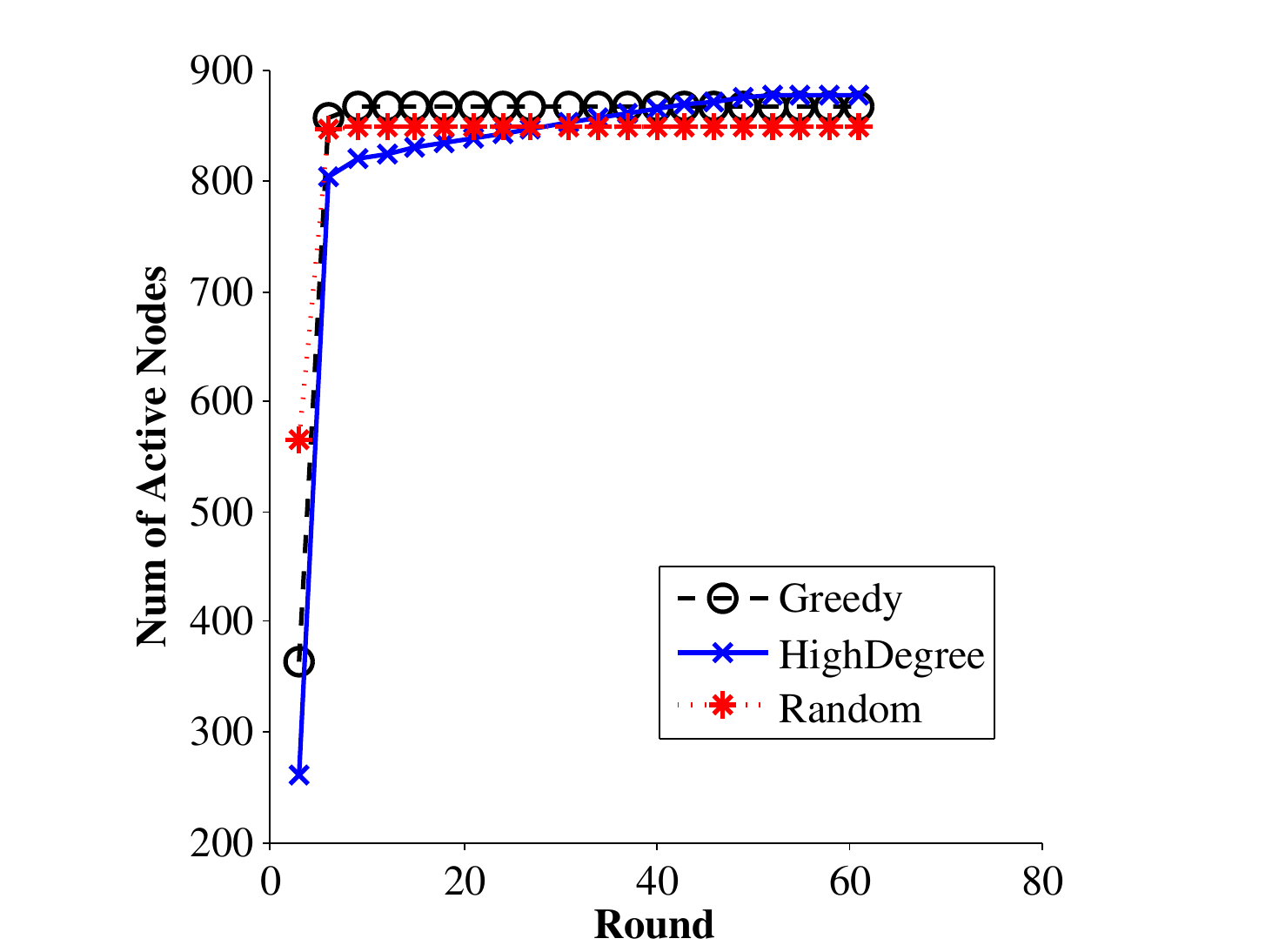}}
	\subfloat[{[Power, IC, $k=50$, $d=1$]}]{\label{fig:poweric50_3_3}\includegraphics[trim = 0.5in 0in 0.5in 0in, clip, width=0.24\textwidth]{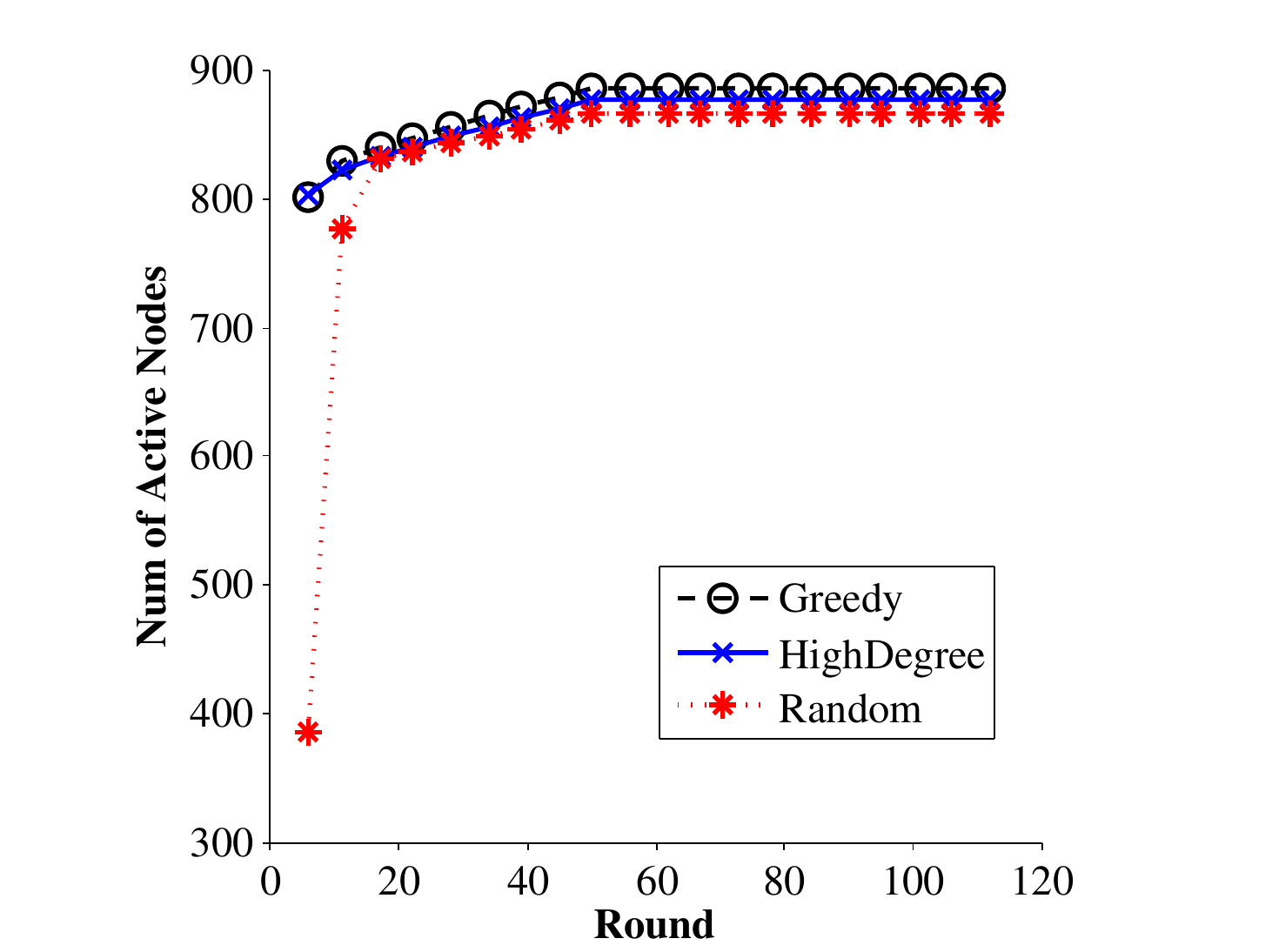}}
	\subfloat[{[Power, IC, $k=50$, $d=8$]}]{\label{fig: poweric50_3_9}\includegraphics[trim = 0.5in 0in 0.5in 0in, clip, width=0.24\textwidth]{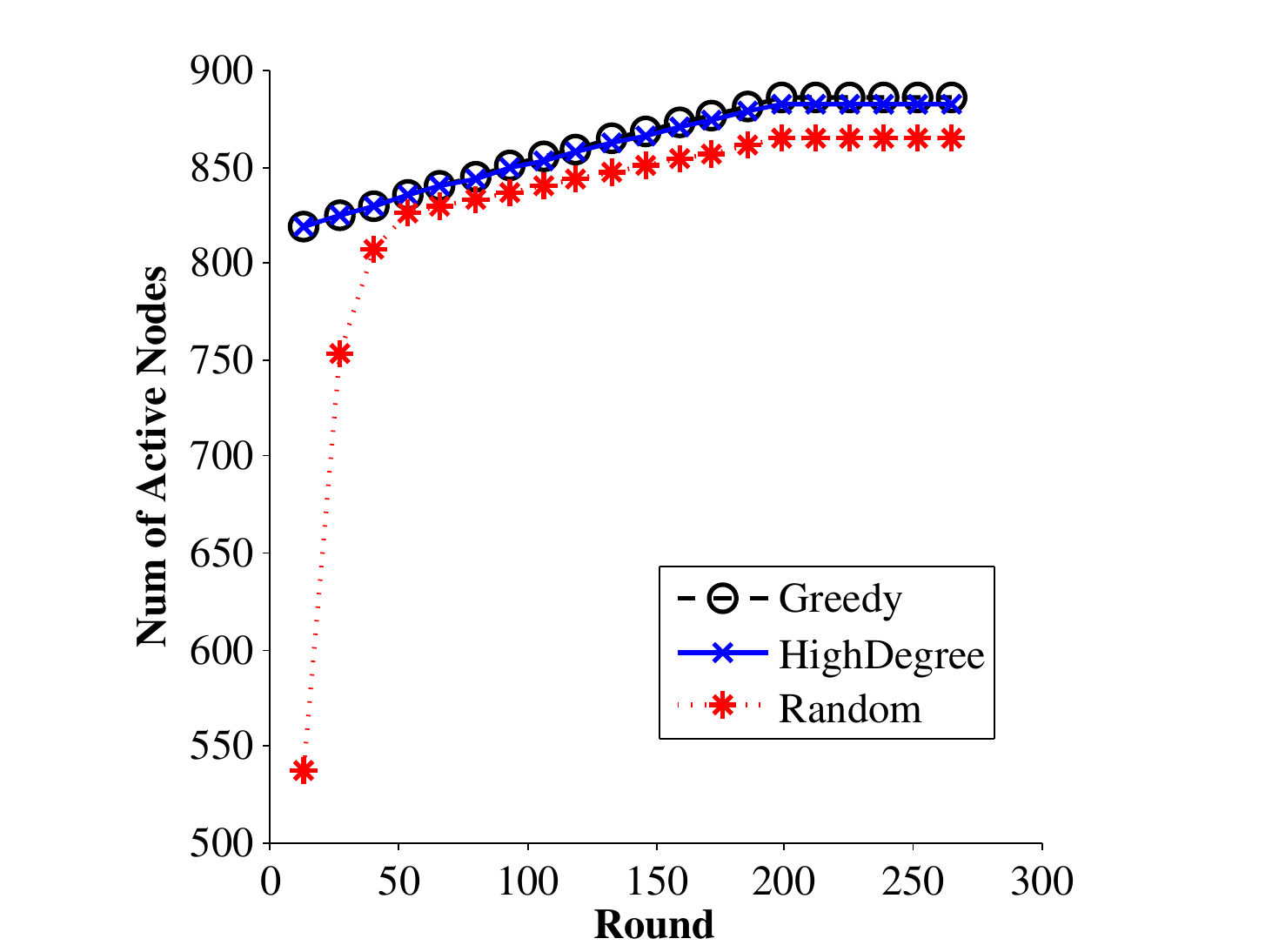}}
	\subfloat[{[Power, IC, $k=50$, $d=\infty$]}]{\label{fig: poweric50_3_15}\includegraphics[trim = 0.5in 0in 0.5in 0in, clip, width=0.24\textwidth]{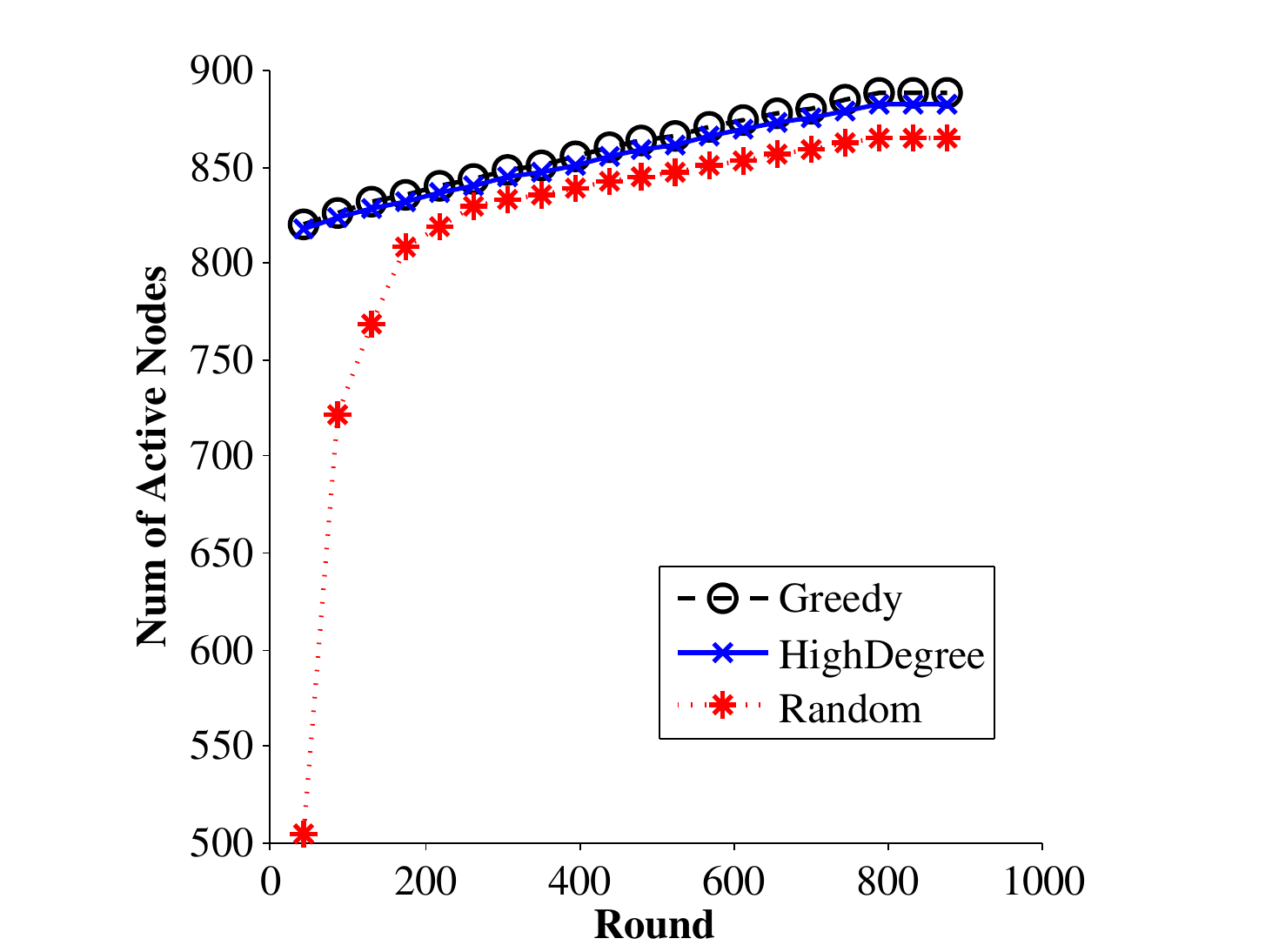}}
	
	\subfloat[{[Wiki, WC, $k=5$, $d=0$]}]{\label{fig: wikiwc5_3_0}\includegraphics[trim = 0.5in 0in 0.5in 0in, clip, width=0.24\textwidth]{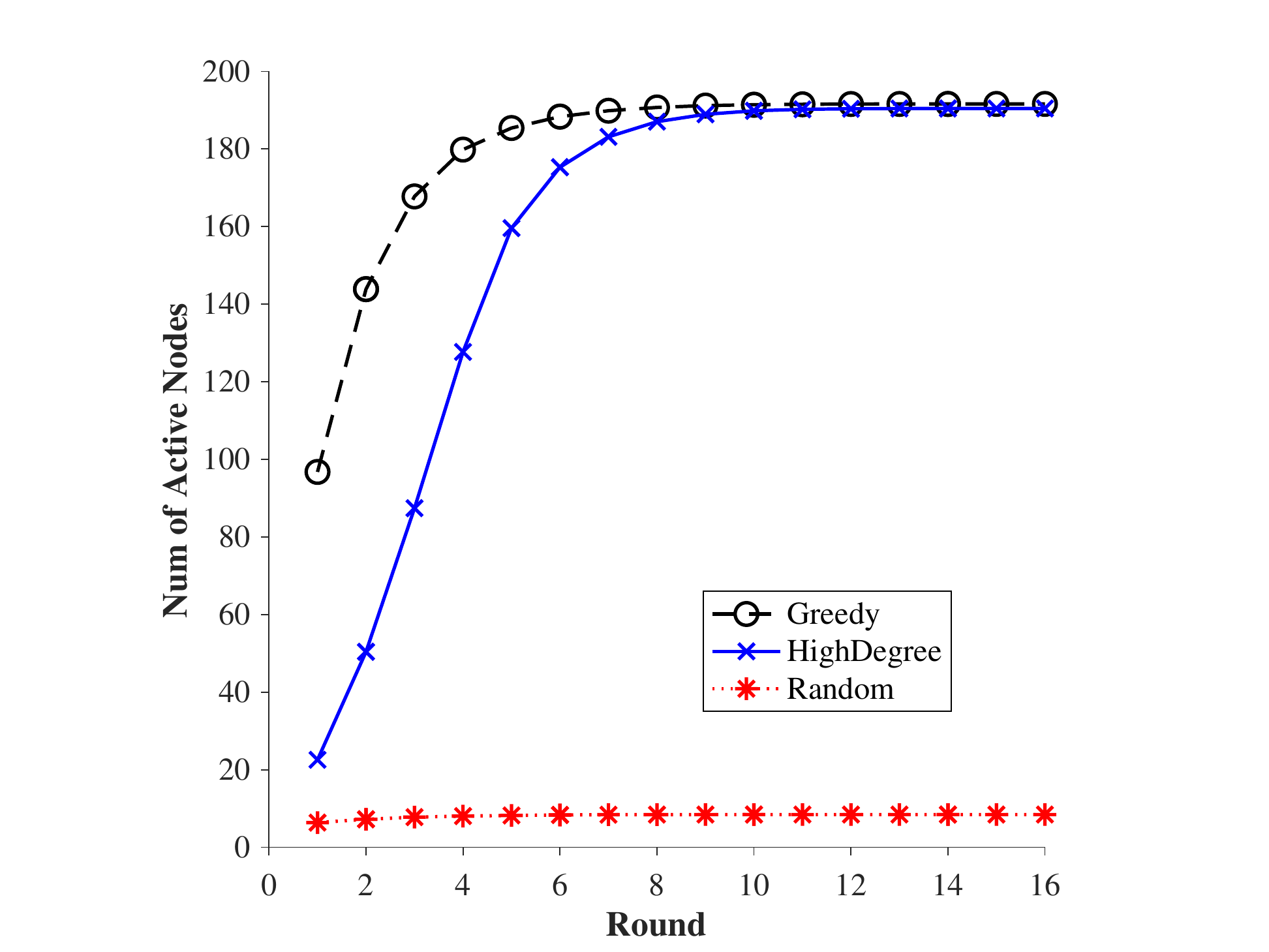}}
	\subfloat[{[Wiki, WC, $k=5$, $d=1$]}]{\label{fig: wikiwc5_3}\includegraphics[trim = 0.5in 0in 0.5in 0in, clip, width=0.24\textwidth]{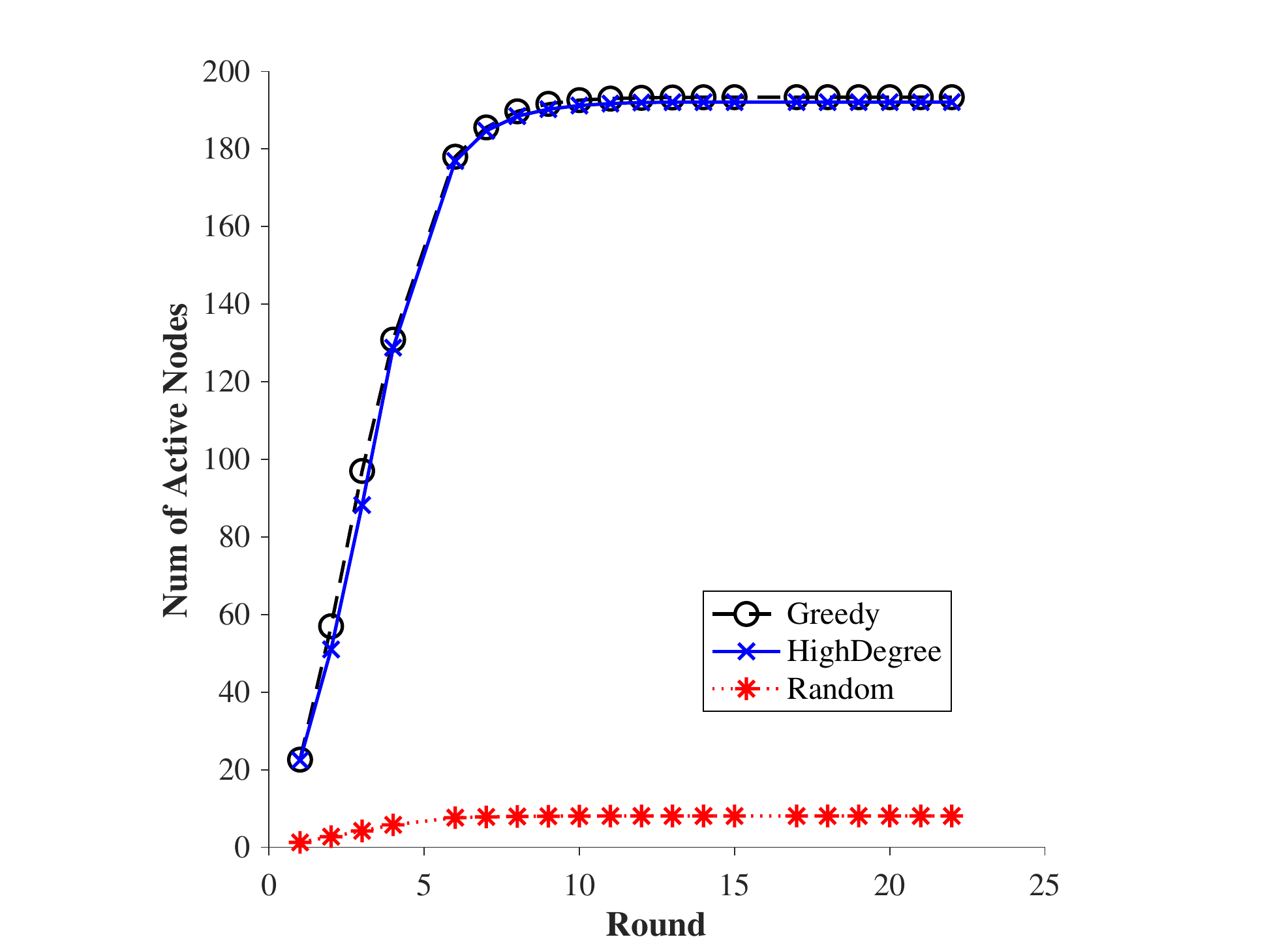}}
	\subfloat[{[Wiki, WC, $k=5$, $d=8$]}]{\label{fig: wikiwc5_3_9}\includegraphics[trim = 0.5in 0in 0.5in 0in, clip, width=0.24\textwidth]{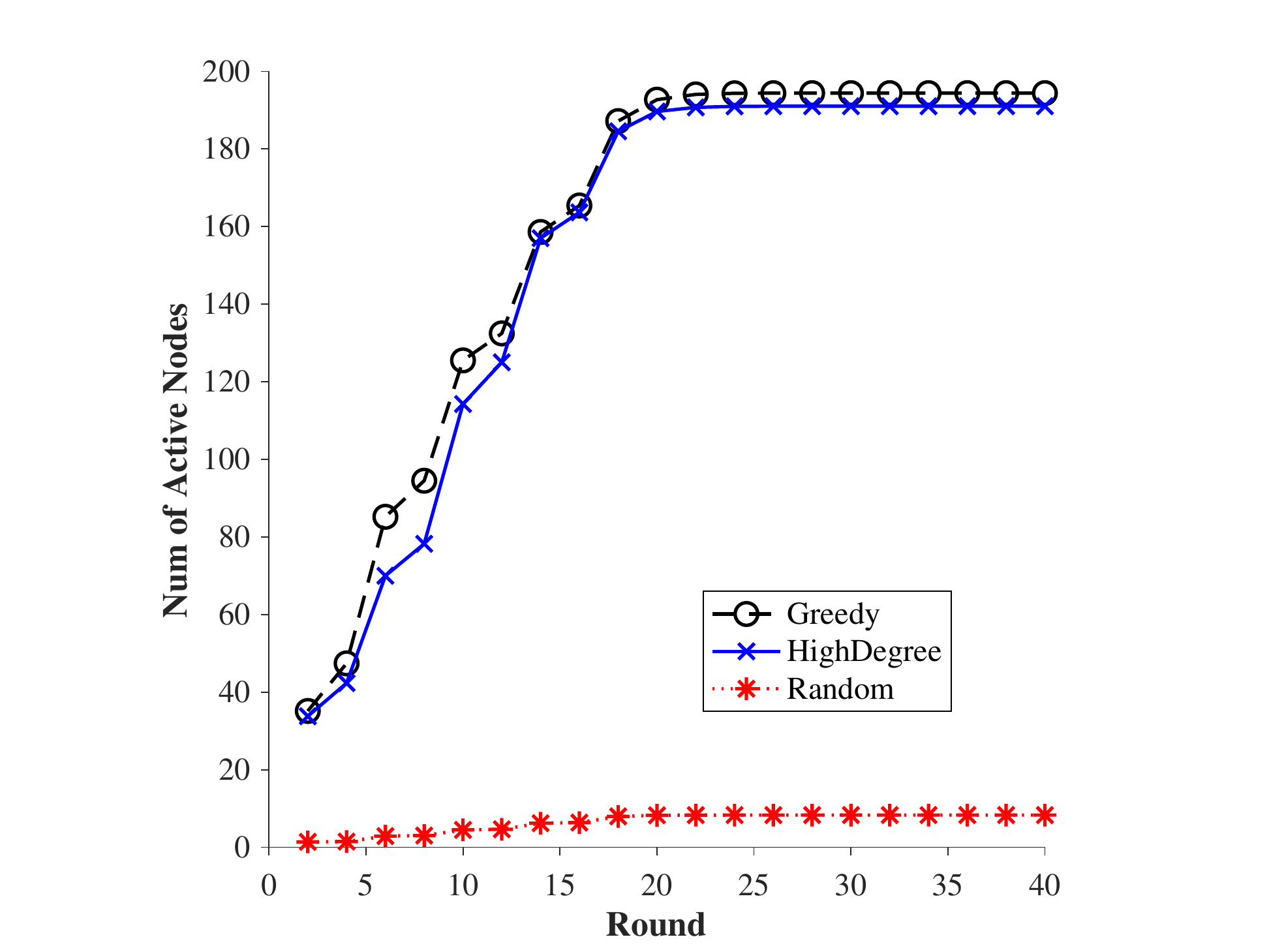}}
	\subfloat[{[Wiki, WC, $k=5$, $d=\infty$]}]{\label{fig: wikiwc5_3_15}\includegraphics[trim = 0.5in 0in 0.5in 0in, clip, width=0.24\textwidth]{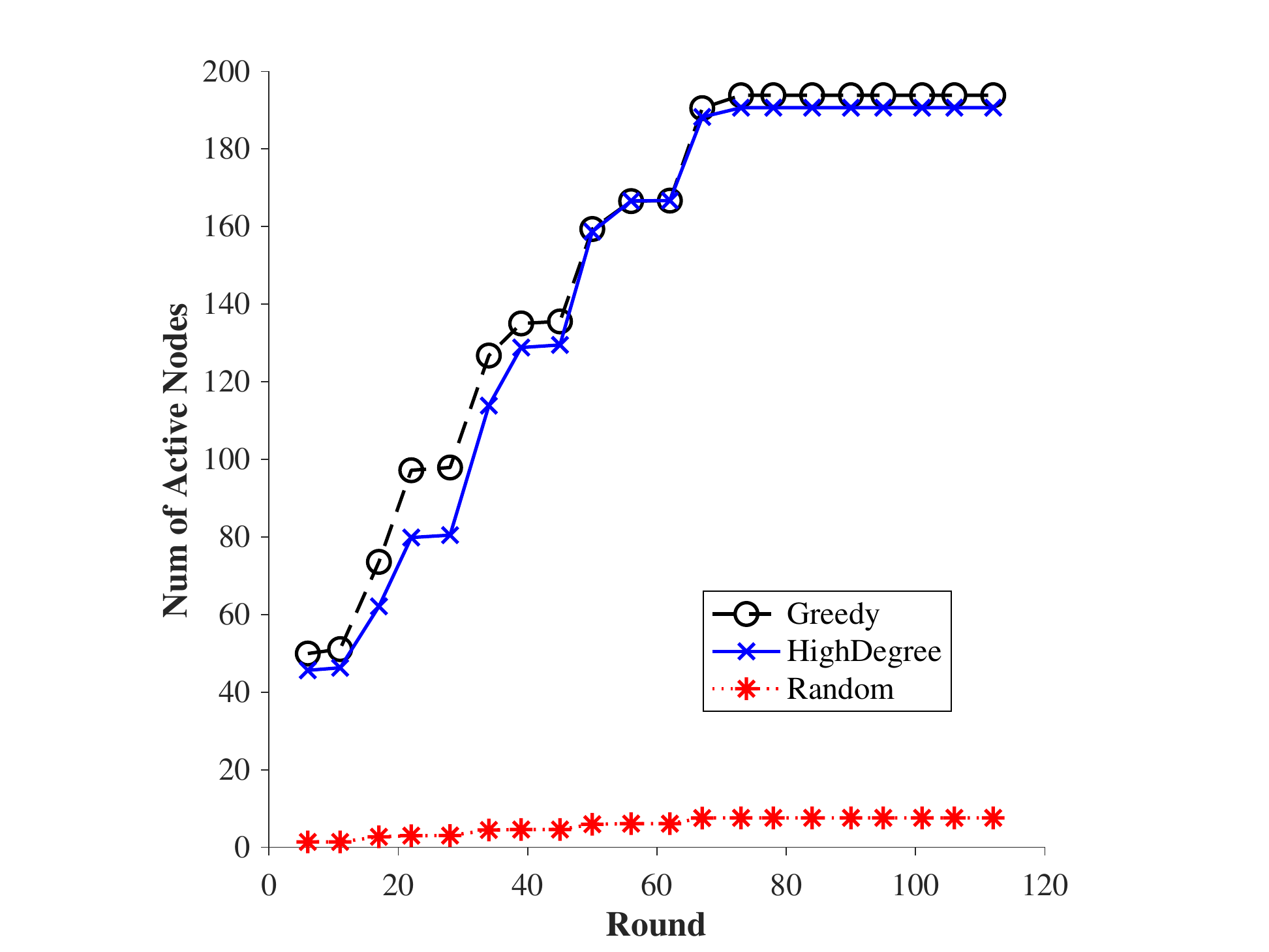}}
	
	\caption{Additional Results of Experiment \RNum{1}. Part 1}
	\vspace{-3mm}
	\label{fig: exp1_more_1}
\end{figure*}

\begin{figure*}[!pt]
\centering
\subfloat[{[Wiki, WC, $k=50$, $d=0$]}]{\label{fig: wikiwc50_3_0}\includegraphics[trim = 0.5in 0in 0.5in 0in, clip, width=0.24\textwidth]{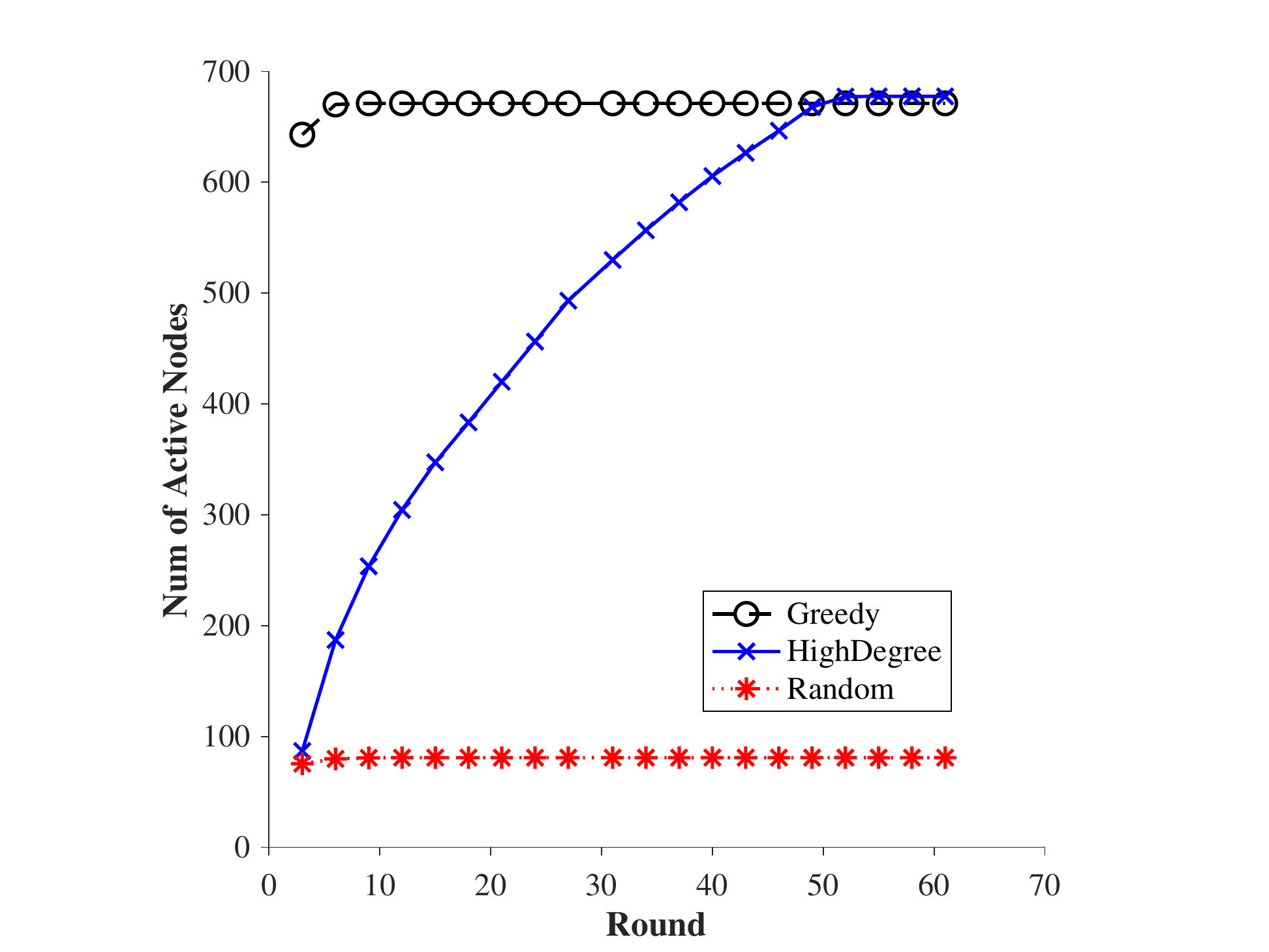}}
\subfloat[{[Wiki, WC, $k=50$, $d=1$]}]{\label{fig: wikiwc50_3_3}\includegraphics[trim = 0.5in 0in 0.5in 0in, clip, width=0.24\textwidth]{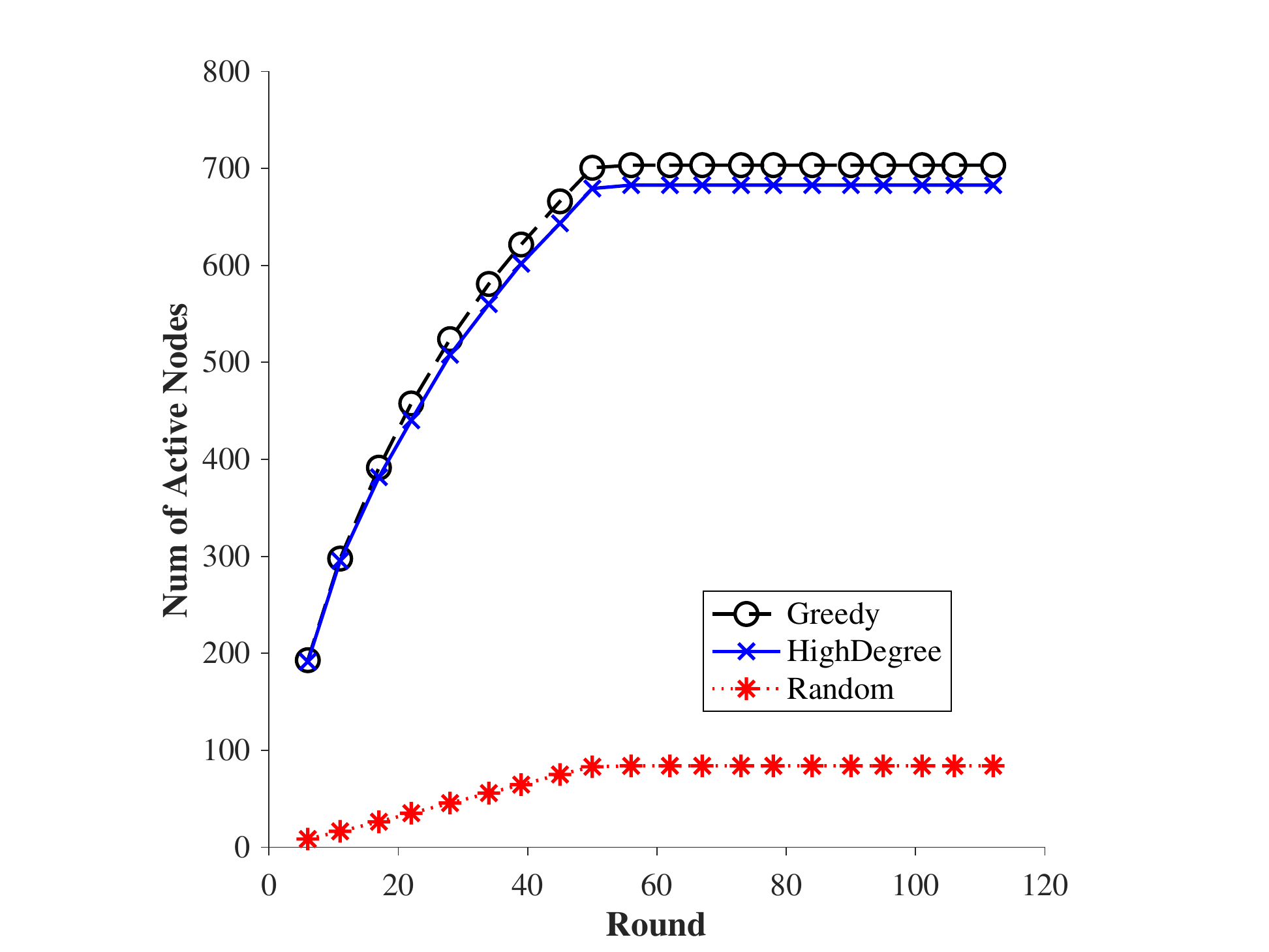}}
\subfloat[{[Wiki, WC, $k=50$, $d=8$]}]{\label{fig: wikiwc50_3_9}\includegraphics[trim = 0.5in 0in 0.5in 0in, clip, width=0.24\textwidth]{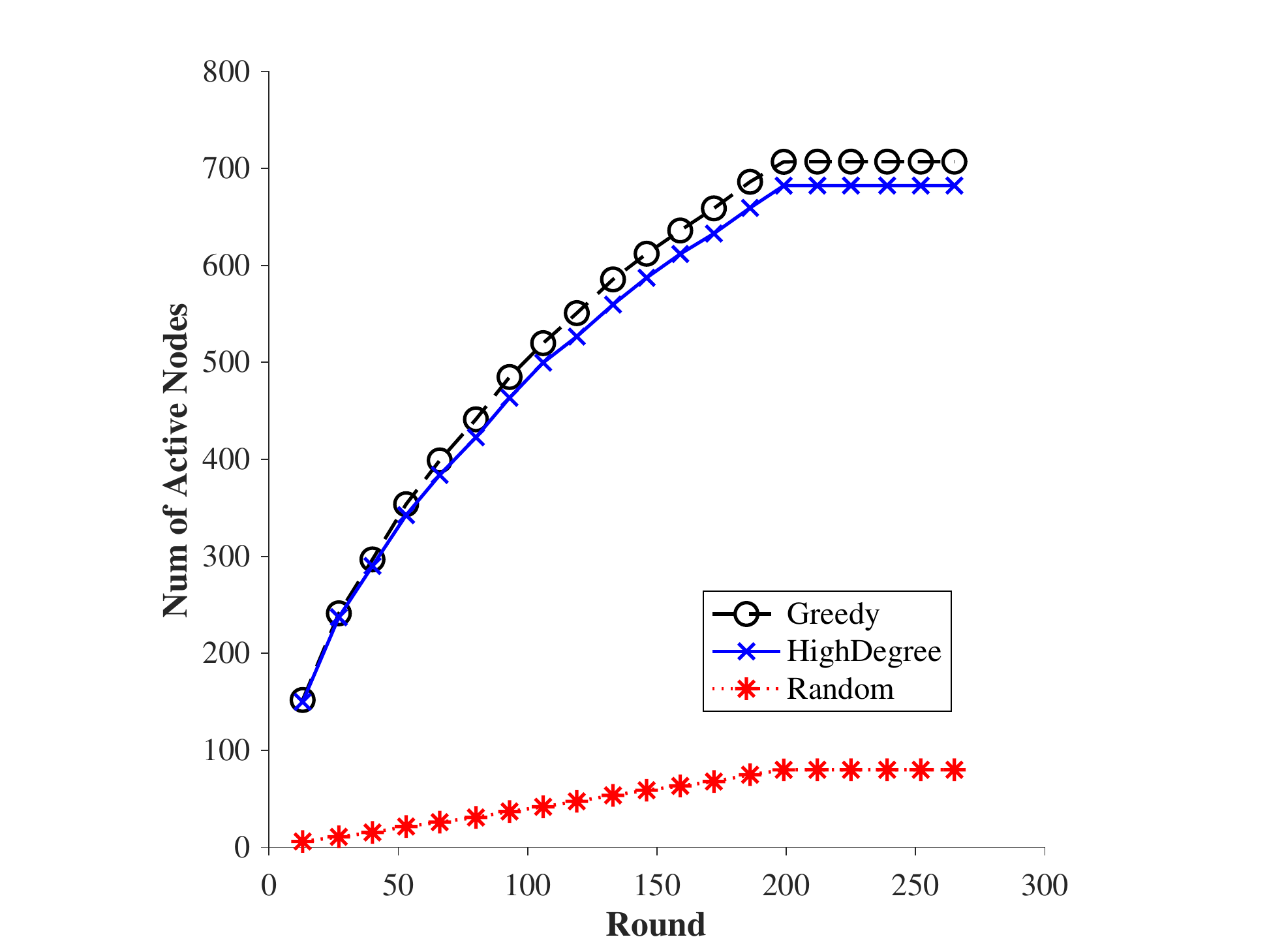}}
\subfloat[{[Wiki, WC, $k=50$, $d=\infty$]}]{\label{fig: wikiwc50_3_15}\includegraphics[trim = 0.5in 0in 0.5in 0in, clip, width=0.24\textwidth]{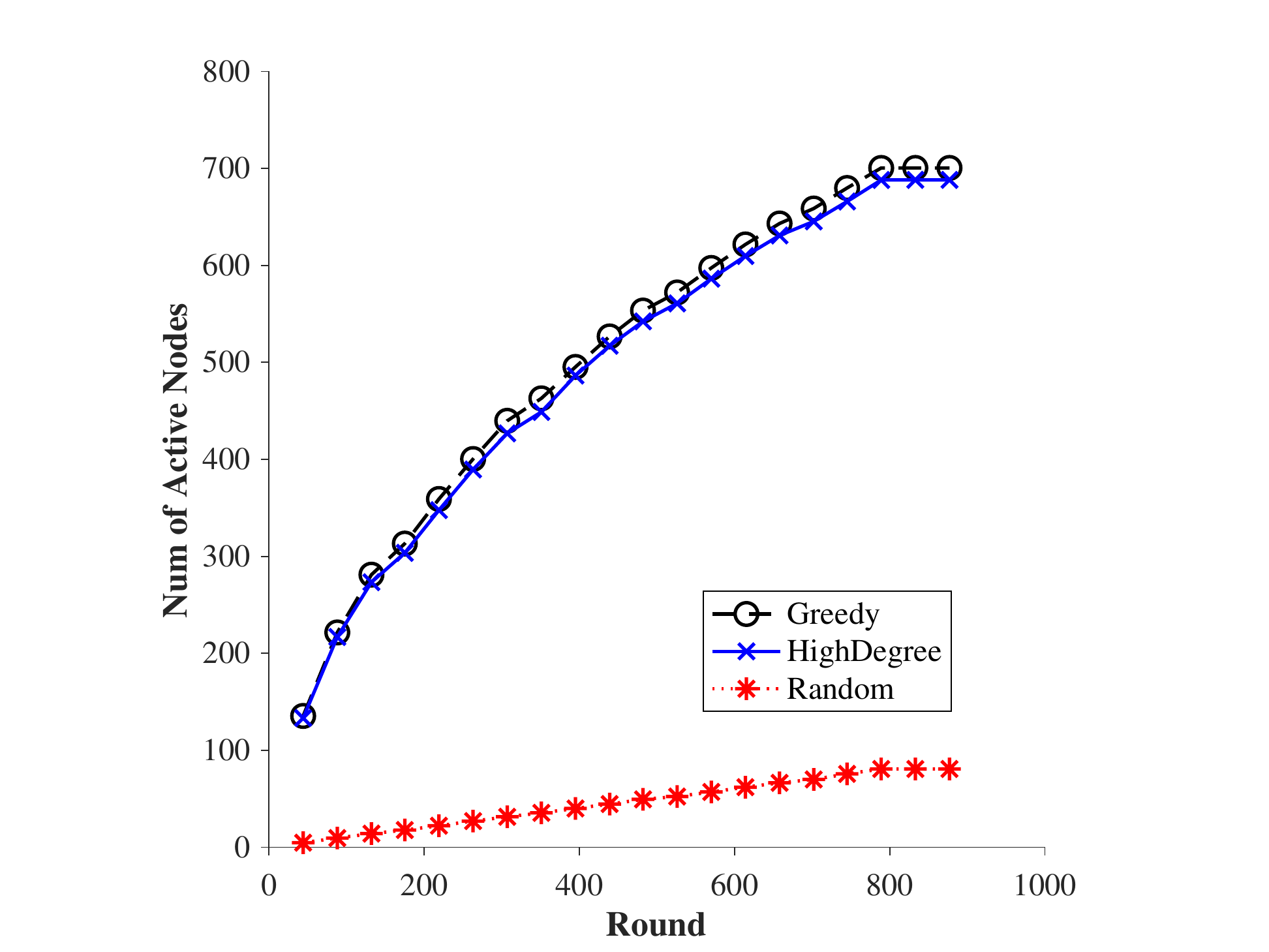}}

\subfloat[{[Wiki, IC, $k=5$, $d=0$]}]{\label{fig: wikiic5_3_0}\includegraphics[trim = 0.5in 0in 0.5in 0in, clip, width=0.24\textwidth]{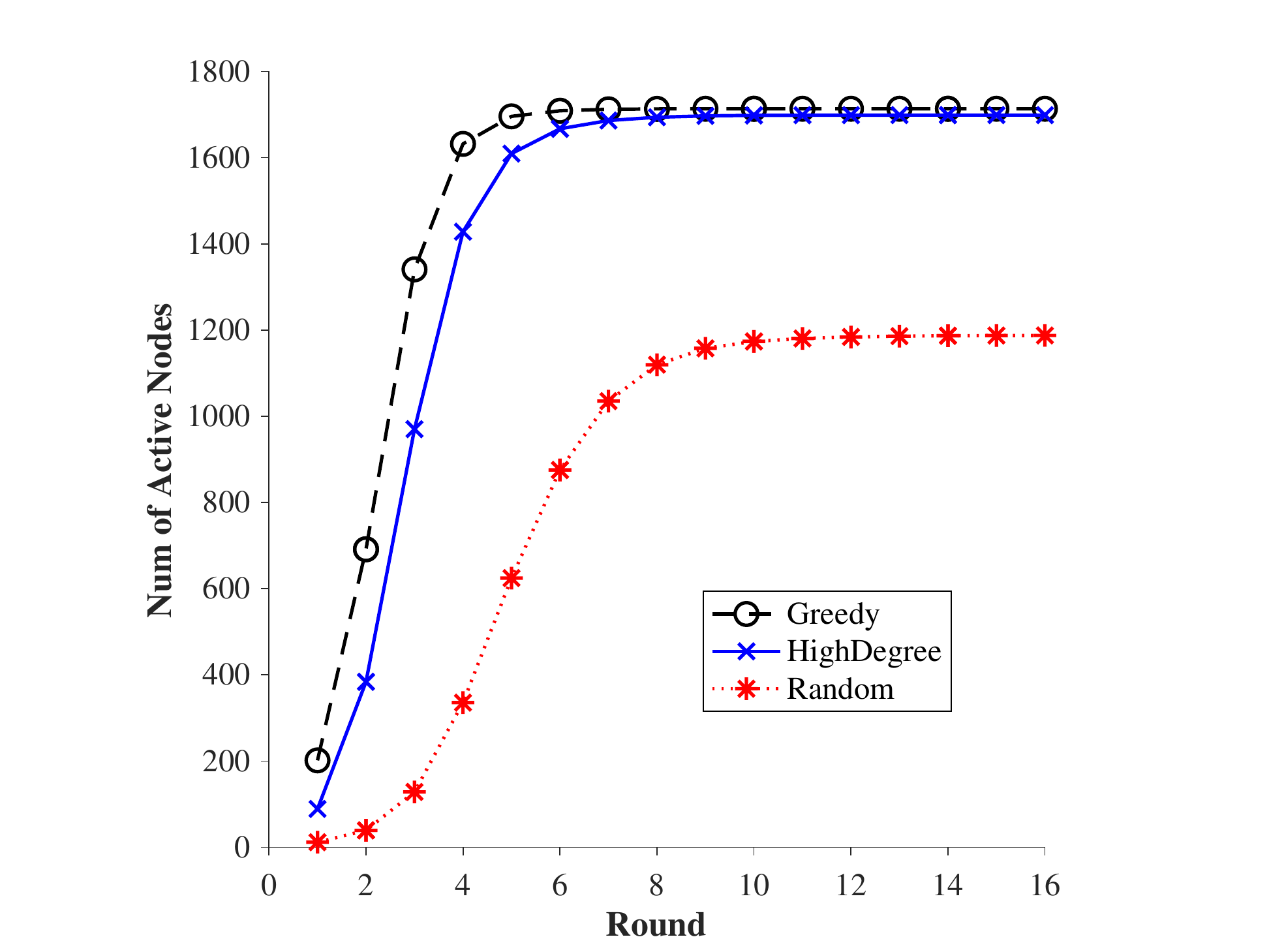}}
\subfloat[{[Wiki, IC, $k=5$, $d=1$]}]{\label{fig: wikiic5_3_3}\includegraphics[trim = 0.5in 0in 0.5in 0in, clip, width=0.24\textwidth]{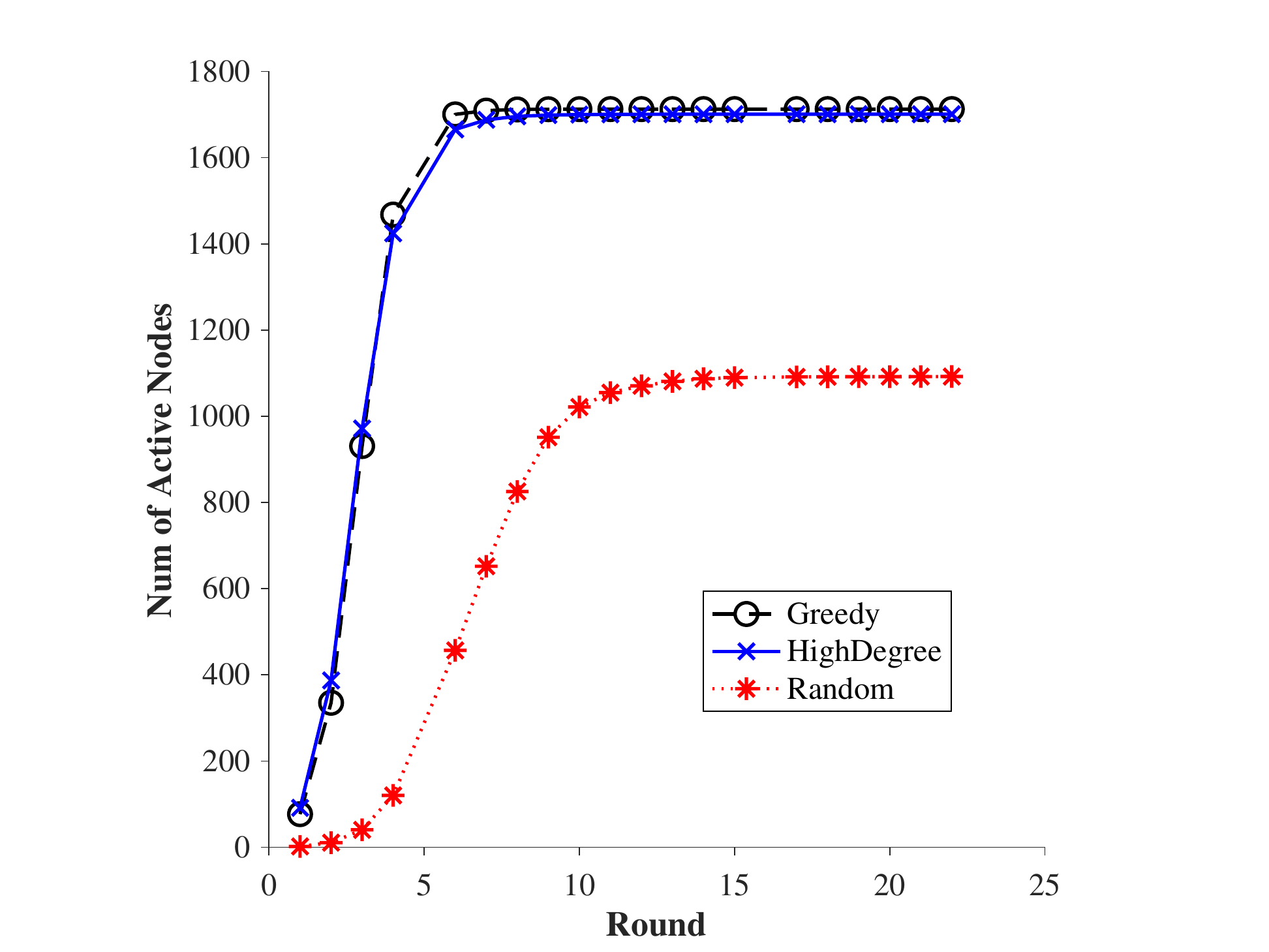}}
\subfloat[{[Wiki, IC, $k=5$, $d=8$]}]{\label{fig: wikiic5_3_9}\includegraphics[trim = 0.5in 0in 0.5in 0in, clip, width=0.24\textwidth]{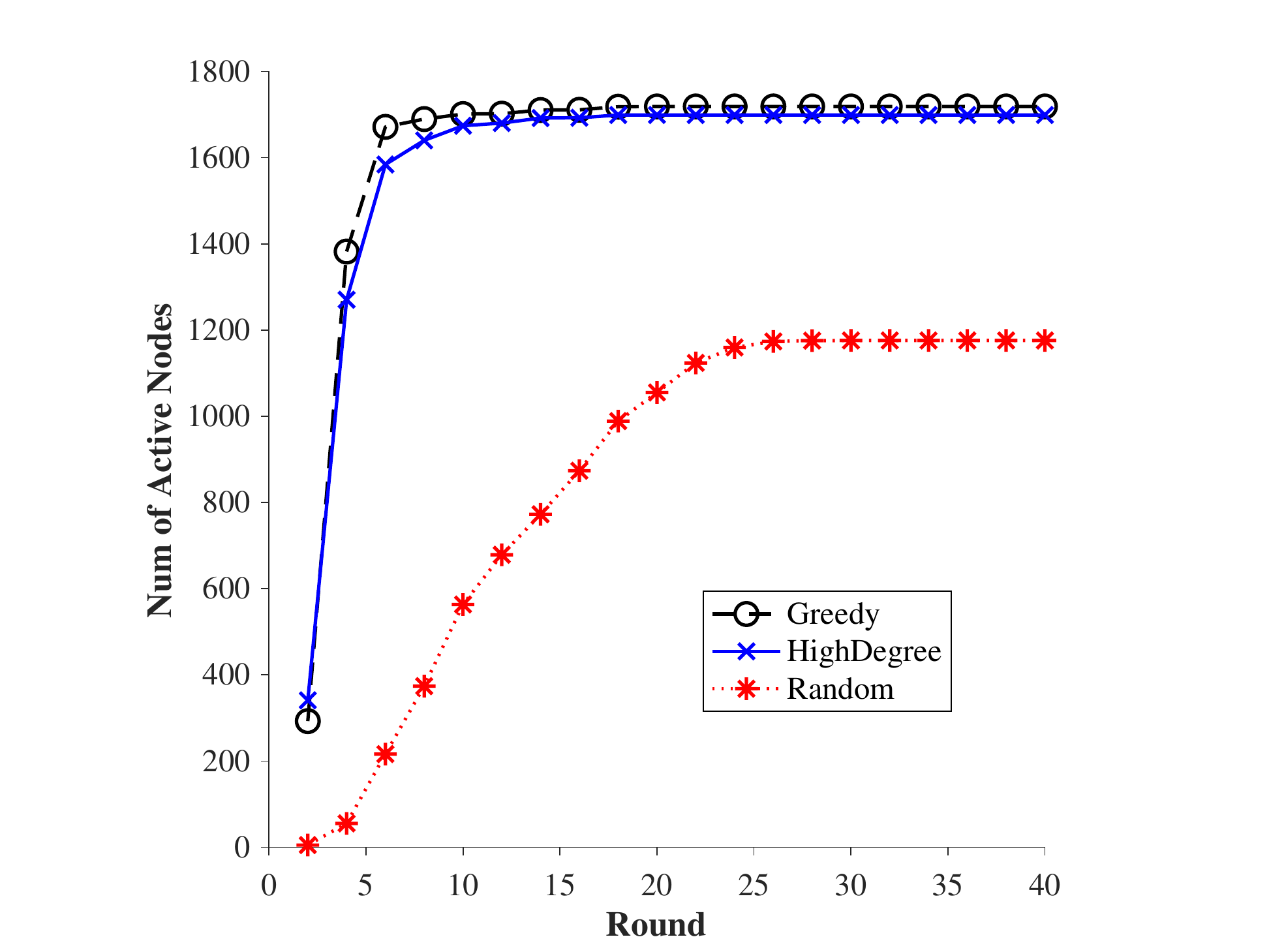}}
\subfloat[{[Wiki, IC, $k=5$, $d=\infty$]}]{\label{fig: wikiic5_3_15}\includegraphics[trim = 0.5in 0in 0.5in 0in, clip, width=0.24\textwidth]{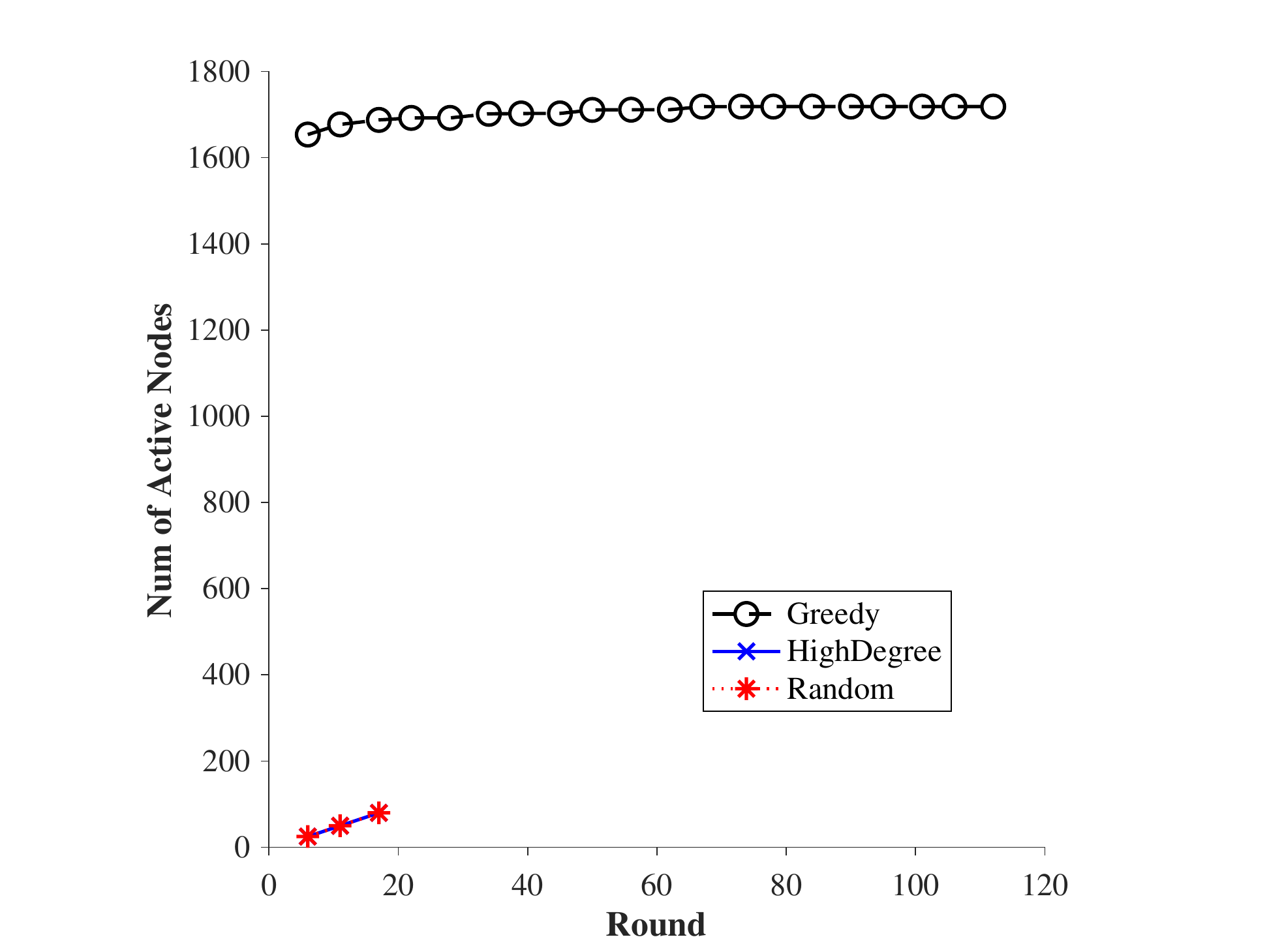}}

\subfloat[{[Wiki, IC, $k=50$, $d=0$]}]{\label{fig: wikiic50_3_0}\includegraphics[trim = 0.5in 0in 0.5in 0in, clip, width=0.24\textwidth]{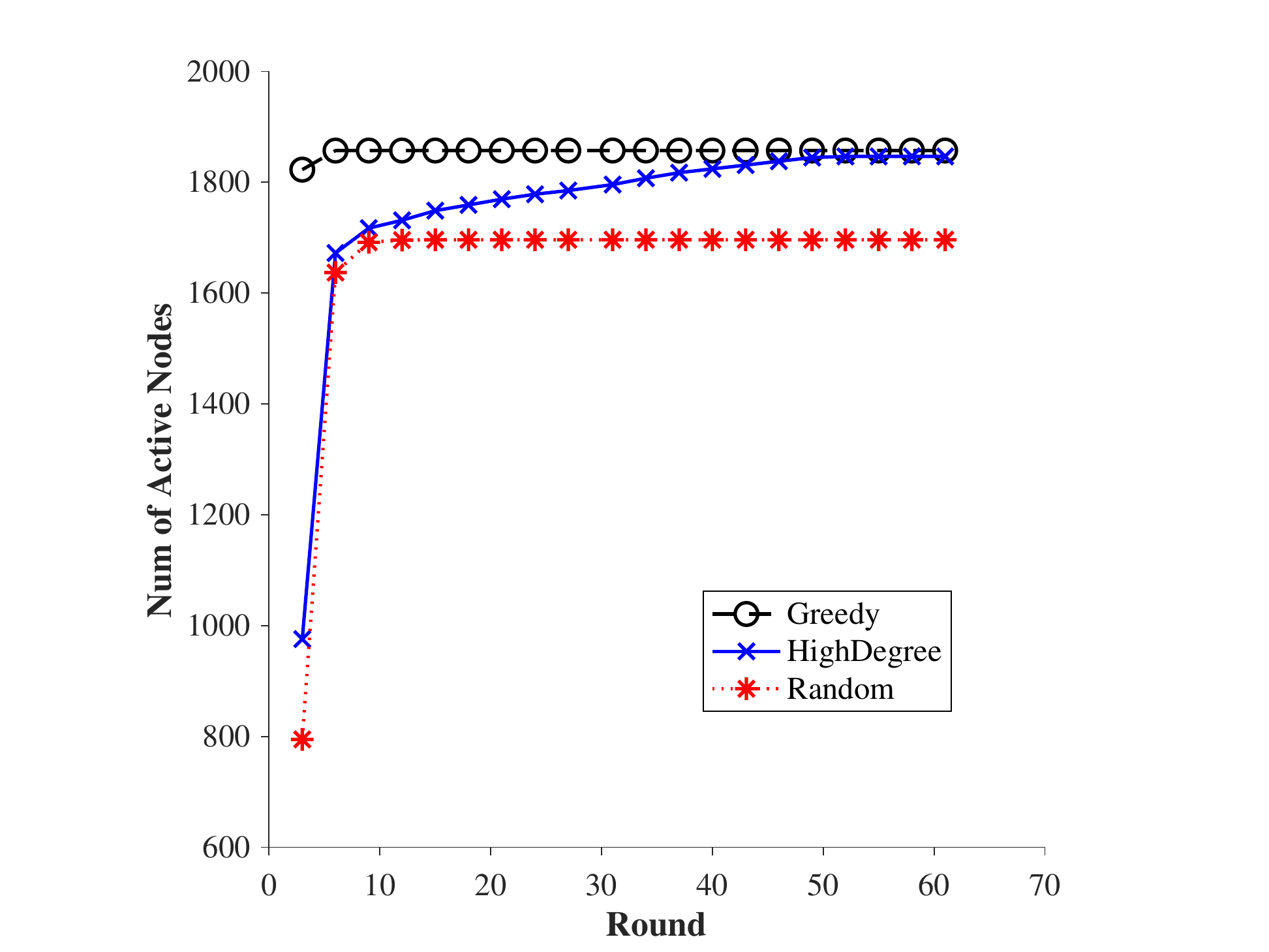}}
\subfloat[{[Wiki, IC, $k=50$, $d=1$]}]{\label{fig: wikiic50_3_3}\includegraphics[trim = 0.5in 0in 0.5in 0in, clip, width=0.24\textwidth]{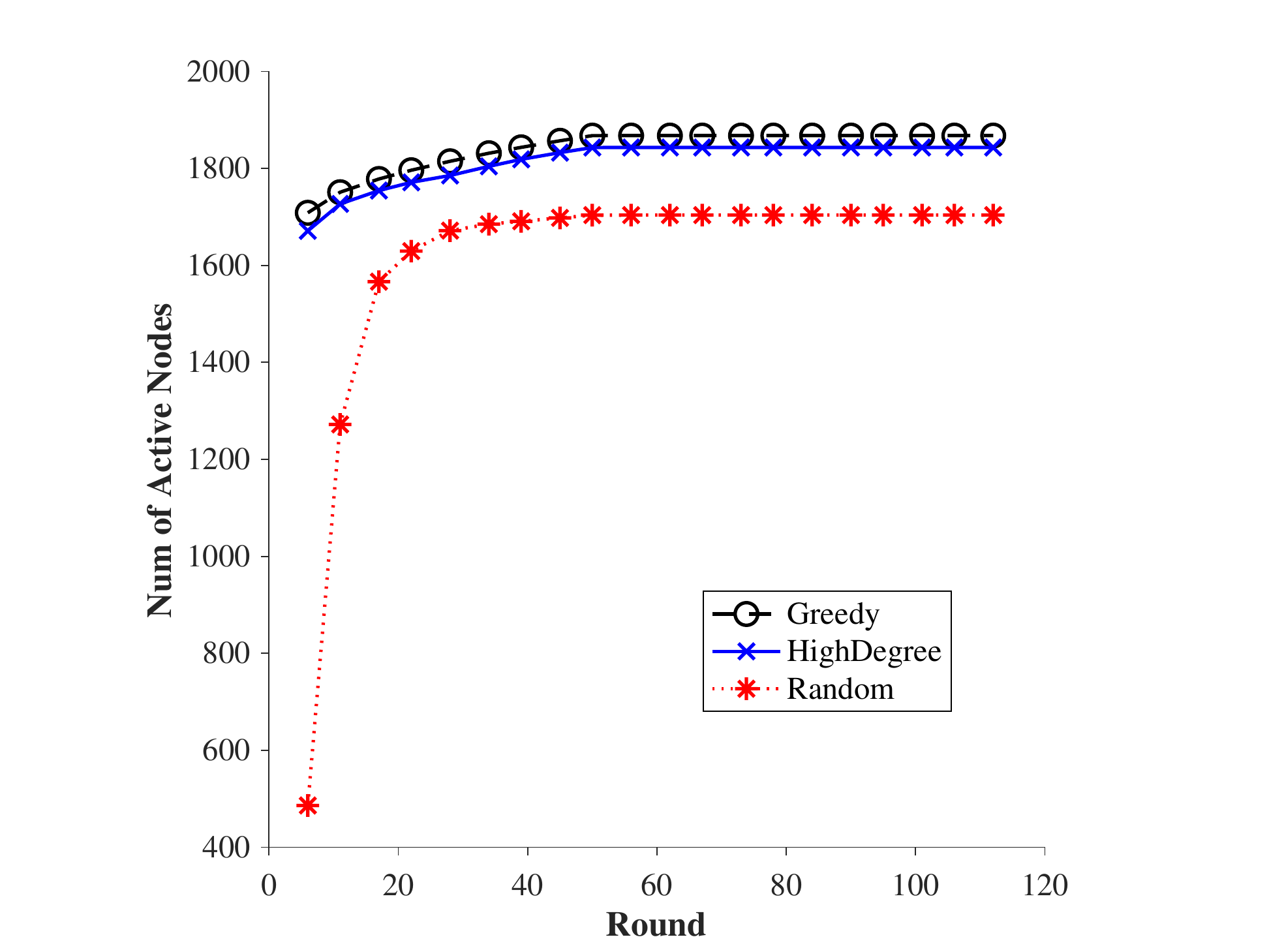}}
\subfloat[{[Wiki, IC, $k=50$, $d=8$]}]{\label{fig: wikiic50_3_9}\includegraphics[trim = 0.5in 0in 0.5in 0in, clip, width=0.24\textwidth]{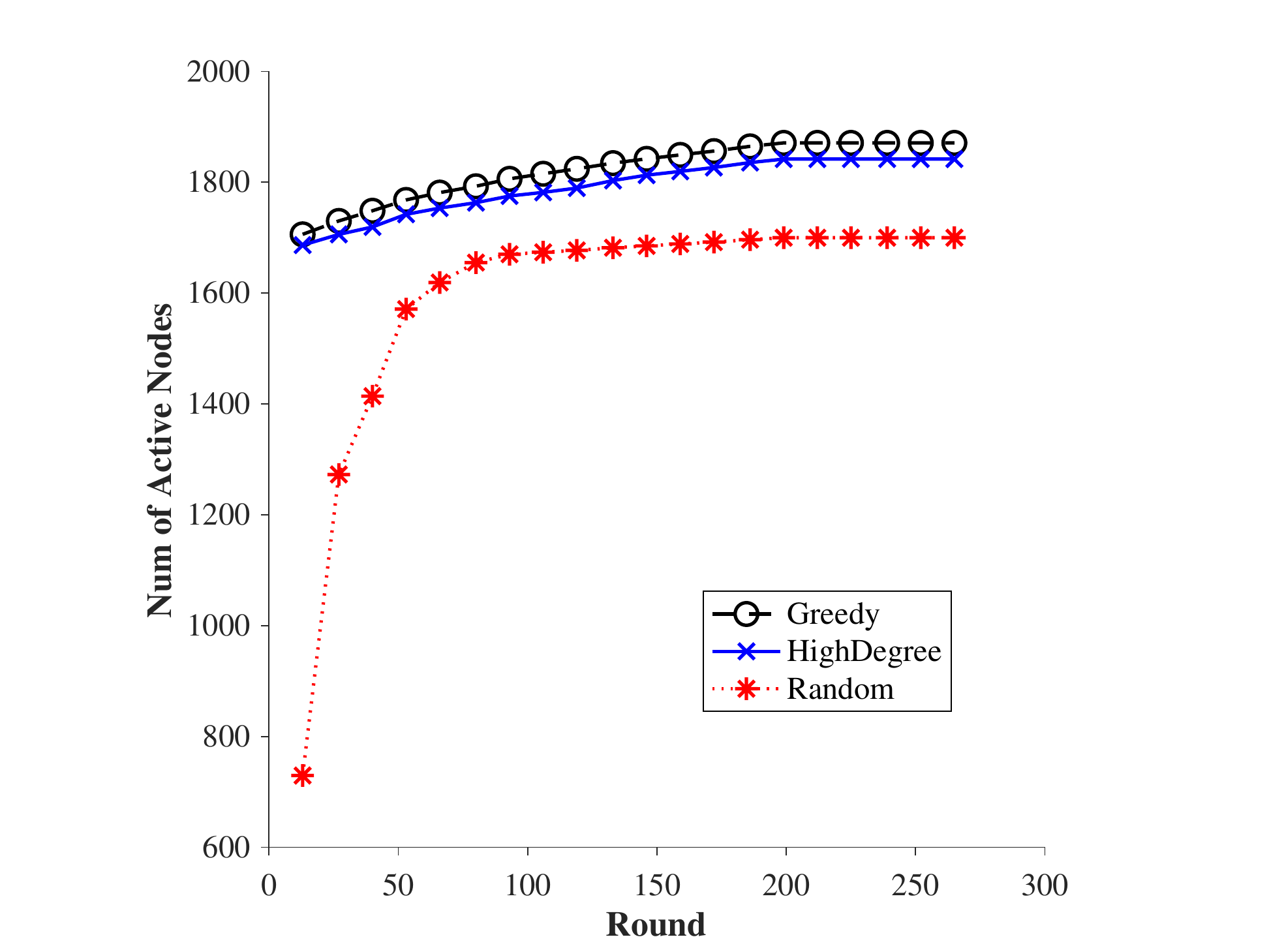}}
\subfloat[{[Wiki, IC, $k=50$, $d=\infty$]}]{\label{fig: wikiic50_3_15}\includegraphics[trim = 0.5in 0in 0.5in 0in, clip, width=0.24\textwidth]{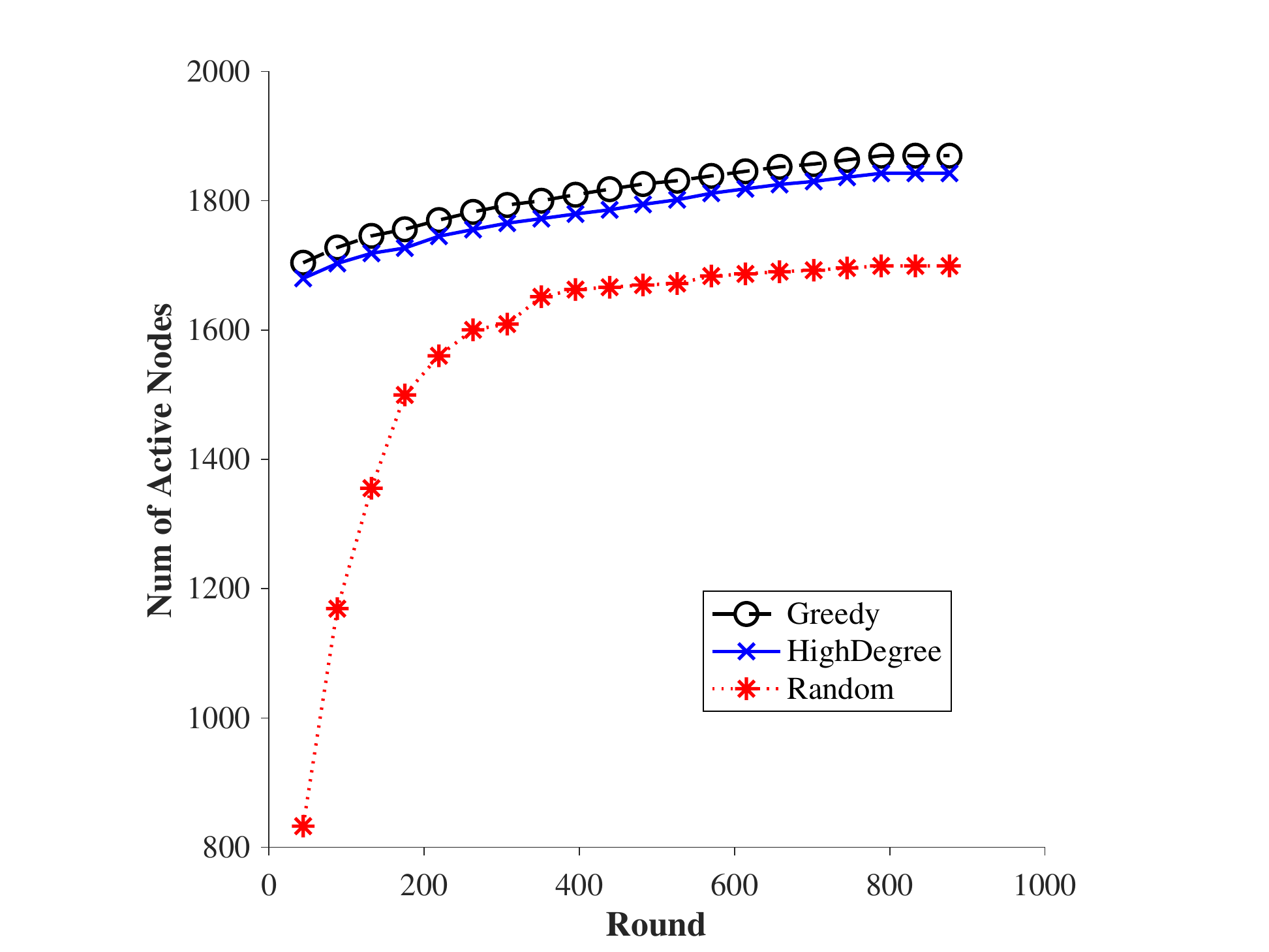}}

\subfloat[{[Higgs, IC, $k=20$, $d=0$]}]{\label{fig: higgs20_3_0}\includegraphics[trim = 0.5in 0in 0.5in 0in, clip, width=0.24\textwidth]{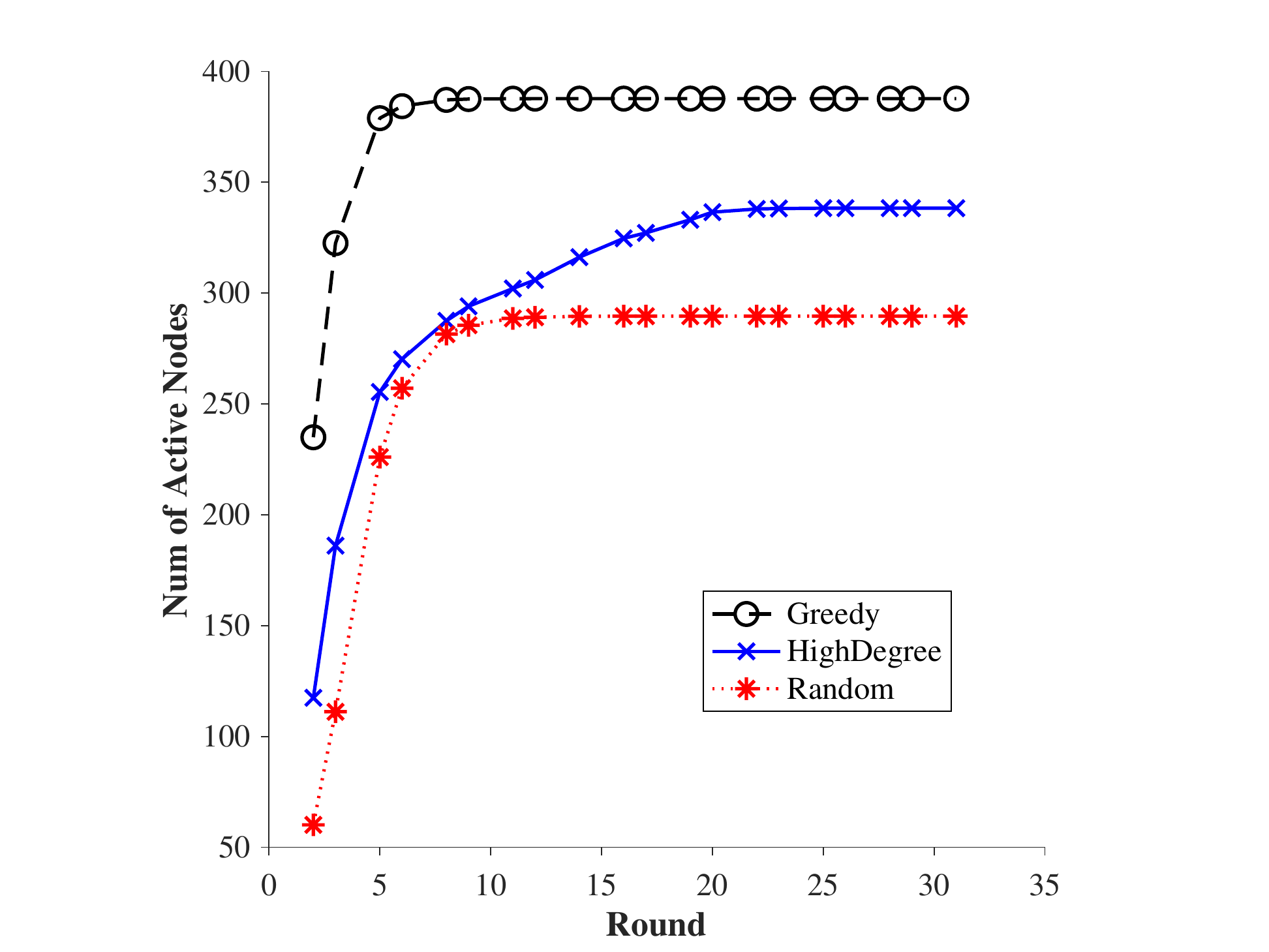}}
\subfloat[{[Higgs, IC, $k=20$, $d=1$]}]{\label{fig: higgs20_3_3}\includegraphics[trim = 0.5in 0in 0.5in 0in, clip,width=0.24\textwidth]{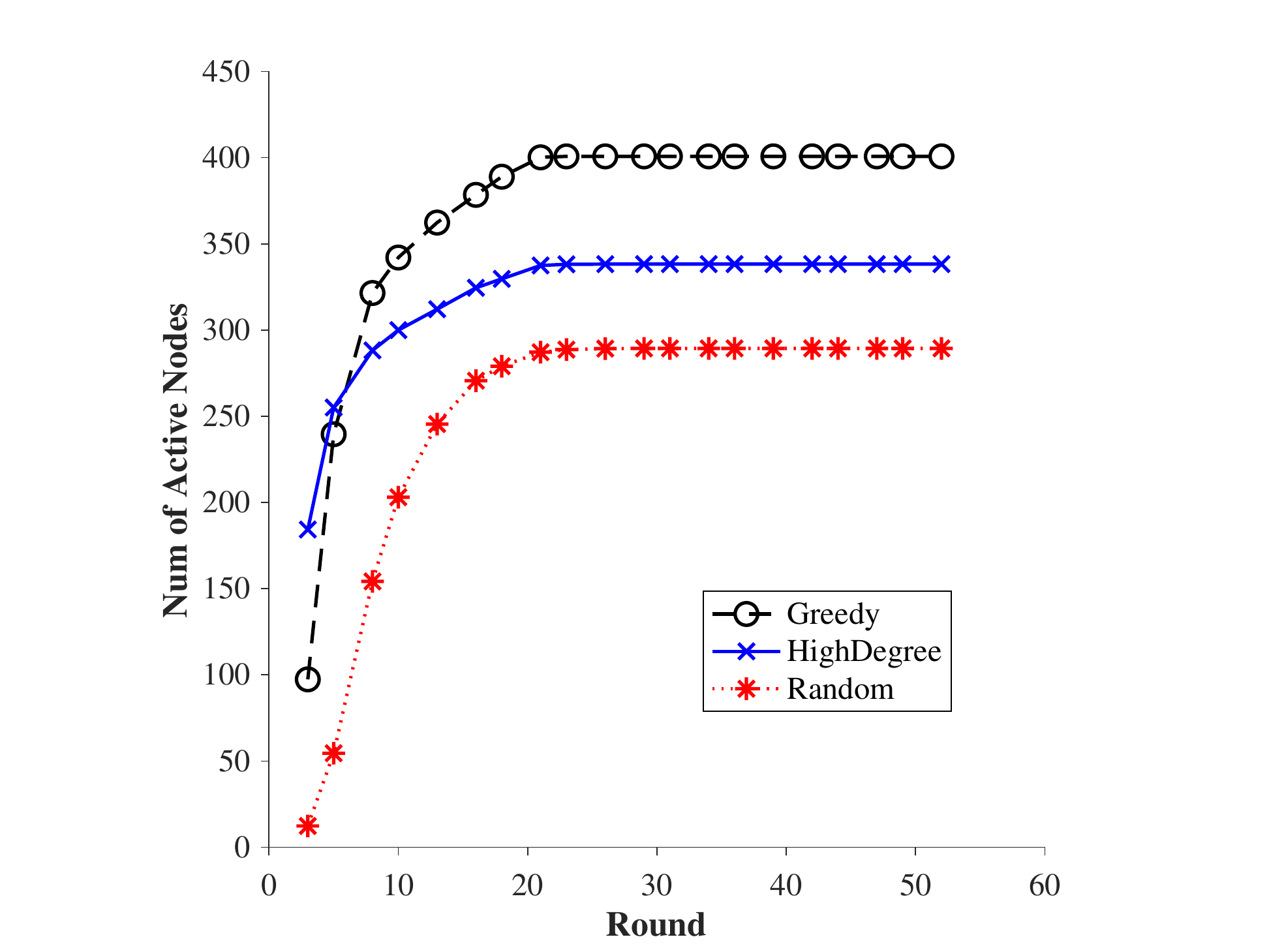}}
\subfloat[{[Higgs, IC, $k=20$, $d=8$]}]{\label{fig: higgs20_3_9}\includegraphics[trim = 0.5in 0in 0.5in 0in, clip,width=0.24\textwidth]{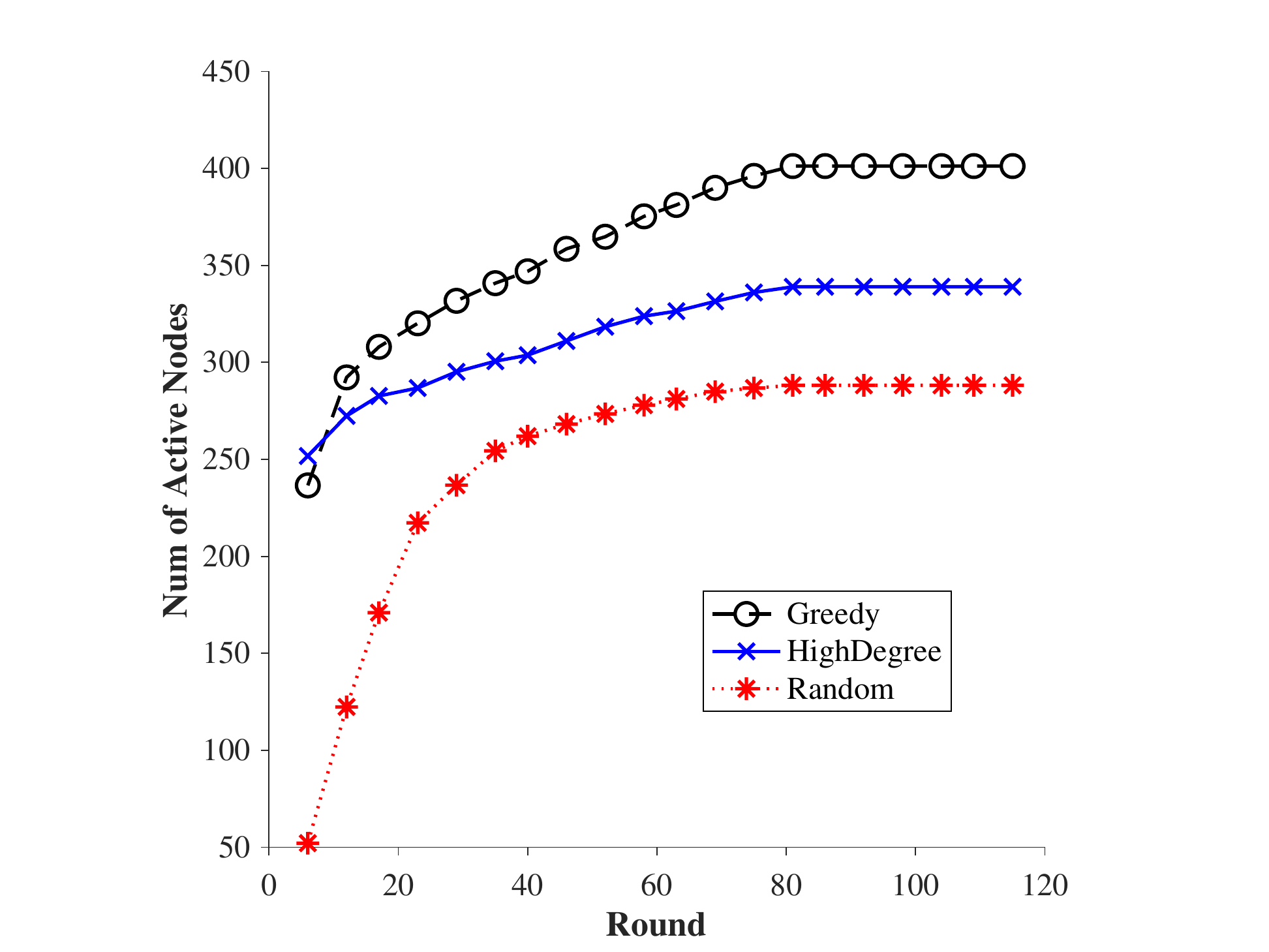}}
\subfloat[{[Higgs, IC, $k=20$, $d=15$]}]{\label{fig: higgs20_3_15}\includegraphics[trim = 0.5in 0in 0.5in 0in, clip,width=0.24\textwidth]{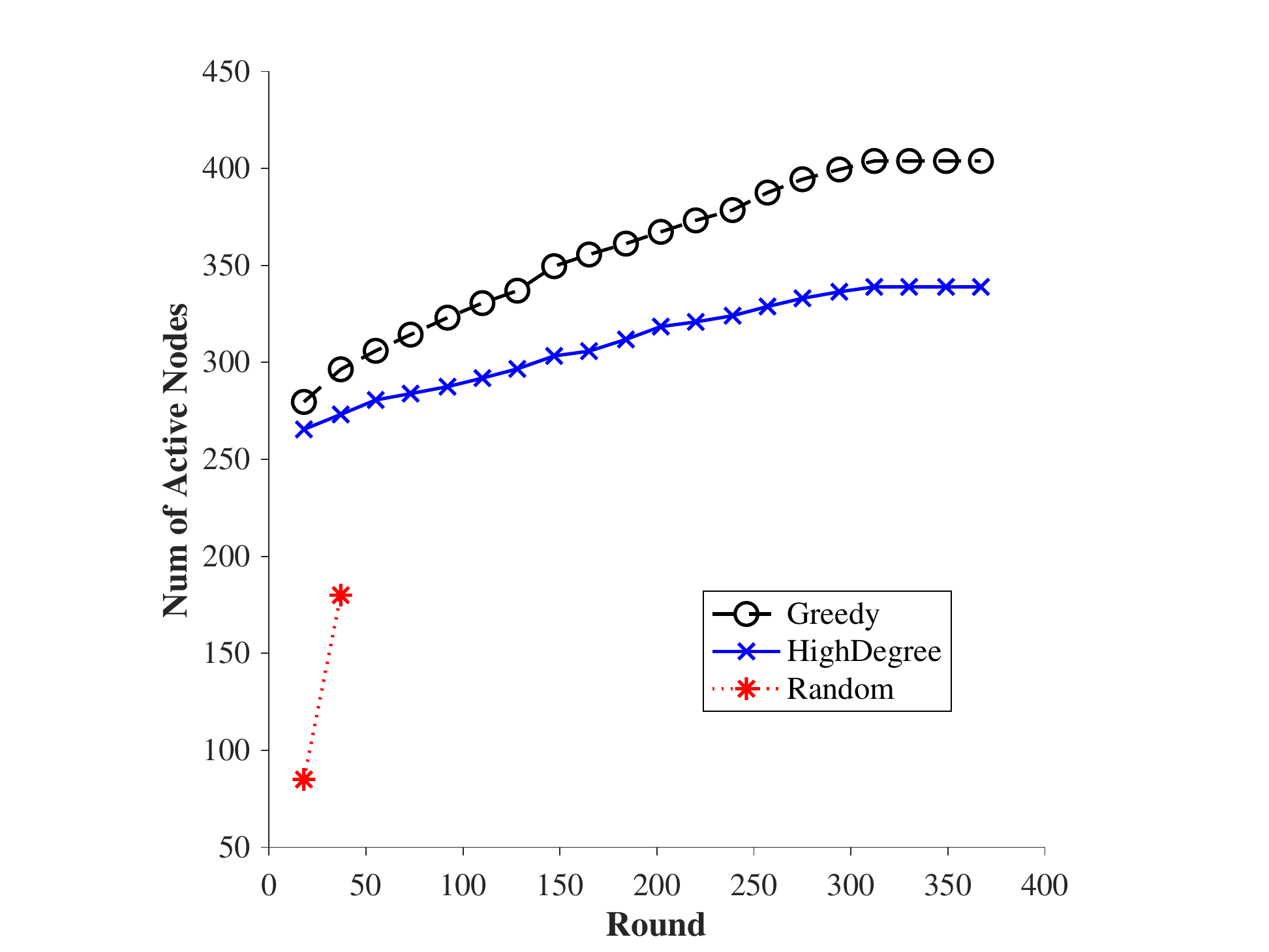}}

\subfloat[{[HepTh, WC, $k=5$, $d=0$]}]{\label{fig: hepthwc5_3_0}\includegraphics[trim = 0.5in 0in 0.5in 0in, clip, width=0.24\textwidth]{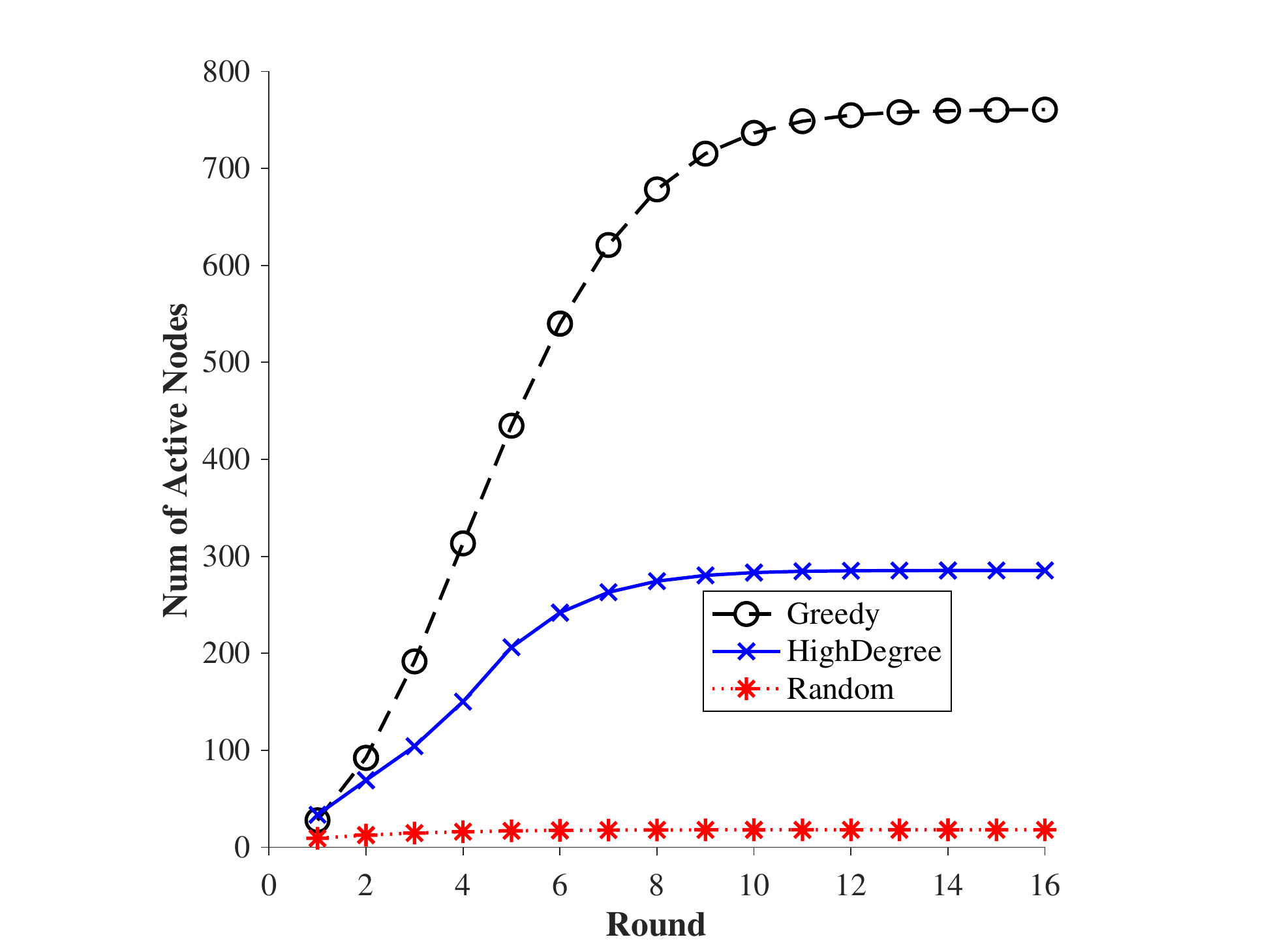}}
\subfloat[{[HepTh, WC, $k=5$, $d=1$]}]{\label{fig: hepthwc5_3_3}\includegraphics[trim = 0.5in 0in 0.5in 0in, clip,width=0.24\textwidth]{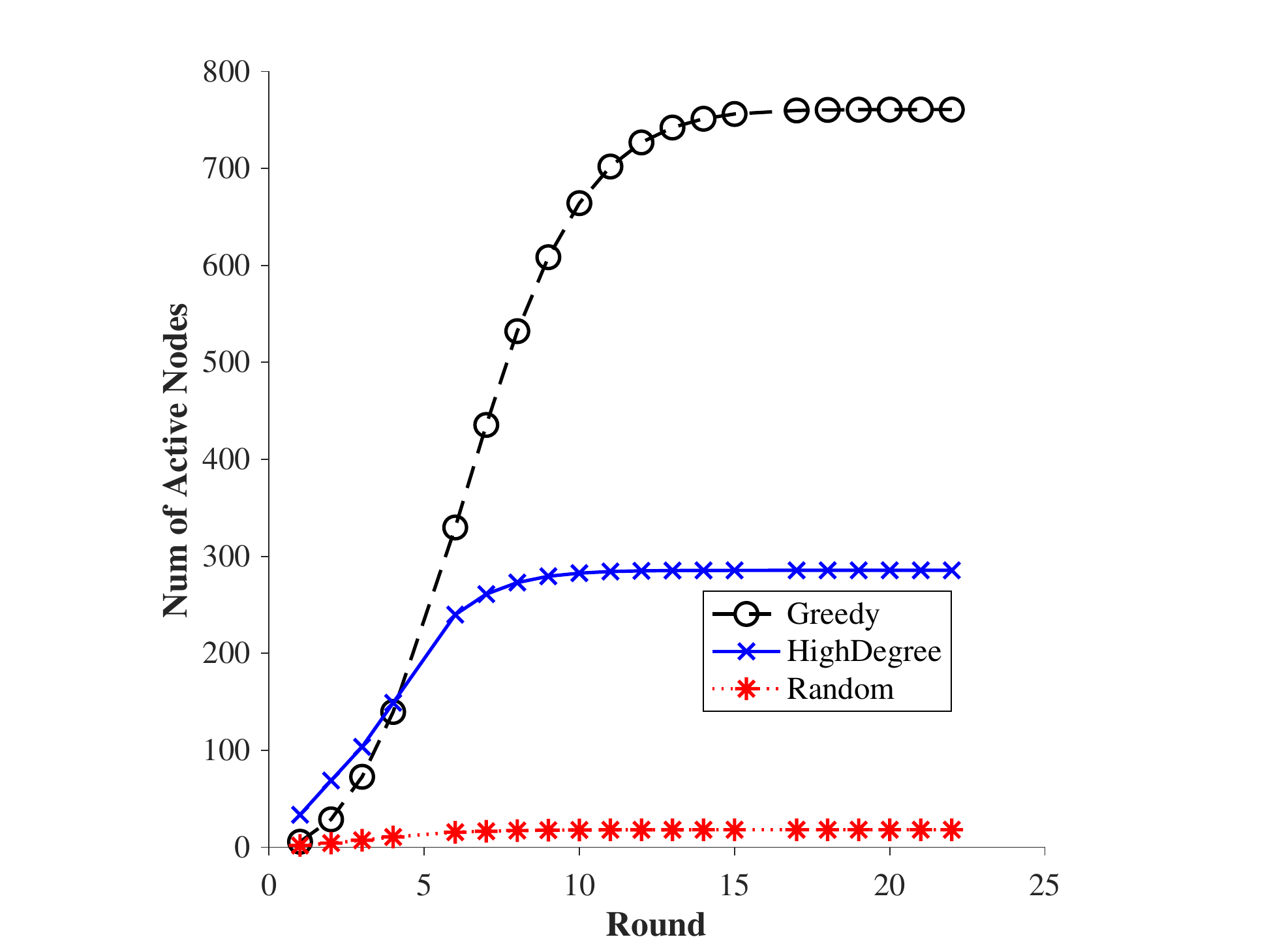}}
\subfloat[{[HepTh, WC, $k=5$, $d=8$]}]{\label{fig: hepthwc5_3_9}\includegraphics[trim = 0.5in 0in 0.5in 0in, clip,width=0.24\textwidth]{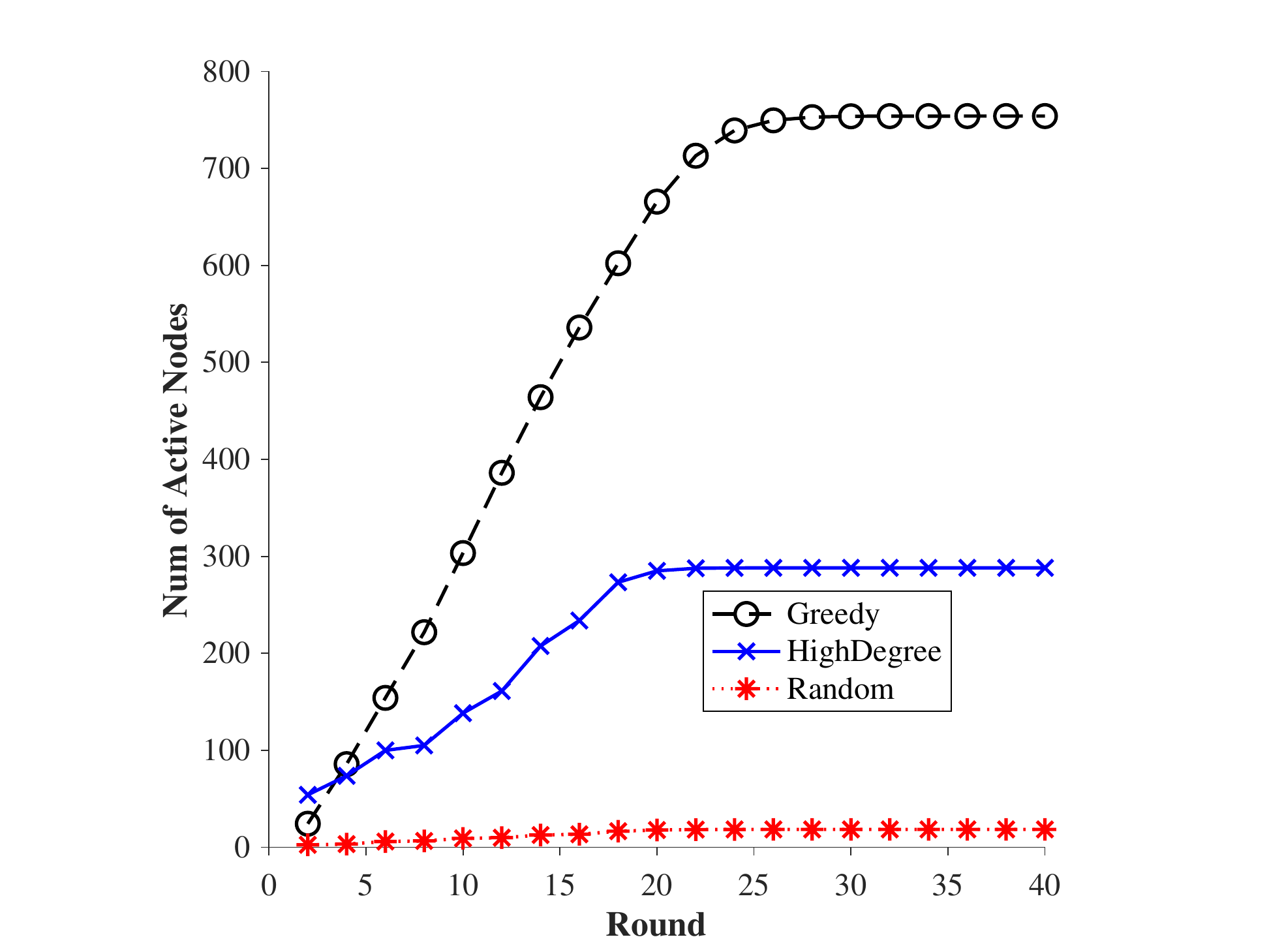}}
\subfloat[{[HepTh, WC, $k=5$, $d=15$]}]{\label{fig: hepthwc5_3_15}\includegraphics[trim = 0.5in 0in 0.5in 0in, clip,width=0.24\textwidth]{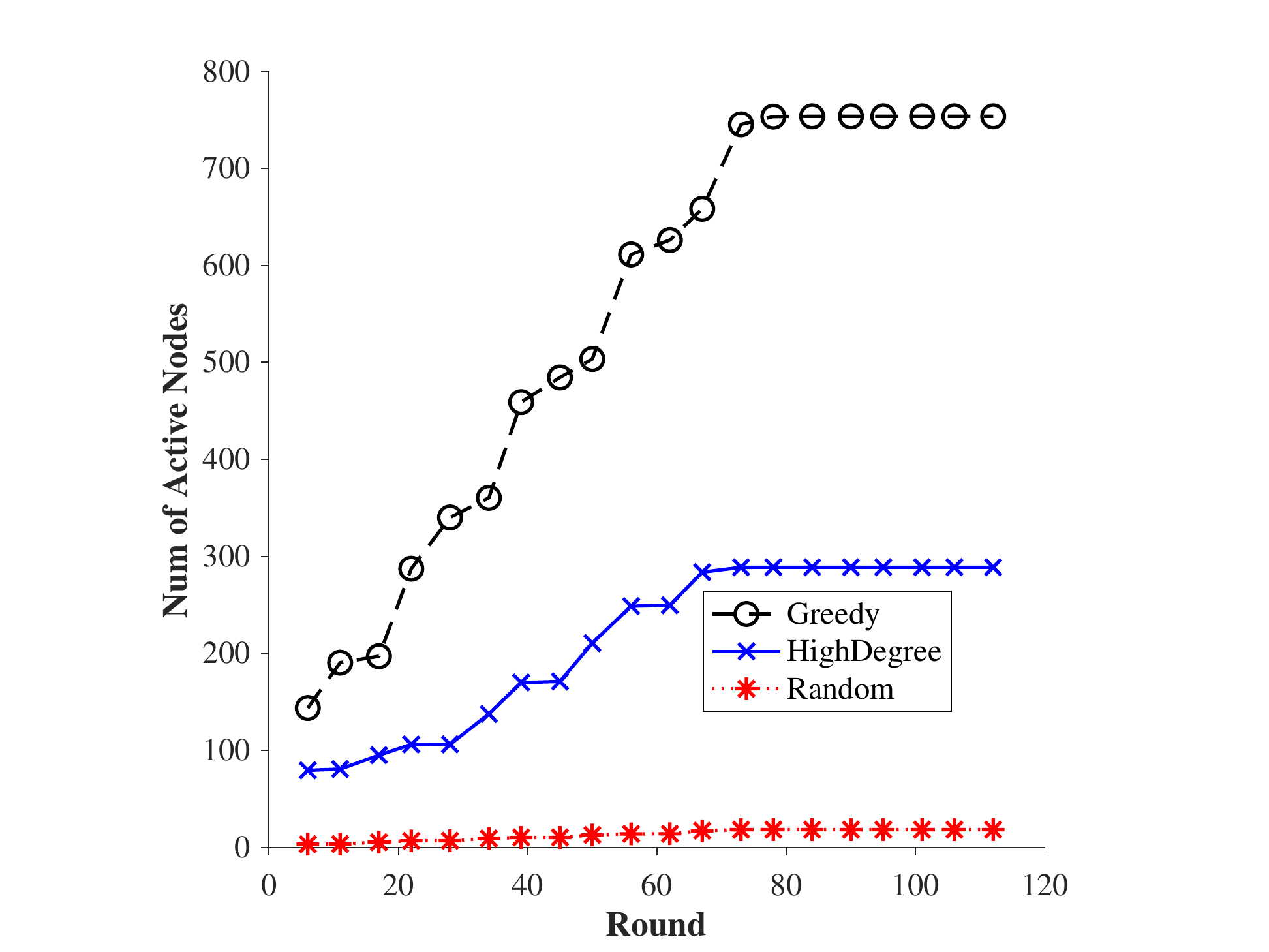}}

\caption{Additional Results of Experiment \RNum{1}. Part 2}
\vspace{-3mm}
\label{fig: exp1_more_2}
\end{figure*}

\begin{figure*}[!pt]
	\centering
	\subfloat[{[Youtube, WC, $k=5$, $d=0$]}]{\label{fig: youtubewc5_3_0}\includegraphics[trim = 0.5in 0in 0.5in 0in, clip, width=0.24\textwidth]{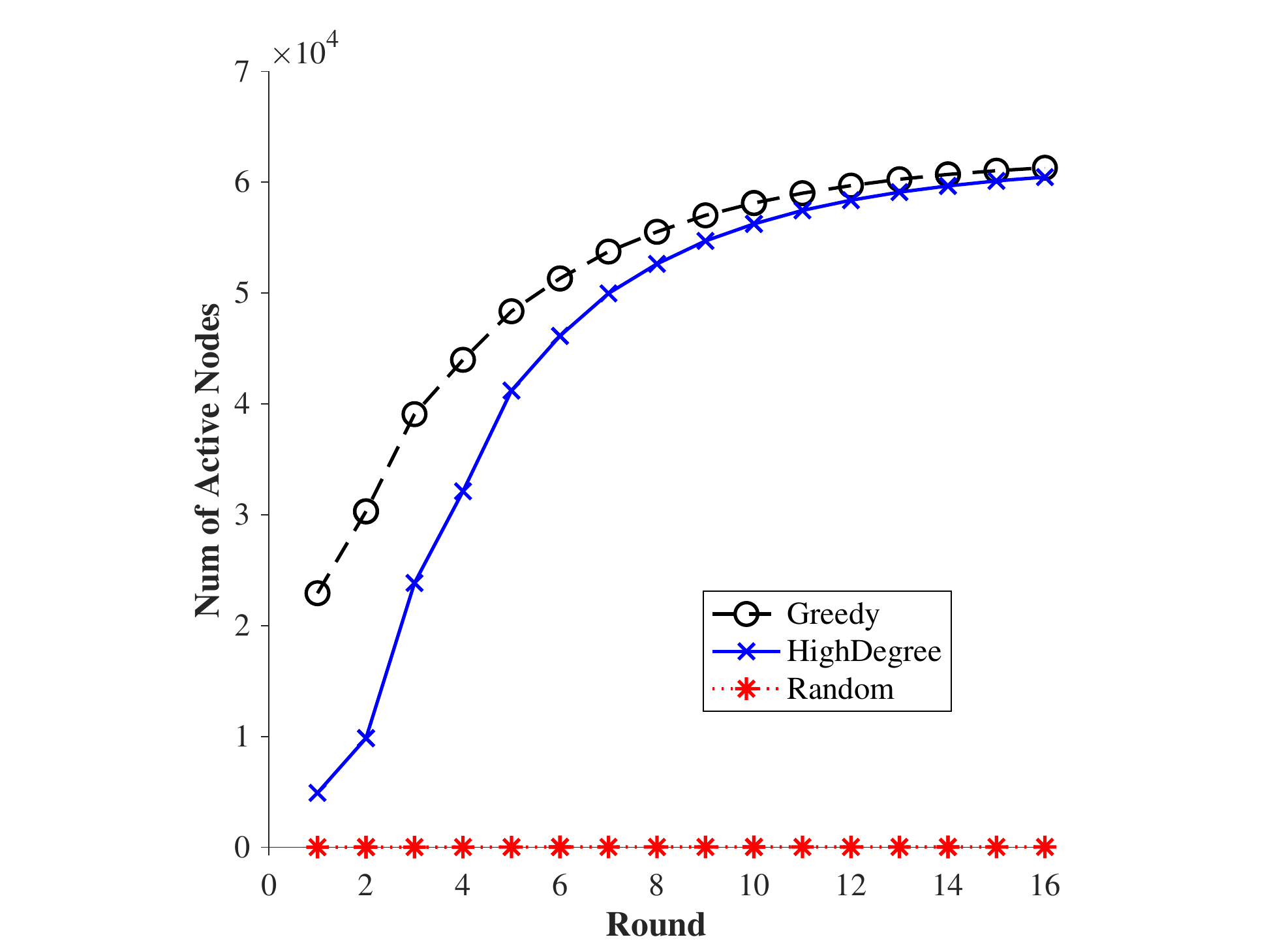}}
	\subfloat[{[Youtube, WC, $k=5$, $d=1$]}]{\label{fig: youtubewc5_3_3}\includegraphics[trim = 0.5in 0in 0.5in 0in, clip, width=0.24\textwidth]{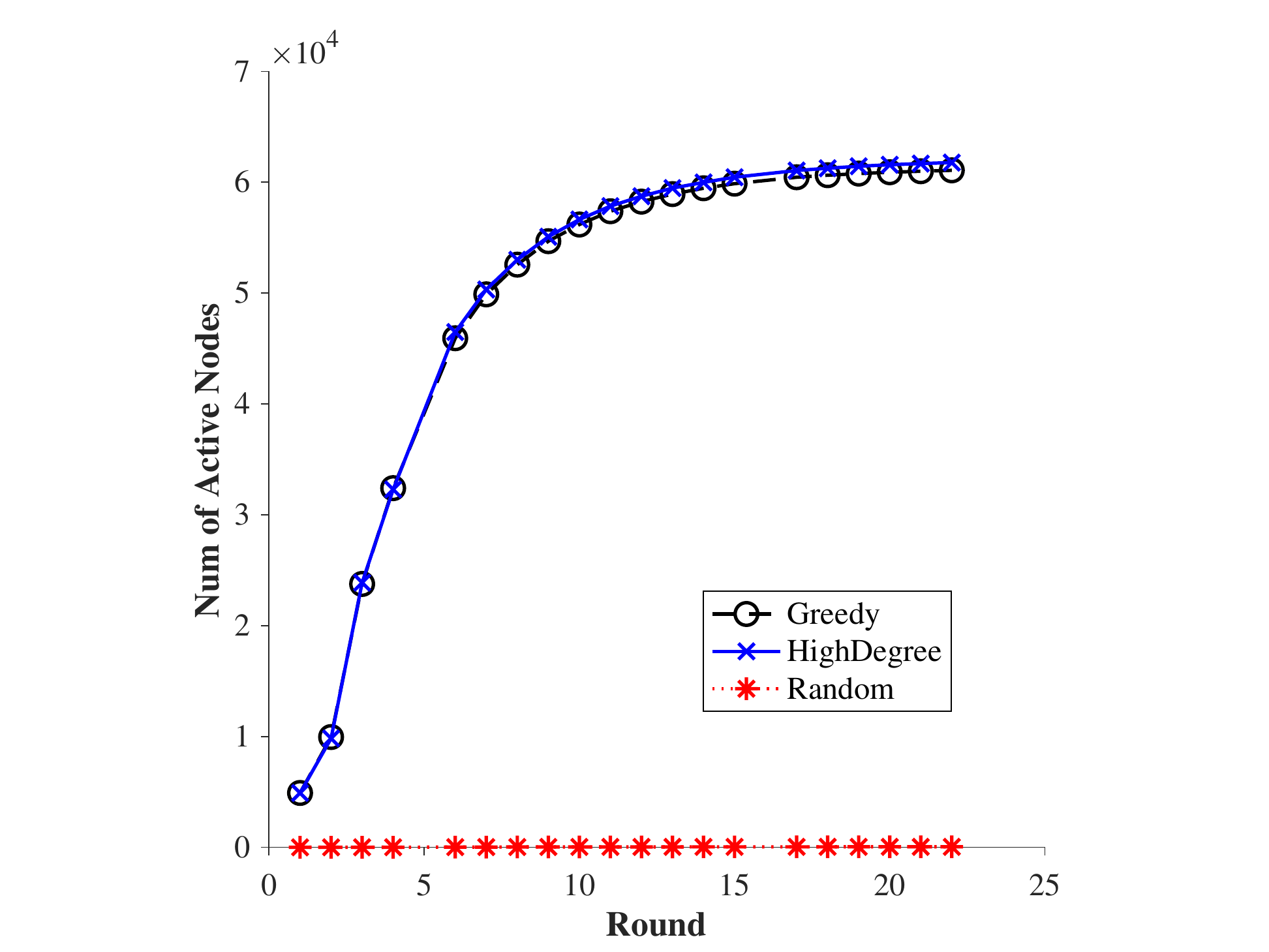}}
	\subfloat[{[Youtube, WC, $k=5$, $d=8$]}]{\label{fig: youtubewc5_3_9}\includegraphics[trim = 0.5in 0in 0.5in 0in, clip, width=0.24\textwidth]{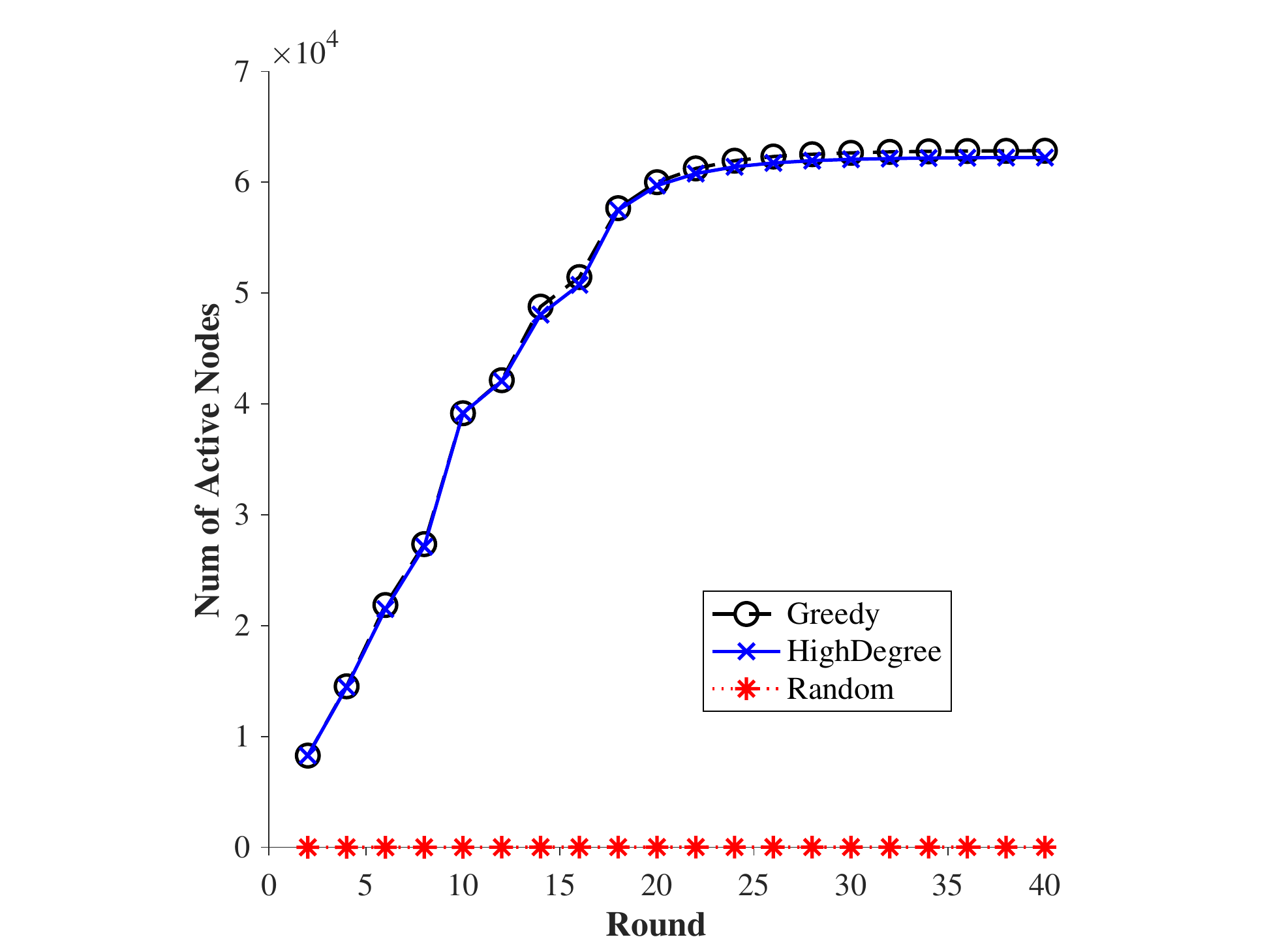}}
	\subfloat[{[Youtube, WC, $k=5$, $d=\infty$]}]{\label{fig: youtubewc5_3_15}\includegraphics[trim = 0.5in 0in 0.5in 0in, clip, width=0.24\textwidth]{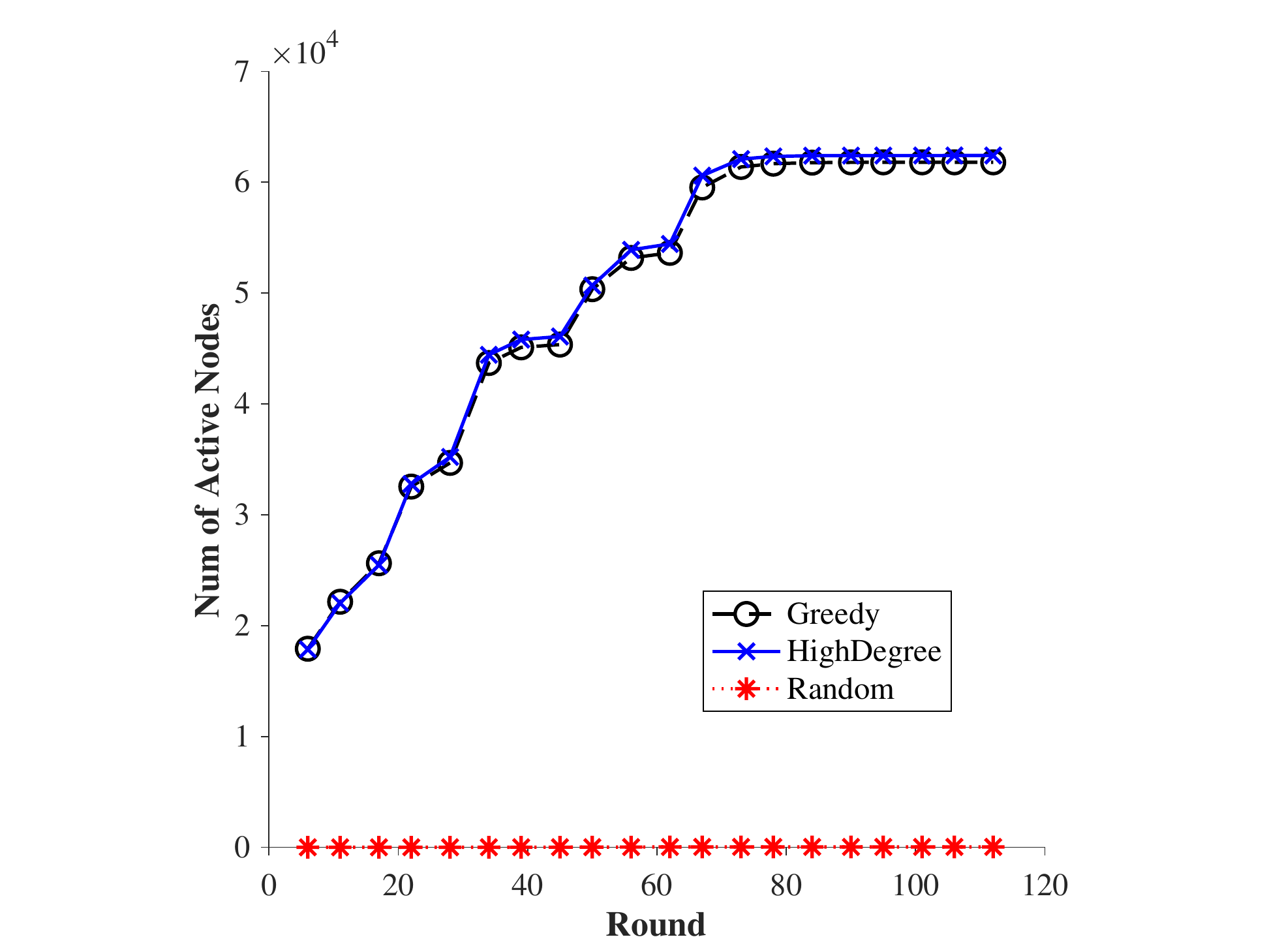}}

	\caption{Additional Results of Experiment \RNum{1}. Part 3}
	\vspace{-8mm}
	\label{fig: exp1_more_3}
\end{figure*}

\begin{figure*}[!t]
	\centering
	\subfloat[ {[Power, IC, $k=5$]}]{\label{fig: poweric5}\includegraphics[width=0.33\textwidth]{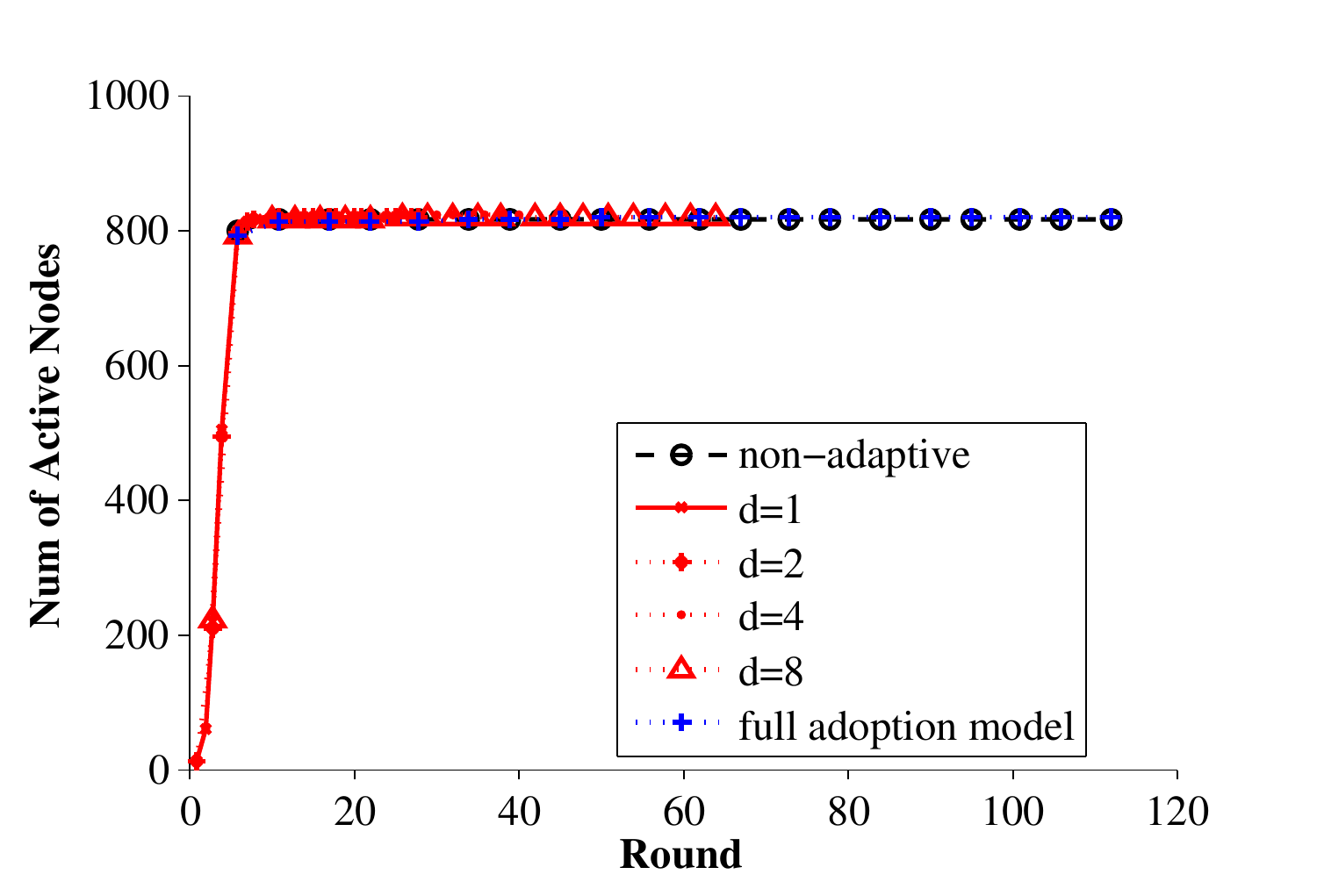}}
	\subfloat[ {[Power, IC, $k=10$]}]{\label{fig: poweric10}\includegraphics[width=0.33\textwidth]{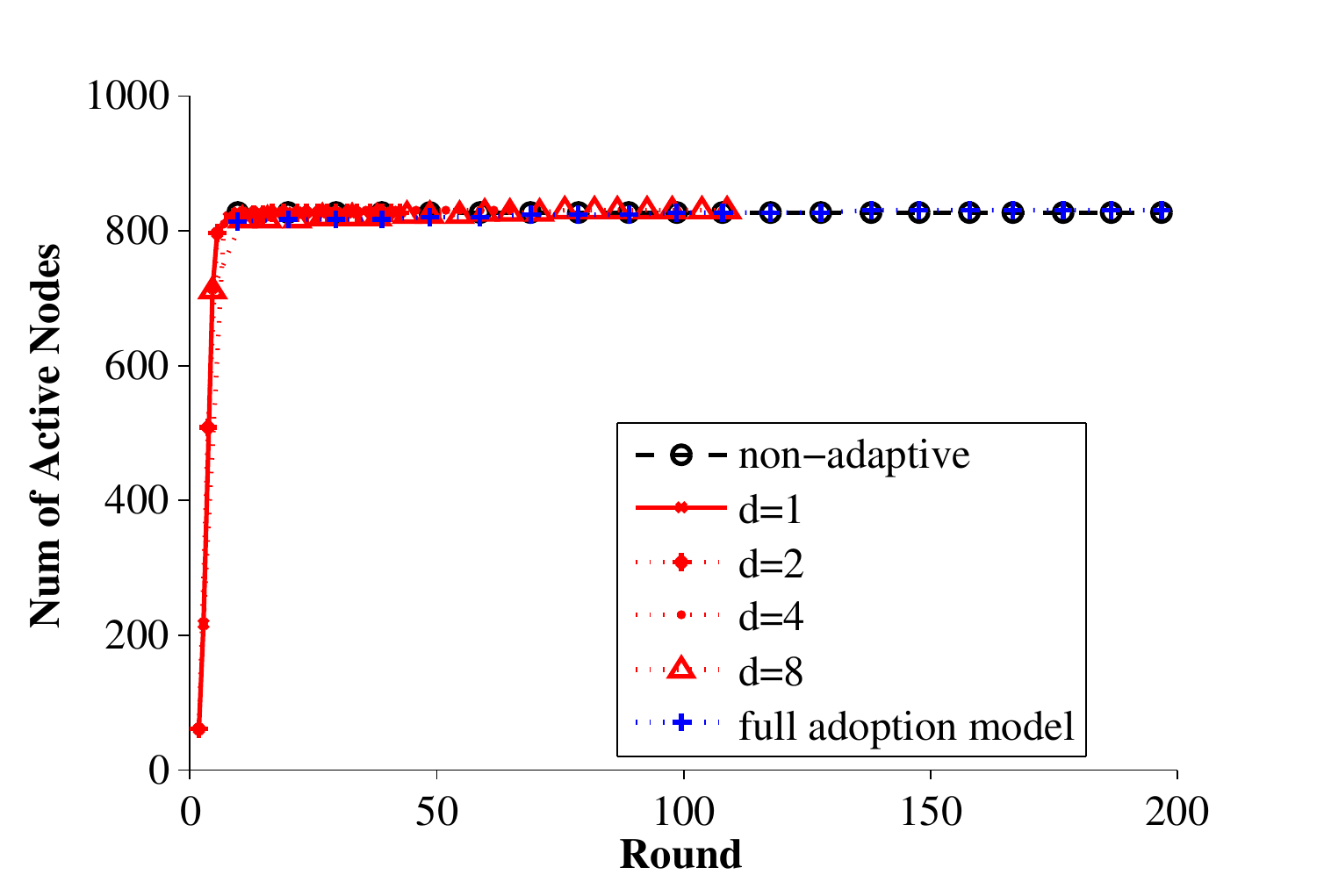}}
	\subfloat[ {[Power, IC, $k=20$]}]{\label{fig: poweric20}\includegraphics[width=0.33\textwidth]{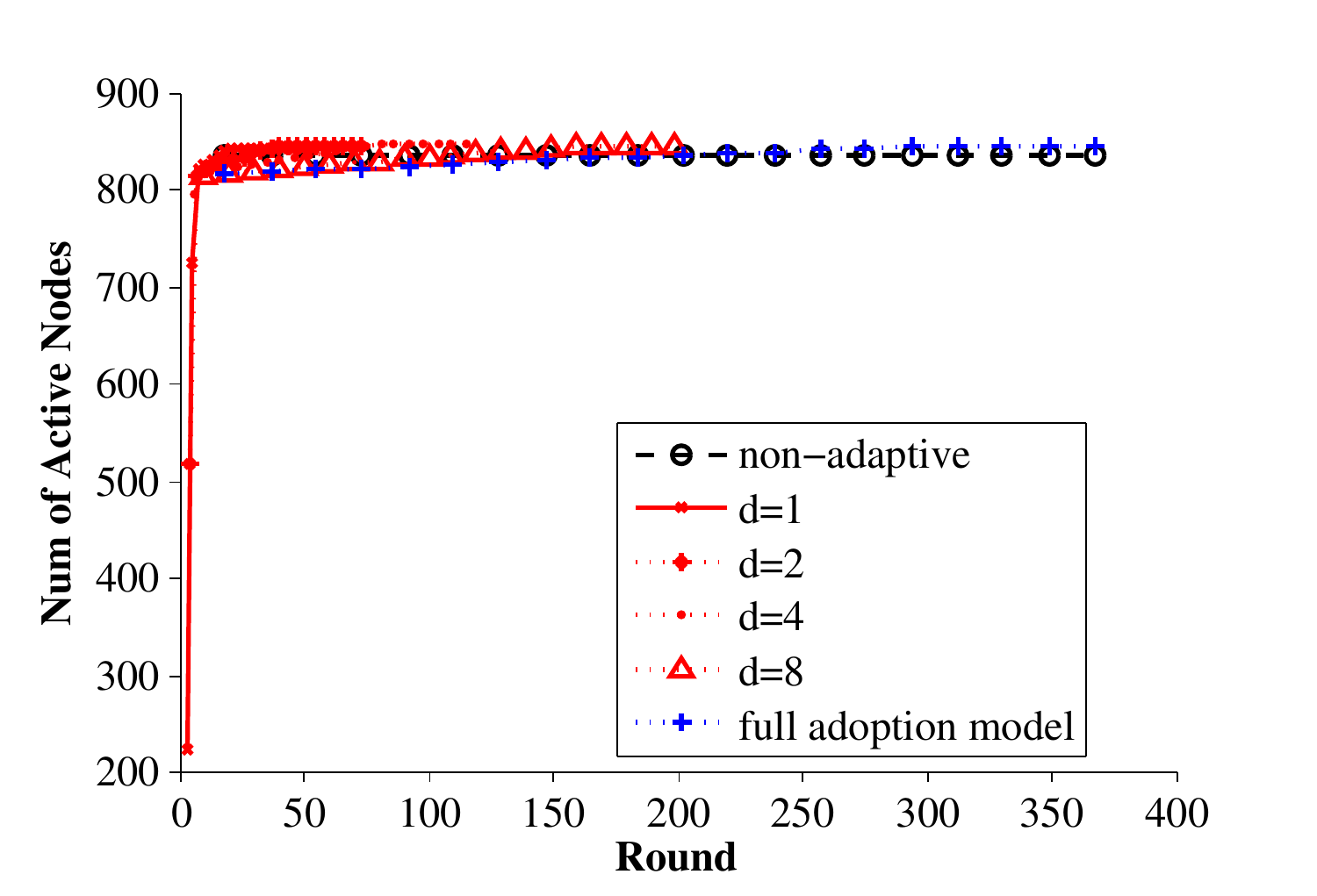}}
	\vspace{-3mm}
	\subfloat[ {[Power, IC, $k=50$]}]{\label{fig: poweric50}\includegraphics[width=0.33\textwidth]{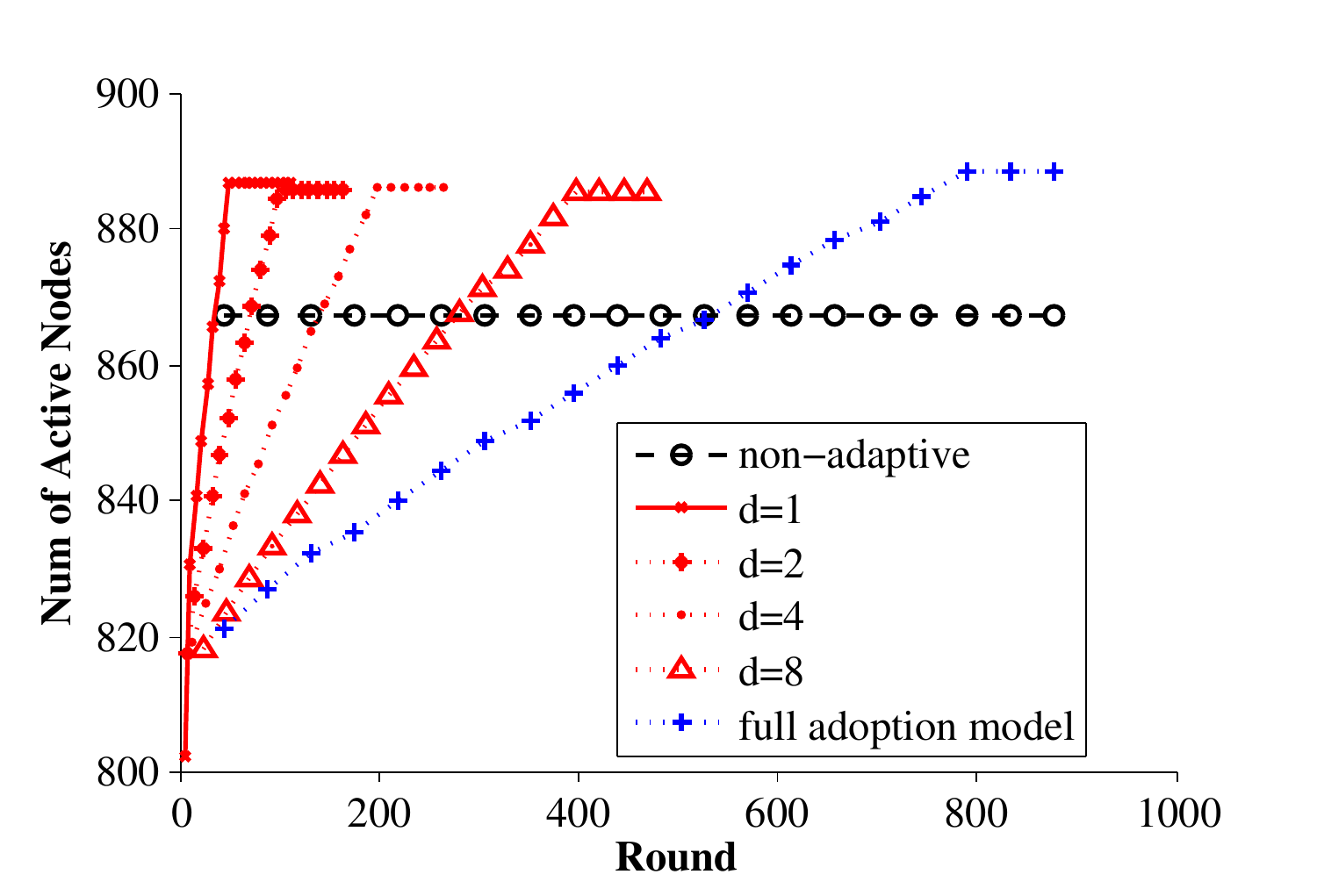}}
	\subfloat[ {[Power, WC, $k=50$]}]{\label{fig: powerwc50}\includegraphics[width=0.33\textwidth]{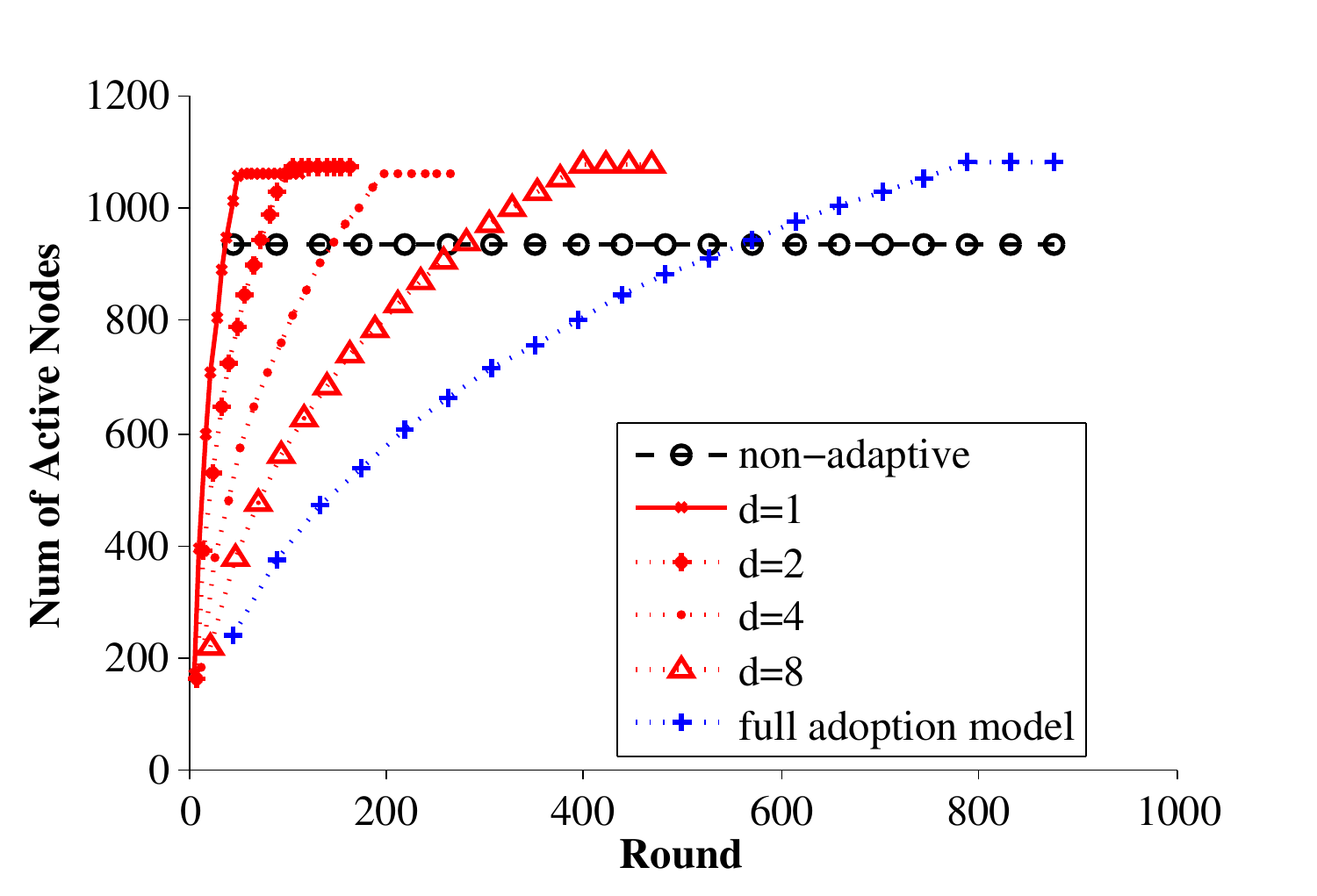}}
	\subfloat[ {[Wiki, WC, $k=5$]}]{\label{fig: wikiwc5}\includegraphics[width=0.33\textwidth]{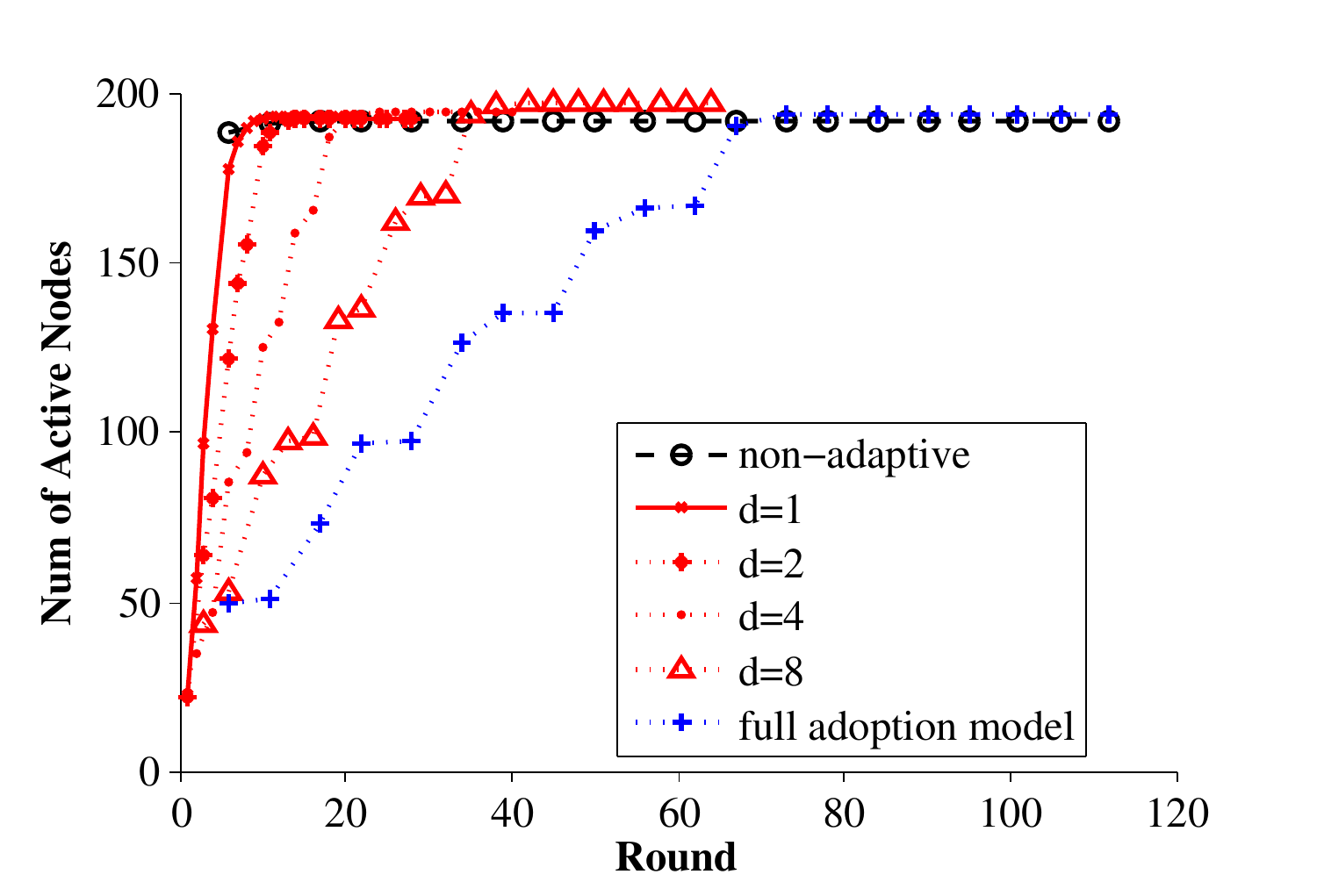}}
	\vspace{-3mm}
	\subfloat[ {[Wiki, IC, $k=5$]}]{\label{fig: wikiic5}\includegraphics[width=0.33\textwidth]{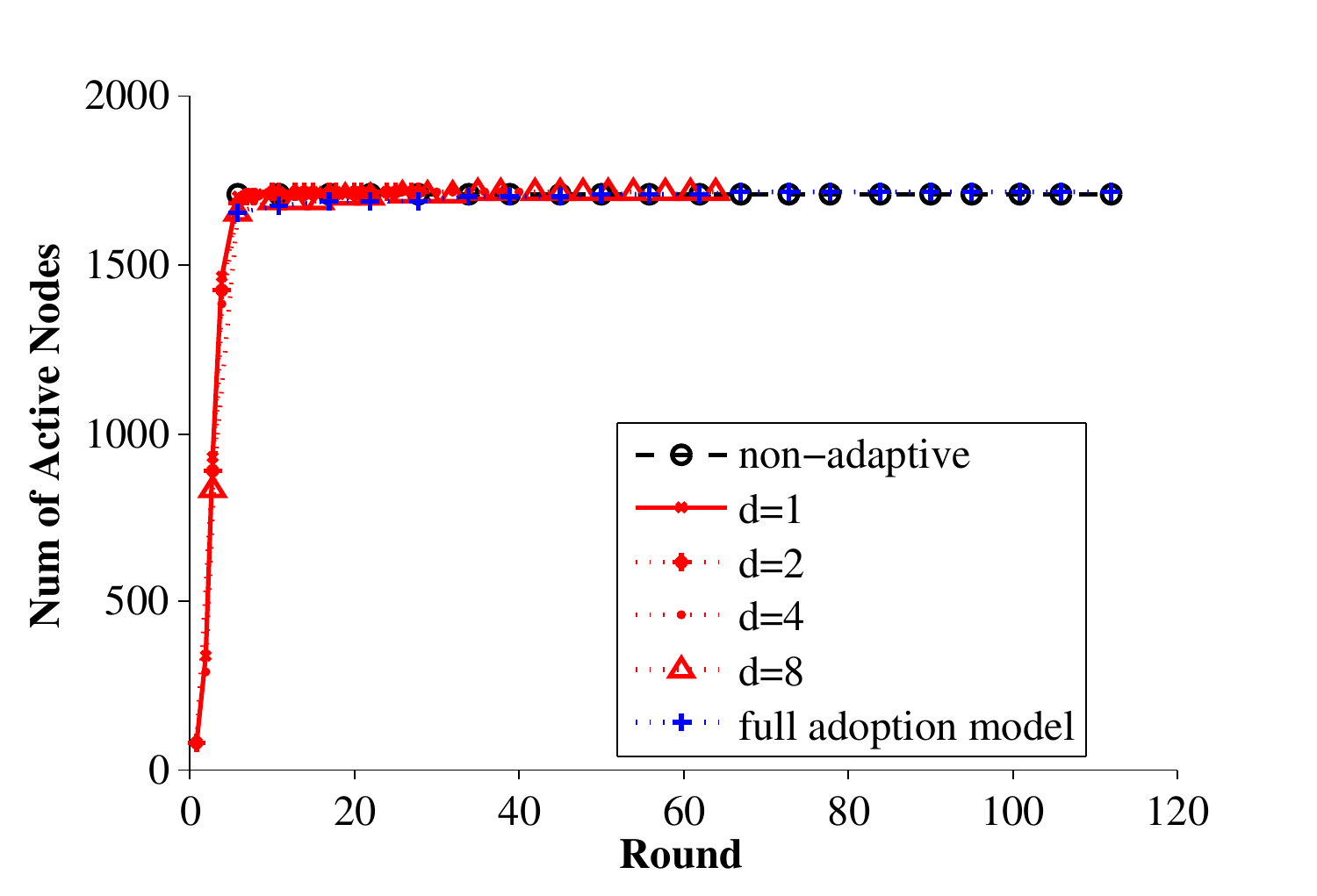}}
	\subfloat[ {[Wiki, IC, $k=50$]}]{\label{fig: wikiic50}\includegraphics[width=0.33\textwidth]{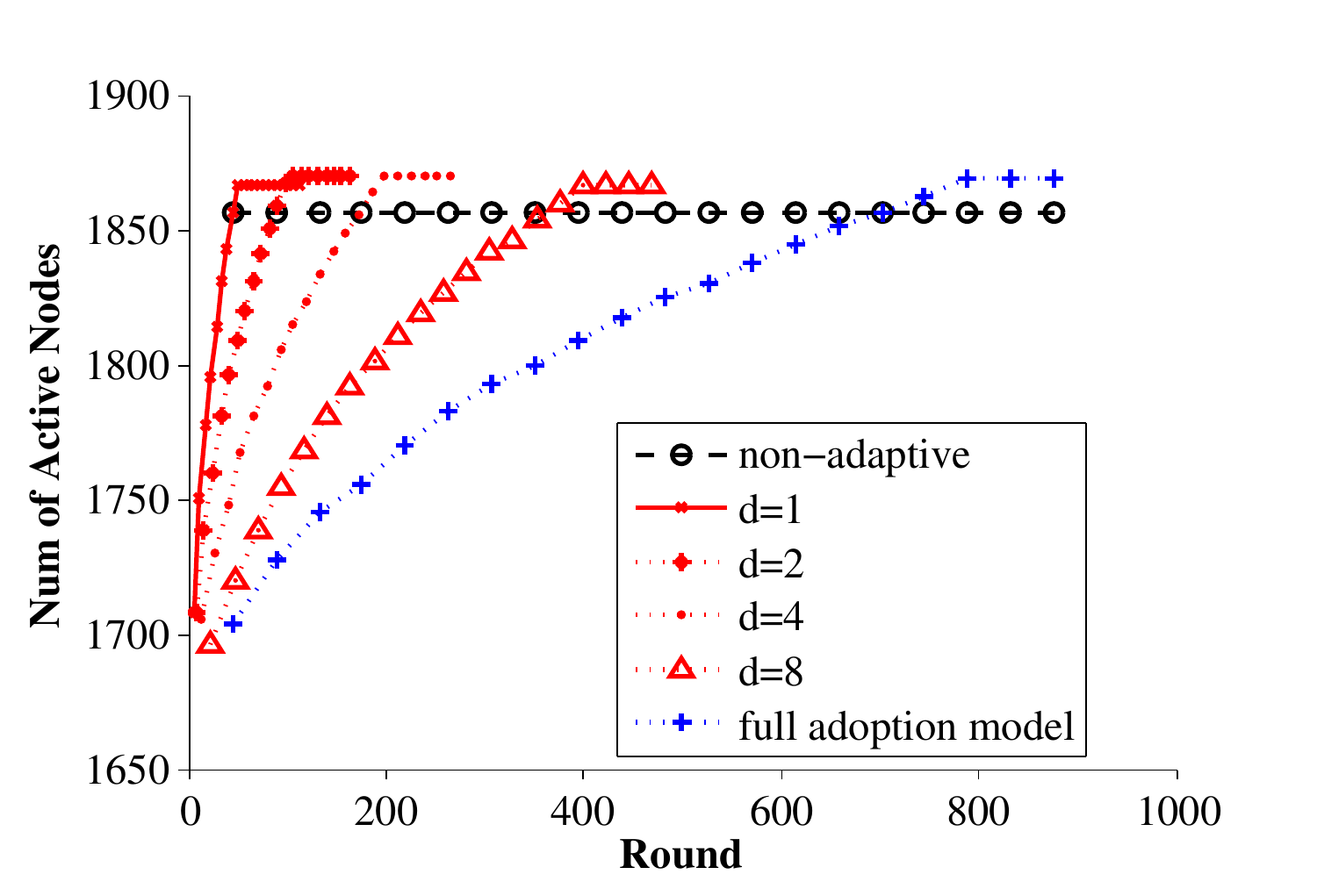}}
	\subfloat[ {[Wiki, WC, $k=50$]}]{\label{fig: wikiwc50}\includegraphics[width=0.33\textwidth]{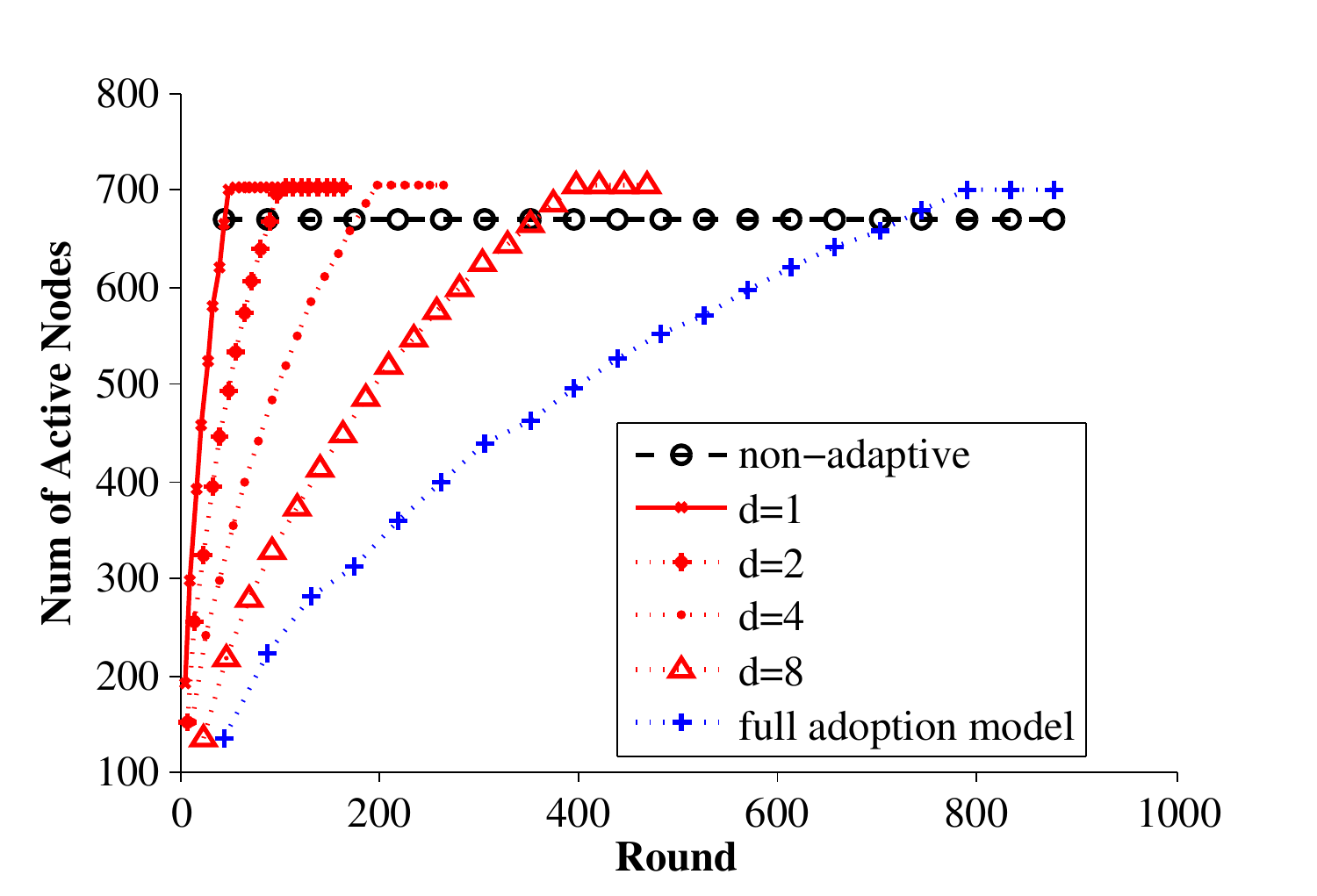}}
	\vspace{-3mm}
	\subfloat[ {[Higgs, IC, $k=20$]}]{\label{fig: higgs20}\includegraphics[width=0.33\textwidth]{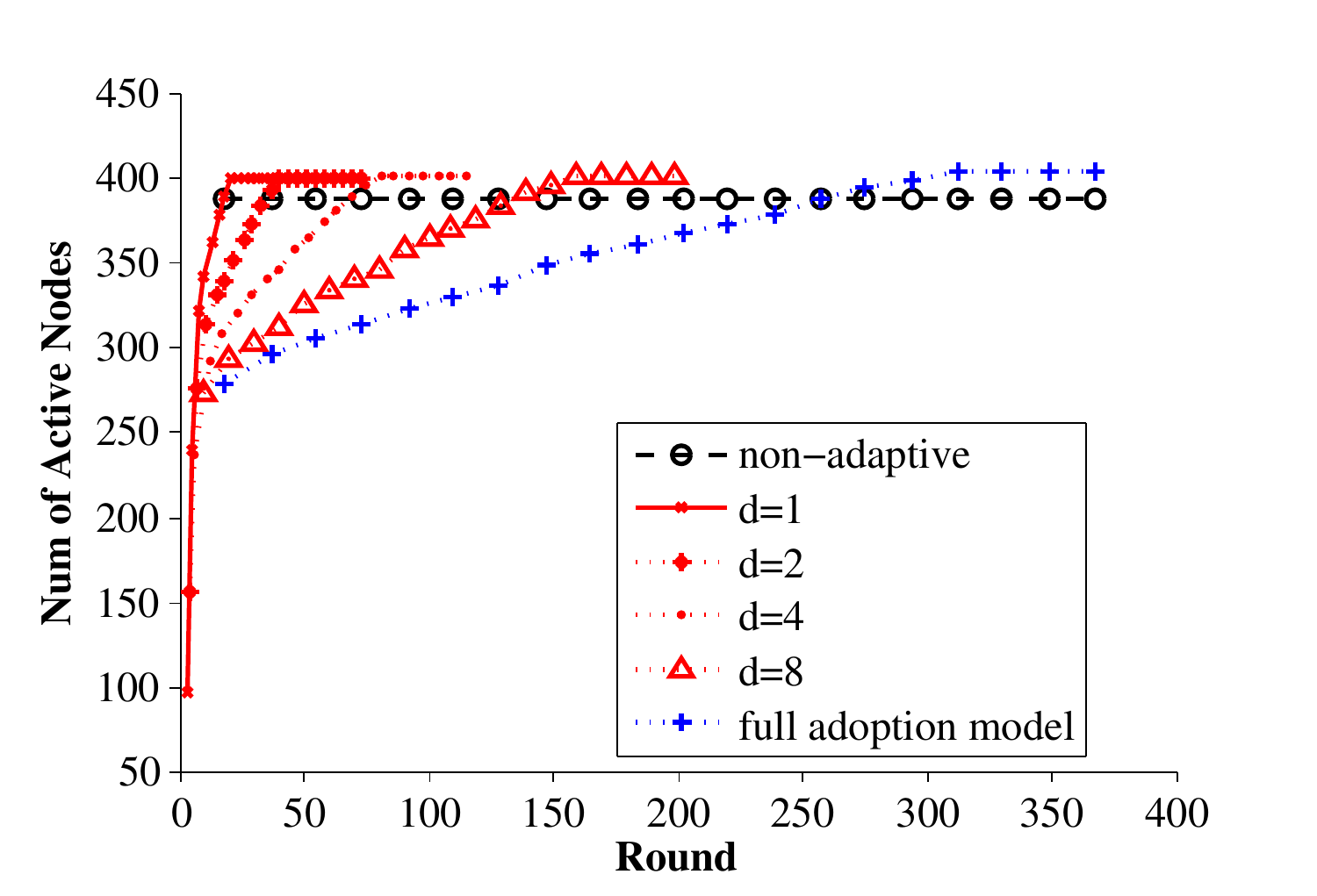}}
	\subfloat[ {[HepTh, WC, $k=5$]}]{\label{fig: hepthwc5}\includegraphics[width=0.33\textwidth]{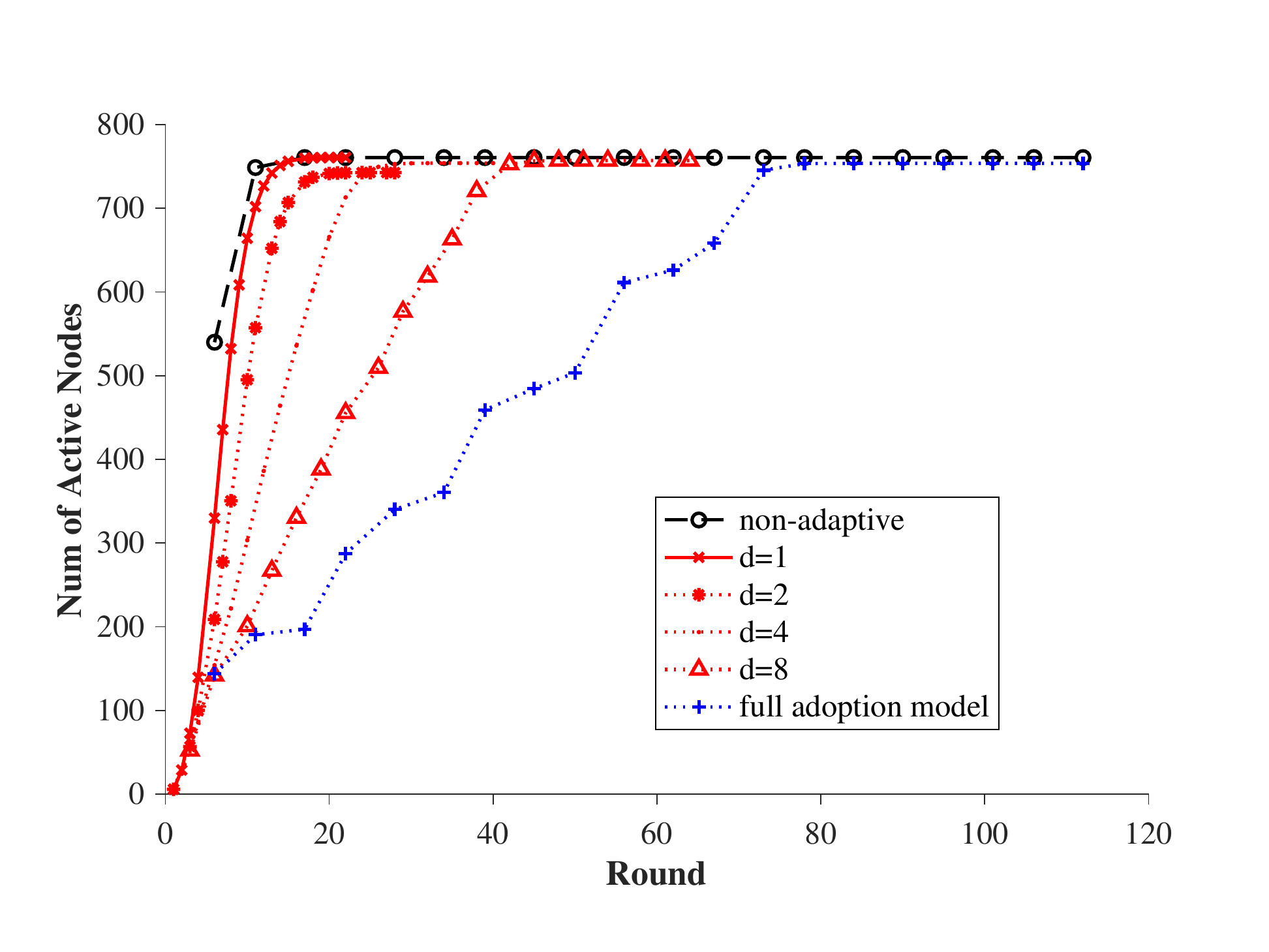}}
	\subfloat[ {[Youtube, WC, $k=5$]}]{\label{fig: youtubewc5}\includegraphics[width=0.33\textwidth]{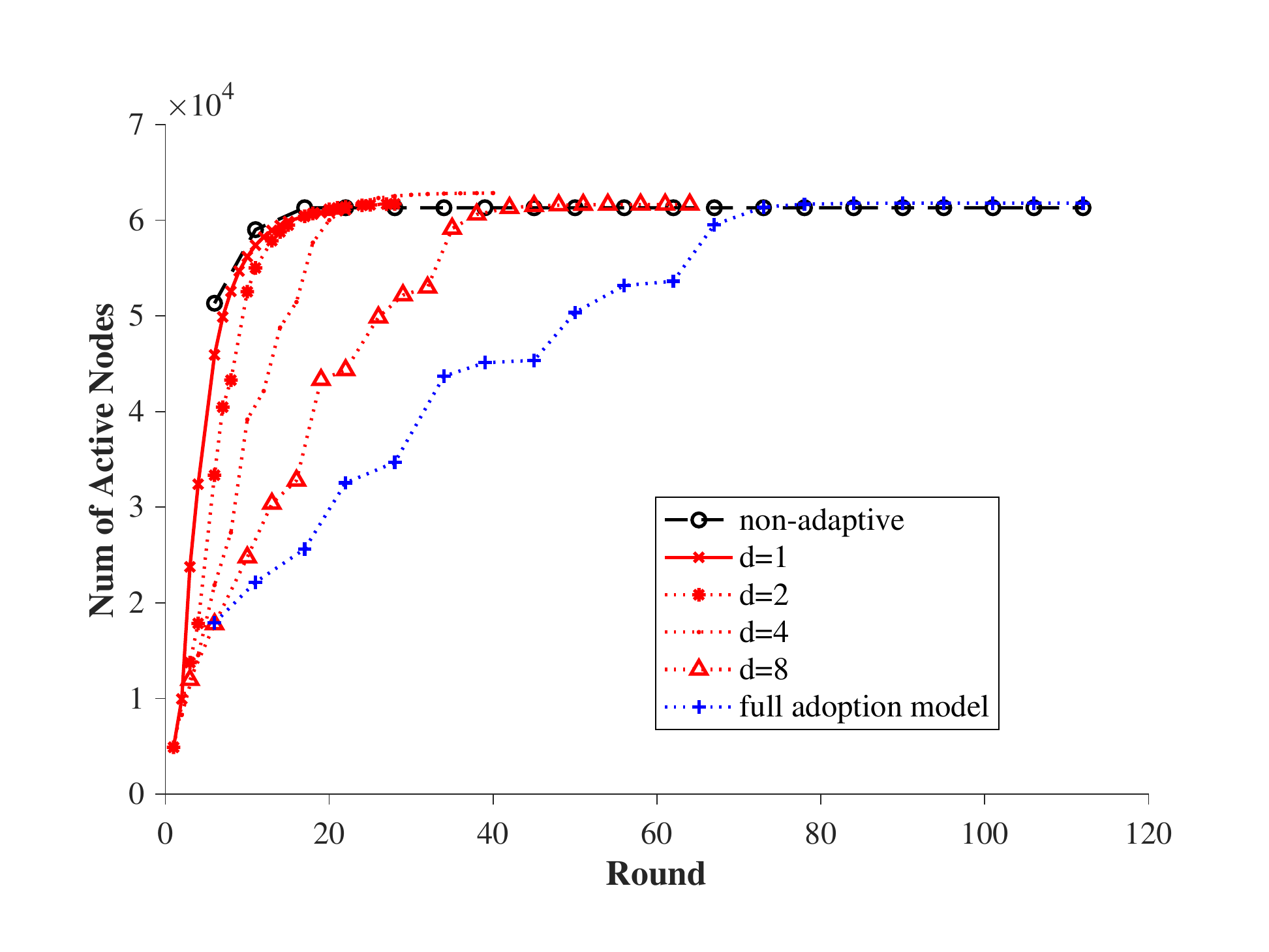}}
	\vspace{-0mm}
	\caption{Additional Results of Experiment \RNum{2}}
	\vspace{-6mm}
	\label{fig: exp2_more}
\end{figure*}

\end{document}